\newcommand{\Real}{{\mathrm{Re}}}
\newcommand{\Tr}{{\mathrm{Tr}}} 
\newcommand{\Sp}{{\mathrm{Sp}}}
\newtheorem{Proposition}{Proposition}
\newtheorem{Lemma}{Lemma}
\newtheorem{Corollary}{Corollary}
\newtheorem{Definition}{Definition}
\begin{document}
\title{Catalytic Coherence}
\author{Johan {\AA}berg}
\email{johan.aberg@physik.uni-freiburg.de}
\affiliation{Institute for Physics, University of Freiburg, Hermann-Herder-Strasse 3, D-79104 Freiburg, Germany}

\begin{abstract}
Due to conservation of energy we cannot directly turn a quantum system with a definite energy into a superposition of different energies. However, if we have access to an additional resource in terms of a system with a high degree of coherence, as for standard models of laser light, we can overcome this limitation.  
The question is to what extent coherence gets degraded when utilized. 
Here it is shown that coherence can be turned into a catalyst, meaning that we can use it repeatedly without ever diminishing its power to enable coherent operations. This finding stands in contrast to the degradation of other quantum resources, and has direct consequences for quantum thermodynamics, as it shows that latent energy that may be locked into superpositions of energy eigenstates can be released catalytically.
\end{abstract}

\maketitle

{\it Introduction.--} Coherence is a resource that enables us to implement coherent operations on quantum systems; a canonical example being the use of lasers to put atoms in superposition between two energy levels \cite{MandelWolf}, or analogously for  radio-fields and nuclear spins \cite{Vandershypen04}. When we excite an atom we need another system, an `energy reservoir', where the energy is taken. What kind of energy reservoir would we need in order to put an atom in superposition between two energies? With a bit of thought one can realize that this is impossible if the atom and the reservoir initially have definite energies. (See Appendix \ref{EnergyConserveCoh} for details. This observation can be understood in the wider context of `reference frames' and symmetry preserving operations \cite{Bartlett07,Gour08,Marvian11}.) One way to resolve the apparent contradiction with the above claim, that such superpositions indeed can be generated, is to realize that this usually is achieved via, e.g., lasers or radio-fields. These are often modeled as  coherent states, typically described as superpositions of the energy eigenstates of the electromagnetic field \cite{Glauber63a,Glauber63b,MandelWolf} (although this can be debated, see e.g.~\cite{Molmer97,Banacloche98,Molmer98,Rudolph01,Enk02,Bartlett05}). In other words, the coherence of the laser is a resource that enables us to put the atom in superposition, or more generally to perform operations that coherently mix energies. 

The main result of this investigation is that coherence can be turned into a catalyst, in the sense that it enables otherwise impossible tasks, without itself being consumed. Although maybe reminiscent of entanglement catalysis \cite{Jonathan99}, this stands in contrast to other quantum resources, e.g., reference frames for measurements, which appear to degrade upon use \cite{Bartlett06,Bartlett07b,Poulin07,Bartlett07}.
We establish catalytic coherence within two models. The first model (the doubly-infinite energy ladder) is convenient to analyze, but is somewhat unphysical in that its Hamiltonian has no ground state. The second model (the half-infinite ladder) amends this problem. We furthermore numerically investigate remnants of catalytic coherence in the Jaynes-Cummings model \cite{JC63,Shore93}. Finally, we apply catalytic coherence to `work extraction' in the context of quantum thermodynamics.

 \begin{figure}[h!]
 \includegraphics[width= 8cm]{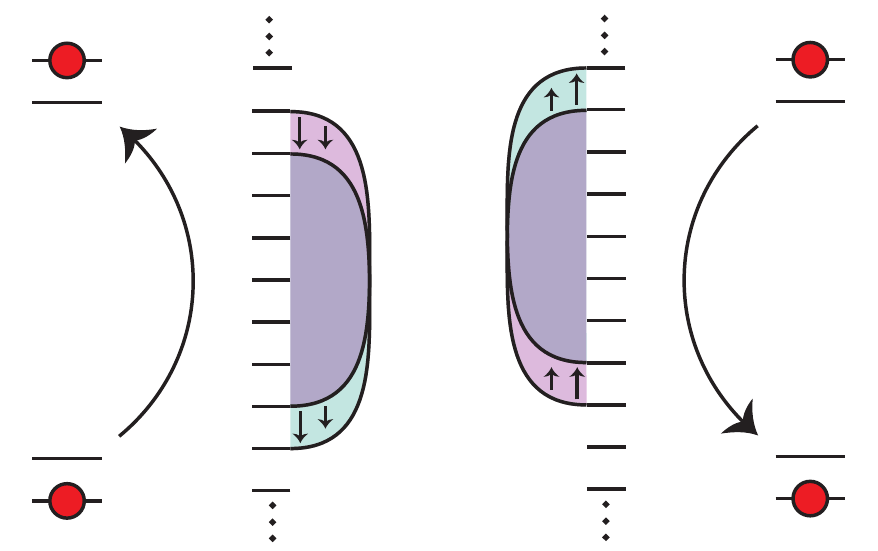} 
   \caption{\label{FigureCoherent} {\bf Rigid translation property.}
     A two-level system interacts with a reservoir whose energy levels can be described as a ladder. By design, the interaction shifts the whole state of the reservoir `rigidly' down or up along this ladder, depending on whether the two-level system absorbs (left part) or donates (right part) energy. For example, by removal of one quantum of energy, the uniform superposition  $|\eta_{L,l_0}\rangle = \sum_{l=0}^{L-1}| l_0 + l \rangle/\sqrt{L}$ is shifted to $|\eta_{L,l_0-1}\rangle$. If $L$ is large, the difference between  $|\eta_{L,l_0-1}\rangle$ and $|\eta_{L,l_0}\rangle$ is small. Hence, the  state of the reservoir does not change much by the loss or gain of a single quantum, which enables coherent operations on the two-level system. In the limit of large $L$, these implementations can be made perfect. This is analogous to how coherent states on a bosonic mode can implement coherent operations on an atom via the Jaynes-Cummings model.
   }
\end{figure}

{\it The doubly-infinite energy ladder.--}
Catalytic coherence is achieved via a specific design of the interaction between the system and the energy reservoir. While this construction can be used to generate general coherent operations on $N$-level systems (see Appendix \ref{DoublyInfinite}) we here focus on two-level systems. 
Let $S$ be a  system for which the Hamiltonian $H_S$ has eigenvalues $h_0=0$ and $h_1 = s>0$,  corresponding to the eigenstates $|\psi_0\rangle$ and $|\psi_1\rangle$.  The Hamiltonian of the reservoir $E$ is $H_E = s\sum_{j\in\mathbb{Z}}j|j\rangle\langle j|$, where $s$ is the energy spacing in the ladder, $\{|j\rangle\}_{j\in\mathbb{Z}}$ is an orthonormal basis, and $\mathbb{Z}$ denotes the set of  integers.  Regarded as a Hamiltonian, $H_E$ is slightly odd in that it does not have any ground state. We shall remedy this problem shortly, but due to several convenient properties we shall use this model for the initial analysis. As a first step we define the `shift-operator' $\Delta = \sum_{j\in\mathbb{Z}}|j+1\rangle\langle j|$. As one can see, this unitary operator translates every state `rigidly' along the energy ladder (see Fig.~\ref{FigureCoherent}). With the aid of $\Delta$ we define the following family of unitary operators on $\mathcal{H}_S\otimes\mathcal{H}_E$
\begin{eqnarray}
\nonumber V(U) & = & \sum_{n,n'=0,1} |\psi_n\rangle\langle\psi_n|U|\psi_{n'}\rangle\langle\psi_{n'}|\otimes \Delta^{n'-n}\\
\label{nkfjgbn} & = & \sum_{j\in\mathbb{Z}}V_j(U),\\
\nonumber V_j(U) & = &  \sum_{n,n'=0,1}|\psi_{n}\rangle\langle\psi_{n}|Q|\psi_{n'}\rangle\langle\psi_{n'}|\otimes|j-n\rangle\langle j-n'|,
\end{eqnarray}
where $U$ is an arbitrary unitary operator on $\mathcal{H}_S$. 
(This type of interaction has previously been considered in quantum thermodynamics \cite{Skrzypczyk13}.) By construction, all $V(U)$ commute with $H_S + H_E$, i.e., they are energy conserving. Furthermore they commute with all $\Delta^{a}$, and thus act uniformly over the energy ladder (see \cite{Skrzypczyk13} for discussions on this). The family of all $V(U)$ serves as the set of `allowed operations' in our model. In the following we shall investigate what kind of transformations we can implement on $S$, and how this depends on the coherence in the reservoir $E$.

{\it Coherence and coherent operations.--}
Given a state $\sigma$ on the reservoir we can implement channels on $S$ via
\begin{equation}
\label{Phidef}
\Phi_{\sigma,U}(\rho)  = \Tr_E [V(U)\rho\otimes\sigma V(U)^{\dagger}].
\end{equation}
(This generally requires time-control, see Appendix \ref{Sec:inducedch}.)
We let $\mathcal{C}(\sigma)$ denote the set of channels $\Phi_{\sigma,U}$ that can be obtained for arbitrary unitary $U$, given $\sigma$. Let us now see what aspects of $\sigma$ it is that determine $\Phi_{\sigma,U}$. If one inserts Eq.~(\ref{nkfjgbn}) into (\ref{Phidef}), it turns out that $\Phi_{\sigma,U}$, and thus $\mathcal{C}(\sigma)$, depends on $\sigma$ only via expectation values of the form $\Tr(\Delta^{a}\sigma)$ for $a\in\mathbb{Z}$ (which do not determine $\sigma$ uniquely).

Next, suppose we wish to perform a unitary operation that mixes different energy levels. By Eqs.~(\ref{nkfjgbn}) and (\ref{Phidef}) one can see that if $\Tr(\Delta^{a}\sigma)\approx 1$ for $a =-2,\ldots, 2$, then $\Phi_{\sigma,U}(\rho)\approx U\rho U^{\dagger}$. (For general $N$-level systems, the necessary range of $a$ is determined by the amount of energy that the reservoir would need to donate or absorb.) 
Hence, when we speak of a `high degree of coherence', it means that the state $\sigma$ of the reservoir is such that $\Tr(\Delta^{a}\sigma)\approx 1$ for a broad range of $a$, with the rationale that this allows us to perform coherent operations to a good approximation. 

A concrete example is the family  of states $\sigma = |\eta_{L,l_0}\rangle\langle\eta_{L,l_0}|$ of  the form $|\eta_{L,l_0}\rangle = \sum_{l=0}^{L-1}|l_0 + l\rangle/\sqrt{L}$, i.e., uniform superpositions over a collection of consecutive energy eigenstates. 
In the limit of large $L$, the channel $\Phi_{|\eta_{L,l_0}\rangle\langle\eta_{L,l_0}|,U}$ converges to the unitary operation $\rho\mapsto U\rho U^{\dagger}$ (as one could expect from the Wigner-Arkai-Yanase theorem \cite{Wigner52,Arkai60,Yanase61,Ghirardi81a,Ghirardi81b,Ozawa02a,Busch11,Loveridge11,Marvian12,Ahmadi13,Ozawa02b,Karasawa07,Karasawa09,Navascues14,Barnes99,Enk01,GeaBanacloche02,GeaBanacloche05}). Hence, this model is powerful enough to implement all unitary operations on $S$, given a sufficient degree  of coherence in the reservoir. Next we show that the degree of coherence does not change over repeated applications.

{\it Catalytic coherence in the doubly-infinite ladder.--}
To investigate the catalytic properties of the coherence, we need to determine how the state of the energy reservoir changes when we use it. For that purpose we define the corresponding channel on $E$
\begin{equation}
\label{Lambdadef}
\Lambda_{\rho,U}(\sigma) = \Tr_S [V(U)\rho\otimes\sigma V(U)^{\dagger}].
\end{equation}
By using Eq.~(\ref{nkfjgbn}) one can confirm that
\begin{equation}
\Tr[\Delta^{a}\Lambda_{\rho,U}(\sigma)] = \Tr(\Delta^{a}\sigma),
\end{equation}
for all $\sigma$, $\rho$, $U$, and $a$. In other words, the expectation values $\langle \Delta^a\rangle =  \Tr(\Delta^{a}\sigma)$ are invariants under the action of these operations. 

Earlier we noted that it is precisely the expectation values $\langle \Delta^a\rangle$ that determine which channels that can be implemented. Hence, if we use the reservoir a second time, we can implement the very same channels that we implemented the first time, i.e., $\Phi_{\Lambda(\sigma),U} = \Phi_{\sigma,U}$, and hence $\mathcal{C}\boldsymbol{(}\Lambda_{\rho,U}(\sigma)\boldsymbol{)} = \mathcal{C}(\sigma)$. In other words, we do not degrade the coherence resource in the reservoir by using it. In this sense, coherence is catalytic in this model. One can also prove a stronger type of catalytic property, which does not assume that the reservoir and the systems initially are uncorrelated, see Appendix \ref{DoublyInfinite}. (See also Appendix \ref{TranslationPhaseRef} for a reformulation of catalytic coherence in terms of correlations with a reference system.)  

Note that the  catalytic property holds for all states $\sigma$ on the reservoir, and is \emph{not} limited to states with a high degree of coherence. It should also be emphasized that although $\langle \Delta^{a}\rangle$ are invariants, the underlying \emph{state} does change (see Appendix \ref{DoublyInfinite} for an example), which is in contrast to entanglement catalysis \cite{Jonathan99}. In other words, analogously to how measurements induce back-action on reference frames  \cite{Aharonov84,Aharonov98,Casher00}, there is indeed a back-action on the energy reservoir. However, as opposed to how certain reference frames appear to degrade due to back action \cite{Bartlett06,Bartlett07b,Poulin07,Bartlett07}, this change of state does not affect the usefulness of the coherence in the reservoir.

{\it The half-infinite ladder.--}
One might worry that the catalytic property is an anomaly related to the lack of ground state, as a broadening distribution otherwise would hit the bottom. In the following we shall show that the capacity to induce channels can be maintained indefinitely, also in a model that has a proper ground state. To this end, we cut away the lower half of the doubly-infinite ladder and thus obtain (the spectrum of) the harmonic oscillator $H^{+}_E = s\sum_{j=0}^{+\infty}j|j\rangle\langle j|$. 
We define a new class of unitary operations on $S$ and $E$ as
\begin{equation}
V_{+}(U) =  |\psi_{0}\rangle\langle\psi_0|\otimes|0\rangle\langle 0|+ \sum_{l=1}^{+\infty}V_{l}(U),
\end{equation}
with $V_l$ as in equation (\ref{nkfjgbn}).
By comparison one can see that $V(U)$ and $V_+(U)$ act identically on all states with at least one quantum, i.e, $\langle l|\sigma|l\rangle = 0$ for $l= 0$. (For a general $S$ this `border zone' would be larger, see Appendix \ref{Harmonic}.) 

 \begin{figure}[h!]
 \includegraphics[width= 8cm]{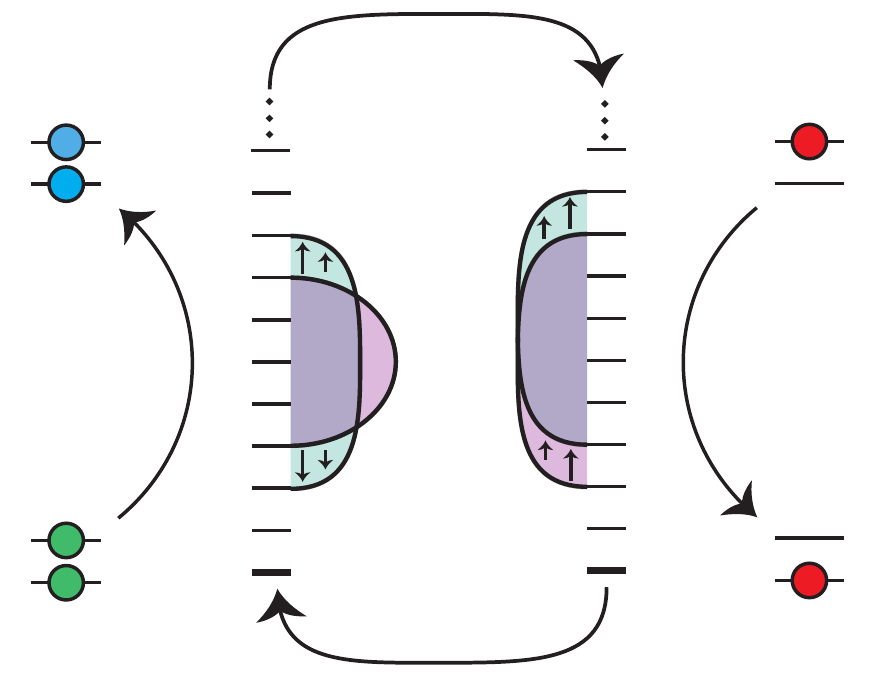} 
   \caption{\label{FigureCycle} {\bf Regenerative catalytic cycles.} 
  When the energy reservoir interacts with the two-level system (left part) the projection of its state onto the number states can increase with at most one level up and down in energy. 
  As long as the projection onto the ground state of the reservoir remains zero, the set of channels that the reservoir can induce on the system stays intact from one interaction to the next. 
    By using another two-level system in a pure excited state (right part) to inject energy into the reservoir, the state is translated `rigidly' up along the energy ladder by one step. By alternating every use of the reservoir with such a pumping, the state of the reservoir can be kept away from the ground state, thus maintaining the coherence properties  indefinitely.
  Note that the state of the energy reservoir does change from one cycle to the next, e.g., the range of number states onto which it projects may become broader for each step. However, the \emph{relevant aspects} of the state, which determine its capacity to induce channels, remain constant.  
   }
\end{figure}

{\it Protocol for a catalytic half-infinite ladder.--}
A simple protocol can maintain the coherence properties of the reservoir indefinitely (see Fig.~\ref{FigureCycle}). We assume an initial state $\sigma_{\mathrm{in}}$ such that $\langle l|\sigma_{\mathrm{in}}|l\rangle = 0$ for $l= 0,1$, i.e., it contains at least two quanta. We let $E$  and $S$ interact via some arbitrary choice of unitary $V_{+}(U)$. As we know from the above reasoning, the effect is identical to $V(U)$. The new state $\overline{\sigma}$ on the energy reservoir is such that $\langle l|\overline{\sigma}|l\rangle = 0$ for $l = 0$. Now, consider an ancillary two-level system $A$, with the two eigenstates $|a_0\rangle$ and $|a_1\rangle$  corresponding to the energies $0$ and $s$, respectively. We assume that $A$ initially is in the excited state $|a_1\rangle$. By applying the operation $V_{+}(U_A)$ for $U_A = |a_0\rangle\langle a_1| + |a_1\rangle\langle a_0|$, the state $\overline{\sigma}$ is translated one rung up along the ladder, to a new state $\sigma_{\mathrm{out}}$. Hence, the reservoir is again in a state with at least two quanta. Since these operations all have been performed on states safely away from the ground state, it follows that $\mathcal{C}(\sigma_{\mathrm{out}}) = \mathcal{C}(\sigma_{\mathrm{in}})$. By iterating this procedure we can conclude that the set of channels that this reservoir can induce is kept intact indefinitely.

{\it Decay of coherence in the Jaynes-Cummings model.--}
For many theoretical purposes it is enough to know that coherence in principle can be made catalytic. It is nevertheless relevant to ask to what extent these phenomena exist in more general types of systems, especially if one would consider experimental investigations. Interactions between a two-level atom and a single mode of the electromagnetic field are often modeled via the Jaynes-Cummings (JC)  Hamiltonian $H_{\mathrm{JC}} =  g\sigma_{+}\otimes a + g\sigma_{-}\otimes a^{\dagger}$ \cite{JC63,Shore93}. Here $a,a^{\dagger}$ are the standard bosonic annihilation and creation operators $[a,a^{\dagger}] = \hat{1}_E$, and $\sigma_{+} = |\psi_1\rangle\langle \psi_0|$, $\sigma_{-} = |\psi_0\rangle\langle \psi_1|$. Similar to our designed interactions, the JC model also moves a quantum of energy between the atom and the reservoir (much as in Fig.~\ref{FigureCoherent}), but it does not act uniformly over the energy ladder. One can nevertheless find a `shadow' of catalytic coherence in the JC-model. The graphs in Fig.~\ref{fig:quasicatalytic} suggest that the capacity to repeatedly induce coherent operations decays slower for higher initial average energies, even if the `width' of the initial superposition is fixed (see Appendix \ref{MoreGeneral} for details).\newline

 \begin{figure}[h]
 \includegraphics[width= 8cm]{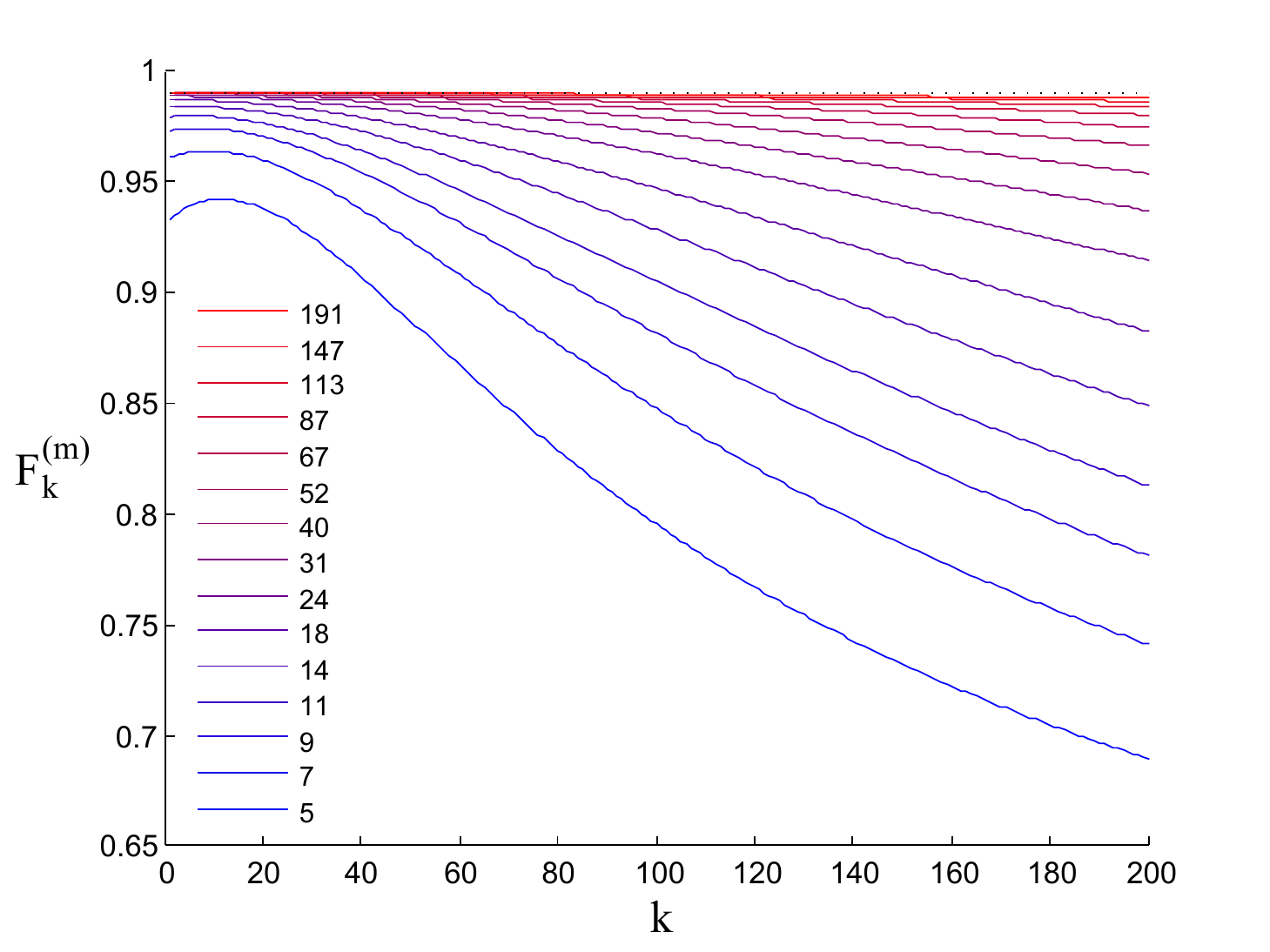} 
   \caption{\label{fig:quasicatalytic} {\bf  Decay of coherence in the Jaynes-Cummings model.} The JC-model exhibits a decay of coherence over repeated use. However, there is a remnant of the catalytic property, in the sense that the `lifetime' of the coherence (counted in number of iterations) appears to increase indefinitely merely by an increase of the average energy of the initial state.  These graphs depict the decay of the fidelity by which the superposition $|\phi\rangle = (|\psi_0\rangle -i|\psi_1\rangle)/\sqrt{2}$ can be created from the ground state $|\psi_0\rangle$, in a collection of two-level systems that sequentially interact with one single reservoir, according to the JC model for a fixed time-step. 
   Each curve corresponds to a different initial state of the reservoir, and shows $F^{(m)}_k = \langle\phi|\Phi(\sigma^{(k)}_m)|\phi\rangle$, against the number of times $k$ the reservoir has been used. Each curve corresponds to an initial state  $\sigma^{(0)}_{m} = |\eta_{L,l_0(m)}\rangle\langle \eta_{L,l_0(m)}|$ for $|\eta_{L,l_0(m)}\rangle$ with $l_{0}(m) = (4m+1)^{2} -24$ and $L= 50$, for $m = 5,7,9,11,14,18,24,31,40,52,67,87,113,147,191$. Note that the `width' of the superposition does not change with $m$. The dotted line is the value of $(1+|\langle \eta_{L,l_0(m)}|\Delta|\eta_{L,l_0(m)}\rangle|)/2  = 0.99$, which would be the fidelity reached in the doubly-infinite ladder-model for these initial states.
   }
\end{figure}

{\it Application: Coherence in expected work extraction.--} 
Work extraction and the closely related concepts of information erasure and Maxwell's demon have a long history (see e.g., \cite{Szilard,Landauer61,Procaccia,Benett03,LeffRexI,LeffRexII,Takara,Esposito06,Maroney09,Reeb13b}) with recently revived interests, e.g., in the contexts of resource theories \cite{Janzing00,Brandao11,Horodecki12,Gour13} and single-shot statistical mechanics \cite{Dahlsten,delRio,TrulyWorkLike,Horodecki11,Egloff,EgloffThesis,Faist,Brandao13}.  
The task is  to extract as much useful energy as possible by equilibrating a system with Hamiltonian $H_S$ and state $\rho$ with respect to a heat bath of temperature $T$.  Here we consider the question of how much work that can be extracted in an average sense.   
Standard results \cite{Procaccia,Lindblad1983,Takara,Esposito06} suggest that the expected work content of a system is characterized by
\begin{equation}
\label{nskdfljv}
\mathcal{A}_{\textrm{`standard'}}(\rho,H_S) = kTD\boldsymbol{(}\rho\Vert G(H_S)\boldsymbol{)},
\end{equation}
where $D(\rho\Vert \eta) = \Tr(\rho\ln\rho) -\Tr(\rho\ln\eta)$ is the relative von Neumann entropy, $k$ is Boltzmann's constant, $G(H_S) = \exp(-\beta H_S)/Z(H_S)$, $Z(H_S) = \Tr\exp(-\beta H_S)$, and $\beta = 1/(kT)$. 
(Equation (\ref{nskdfljv}) can be confirmed in a model without an explicit energy reservoir, see Appendix \ref{WithoutReservoir}.)

However, in light of the above considerations, such approaches appear to  implicitly assume access to ideal coherence resources. 
Recently it has been shown \cite{Skrzypczyk13} that without any coherence, the optimal expected work content is not characterized by (\ref{nskdfljv}), but rather by  
\begin{equation}
\label{kgdfngnkmg}
\mathcal{A}_{\textrm{diagonal}}(\rho,H_S) = kTD\boldsymbol{(}[\rho]_{H_S}\Vert G(H_S)\boldsymbol{)},
\end{equation}
where $[\rho]_{H_S} = \sum_{l}P_{l}\rho P_{l}$, and $P_{l}$ are the projectors onto the eigenspaces of $H_S$. 
However, one can show (see Appendix \ref{ExplicitEnergyReservoirs}) that access to coherence increases the amount of work that can be extracted. Furthermore, in the limit of a large degree of coherence (e.g. large $L$ for $|\eta_{L,l_0}\rangle$), one regains equation (\ref{nskdfljv}).  We can conclude that coherence sails up as an important resource alongside the expected work content. The question is how they relate. How much of one resource can be gained by spending the other? The fact that we here perform the work-extraction analysis entirely within the doubly-infinite ladder model implies that we only use the coherence catalytically and do not `spend' it at all. 

In relation to this thermodynamic application one may note  \cite{Skrzypczyk13b}, where optimal extraction is obtained irrespective of the state of the reservoir, via a larger class of unitary operations  that only conserve energy on average, and has  the power to create and destroy superpositions of energy eigenstates (see Appendix \ref{OnAverage}). Note also \cite{Gelbwaser13}, which uses a coherent extraction device as a negentropy source, to demonstrate a transient efficiency that exceeds the standard Carnot bound for work extraction against a hot and a cold heat bath.

{\it Conclusions and outlook.--}
We have shown that coherence can be turned into a catalytic resource by a specific design of interactions, and used this to analyze work extraction. As observed in the single-shot setting \cite{Dahlsten,delRio,TrulyWorkLike,Horodecki11,Egloff,EgloffThesis,Faist,Brandao13}, the expected work content may not always correspond to ordered `work-like' energy (see discussions in \cite{TrulyWorkLike}). It is an open question if catalytic coherence is associated to a cost of ordered energy (see Appendix \ref{Discussion}). Another question is to quantify expected work extraction with limited access to coherence.

The existence of catalytic coherence raises the question whether other types of resources \cite{Bartlett06,Bartlett07b,Poulin07,Bartlett07} in some sense also can be turned catalytic. In view of the analogy between embezzling states \cite{vanDam03} and coherent states, one can speculate whether there exists some counterpart to catalytic coherence in that setting (see Appendix \ref{Discussion}).  One can also ask whether catalytic coherence and entanglement catalysis \cite{Jonathan99} are special cases of a more general class of catalytic phenomena (see Appendix \ref{Discussion}).  On a more general level, the question is under what conditions, and in what sense, a resource can be made catalytic.


\acknowledgements
The author thanks L\'idia del Rio, Philippe Faist, and Paul Skrzypczyk for useful discussions, and Charles H. Bennett for useful comments on embezzling states. This research was supported by the Excellence Initiative of the German Federal and State Governments (grant ZUK 43).


\begin{widetext}
\end{widetext}

\begin{appendix}

\section{\label{EnergyConserveCoh}Energy conservation and coherence}
In this investigation we regard unitary operators as `energy conserving' if and only if they commute with the total Hamiltonian. The resulting evolution is such that if the system initially is within an energy eigenspace it remains there. One can view this as a quantum counterpart to classical evolution that is restricted to an  energy shell (although the latter usually refers to a narrow range of energies, while we here assume a precise value).

\subsection{\label{Impossible} Creation of superpositions of energy eigenstates}

In this section we justify the claim that one cannot create a superposition between energy eigenstates from an initial state with a well defined energy, if we are limited to energy conserving unitary operations. (One should not confuse this observation with the fact that one in some sense can simulate the effects of superposition under these conditions. See Section \ref{TranslationPhaseRef}.) For more general discussions on reference frames and symmetry preserving operations, see \cite{Bartlett07,Gour08,Marvian11}.

We let $H_S$ and $H_E$ be Hermitian operators on finite-dimensional Hilbert spaces, corresponding to the Hamiltonians on systems $S$ and $E$.
Let $H^{S} = \sum_{k}h^{S}_kP^{S}_l$ be the eigenvalue decomposition where $h^{S}_k \neq h^{S}_{k'}$ for $k\neq k'$. Analogously $H_E = \sum_{l}h^{E}_lP^{E}_l$, with $h^{E}_l \neq h^{E}_{l'}$ for $l\neq l'$.

The joint Hamiltonian  $H := H_{S}\otimes\hat{1}_E + \hat{1}_S\otimes H_{E}$ may have degeneracies due to `energy matchings' of the two separate Hamiltonians. For each energy $x$ we let $\Omega(x) = \{(k,l): h^{S}_k + h^{E}_{l} = x\}$. ($\Omega(x)$ is the empty set if $x$ is not in the spectrum.) We can write the projector onto the eigenspace corresponding to $x$ as 
\begin{equation}
P_{x} := \sum_{(k,l)\in\Omega(x)}P^S_k\otimes P^E_l,
\end{equation}
which is the zero operator if $x$ is not in the spectrum.

Suppose $\eta\in\mathcal{S}(\mathcal{H}_S\otimes\mathcal{H}_E)$ has a definite energy, i.e., $P_x\eta P_x = \eta$ for some $x$. It follows that the reduced states $\eta_S := \Tr_E\eta$ and $\eta_E :=\Tr_S\eta$, on $S$ and $E$, respectively, must be block diagonal with respect to the local energy eigenspaces. To see this, we first note that
\begin{equation}
\begin{split}
\eta_{S}  = & \sum_{(k,l),(k',l')\in\Omega(x)}\Tr_E([P^S_k\otimes P^E_l]\eta[P^S_{k'}\otimes P^E_{l'}])\\
 = & \sum_{(k,l),(k',l)\in\Omega(x)}P^S_k\Tr_E([\hat{1}_S\otimes P^E_l]\eta)P^S_{k'}.
\end{split}
\end{equation}
Next, we observe that $(k,l),(k',l)\in\Omega(x)$ means that $h^{S}_k + h^{E}_{l} = x$ and $h^S_{k'} + h^E_{l} = x$. This implies that $h^S_k = h^S_{k'}$, which by construction implies that $k = k'$. Hence 
\begin{equation}
\eta_{S} =  \sum_{(k,l)\in\Omega(x)}P^S_k\Tr_E([\hat{1}_S\otimes P^E_l]\eta)P^S_{k},
\end{equation}
which means that $\eta_S$ is block-diagonal with respect to the energy eigenspaces of $H_S$. We can make the analogous reasoning for $\eta_E$.

If a unitary operator $V$  is energy conserving, it means that it is block-diagonal with respect to the eigenspaces of $H$, i.e., $V = \sum_x P_x VP_x$, or equivalently $VP_x = P_xV$.

Suppose that the states $\rho_S\in\mathcal{S}(\mathcal{H}_S)$ and $\rho_E \in\mathcal{S}(\mathcal{H}_E)$ each have definite energies, i.e., $P^S_{n}\rho_SP^S_{n} = \rho_S$ for some $n$, and $P^E_{m}\rho_EP^E_{m} = \rho_E$ for some $m$. Hence $\rho_S\otimes\rho_E$ has support on the energy eigenspace corresponding to $x = h^S_n+ h^E_m$. (We do not really need product states. It is enough that joint state has a definite energy.) If $V$ is an energy conserving unitary, it follows that $VP_x = P_xV$. Hence, $\eta :=V\rho_S\otimes\rho_EV^{\dagger}$ is such that $P_x\eta P_x = \eta$. By the above reasonings, we can conclude that the marginal states $\eta_S$ and $\eta_E$ must be block-diagonal.

\subsection{\label{OnAverage} An alternative: Energy preservation on average}
As mentioned above, we here associate energy conservation with unitary operators that commute with the Hamiltonian of
the total system. One can imagine an alternative that only requires energy to be conserved on average, i.e., that the
expectation value of the energy after the operation is the same as before the operation. 
However, if this should hold
for all possible initial states $\rho$, one merely regains a unitary $U$ that commutes with the Hamiltonian $H$, since
\begin{equation}
\begin{split}
 & \Tr(HU\rho U^{\dagger}) = \Tr(H\rho),\quad \forall
\rho\in \mathcal{S}(\mathcal{H}) \\
 & \quad \Leftrightarrow \quad [U,H] =0, 
\end{split}
\end{equation}
where $\mathcal{S}(\mathcal{H})$ denotes the set of density operators on $\mathcal{H}$. In other words, for a unitary operator that does not commute with $H$, there always exists an initial state for which the average energy is not conserved. 

This type of on-average energy conserving unitary operations may also have the power to create and destroy superpositions between distinct energy eigenstates, without access to any additional resources. This can be illustrated by the following example. Suppose that the Hamiltonian $H$ has three eigenvalues $-1,0,1$, with corresponding eigenstates $|\psi_{-1}\rangle, |\psi_0\rangle, |\psi_{1}\rangle$. As one can see, all
states of the form
\begin{equation}
|\nu\rangle = c_1\frac{1}{\sqrt{2}}(|\psi_{-1}\rangle +
e^{i\chi}|\psi_1\rangle) + c_0|\psi_{0}\rangle,
\end{equation}
with $|c_1|^2 + |c_0|^2 = 1$ and $\chi \in\mathbb{R}$, are such that $\langle \nu|H|\nu\rangle =0$. Hence, any unitary operator $U$ such that $U|\psi_0\rangle = (|\psi_{-1}\rangle + |\psi_1\rangle)/\sqrt{2}$ would be energy conserving on average with respect to the initial state $|\psi_0\rangle$. Furthermore, this operation creates the superposition $(|\psi_{-1}\rangle + |\psi_1\rangle)/\sqrt{2}$ of the two distinct energy eigenstates $|\psi_{-1}\rangle, |\psi_1\rangle$, from the initial diagonal state  $|\psi_{0}\rangle$. The conjugate $U^{\dagger}$ similarly destroys the superposition.
Hence, this class of operations has the power to create or destroy superpositions between energy eigenstates without additional resources.

This example illustrates how basic assumptions can affect the nature of coherence as a resource. Although interesting, we will not consider this question further, but rather focus on unitary operators that commute with the Hamiltonian.

\section{\label{DoublyInfinite}Doubly infinite energy ladder}
As a first step to investigate the catalytic properties of coherence we here consider a simple model for energy conserving operations on a combination of a system and an energy reservoir. This type of model has previously been considered in \cite{Skrzypczyk13}.

\subsection{\label{Sec:DoublInfModel}The doubly-infinite ladder model}
Our model consists of two parts: A `system' $S$, modeled by a finite-dimensional complex Hilbert space $\mathcal{H}_S$, and an `energy reservoir' $E$, modeled by a separable Hilbert space $\mathcal{H}_E$. We let $\{|j\rangle\}_{j\in\mathbb{Z}}$ denote a complete orthonormal basis of $\mathcal{H}_E$.
Here $\mathbb{Z}$ denote the set of (negative as well as positive) integers. Given a real number $s>0$ we define the following unbounded Hermitian operator
\begin{equation}
\label{DefHEreservoir}
H_E^{s}:= s\sum_{j\in\mathbb{Z}}j|j\rangle\langle j|.
\end{equation}
We will in the following regard this operator as the Hamiltonian of the energy reservoir $E$, although, needless to say, it is somewhat unphysical, since it has a `bottomless' spectrum. However, we will remedy this problem in Section \ref{Harmonic}. (\cite{Skrzypczyk13} mostly uses a continuum reservoir rather than a discrete model. However, see the remarks in Appendix G of \cite{Skrzypczyk13}.)

Given a finite-dimensional Hilbert space $\mathcal{H}_S$ and a real number $s >0$ we let $H_{s}(\mathcal{H}_S)$ denote the set of Hermitian operators on $\mathcal{H}_S$ such that all its eigenvalues are integer multiples of $s$, i.e., all its eigenvalues can be written as $h_n = s z_n$ where $z_n\in\mathbb{Z}$. The reason for this construction is to `match' all energy differences in $H_S$ with the energy spacings in $H_E^{(s)}$. By this we make sure that for each pair of eigenstates in $S$ there is a transition (in fact infinitely many) in $E$ that can compensate for the change in energy. 
This allows for non-trivial energy-conserving unitary operations. (If $H_S + H_E$ is non-degenerate, we can only implement unitary operators that are diagonal in the joint eigenbasis. On $S$ this leads to mere convex combinations of unitary operations that are diagonal in the unique eigenbasis of $H_S$.)

As a bit of a side-remark one may note that it ultimately may be desirable to develop a generalization which allows for small perturbations of the perfect energy conservation, i.e., allowing evolution within an energy shell, as it often is done in statistical mechanics. However, we will not consider such generalizations in this investigation, but proceed with the theoretically convenient restriction to perfect energy conservation.

As seen by the following lemma, the eigenspaces of the joint Hamiltonian $H_S + H_E^{(s)}$ (we should strictly speaking write $H_S\otimes \hat{1}_E + \hat{1}_S\otimes H_E^{(s)}$) have a very simple structure.
\begin{Lemma}
\label{eigenspaces}
Let $s>0$ be given. Let $\mathcal{H}_S$ be a finite-dimensional Hilbert space with $\dim\mathcal{H}_S=N$,  $H_S\in H_{s}(\mathcal{H}_S)$, with eigenvectors $|\psi_n\rangle$ and corresponding eigenvalues $s z_n$ (where $z_n\in\mathbb{Z}$). Then
\begin{equation}
\begin{split}
H_S + H_E^{(s)}  = & s \sum_{j\in\mathbb{Z}}jP^{(j)},\\
 P^{(j)}  := &  \sum_{n=1}^{N}|\psi_{n}\rangle\langle\psi_{n}|\otimes |j-z_n \rangle\langle j-z_{n}| 
\end{split}
\end{equation}
is an eigenvalue decomposition, where $P^{(j)}$ is the projector onto the eigenspace corresponding to the eigenvalue $js$. 
\end{Lemma}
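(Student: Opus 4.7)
The plan is to verify the claimed equality by direct computation, and then separately check that the $P^{(j)}$ really form an orthogonal resolution of the identity, so that what we have written is genuinely a spectral decomposition.

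First I would expand the two summands separately in the product basis $\{|\psi_n\rangle\otimes|k\rangle\}$: by assumption $H_S = s\sum_n z_n |\psi_n\rangle\langle\psi_n|$ and by definition $H_E^{(s)} = s\sum_{k\in\mathbb{Z}} k\,|k\rangle\langle k|$, so
\begin{equation*}
H_S + H_E^{(s)} = s\sum_{n=1}^{N}\sum_{k\in\mathbb{Z}}(z_n + k)\,|\psi_n\rangle\langle\psi_n|\otimes |k\rangle\langle k|.
\end{equation*}
Then I would reindex by setting $j := z_n + k$, which for each fixed $n$ is a bijection $\mathbb{Z}\to\mathbb{Z}$ (since $z_n\in\mathbb{Z}$), giving $k = j - z_n$ and
\begin{equation*}
H_S + H_E^{(s)} = s\sum_{j\in\mathbb{Z}} j\sum_{n=1}^{N}|\psi_n\rangle\langle\psi_n|\otimes|j-z_n\rangle\langle j-z_n| = s\sum_{j\in\mathbb{Z}} j\,P^{(j)}.
\end{equation*}

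Next I would verify the projector properties of $P^{(j)}$. For fixed $j$, the vectors $|\psi_n\rangle\otimes|j-z_n\rangle$, $n=1,\ldots,N$, are mutually orthonormal because the $|\psi_n\rangle$ alone are orthonormal; hence $P^{(j)}$ is an orthogonal projector of rank $N$ onto $\mathrm{span}\{|\psi_n\rangle\otimes|j-z_n\rangle : n = 1,\ldots,N\}$, and by construction every vector in this subspace is an eigenvector of $H_S + H_E^{(s)}$ with eigenvalue $js$. To see orthogonality across $j$, I would compute $P^{(j)}P^{(j')}$ and use $\langle\psi_n|\psi_{n'}\rangle = \delta_{nn'}$ on the $S$-factor, which collapses the double sum and then leaves $\langle j-z_n|j'-z_n\rangle = \delta_{jj'}$ on the $E$-factor, yielding $P^{(j)}P^{(j')} = \delta_{jj'}P^{(j)}$. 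Finally, for completeness, $\sum_{j} P^{(j)} = \sum_{n}|\psi_n\rangle\langle\psi_n|\otimes\sum_{j}|j-z_n\rangle\langle j-z_n| = \hat{1}_S\otimes \hat{1}_E$, where I use that $j\mapsto j-z_n$ is a bijection of $\mathbb{Z}$ to itself for every $n$.

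The only subtle point is that $H_E^{(s)}$ is unbounded, so the identities above should be read as equalities of (possibly unbounded) operators on the natural dense domain spanned by finite linear combinations of the $|\psi_n\rangle\otimes|k\rangle$; the sum over $j$ in $\sum_j jP^{(j)}$ is then strongly convergent on this domain, which is enough for the statement. I do not expect any genuine obstacle — the lemma is essentially a bookkeeping fact, and the only thing worth being careful about is the reindexing step $j = z_n + k$, which works precisely because $H_S\in H_s(\mathcal{H}_S)$ guarantees $z_n\in\mathbb{Z}$.
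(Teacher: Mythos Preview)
Your proof is correct and complete. The paper does not actually give a proof of this lemma --- it is stated as an elementary observation about the spectral structure of $H_S + H_E^{(s)}$ and left unproved --- and your direct verification via reindexing $j = z_n + k$ is precisely the natural argument the reader is expected to supply.
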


The whole purpose of the construction with the energy reservoir is to keep track of the energy. Hence, for the dynamics on the joint system $SE$ we consider unitary operations that conserve the total energy, i.e, that commute with the joint Hamiltonian $H_S + H_{E}^{(s)}$. However, we furthermore demand that the dynamics should act `uniformly' over the energy levels in the reservoir. To express this a bit more precisely we introduce a  unitary `shift-operator' $\Delta$ on $\mathcal{H}_E$, which can be thought of as a `rigid translation' along the doubly-infinite energy ladder, 
\begin{equation}
\Delta := \sum_{k\in\mathbb{Z}}|k+1\rangle \langle k|.
\end{equation}
One can see that 
\begin{equation*}
\Delta^a\Delta^b = \Delta^b\Delta^a = \Delta^{a+b},\quad  {\Delta^{a}}^{\dagger} =\Delta^{-a},\quad a,b\in\mathbb{Z},
\end{equation*}
where we let $\Delta^{0} = \hat{1}_E$.
\begin{Definition}
Let $s>0$, and let $\mathcal{H}_S$ be a finite-dimensional Hilbert space and let $H_S\in H_{s}(\mathcal{H}_S)$. We use the following notation:
\begin{itemize}
\item $L(\mathcal{H}_S)$ is the set of linear operators on $\mathcal{H}_S$, and $U(\mathcal{H}_S)$ the set of unitary operators on $\mathcal{H}_S$.
\item $\mathbb{LT}(H_S + H_E^{(s)})$ is the set of bounded linear operators $X$ on $\mathcal{H}_S\otimes\mathcal{H}_{E}$ such that 
$[X,H_S + H_E^{(s)}] = 0$,  and $[X,\hat{1}_S\otimes\Delta^a] = 0$ for all $a\in \mathbb{Z}$.
\item $\mathbb{UT}(H_S + H_E^{(s)})$ is the set of unitary operators $U$ on $\mathcal{H}_S\otimes\mathcal{H}_{E}$ such that 
$[U,H_S + H_E^{(s)}] = 0$,  and $[U,\hat{1}_S\otimes\Delta^a] = 0$ for all $a\in \mathbb{Z}$.
\item  Given a Hilbert space $\mathcal{H}$ we let $\mathcal{S}(\mathcal{H})$ denote the set of positive semi-definite trace class operators with trace $1$ on $\mathcal{H}$, which we refer to as density operators or `states'. We let $\mathcal{S}_{+}(\mathcal{H})$ denote the set of density operators with full support.
\end{itemize}
\end{Definition}

The set $\mathbb{UT}(H_S + H_E^{(s)})$ is precisely the set of evolution operators we allow in our model; they conserve energy, and commute with all energy translations $\Delta^{a}$. The following lemma shows that we can describe the set $\mathbb{UT}(H_S + H_E^{(s)})$ in a very convenient manner.
\begin{Lemma}
\label{nvmxcnv}
Let $\mathcal{H}_S$ be finite-dimensional with $N:=\dim\mathcal{H}_S$, and let $H_S\in H_{s}(\mathcal{H}_S)$. Let $|\psi_{n}\rangle$ be orthonormal eigenvectors of $H_S$ with corresponding eigenvalues $s z_n$. Define the mapping
\begin{equation}
\label{bijection}
V(Q) := \sum_{n,n'=1}^{N}|\psi_{n}\rangle\langle\psi_{n}|Q|\psi_{n'}\rangle\langle\psi_{n'}|\otimes \Delta^{z_{n'} -z_{n}},
 \end{equation}
 for all $Q$ in  $L(H_S)$.
The mapping $V$ is a bijection between $L(\mathcal{H}_S)$ and $\mathbb{LT}(H_S + H_E^{(s)})$, which preserves the   structures of these operator spaces, in the sense that
\begin{equation}
\label{preservation}
\begin{split}
 & V(\alpha A + \beta B) = \alpha V(A) + \beta V(B),\\
 & V(AB) = V(A)V(B),\\
 & V(A^{\dagger}) = V(A)^{\dagger},\\
 & V(\hat{1}_S) = \hat{1}_{S}\otimes\hat{1}_E,
\end{split}
 \end{equation}
 for all $A,B\in L(H_S)$ and all $\alpha,\beta\in\mathbb{C}$.
  As a consequence, $V$ is a bijection between $U(\mathcal{H}_S)$ and $\mathbb{UT}(H_S + H_E^{(s)})$.
\end{Lemma}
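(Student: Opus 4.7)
The plan is to prove the statement in three stages: (i) verify that $V$ maps $L(\mathcal{H}_S)$ into $\mathbb{LT}(H_S+H_E^{(s)})$ and establish the algebraic identities in (\ref{preservation}); (ii) construct an explicit inverse on $\mathbb{LT}(H_S+H_E^{(s)})$ to obtain bijectivity; (iii) deduce the restriction to unitaries.

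For stage (i), linearity of $V$ is immediate. For multiplicativity, I expand $V(A)V(B)$ into a quadruple sum, collapse the middle resolution of the identity $\sum_{n'}|\psi_{n'}\rangle\langle\psi_{n'}|$, and combine the shift exponents via $\Delta^{z_{n'}-z_n}\Delta^{z_{m'}-z_{n'}}=\Delta^{z_{m'}-z_n}$ to recover $V(AB)$. The adjoint identity follows from $(\Delta^a)^\dagger=\Delta^{-a}$ and relabeling the summation indices, while $V(\hat{1}_S)=\sum_n|\psi_n\rangle\langle\psi_n|\otimes\Delta^0=\hat{1}_S\otimes\hat{1}_E$. Commutation of $V(Q)$ with $\hat{1}_S\otimes\Delta^a$ is immediate because every $\Delta^{z_{n'}-z_n}$ commutes with $\Delta^a$. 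Commutation with $H_S+H_E^{(s)}$ is most cleanly seen by using Lemma \ref{eigenspaces}: a direct computation shows that $P^{(j)}V(Q)P^{(j)}$ equals $\sum_{n,n'}\langle\psi_n|Q|\psi_{n'}\rangle|\psi_n\rangle\langle\psi_{n'}|\otimes|j-z_n\rangle\langle j-z_{n'}|$, and summing over $j$ reassembles $\Delta^{z_{n'}-z_n}=\sum_j|j-z_n\rangle\langle j-z_{n'}|$, so $\sum_j P^{(j)}V(Q)P^{(j)}=V(Q)$. Boundedness is automatic since the sum has $N^2$ terms and each is a bounded operator.

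For stage (ii), injectivity follows from the explicit matrix element $\langle\psi_n\otimes(j-z_n)|V(Q)|\psi_{n'}\otimes(j-z_{n'})\rangle=\langle\psi_n|Q|\psi_{n'}\rangle$, so $V(Q)=0$ forces $Q=0$. The surjectivity argument is the main technical step. Given $X\in\mathbb{LT}(H_S+H_E^{(s)})$, I expand $X$ in the product basis with coefficients $X_{n,j;n',j'}$. The condition $[X,\hat{1}_S\otimes\Delta^a]=0$ for every $a\in\mathbb{Z}$ forces these coefficients to depend on $j$ and $j'$ only through the difference $j-j'$. The energy-conservation condition $[X,H_S+H_E^{(s)}]=0$ then restricts this difference to satisfy $j'-j=z_n-z_{n'}$, leaving exactly $N^2$ free parameters $q_{n,n'}:=X_{n,0;n',z_n-z_{n'}}$. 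Defining $Q:=\sum_{n,n'}q_{n,n'}|\psi_n\rangle\langle\psi_{n'}|$ and using $\sum_j|j\rangle\langle j+z_n-z_{n'}|=\Delta^{z_{n'}-z_n}$ yields $V(Q)=X$; boundedness of $Q$ is automatic since $\mathcal{H}_S$ is finite-dimensional.

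For stage (iii), unitarity of $V(U)$ for unitary $U\in U(\mathcal{H}_S)$ is immediate from (\ref{preservation}): $V(U)V(U)^\dagger=V(UU^\dagger)=V(\hat{1}_S)=\hat{1}_S\otimes\hat{1}_E$ and likewise on the other side, so $V(U)\in\mathbb{UT}(H_S+H_E^{(s)})$. Conversely, if $V(Q)$ is unitary then $V(QQ^\dagger)=V(\hat{1}_S)$, and injectivity from stage (ii) gives $QQ^\dagger=\hat{1}_S$ and similarly $Q^\dagger Q=\hat{1}_S$. The main obstacle I anticipate is the bookkeeping in the surjectivity argument of stage (ii): one has to carefully track both the translation invariance and the energy-selection rule at the level of matrix elements, and verify that the residual degrees of freedom match exactly the $N^2$ parameters of an operator in $L(\mathcal{H}_S)$; everything else is routine algebra built on these two structural constraints.
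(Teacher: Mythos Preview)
Your proof is correct and follows essentially the same approach as the paper's. The only cosmetic difference is in the surjectivity argument: the paper first uses energy conservation to write $Y=\sum_j P^{(j)}YP^{(j)}$ and then applies translation invariance via $[\hat{1}_S\otimes\Delta^{\dagger j}]Y[\hat{1}_S\otimes\Delta^j]=Y$ to extract $X=\sum_{n,n'}(\langle\psi_n|\langle -z_n|)Y(|\psi_{n'}\rangle|-z_{n'}\rangle)|\psi_n\rangle\langle\psi_{n'}|$, whereas you impose translation invariance first and the energy selection rule second; the two orderings are interchangeable and lead to the same preimage.
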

Equation (\ref{bijection}) provides us with a very convenient handle on $\mathbb{UT}(H_S + H_E^{(s)})$, since we can reach each element of $\mathbb{UT}(H_S + H_E^{(s)})$ uniquely via $V(U)$ for some $U\in U(\mathcal{H}_S)$. Due to this we will in this investigation hardly ever refer explicitly to the set $\mathbb{UT}(H_S + H_E^{(s)})$, but rather describe it indirectly via $V$.

It is worth noting that one can rewrite the mapping $V$ as
\begin{equation}
\label{nfkjnb}
\begin{split}
V(Q)  = & \sum_{j\in\mathbb{Z}}V_{j}(Q),\\
 V_{j}(Q)  := & \sum_{n,n'=1}^{N}|\psi_{n}\rangle\langle\psi_{n}|Q|\psi_{n'}\rangle\langle\psi_{n'}|\otimes|j-z_n\rangle\langle j-z_{n'}|,
\end{split}
\end{equation}
where $V_{j}(Q) = P^{(j)}V_{j}(Q) = V_{j}(Q)P^{(j)}$.
Hence, what $V$ essentially does is to embed an infinite number of `copies' of $Q$ into the space of operators on $\mathcal{H}_S\otimes\mathcal{H}_E$. Moreover, if $U\in U(\mathcal{H}_S)$ then 
$V_{j}(U)V_j(U)^{\dagger} = V_j(U)^{\dagger} V_{j}(U) = P^{(j)}$. Hence, each $V_{j}(U)$ is unitary on the subspace onto which $P^{(j)}$ projects.

\begin{proof}[Proof of Lemma \ref{nvmxcnv}.]
Since $V(Q)$ merely is a finite sum of bounded operators, it follows that $V$ maps $L(\mathcal{H}_S)$ into the set of bounded linear operators on $\mathcal{H}_S\otimes\mathcal{H}_E$. Since all $\Delta^{a}$ commutes, it follows that $[V(Q),\hat{1}_{S}\otimes\Delta^{a}]=0$ for all $a\in\mathbb{Z}$. One can also  verify (e.g. via showing $[H^{(s)}_E,\Delta^a] = sa\Delta^{a}$) that $[V(Q),H_S + H_E^{(s)}]=0$.  Hence $V(L(\mathcal{H}_S))\subseteq \mathbb{LT}(H_S + H_E^{(s)})$. It  furthermore only requires a verifications to see that the conditions in (\ref{preservation}) hold.  
It thus only remains to show that $V$ is surjective [i.e., $V(L(\mathcal{H}_S)) = \mathbb{LT}(H_S + H_E^{(s)})$] and injective (i.e., two different elements in $L(\mathcal{H}_S)$ cannot get mapped to the same element).
The injectivity can be obtained if we apply (\ref{bijection})  to $V(Q_1)= V(Q_2)$, use the orthonormality of $|\psi_n\rangle$, and note that $\Delta^{a}$ is never the zero operator. For the surjectivity we assume $Y \in \mathbb{LT}(H_S + H_E^{(s)})$, and note that $[Y,H_S + H_E^{(s)}] = 0$ implies that $Y = \sum_{j\in \mathbb{Z}}P^{(j)}YP^{(j)}$, with $P^{(j)}$ as in Lemma \ref{eigenspaces}.
Hence,
\begin{equation}
\begin{split}
Y  = & \sum_{j}\sum_{n,n'}   (\langle\psi_{n}|\langle j-z_{n}|)Y (|\psi_{n'}\rangle|j-z_{n'} \rangle)\\
& \times |\psi_n\rangle\langle\psi_{n'}|\otimes|j-z_n\rangle\langle j-z_{n'}|.
\end{split}
\end{equation}
Using $|j-z_{n'} \rangle = \Delta^{j}|-z_{n'}\rangle$, and $[\hat{1}_S\otimes{\Delta^{\dagger}}^{j}]Y[\hat{1}_S\otimes \Delta^{j}] = Y$ (since $[Y,\hat{1}_S\otimes \Delta^{j}]=0$) we can conclude that 
\begin{equation}
X := \sum_{n,n'} (\langle\psi_{n}|\langle -z_{n}|)Y (|\psi_{n'}\rangle|-z_{n'} \rangle)|\psi_n\rangle\langle\psi_{n'}|,
\end{equation}
is such that $Y = V(X)$, which proves the Lemma. Combining the bijectivity of $V$ with (\ref{preservation}) yields that $V$ also is a bijection between $U(\mathcal{H}_S)$ and $\mathbb{UT}(H_S + H_E^{(s)})$.
\end{proof}

\subsection{\label{Hamiltonians} Concerning Hamiltonians}

As seen from the previous sections we let the Hamiltonians $H_S$ and $H_E$  characterize the energy in the systems. At first sight it may appear as if we have ignored the dynamics that these Hamiltonians induce. However, this can be understood in the sense of the interaction picture.

By inserting a Hermitian operator $H$ into (\ref{bijection}) one obtains a Hermitian operator $V(H)$ that, by construction,  commutes with $H_S + H_E$ (and all $\Delta^{a}$). Regarding $V(H)$ as an interaction  term we can construct the total Hamiltonian
\begin{equation}
H_{\mathrm{tot}} = H_S + H_E + V(H),
\end{equation}
 which on the combined system $SE$ induces the unitary evolution 
\begin{equation} 
U_{\mathrm{tot}}= e^{-itH_{\mathrm{tot}}} = e^{-it V(H)}e^{-it(H_S + H_E)},
\end{equation}
where the last equality is due to $[V(H),H_S + H_E]=0$. By shifting to the interaction picture, the evolution is thus described solely by the unitary operator $e^{-it V(H)}$, which is precisely a unitary operator on the form $V(U)$. Conversely, all $V(U)$ can be reached in this way by a suitable choice of $H$ and $t$.
In other words, we can view all the unitary operations $V(U)$ that appear in this investigation as describing an induced Hamiltonian evolution represented in the interaction picture.

\subsection{\label{Sec:inducedch}Induced channels}
Given the unitary operators $V(U)$ on the joint system $SE$ we can induce channels on $S$ and $E$. 

For a fixed state $\sigma$ on the energy reservoir each choice of unitary $U$ on $S$ induces a channel $\Phi^{S,H}_{\sigma,U}$ on system $S$ via
\begin{equation}
\label{channelOnS}
\begin{split}
\Phi^{S,H}_{\sigma,U}(\rho)  := &  \Tr_E V(U)\rho\otimes\sigma V(U)^{\dagger}\\
 = & \sum_{n,n',m,m'=1}^{N}\langle\psi_{n}|U|\psi_{n'}\rangle\langle\psi_{n'}|\rho |\psi_{m'}\rangle\langle\psi_{m'}|U^{\dagger}|\psi_{m}\rangle \\
 & \quad\quad\times |\psi_{n}\rangle\langle\psi_{m}|\Tr (\Delta^{z_{n'} -z_{n}-z_{m'} +z_{m}}\sigma).
\end{split} 
\end{equation}
We denote the set of channels that can be reached on $S$, by using $\sigma$ as a resource, by
\begin{equation}
\mathcal{C}^{S,H}(\sigma) := \Big\{\Phi^{S,H}_{\sigma,U}: U\in U(\mathcal{H}_S)\Big\}.
\end{equation}
As seen, both $\Phi^{S,H}_{\sigma,U}$ and $\mathcal{C}^{S,H}$  depend on the choice of Hamiltonian $H_S$, both via the eigenstates $|\psi_{n}\rangle$, and the eigenvalues $sz_n$. However, we will in the following  often drop one or both of the superscripts when they are not needed, i.e., $\Phi^{S,H}_{\sigma,U}$, $\Phi^{S}_{\sigma,U}$, $\Phi^{H}_{\sigma,U}$, and $\Phi_{\sigma,U}$ denote the same object. Similarly for $\mathcal{C}^{S,H}$, $\mathcal{C}^{S}$, and $\mathcal{C}$.

We can also express the effect of $V(U)$ on the energy reservoir via the channel 
\begin{equation}
\label{channelOnE}
\begin{split}
\Lambda_{\rho,U}(\sigma)  := & \Tr_{S}V(U)\rho\otimes \sigma V(U)^{\dagger}\\
  = & \sum_{n,n',m'=1}^{N}\langle\psi_{n}|U|\psi_{n'}\rangle\langle\psi_{n'}|\rho |\psi_{m'}\rangle\langle\psi_{m'}|U^{\dagger}|\psi_{n}\rangle\\
& \quad\quad\times\Delta^{z_{n'} -z_{n}}\sigma {\Delta^{z_{m'} -z_{n}}}^{\dagger}.
\end{split}
\end{equation} 

{\it The induced channels are unital.--}
One may note that $\Phi^{S,H}_{\sigma,U}(\hat{1}) = \hat{1}$ and $\Lambda_{\rho,U}(\hat{1}) = \hat{1}$, i.e., both $\Phi_{\sigma,U}$ and $\Lambda_{\rho,U}$ are unital (or mixing enhancing). It follows that the outputs of these channels cannot be less mixed than the inputs. (For unital channels and mixing, see e.g. \cite{Bapat,Chefles}.)
 At first sight this may seem to suggest that the set of operations that we can implement on $S$ would be rather limited, e.g., since we cannot make the state of the system more pure. However, one should keep in mind that one can append ancillary systems and implement channels $\Phi_{\sigma,U}$ on this extended system. (We employ this extensively in Sec.~\ref{ExplicitEnergyReservoirs}.) This increases the range of operations that can be implemented on $S$. For an example,  see  the end of Sec.~\ref{Sec:approxEnergymix}, where it is shown that all channels can be implemented on $S$ with the help of  ancillary systems and a high degree of coherence.

{\it The need for time-control.--} Needless to say, the implementation of operations $\Phi$ on $S$ requires time-control, i.e., we need access to a clock. 
Although we will not consider this issue here, a clock may itself be viewed as an additional resource \cite{Peres80,Buzek99,Buzek00,Janzing02,Janzing03a,Janzing03b} that should be accounted for.

In our case we would use a clock both for the purpose of implementing $V(U) = e^{-it V(H)}$ (with the Hamiltonian picture as described in Sec.~\ref{Hamiltonians}) as well as for controlling the effects of the intrinsic time-evolution caused by $H_S$ and $H_E$. (This observation is not particular for the setup we consider here, but would apply to many cases of non-trivial operations on systems with intrinsic dynamics.)

As an example, suppose the reservoir $E$ evolves for a time $\Delta t$ before we attach it to $S$ and perform the operation $\Phi$. The state of the reservoir would thus be characterized by $e^{-i\Delta t H_E}\sigma e^{i\Delta t H_E}$ rather than $\sigma$. Due to the fact that $V(U)$ commutes with $H_S +H_E$, it follows that 
\begin{equation*}
\begin{split}
& \Phi_{e^{-i\Delta t H_E}\sigma  e^{i\Delta t H_E},U}(\rho) \\
& =   e^{-i\Delta t H_S } \Phi_{\sigma,U}\bigg(e^{i\Delta t H_S} \rho e^{-i\Delta t H_S} \bigg)  e^{i\Delta t H_S}\\
& =    \Phi_{\sigma, e^{-i\Delta t H_S }Ue^{i\Delta t H_S} }(\rho).  
\end{split}
\end{equation*}
The effect of the shifted state on $E$ is thus to conjugate the original mapping on $S$ by the intrinsic evolution of $H_S$, which does not per se `destroy' the coherence properties of the implemented map. 
However, if $\Delta t$ would take a random value each time that the operation is implemented (cf. the `super-$\mathcal{G}$-twirling' in \cite{Gour08}), this would lead to  dephasing effects (where the details depend on the spectrum of $H_S$).

\subsection{\label{Sec:approxEnergymix}Approximation of energy-mixing unitary operations}

At first sight it may not be clear what kind of operations one can reach on $S$ in the doubly-infinite ladder-model. Here we show that in the limit of a high degree of coherence, all unitary operators can be implemented on $S$.  

In the following we shall see that the difference between the induced channel $\Phi_{\sigma,U}$ and the unitary channel 
\begin{equation}
\mathcal{U}_U(\rho) := U\rho U^{\dagger},
\end{equation}
becomes very small when the reservoir is sufficiently coherent.

For this purpose we do, on the set of super-operators $\Gamma:L(\mathcal{H}_S)\rightarrow L(\mathcal{H}_S)$,  consider the operator norm induced by the trace norm $\Vert \Gamma\Vert_{\mathrm{tr}}:= \sup_{\Vert X\Vert_1 =1}\Vert \Gamma(X)\Vert_1$.  By trivially extending the super-operator to a sufficiently large ancillary Hilbert space $\mathcal{H}_A$ (it is sufficient with $\dim\mathcal{H}_A = \dim\mathcal{H}_S$ \cite{Aharonov97}) one can define the diamond norm \cite{Kitaev97, Aharonov97} $\Vert \Gamma\Vert_{\diamond} := \Vert \Gamma\otimes I_A \Vert_{\mathrm{tr}}$, where $I_A$ is the identity map on $L(\mathcal{H}_A)$.

In the following, we let $\mathbb{C}^{n\times n}$ denote the set of $n\times n$ matrices with complex entries.
Furthermore, for $A,B\in \mathbb{C}^{n\times n}$ we let $A*B$ denote the Hadamard product, i.e. the component-wise product $(A*B)_{jk} := A_{jk}B_{jk}$. We furthermore let $\Vert \cdot\Vert$ denote the spectral norm $\Vert Q\Vert := \sup_{\Vert \psi\Vert = 1}\Vert Q\psi\Vert$.
For an arbitrary $Q\in \mathbb{C}^{N\times N}$ we let in the following $\boldsymbol{(}s_{l}(Q)\boldsymbol{)}_{l=1}^{N}$ denote the singular values of $Q$ (including zero values) ordered non-increasingly $s_1(Q)\geq s_2(Q)\geq \cdots \geq s_N(Q)$.  
\begin{Lemma}[Theorem 5.5.4 on p.~334 in \cite{Topics}]
Let $A,B\in\mathbb{C}^{N\times N}$ then 
\begin{equation}
\sum_{l=1}^{k}s_{l}(A*B) \leq \sum_{l=1}^{k}s_l(A)s_{l}(B),
\end{equation}
for $k= 1,\ldots, N$.
\end{Lemma}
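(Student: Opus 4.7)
My plan is to exploit the classical representation of the Hadamard product as an isometric compression of the Kronecker product. Introducing the isometry $J:\mathbb{C}^N\to\mathbb{C}^N\otimes\mathbb{C}^N$ defined on the standard basis by $Je_i=e_i\otimes e_i$, an entry-by-entry calculation gives $A*B = J^{\dagger}(A\otimes B)J$. Since $J^{\dagger}J=\hat{1}$ and the singular values of $A\otimes B$ are exactly the products $s_i(A)s_j(B)$, the Ky Fan compression inequality $\sum_{l=1}^k s_l(P^{\dagger}MQ)\leq \sum_{l=1}^k s_l(M)$ for isometries $P,Q$ immediately delivers the preliminary bound $\sum_{l=1}^k s_l(A*B)\leq \sum_{l=1}^k s_l(A\otimes B)$. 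However, the right-hand side picks the top $k$ elements of the full multiset $\{s_i(A)s_j(B)\}_{i,j}$, which generically strictly exceeds the ``diagonal pairing'' $\sum_{l=1}^k s_l(A)s_l(B)$, so closing this gap is the real content of the lemma.

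To sharpen the estimate I would invoke the Ky Fan variational principle $\sum_{l=1}^k s_l(M) = \max\sum_l |\langle u_l,Mv_l\rangle|$, the maximum being over orthonormal $k$-tuples $\{u_l\},\{v_l\}$. Applied to $A*B$ this becomes
\begin{equation}
\sum_{l=1}^k s_l(A*B) = \max_{\{u_l\},\{v_l\}}\sum_l \bigl|\langle Ju_l,(A\otimes B)Jv_l\rangle\bigr|.
\end{equation}
The decisive point is that the vectors $Ju_l,Jv_l$ are \emph{not} arbitrary orthonormal elements of $\mathbb{C}^{N^2}$: they are constrained to lie in the $N$-dimensional ``diagonal'' subspace $\mathrm{range}(J)=\mathrm{span}\{e_i\otimes e_i\}_{i=1}^N$. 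Inserting the singular value decompositions $A=U_A\Sigma_A V_A^{\dagger}$ and $B=U_B\Sigma_B V_B^{\dagger}$ with $\Sigma_A,\Sigma_B$ diagonal and decreasingly ordered, the inner product rewrites as $\langle p_l,(\Sigma_A\otimes\Sigma_B)q_l\rangle$ with $p_l:=(U_A\otimes U_B)^{\dagger}Ju_l$ and $q_l:=(V_A\otimes V_B)^{\dagger}Jv_l$, and expanding against the diagonal weights of $\Sigma_A\otimes\Sigma_B$ turns the target into $\sum_{i,j}s_i(A)s_j(B)\,c_{ij}$ with nonnegative coefficients $c_{ij}:=\sum_l |(p_l)_{ij}^{*}(q_l)_{ij}|$.

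The crux of the argument, and the step I expect to be technically hardest, is converting the geometric constraint ``$p_l,q_l$ lie in the image of $J$ under the relevant unitary, and $\{u_l\},\{v_l\}$ are orthonormal'' into a ``$k$-doubly substochastic'' statement on the coefficient matrix $C=(c_{ij})$: namely, row and column sums at most $1$ and total mass at most $k$. The orthonormality of $\{u_l\},\{v_l\}$ combined with Cauchy--Schwarz and $\Vert p_l\Vert=\Vert q_l\Vert=1$ readily controls total mass, but bounding the individual marginal sums requires exploiting the tensor-product geometry of $\mathrm{range}(J)$ in an essential way, and this is precisely where the proof moves from a soft operator-theoretic argument into genuine matrix analysis. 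Once substochasticity is established, Birkhoff's theorem decomposes $C$ as a convex combination of partial permutation matrices of rank at most $k$, and the rearrangement inequality applied to the decreasingly ordered sequences $(s_i(A))$ and $(s_j(B))$ pairs largest with largest, yielding $\sum_{i,j}s_i(A)s_j(B)c_{ij}\leq \sum_{l=1}^k s_l(A)s_l(B)$ and completing the proof.
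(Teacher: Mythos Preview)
The paper does not prove this lemma; it is quoted as Theorem 5.5.4 of Horn and Johnson's \emph{Topics in Matrix Analysis}, so there is no in-paper argument to compare against. Your outline is nonetheless a legitimate route to the result, and the step you flag as hardest can be closed.

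For the marginal bounds on $C=(c_{ij})$: by Cauchy--Schwarz in $j$ and then in $l$,
\[
\sum_j c_{ij}\leq \sum_l\Bigl(\sum_j|(p_l)_{ij}|^2\Bigr)^{1/2}\Bigl(\sum_j|(q_l)_{ij}|^2\Bigr)^{1/2}\leq\Bigl(\sum_{l,j}|(p_l)_{ij}|^2\Bigr)^{1/2}\Bigl(\sum_{l,j}|(q_l)_{ij}|^2\Bigr)^{1/2}.
\]
Writing $(p_l)_{ij}=\sum_m (U_A^{\dagger})_{im}(U_B^{\dagger})_{jm}(u_l)_m$ and using unitarity of $U_B$ in the $j$-sum gives $\sum_j|(p_l)_{ij}|^2=\sum_m|(U_A^{\dagger})_{im}|^2|(u_l)_m|^2$; summing over $l$, Bessel's inequality for the orthonormal system $\{u_l\}$ followed by unitarity of $U_A$ yields $\sum_{l,j}|(p_l)_{ij}|^2\leq 1$. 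Likewise for the $q$-factor, hence $\sum_j c_{ij}\leq 1$, and by the $i\leftrightarrow j$ symmetry also $\sum_i c_{ij}\leq 1$. This is precisely the tensor-product geometry you anticipated: the $j$-sum collapses because $U_B$ is unitary, the $l$-sum because $\{u_l\}$ is orthonormal.

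One caution on your closing step: what you need is not Birkhoff's theorem but a statement about the polytope $\{C\geq 0:\text{row and column sums}\leq 1,\ \sum_{ij} c_{ij}\leq k\}$, whose extreme points being sub-permutations of rank $\leq k$ is true for integer $k$ but requires its own integrality argument. You can bypass this entirely by Abel summation: with $s_i(A)=\sum_{m\geq i}a_m$ and $s_j(B)=\sum_{n\geq j}b_n$ (all $a_m,b_n\geq 0$),
\[
\sum_{i,j}s_i(A)s_j(B)c_{ij}=\sum_{m,n}a_mb_n\sum_{i\leq m,\,j\leq n}c_{ij}\leq\sum_{m,n}a_mb_n\min(m,n,k)=\sum_{l=1}^k s_l(A)s_l(B),
\]
where the inequality uses exactly your three constraints on $C$ and the last equality reverses the summation by parts. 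With these two fills your proof is complete.
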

Since $\Vert Q\Vert_1 = \sum_{l=1}^{N}s_{l}(Q)$ and $\Vert Q \Vert = s_1(Q)$ it follows directly from the above lemma that
\begin{equation}
\label{zuiozoui}
\Vert A* B\Vert_1 \leq \Vert A\Vert_1 \Vert B\Vert.
\end{equation}

\begin{Lemma}[Eq.~(3.7.2) on p.~223 in \cite{Topics}]
\label{abvkbja}
Let $C\in \mathbb{C}^{n\times n}$, then 
\begin{equation}
\Vert C \Vert \leq \sqrt{\max_{j}\sum_{k}|C_{jk}|} \sqrt{\max_{k}\sum_{j}|C_{jk}|}. 
\end{equation}
\end{Lemma}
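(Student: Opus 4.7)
The plan is to prove this bound by a direct Cauchy--Schwarz argument on the components, essentially the Schur test for matrices. I would start from the definition $\Vert C\Vert = \sup_{\Vert v\Vert = 1}\Vert Cv\Vert$ and estimate $\Vert Cv\Vert^{2} = \sum_{j}|\sum_{k} C_{jk} v_{k}|^{2}$ row by row.

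The key step is to split each term inside the modulus as $C_{jk}v_{k} = \sqrt{|C_{jk}|}\cdot\bigl(\sqrt{|C_{jk}|}\, e^{i\theta_{jk}} v_{k}\bigr)$, where $e^{i\theta_{jk}}$ absorbs the phase of $C_{jk}$ (zero entries can be handled trivially, or by a standard limiting argument). Applying Cauchy--Schwarz in $k$ to this grouping gives
\begin{equation*}
\Bigl|\sum_{k} C_{jk} v_{k}\Bigr|^{2} \leq \Bigl(\sum_{k}|C_{jk}|\Bigr)\Bigl(\sum_{k}|C_{jk}|\,|v_{k}|^{2}\Bigr).
\end{equation*}
The first factor depends only on the row $j$ and can be bounded uniformly by $R := \max_{j}\sum_{k}|C_{jk}|$.

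Summing the above over $j$ and pulling out the uniform bound $R$, I would then swap the order of summation in the remaining double sum:
\begin{equation*}
\sum_{j}\sum_{k}|C_{jk}|\,|v_{k}|^{2} = \sum_{k}|v_{k}|^{2}\sum_{j}|C_{jk}| \leq K\Vert v\Vert^{2},
\end{equation*}
where $K := \max_{k}\sum_{j}|C_{jk}|$. Combining these estimates yields $\Vert Cv\Vert^{2}\leq R K\Vert v\Vert^{2}$, and taking the supremum over unit $v$ and a square root delivers the claim.

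I do not expect a real obstacle here: the argument is elementary once one commits to the right Cauchy--Schwarz pairing. The only detail that warrants care is the handling of the phases and the possible vanishing of $|C_{jk}|$, but both are cosmetic (replace $e^{i\theta_{jk}}$ by any unimodular number when $C_{jk}=0$, since the corresponding contribution is zero anyway). The proof requires no structural properties of $C$ beyond its matrix entries, which matches the fact that the bound is stated for arbitrary complex $n\times n$ matrices.
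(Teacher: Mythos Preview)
Your argument is correct; it is the standard Schur test, and the Cauchy--Schwarz pairing you chose is exactly the right one. The paper does not give its own proof of this lemma at all: it simply cites Eq.~(3.7.2) in Horn and Johnson's \emph{Topics in Matrix Analysis}. So there is nothing to compare against beyond noting that your proof is essentially the textbook one.
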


\begin{Proposition}
\label{aksdjfbvakv}
Let $s>0$, $\mathcal{H}_S$ with $\dim\mathcal{H}_S= N$, and $H_S\in H_s(\mathcal{H}_S)$, with eigenvalues $sz_n$ and corresponding orthonormal eigenvectors $|\psi_n\rangle$. Let $\Phi_{\sigma,U}$ be as defined in equation (\ref{channelOnS}), and let $z_{\mathrm{max}} := \max_{n}z_n$ and $z_{\mathrm{min}} := \min_{n}z_n$. If $\sigma$ is such that 
\begin{equation}
\label{ljbnvaknk}
|1-\Tr(\Delta^{a}\sigma)| \leq \epsilon, \quad \forall a: |a|\leq 2(z_{\mathrm{max}}-z_{\mathrm{min}}), 
\end{equation}
then for very $U\in U(\mathcal{H}_S)$, it is the case that 
\begin{equation}
\Vert \mathcal{U}_U -\Phi_{\sigma,U}\Vert_{\mathrm{tr}} \leq  N^2\epsilon
\end{equation}
and
\begin{equation}
\Vert \mathcal{U}_U -\Phi_{\sigma,U}\Vert_{\diamond} \leq N^3\epsilon.
\end{equation}
\end{Proposition}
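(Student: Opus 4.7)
The plan is to expand $(\Phi_{\sigma,U}-\mathcal{U}_U)(\rho)$ entry-by-entry in the $H_S$-eigenbasis $\{|\psi_n\rangle\}_{n=1}^{N}$, bound each entry by a single Cauchy--Schwarz step, and sum the $N^2$ entries via the triangle inequality on trace norms. Writing $U_{nn'}:=\langle\psi_n|U|\psi_{n'}\rangle$, and subtracting $\mathcal{U}_U(\rho)=\sum_{n,n',m,m'}U_{nn'}\bar{U}_{mm'}\rho_{n'm'}|\psi_n\rangle\langle\psi_m|$ from Eq.~(\ref{channelOnS}), the difference equals $\sum_{n,m}M_{nm}\,|\psi_n\rangle\langle\psi_m|$ with
\[
M_{nm}=\sum_{n',m'}U_{nn'}\bar{U}_{mm'}\rho_{n'm'}\bigl[\,\Tr(\Delta^{z_{n'}-z_n-z_{m'}+z_m}\sigma)-1\bigr].
\]
The exponent always satisfies $|z_{n'}-z_n-z_{m'}+z_m|\leq 2(z_{\max}-z_{\min})$, so hypothesis~(\ref{ljbnvaknk}) bounds each bracket in the sum by $\epsilon$ in modulus.

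For the trace-norm bound, I apply Cauchy--Schwarz to the $(n',m')$-sum, viewing $(n',m')\mapsto |U_{nn'}||U_{mm'}|$ and $(n',m')\mapsto |\rho_{n'm'}|$ as vectors on an $N^2$-element index set. Unitarity of $U$ gives $\sum_{n'}|U_{nn'}|^2=\sum_{m'}|U_{mm'}|^2=1$, so
\[
|M_{nm}|\leq \epsilon\sqrt{\sum_{n',m'}|U_{nn'}|^2|U_{mm'}|^2}\,\|\rho\|_F=\epsilon\|\rho\|_F.
\]
Combining with $\||\psi_n\rangle\langle\psi_m|\|_1=1$, the triangle inequality on the trace norm, and $\|\rho\|_F\leq \|\rho\|_1$, yields $\|(\mathcal{U}_U-\Phi_{\sigma,U})(\rho)\|_1\leq \sum_{n,m}|M_{nm}|\leq N^2\epsilon\|\rho\|_1$, which is the first bound.

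For the diamond-norm bound, I rerun the same estimate on $(\Phi_{\sigma,U}-\mathcal{U}_U)\otimes I_A$ with $\dim\mathcal{H}_A=N$. Decomposing $X\in L(\mathcal{H}_S\otimes\mathcal{H}_A)$ into blocks $X=\sum_{i,j}Y_{ij}\otimes|i\rangle\langle j|_A$ with $Y_{ij}\in L(\mathcal{H}_S)$, the extended super-operator sends $X$ to $\sum_{i,j}(\Phi_{\sigma,U}-\mathcal{U}_U)(Y_{ij})\otimes|i\rangle\langle j|_A$. The per-entry bound $|[(\Phi_{\sigma,U}-\mathcal{U}_U)(Y_{ij})]_{nm}|\leq \epsilon\|Y_{ij}\|_F$ from the previous paragraph applies block-wise; summing over the $N^2$ values of $(n,m)$ leaves $\epsilon N^2\sum_{i,j}\|Y_{ij}\|_F$, and a Cauchy--Schwarz on the $N^2$ blocks, namely $\sum_{i,j}\|Y_{ij}\|_F\leq N\sqrt{\sum_{i,j}\|Y_{ij}\|_F^2}=N\|X\|_F\leq N\|X\|_1$, supplies one additional factor $N$, giving $\|(\Phi_{\sigma,U}-\mathcal{U}_U)\otimes I_A\|_{\mathrm{tr}}\leq N^3\epsilon$.

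The main technical point is the observation that a single Cauchy--Schwarz in the four-index sum absorbs the unitarity of $U$ into a dimensionless factor of $1$, so the only dimension dependence ($N^2$, and then one extra $N$ for the ancilla) comes from counting summands in the two triangle-inequality steps. Lemma~\ref{abvkbja} and Eq.~(\ref{zuiozoui}) are therefore not strictly needed for this particular estimate, although either could be used to sharpen the constants or to handle more general reservoir states.
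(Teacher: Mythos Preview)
Your proof is correct and reaches the same constants $N^2\epsilon$ and $N^3\epsilon$, but the route differs from the paper's. The paper groups the difference into a double sum of Hadamard products $A^{(k,l,n,n')}*B^{(k,l)}$, then invokes the singular-value inequality $\|A*B\|_1\le\|A\|_1\|B\|$ (Eq.~(\ref{zuiozoui})) together with Lemma~\ref{abvkbja} to obtain $\|B^{(k,l)}\|\le N\epsilon$; the trace norm $\|A^{(k,l,n,n')}\|_1$ collapses to a single matrix element of $X$, and a final Cauchy--Schwarz on the entry sum supplies the remaining factor of $N$ (or $NM$). You bypass the Hadamard machinery entirely: a single Cauchy--Schwarz on the inner $(n',m')$-sum absorbs the unitarity of $U$ for free, and the dimension factors come purely from counting the $N^2$ rank-one summands (and an additional $N$ from the $N^2$ ancilla blocks). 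Your argument is more elementary and self-contained; the paper's version is more modular, in that the Hadamard-product bound and the row/column-sum estimate of Lemma~\ref{abvkbja} are reusable tools that also drive the closely related Proposition for the states $|\eta_{L,l_0}\rangle$. One small cosmetic point: you write $\rho$ throughout, but the induced trace norm is a supremum over all $X$ with $\|X\|_1=1$; since your estimates use only $\|X\|_F\le\|X\|_1$ and no positivity, this is purely notational.
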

Equation (\ref{ljbnvaknk}) provides one possible formalization of what a `high degree of coherence'  should mean in this model, i.e., a wide range of values of $a$ for which $\Tr(\Delta^{a}\sigma)$ is close to $1$

\begin{proof}
Let $\mathcal{H}_A$ be a Hilbert space with orthonormal basis $\{|a_n\rangle\}_{n=1}^{M}$. (The choice of $M$ will in the end give us the induced norm ($M=1$) or the diamond norm ($M=N$).)
Let $X\in L(\mathcal{H}_S\otimes\mathcal{H}_A)$ be such that $\Vert X\Vert_{1}=1$.
For the sake of compactness we will use the notation $|\psi_k,a_n\rangle := |\psi_k\rangle|a_n\rangle$. 

By the triangle inequality for the trace norm
\begin{equation}
\label{fdbjknkdfb0}
\begin{split}
 & \bigg\Vert [U\otimes \hat{1}_A]X[U^{\dagger}\otimes \hat{1}_A]  - [\Phi_{\sigma,U}\otimes I_A](X) \bigg\Vert_1\\
 & \leq   \sum_{n,n'}\sum_{k,l}\Vert A^{(k,l,n,n')}*B^{(k,l)}\Vert_1,
\end{split}
\end{equation}
where `$*$' denotes the Hadamard product, and 
where the matrices $A^{(k,l,n,n')} := [A^{(k,l,n,n')}_{k',l'}]_{k',l'} $ and $B^{(k,l)}:= [B^{(k,l)}_{k',l'}]_{k',l'}$ are defined as
\begin{equation}
\begin{split}
A^{(k,l,n,n')}_{k',l'}  := & \langle\psi_{k'}|U|\psi_{k}\rangle\langle\psi_{k},a_n|X|\psi_{l},a_{n'}\rangle\langle\psi_l|U^{\dagger}|\psi_{l'}\rangle\\
 B^{(k,l)}_{k',l'}  := & 1-\Tr (\Delta^{z_{k} -z_{k'}-z_{l} +z_{l'}}\sigma).
\end{split}
\end{equation}
By Lemma \ref{abvkbja} it follows that  $\Vert B^{(k,l)}\Vert \leq  N\epsilon$.
One can confirm that $\Vert A^{(k,l,n,n')}\Vert_1=  |\langle\psi_{k},a_n|X|\psi_{l},a_{n'}\rangle|$.
By combining these observations with equations (\ref{zuiozoui}) and (\ref{fdbjknkdfb0}) we can conclude that 
\begin{equation}
\begin{split}
 & \bigg\Vert [U\otimes \hat{1}_A]X[U^{\dagger}\otimes \hat{1}_A]  - [\Phi_{\sigma,U}\otimes I_A](X) \bigg\Vert_1\\
 & \leq   N\epsilon\sum_{n,n'}\sum_{k,l} |\langle\psi_{k},a_n|X|\psi_{l},a_{n'}\rangle|\\
 & \leq  N^2M\epsilon,
 \end{split}
\end{equation}
where the last inequality follows by $\sum_{k,l,n,n'} |\langle\psi_{k},a_n|X|\psi_{l},a_{n'}\rangle|
\leq NM\Vert X\Vert_{1}$.
If we choose the ancillary Hilbert space to be trivial, i.e., $\dim\mathcal{H}_A = M = 1$, then we obtain the bound on the norm $\Vert \cdot\Vert_{\mathrm{tr}}$.  It is sufficient to choose the ancillary Hilbert space to have the same dimension as $\mathcal{H}_S$ to obtain the diamond norm \cite{Aharonov97}. In this case $M=N$ and we obtain the corresponding bound.
\end{proof}

For the sake of convenience we will often make use of a specific family of states for which the degree of coherence can be made arbitrarily large, and thus $\Phi_{\sigma,U}$ can be made arbitrarily close to $\mathcal{U}_{U}$.
Define the family of states
\begin{equation}
\label{etaLdef}
|\eta_{L,l_0}\rangle := \frac{1}{\sqrt{L}}\sum_{l=0}^{L-1}|l+l_0\rangle,
\end{equation}
i.e., this is a uniform superposition of $L\geq 1$ sequential energy eigenstates, starting at $l_0$.
One can verify that  
\begin{equation}
\label{etaShiftOverlap}
 \langle\eta_{L,l_0} | \Delta^{a}|\eta_{L,l_0}\rangle = \max\Big(0,1 -\frac{|a|}{L}\Big).
\end{equation}
In other words, the shifted $\Delta^a|\eta_{L,l_0}\rangle$ and the unshifted state $|\eta_{L,l_0}\rangle$ are almost indistinguishable if $L\gg a$.   This family of states can be viewed as a simplified version of coherent states, and the approximate implementation of unitary operations are analogous to the transitions in atoms induced by coherent electromagnetic fields in the Jaynes-Cummings model (see e.g. Sec.~\ref{JCmodel}, or Chapter 7 in \cite{NielsenChuang}).  

The following proposition shows that in the limit of large $L$ we can use the states $|\eta_{L,l_0}\rangle$ to implement arbitrary unitary operations on $S$.
\begin{Proposition}
Let $s>0$, $\mathcal{H}_S$ with $\dim\mathcal{H}_S= N$, and $H_S\in H_s(\mathcal{H}_S)$, with eigenvalues $sz_n$ and corresponding orthonormal eigenvectors $|\psi_n\rangle$. Let $\Phi_{\sigma,U}$ be as defined in equation (\ref{channelOnS}), and let $|\eta_{L,l_0}\rangle$ be as in equation (\ref{etaLdef}), for $l_0\in\mathbb{Z}$ and $L\in\mathbb{N}$. Let $z_{\mathrm{max}} := \max_{n}z_n$ and $z_{\mathrm{min}} := \min_{n}z_n$. If 
\begin{equation}
L\geq 2(z_{\mathrm{max}}-z_{\mathrm{min}}),
\end{equation}
then for very $U\in U(\mathcal{H}_S)$,
\begin{equation}
\Vert \mathcal{U}_U -\Phi_{|\eta_{L,l_0}\rangle\langle\eta_{L,l_0}|,U}\Vert_{\mathrm{tr}} \leq 2N^2\frac{z_{\mathrm{max}}-z_{\mathrm{min}}}{L}
\end{equation}
and
\begin{equation}
\Vert \mathcal{U}_U -\Phi_{|\eta_{L,l_0}\rangle\langle\eta_{L,l_0}|,U}\Vert_{\diamond} \leq 2N^3\frac{z_{\mathrm{max}}-z_{\mathrm{min}}}{L}.
\end{equation}
\end{Proposition}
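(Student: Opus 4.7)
The plan is to recognize this proposition as essentially a direct corollary of Proposition \ref{aksdjfbvakv}, once one translates the ``high degree of coherence'' hypothesis into a concrete value of $\epsilon$ for the specific family $\sigma = |\eta_{L,l_0}\rangle\langle \eta_{L,l_0}|$. The entire argument hinges on the closed-form overlap formula (\ref{etaShiftOverlap}).

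First, I would invoke equation (\ref{etaShiftOverlap}), which tells us that
\begin{equation*}
\Tr(\Delta^{a}|\eta_{L,l_0}\rangle\langle\eta_{L,l_0}|) = \max\Big(0,1-\tfrac{|a|}{L}\Big).
\end{equation*}
Under the assumption $L \geq 2(z_{\max}-z_{\min})$, any integer $a$ with $|a| \leq 2(z_{\max}-z_{\min})$ automatically satisfies $|a|/L \leq 1$, so the maximum simplifies and we get
\begin{equation*}
\bigl|1-\Tr(\Delta^{a}|\eta_{L,l_0}\rangle\langle\eta_{L,l_0}|)\bigr| = \tfrac{|a|}{L} \leq \tfrac{2(z_{\max}-z_{\min})}{L}
\end{equation*}
uniformly for all $a$ in the required range.

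Next, I would set $\epsilon := 2(z_{\max}-z_{\min})/L$. The inequality just established is precisely the hypothesis (\ref{ljbnvaknk}) of Proposition \ref{aksdjfbvakv}. Applying that proposition to $\sigma = |\eta_{L,l_0}\rangle\langle\eta_{L,l_0}|$ then yields $\Vert \mathcal{U}_U - \Phi_{\sigma,U}\Vert_{\mathrm{tr}} \leq N^2\epsilon$ and $\Vert \mathcal{U}_U - \Phi_{\sigma,U}\Vert_{\diamond} \leq N^3\epsilon$, which upon substituting the value of $\epsilon$ are exactly the two bounds claimed.

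There is no real obstacle here, since the heavy lifting (the Hadamard product bounds and the majorization-type inequality from \cite{Topics}) has already been done in the proof of Proposition \ref{aksdjfbvakv}. The only nontrivial point to verify is the range condition, namely that $L \geq 2(z_{\max}-z_{\min})$ is exactly what is needed to discard the truncation in the $\max$ and obtain a single linear bound $|a|/L$ that is still valid at the boundary $|a| = 2(z_{\max}-z_{\min})$.
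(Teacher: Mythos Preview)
Your proof is correct and is essentially the same approach as the paper's. The paper chooses to re-run the argument of Proposition~\ref{aksdjfbvakv} with the explicit matrix $B^{(k,l)}_{k',l'} = |z_k - z_{k'} - z_l + z_{l'}|/L$ and then bound $\Vert B^{(k,l)}\Vert \leq 2N(z_{\max}-z_{\min})/L$ via Lemma~\ref{abvkbja}, whereas you invoke Proposition~\ref{aksdjfbvakv} directly as a black box with $\epsilon = 2(z_{\max}-z_{\min})/L$; both routes use the same key ingredients (the overlap formula~(\ref{etaShiftOverlap}) and the range condition $L \geq 2(z_{\max}-z_{\min})$) and arrive at identical bounds.
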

The proof is almost identical to the proof of Proposition \ref{aksdjfbvakv}, with some minor differences. 
Due to $L \geq 2(z_{\mathrm{max}}-z_{\mathrm{min}})$ it follows that $|z_{k} -z_{k'}-z_{l}+z_{l'}| \leq L$.
Combined with equations (\ref{channelOnS}) and (\ref{etaShiftOverlap}) this yields the new matrix 
\begin{equation}
\begin{split}
 B^{(k,l)}_{k',l'}  := & \frac{|z_{k} -z_{k'}-z_{l}+z_{l'}|}{L}.
\end{split}
\end{equation}
and thus, by Lemma \ref{abvkbja}, $\Vert B^{(k,l)}\Vert \leq  2N(z_{\mathrm{max}}-z_{\mathrm{min}})/L$. The rest of the proof is essentially the same.

{\it Implementing arbitrary channels on system $S$--}
By the above results we can conclude that $\Phi_{\rho,U}$ converges to the unitary operation $\mathcal{U}_{U}$ in the limit of a high degree of coherence in the reservoir. This implies that all channels can be implemented on $S$, if we are allowed to use suitable ancillary systems. To see this, recall that every  channel $\Psi$ on $S$ can be written as $\Psi(\rho) = \Tr_A(U\rho\otimes |a\rangle\langle a|U^{\dagger})$ for an ancillary system $A$ with initial state $|a\rangle$.
We regard $SA$ as the new `system' and use the reservoir $E$ to implement the channel $\Phi^{SA,H_{SA}}_{\sigma,U}$ on the initial state $\rho\otimes|a\rangle\langle a|$. [The only restriction on the joint Hamiltonian is $H_{SA}\in H_{s}(\mathcal{H}_S\otimes\mathcal{H}_A)$.] In the limit of a high degree of coherence it is thus the case that $\Phi^{SA,H_{SA}}_{\sigma,U}(\rho\otimes|a\rangle\langle a|)\rightarrow U\rho\otimes |a\rangle\langle a|U^{\dagger}$. By tracing out $A$ we obtain $\Psi(\rho)$ on $S$.

Note that in the case $H_{SA} = H_{S}\otimes\hat{1}_A + \hat{1}_{S}\otimes H_{A}$, the degree of coherence required to achieve a given level of approximation of a unitary operator on the  combined system $SA$ may potentially be higher than what is needed for the corresponding accuracy on system $S$ alone. The reason is that transformations on the combined system may require the reservoir to donate or absorb larger quantities of energy.  

The same observation applies to the case of a sequential application of the reservoir to a collection of systems $S_1,\ldots, S_L$ (with Hamiltonian $H_{S_1\cdots S_L} = H_{S_1}+ \cdots + H_{S_L}$). A `high quality' approximation of $\mathcal{U}_{U_1\otimes \cdots \otimes U_L}$ on the joint system may require a higher degree of coherence in the reservoir than good quality approximations of the individual operations $\mathcal{U}_{U_1},\ldots,\mathcal{U}_{U_L}$.

\subsection{\label{ACatalyticProperty}A catalytic property}

The previous section shows that the set of channels $\mathcal{C}(\sigma)$ which we can reach on system $S$ depends on the coherence in the state $\sigma$ of the energy reservoir $E$. In other words, the coherence in the state of $E$ is a resource for the implementation of channels on $S$. Here we show that this resource is catalytic, i.e., if we use $E$ to implement a channel on one system $S_1$, then this does not diminish the capacity of $E$ to subsequently implement channels on another system $S_2$. It is worth noting that this catalytic property holds for all states $\sigma$; it is not limited to states with a high degree of coherence.

Consider the expression for the channel $\Phi^{S}_{\sigma,U}$ in equation (\ref{channelOnS}). As one can see, the only way in which $\Phi^{S}_{\sigma,U}$ depends on $\sigma$, is via the expectation value of powers of $\Delta$, i.e., on the sequence of numbers $\boldsymbol{(}\Tr(\Delta^{a}\sigma)\boldsymbol{)}_{a\in\mathbb{Z}}$. (Strictly speaking, it only depends on a finite subset of them.) One should note that this set of numbers does not uniquely define $\sigma$. If one think of $\sigma$ represented as a matrix in the energy eigenbasis, i.e., in terms of the matrix $[\langle j|\sigma|j'\rangle]_{j,j'}$, one can see that $\Tr(\Delta^{0}\sigma) = \sum_{j}\langle j|\sigma|j\rangle$ corresponds to the trace, $\Tr(\Delta^{1}\sigma) = \sum_{j} \langle j|\sigma|j+1\rangle$ to the sum of the elements in the first upper diagonal, $\Tr(\Delta^{2}\sigma) = \sum_{j} \langle j|\sigma|j+2\rangle$ to the second, etc. The following lemma shows that these expectation values are invariant under the action of the channel $\Lambda_{\rho,U}$ on the energy reservoir.

\begin{Lemma}
\label{Invariance}
Let $s>0$, $\mathcal{H}_S$ finite-dimensional, and $H_S\in H_s(\mathcal{H}_S)$. Then
\begin{equation}
\Tr\big(\Delta^{a}\Lambda_{\rho,U}(\sigma)\big) =   \Tr(\Delta^{a}\sigma),
\end{equation}
for all $\sigma \in \mathcal{S}(\mathcal{H}_E)$, $U\in U(\mathcal{H}_S)$, $\rho\in\mathcal{S}(\mathcal{H}_S)$, and $a\in\mathbb{Z}$, where $\Lambda_{\rho,U}$ is as defined in equation (\ref{channelOnE}).
\end{Lemma}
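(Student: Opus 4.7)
The plan is to exploit the defining property that $V(U)\in \mathbb{UT}(H_S+H_E^{(s)})$ commutes with $\hat 1_S\otimes\Delta^a$ for every $a\in\mathbb Z$, which is precisely the feature that should make $\Tr(\Delta^a\sigma)$ a conserved quantity under $\Lambda_{\rho,U}$. Everything else is routine manipulation of the partial trace and cyclicity.

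First I would rewrite the left-hand side using the identity $\Tr_E(\Delta^a X)=\Tr[(\hat 1_S\otimes\Delta^a)X]$ followed by the definition of $\Lambda_{\rho,U}$ in Eq.~(\ref{channelOnE}):
\begin{equation*}
\Tr\bigl(\Delta^{a}\Lambda_{\rho,U}(\sigma)\bigr)=\Tr\Bigl[(\hat 1_S\otimes\Delta^{a})\,V(U)(\rho\otimes\sigma)V(U)^{\dagger}\Bigr].
\end{equation*}
Next, since $V(U)\in\mathbb{UT}(H_S+H_E^{(s)})$ commutes with $\hat 1_S\otimes\Delta^{a}$ by Lemma~\ref{nvmxcnv} (more precisely, by the very definition of $\mathbb{UT}$), I can slide $(\hat 1_S\otimes\Delta^{a})$ through $V(U)$ and then use cyclicity of the trace to bring $V(U)^{\dagger}V(U)=\hat 1_S\otimes\hat 1_E$ together. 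The expression collapses to
\begin{equation*}
\Tr\Bigl[(\hat 1_S\otimes\Delta^{a})(\rho\otimes\sigma)\Bigr]=\Tr(\rho)\,\Tr(\Delta^{a}\sigma)=\Tr(\Delta^{a}\sigma),
\end{equation*}
which is the claim.

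There is really no obstacle to speak of; the only small subtlety is making sure that the manipulations are legitimate given that $\Delta^{a}$ and $H_E^{(s)}$ are unbounded/non-compact objects on the infinite-dimensional $\mathcal H_E$. However, $\Delta^{a}$ is in fact bounded (unitary), and the input $\rho\otimes\sigma$ is trace class, so $(\hat 1_S\otimes\Delta^{a})V(U)(\rho\otimes\sigma)V(U)^{\dagger}$ is trace class and both the partial trace and the cyclicity step are unambiguously justified. I would mention this briefly but not belabor it. The whole argument should fit in three or four lines.
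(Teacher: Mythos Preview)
Your proof is correct and takes a genuinely different, more conceptual route than the paper. The paper expands $\Lambda_{\rho,U}(\sigma)$ explicitly via Eq.~(\ref{channelOnE}), writes everything in terms of matrix elements $U_{k',k}=\langle\psi_{k'}|U|\psi_k\rangle$, uses cyclicity of the trace on $\mathcal H_E$ to combine the powers of $\Delta$, and then collapses the sums using $\sum_{k'}U_{k',k}U_{k',l}^{*}=\delta_{k,l}$ and $\sum_k\langle\psi_k|\rho|\psi_k\rangle=1$. You instead work at the level of the full trace on $\mathcal H_S\otimes\mathcal H_E$ and invoke directly the defining commutation $[V(U),\hat 1_S\otimes\Delta^{a}]=0$ of $\mathbb{UT}(H_S+H_E^{(s)})$, together with unitarity of $V(U)$. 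Your approach is shorter and makes the mechanism transparent: the invariance of $\Tr(\Delta^{a}\sigma)$ is precisely the translational covariance built into the construction of $V$. The paper's computation is more hands-on and does not explicitly appeal to Lemma~\ref{nvmxcnv}, but the underlying algebra is of course the same. Your brief remark about $\Delta^{a}$ being unitary and $\rho\otimes\sigma$ trace class, so that cyclicity is justified, is a nice touch and in fact more careful than the paper's treatment.
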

\begin{proof}
Let $sz_k$ and $|\psi_k\rangle$ be the eigenvalues and corresponding eigenvectors of $H_S$, and put  $U_{k',k}:= \langle\psi_{k'}|U|\psi_k\rangle$. Then  
\begin{equation}
\begin{split}
& \Tr\big(\Delta^{a}\Lambda_{\rho,U}(\sigma)\big)  \\
 & =    \sum_{k,l,k'}U_{k',k}U_{k',l}^{*}\langle\psi_{k}|\rho|\psi_{l}\rangle \Tr\big(\Delta^{a} \Delta^{z_{k} -z_{k'}}\sigma {\Delta^{z_{l}-z_{k'}} }^{\dagger} \big)\\
 & =   \sum_{k,l,k'}U_{k',k}U_{k',l}^{*}\langle\psi_{k}|\rho|\psi_{l}\rangle \Tr\big(\Delta^{-z_{l}+z_{k'}}\Delta^{a} \Delta^{z_{k} -z_{k'}}\sigma \big)\\
& =   \sum_{k,l,k'}U_{k',k}U_{k',l}^{*}\langle\psi_{k}|\rho|\psi_{l}\rangle \Tr\big(\Delta^{a+ z_{k} -z_{l}}\sigma\big)\\
 & =   \sum_{k,l}\delta_{k,l}\langle\psi_{k}|\rho|\psi_{l}\rangle \Tr\big(\Delta^{a+ z_{k} -z_{l}}\sigma\big)\\
 & =   \sum_{k}\langle\psi_{k}|\rho|\psi_{k}\rangle \Tr(\Delta^{a}\sigma)\\
 & =   \Tr(\Delta^{a}\sigma).
 \end{split}
\end{equation}
\end{proof} 

The following proposition tells us that if we use the reservoir twice, then the set of operations we can reach in the second application is independent of what we did in the first application. In other words, it is as if the first application had not happened. Note that the only assumptions we impose on the Hamiltonians of the subsequent subsystems are that they are non-interacting, and satisfy the `energy matching condition'. We do not assume that they are identical (or even operating on Hilbert spaces of the same dimension). It is straightforward to generalize this proposition to more than two subsystems.
\begin{Proposition}[Catalytic coherence]
\label{WeakCatalytic}
Let $s>0$, and let $\mathcal{H}_{S_1}$ and $\mathcal{H}_{S_2}$ be finite-dimensional Hilbert spaces. Let $H_{S_1}\in H_s(\mathcal{H}_{S_1})$ and $H_{S_2}\in H_s(\mathcal{H}_{S_2})$. Let $U_1\in U(\mathcal{H}_{S_1})$ and $\rho_1\in\mathcal{S}(\mathcal{H}_{S_{1}})$, $U_2\in U(\mathcal{H}_{S_2})$ and $\rho_2\in\mathcal{S}(\mathcal{H}_{S_{2}})$. Let $\Lambda_{\rho_1,U_1}$ be the channel on $E$ as defined by equation (\ref{channelOnE}) for interaction with system $S_1$, and let $\Phi^{S_2}$ be the channel as defined by equation (\ref{channelOnS}), for system $S_2$. Then 
\begin{equation}
\label{dztjdnf}
\Phi^{S_2,H_{S_2}}_{\Lambda_{\rho_1,U_1}(\sigma),U_2}(\rho_2) = \Phi^{S_2,H_{S_2}}_{\sigma,U_2}(\rho_2),
\end{equation}
and consequently
\begin{equation}
\label{nfkbdjsdfs}
\mathcal{C}^{S_2,H_{S_2}}\boldsymbol{(}\Lambda_{\rho_1,U_1}(\sigma)\boldsymbol{)} = \mathcal{C}^{S_2,H_{S_2}}(\sigma). 
\end{equation}
\end{Proposition}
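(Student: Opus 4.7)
The plan is to observe that Proposition \ref{WeakCatalytic} is essentially a direct corollary of two facts already established: the structure of $\Phi^{S,H}_{\sigma,U}$ as given by equation (\ref{channelOnS}), and the invariance statement of Lemma \ref{Invariance}. So no new technical machinery is required; the proof is a short bookkeeping argument. The main obstacle, such as it is, is being explicit about the fact that the two channels on the right-hand side of (\ref{dztjdnf}) are \emph{functionals} of their state argument only through a very specific (finite) set of expectation values, and then invoking Lemma \ref{Invariance} to conclude that these expectation values agree for $\sigma$ and for $\Lambda_{\rho_1,U_1}(\sigma)$.

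Concretely, I would first look at equation (\ref{channelOnS}) applied to $S_2$ with Hamiltonian $H_{S_2}$. Let $\{|\psi^{(2)}_n\rangle\}$ and $\{sz^{(2)}_n\}$ denote the eigenvectors and eigenvalues of $H_{S_2}$. The expression (\ref{channelOnS}) shows that $\Phi^{S_2,H_{S_2}}_{\tau,U_2}(\rho_2)$ depends on the reservoir state $\tau$ only through the complex numbers $\Tr(\Delta^{a}\tau)$ for $a$ in the finite set $A := \{z^{(2)}_{n'} - z^{(2)}_n - z^{(2)}_{m'} + z^{(2)}_{m} : n,n',m,m'\}$. In particular, if two states $\tau_1, \tau_2$ on the reservoir satisfy $\Tr(\Delta^{a}\tau_1) = \Tr(\Delta^{a}\tau_2)$ for every $a\in A$, then $\Phi^{S_2,H_{S_2}}_{\tau_1,U_2} = \Phi^{S_2,H_{S_2}}_{\tau_2,U_2}$ as channels on $S_2$.

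Next I would apply Lemma \ref{Invariance} with $\tau_1 = \sigma$ and $\tau_2 = \Lambda_{\rho_1,U_1}(\sigma)$. Since that lemma was proved for \emph{all} $a\in\mathbb{Z}$, it holds in particular on the finite subset $A$ above, so the hypothesis of the functional argument is satisfied. Combining the two observations immediately yields (\ref{dztjdnf}). The channel-set identity (\ref{nfkbdjsdfs}) then follows by noting that $\mathcal{C}^{S_2,H_{S_2}}(\tau) = \{\Phi^{S_2,H_{S_2}}_{\tau,U_2} : U_2\in U(\mathcal{H}_{S_2})\}$, so equality of the channels for every $U_2$ implies equality of the sets. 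A one-line generalization (iterating the argument) gives the same conclusion for any finite sequence of subsystems $S_1, S_2, \ldots, S_k$, as mentioned in the statement. The key conceptual point is that both the induced channel $\Phi$ and the invariants of $\Lambda$ are governed by the same family of scalars $\Tr(\Delta^{a}\sigma)$, which is precisely what makes the coherence resource catalytic in this model.
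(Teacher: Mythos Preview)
Your proposal is correct and follows essentially the same approach as the paper: write out $\Phi^{S_2,H_{S_2}}_{\tau,U_2}(\rho_2)$ via equation (\ref{channelOnS}), note that its dependence on $\tau$ is only through the scalars $\Tr(\Delta^{a}\tau)$, and invoke Lemma \ref{Invariance} to conclude these coincide for $\tau=\sigma$ and $\tau=\Lambda_{\rho_1,U_1}(\sigma)$. The paper's proof is just a slightly more explicit unpacking of the same two steps.
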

\begin{proof}
Let $|\psi_{n}^{(1)}\rangle$ and $sz^{(1)}_{n}$ be orthonormal eigenvectors and corresponding eigenvalues of $H_{S_1}$, and let $|\psi_{m}^{(2)}\rangle$ and $sz^{(2)}_{m}$ be orthonormal eigenvectors and corresponding eigenvalues of $H_{S_2}$. By equation (\ref{channelOnS}) we can write
\begin{equation}
\label{adsjhfbva}
\begin{split}
 & \Phi^{S_2,H_{S_2}}_{\Lambda_{\rho_1,U_1}(\sigma),U_2}(\rho_2)   \\
 & =    \sum_{k,k',l,l'}\langle\psi^{(2)}_{k'}|U_2|\psi^{(2)}_{k}\rangle\langle\psi^{(2)}_{k}|\rho_2|\psi^{(2)}_{l}\rangle\langle\psi^{(2)}_l|U_2^{\dagger}|\psi^{(2)}_{l'}\rangle\\
 & \quad \times|\psi^{(2)}_{k'}\rangle\langle\psi^{(2)}_{l'}|  \Tr[\Delta^{z^{(2)}_{k} -z^{(2)}_{k'}-z^{(2)}_{l}+z^{(2)}_{l'}}\Lambda_{\rho_1,U_1}(\sigma)].
 \end{split}
\end{equation}
By Lemma \ref{Invariance} we know that 
\begin{equation}
\label{yndfkjvn}
\begin{split}
& \Tr[\Delta^{z^{(2)}_{k} -z^{(2)}_{k'}-z^{(2)}_{l}+z^{(2)}_{l'}}\Lambda_{\rho_1,U_1}(\sigma)] \\
&  =  \Tr[\Delta^{z^{(2)}_{k} -z^{(2)}_{k'}-z^{(2)}_{l}+z^{(2)}_{l'}}\sigma],
\end{split}
\end{equation}
which proves (\ref{dztjdnf}) and  (\ref{nfkbdjsdfs}). 
\end{proof}

In the above proposition we made no assumptions on the relation between the first and the second application of the reservoir. For the sake of clarity we here also treat the special case that we do use the same Hamiltonian on both the first and second system. This is the case that is described in the main text.
\begin{Corollary}
\label{EqualHamiltonians}
Given the assumptions of Proposition \ref{WeakCatalytic}, but with the additional assumptions 
 $\mathcal{H}_{S_2}\eqsim \mathcal{H}_{S_1}$, $H := H_{S_2} = H_{S_1}$, it follows that 
\begin{equation}
\label{dfjkbvnd}
\Phi^{S_2,H}_{\Lambda_{\rho_1,U_1}(\sigma),U_2}(\rho) = \Phi^{S_1,H}_{\sigma,U_2}(\rho)
\end{equation}
and thus
\begin{equation}
\label{bdjhfbvdfhj}
\mathcal{C}^{S_2,H}\boldsymbol{(}\Lambda_{\rho_1,U_1}(\sigma)\boldsymbol{)} = \mathcal{C}^{S_1,H}(\sigma).
\end{equation}
\end{Corollary}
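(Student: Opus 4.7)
The plan is to obtain the corollary as an immediate specialization of Proposition \ref{WeakCatalytic}, using only the observation that the induced channel $\Phi^{S,H}_{\sigma,U}$ depends on the system $S$ only through its Hilbert space and Hamiltonian, so that identifying $\mathcal{H}_{S_2}$ with $\mathcal{H}_{S_1}$ and setting $H_{S_2}=H_{S_1}=H$ forces $\Phi^{S_2,H}_{\sigma,U_2}=\Phi^{S_1,H}_{\sigma,U_2}$ as superoperators on $L(\mathcal{H}_{S_1})=L(\mathcal{H}_{S_2})$.

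First I would invoke Proposition \ref{WeakCatalytic} with the choices $\mathcal{H}_{S_2}\eqsim\mathcal{H}_{S_1}$ and $H_{S_2}=H_{S_1}=H$, which gives
\begin{equation*}
\Phi^{S_2,H}_{\Lambda_{\rho_1,U_1}(\sigma),U_2}(\rho)=\Phi^{S_2,H}_{\sigma,U_2}(\rho)
\end{equation*}
for every $\rho\in\mathcal{S}(\mathcal{H}_{S_2})$ and every $U_2\in U(\mathcal{H}_{S_2})$. Next I would note that, under the identification $\mathcal{H}_{S_2}\eqsim\mathcal{H}_{S_1}$ with the common Hamiltonian $H$, the explicit formula (\ref{channelOnS}) for $\Phi^{S,H}_{\sigma,U}$ depends on the system label only through the eigenvectors $|\psi_n\rangle$ and integer eigenvalues $sz_n$ of $H$, which are now the same object. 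Hence $\Phi^{S_2,H}_{\sigma,U_2}=\Phi^{S_1,H}_{\sigma,U_2}$, and substituting this into the displayed equation yields (\ref{dfjkbvnd}).

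Finally, taking the union over all $U_2\in U(\mathcal{H}_{S_2})=U(\mathcal{H}_{S_1})$ on both sides of (\ref{dfjkbvnd}) produces the set equality (\ref{bdjhfbvdfhj}), which was already implicit in (\ref{nfkbdjsdfs}) of Proposition \ref{WeakCatalytic} after the same identification. No obstacle is really encountered here; the corollary is a bookkeeping restatement of Proposition \ref{WeakCatalytic} in the important special case in which one uses a single reservoir repeatedly on copies of the \emph{same} system, which is the physically natural cyclic scenario emphasized in the main text.
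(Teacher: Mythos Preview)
Your proposal is correct and takes essentially the same approach as the paper: apply Proposition~\ref{WeakCatalytic} and then identify the two systems. The paper is slightly more explicit in that it writes down the isomorphism $X=\sum_n|\psi_n^{(2)}\rangle\langle\psi_n^{(1)}|$ and verifies the intertwining relation $X^{\dagger}\Phi^{S_2,H}_{\Lambda_{\rho_1,U_1}(\sigma),U_2}(\rho_2)X=\Phi^{S_1,H}_{\sigma,X^{\dagger}U_2X}(X^{\dagger}\rho_2X)$, whereas you simply assert the identification; but this is a matter of bookkeeping rigor (the paper itself remarks that ``strictly speaking'' the equalities are isomorphisms), not a different argument.
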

Strictly speaking, the correct formulation of (\ref{dfjkbvnd}) and (\ref{bdjhfbvdfhj}) would be that the channels are isomorphic, as they act on different systems.
\begin{proof}
Assume that $\mathcal{H}_{S_2}$ and $\mathcal{H}_{S_1}$ are isomorphic, with the isomorphism $X :=\sum_{n=1}^{N}|\psi^{(2)}_n\rangle\langle\psi^{(1)}_n|$, where $H_{S_1}|\psi^{(1)}_n\rangle = sz^{(1)}_{n}|\psi^{(1)}_{n}\rangle$, and $H_{S_2}|\psi^{(2)}_n\rangle = sz^{(2)}_{n}|\psi^{(2)}_{n}\rangle$, with $z_n:= z^{(2)}_n= z^{(1)}_n$ (and thus `$H_{S_2} = H_{S_1}$'). 
By equations (\ref{adsjhfbva}) and (\ref{yndfkjvn}) we see that $X^{\dagger}\Phi^{S_2,H_{S_2}}_{\Lambda_{\rho_1,U_1}(\sigma),U_2}(\rho_2) X = \Phi^{S_1,H_{S_1}}_{\sigma,X^{\dagger}U_2 X}(X^{\dagger}\rho_2X)$. Hence, we can implement `the same' channels on $S_2$ as we can on $S_1$, and thus, in this sense, equations (\ref{dfjkbvnd}) and (\ref{bdjhfbvdfhj}) hold.
\end{proof}

\subsection{\label{StongerCatalyticProperty}A stronger catalytic property}

In the previous section we assumed that the system and the energy reservoir initially are uncorrelated. This is necessary if we wish to describe the evolution on the system in terms of channels. However, here we show that even if all systems, including the energy reservoir, initially are in an arbitrary joint state, it is still possible to prove a form of catalytic property (see Fig.~\ref{FigureStrong2}). Due to pre-correlations, we cannot in general phrase this in terms of channels. However, we can express it in terms of the set of states that can be reached. 
As we shall see, one can regain the channel version of Sec.~\ref{ACatalyticProperty} as a special case. 

The following lemma provides the key observation, namely that operations $V^{H_{S_1}}(Q_1)$  and $V^{H_{S_2}}(Q_2)$, corresponding to different subsystems, commute with each other. (Keep in mind that both operations act on the energy reservoir $E$.)
\begin{Lemma}
\label{ndfkjbn}
Let $s>0$, and let $\mathcal{H}_{S_1}$ and $\mathcal{H}_{S_2}$ be finite-dimensional Hilbert spaces. Let $H_{S_1}\in H_s(\mathcal{H}_{S_1})$ and $H_{S_2}\in H_s(\mathcal{H}_{S_2})$, and define $\mathcal{H}_S := \mathcal{H}_{S_1}\otimes\mathcal{H}_{S_2}$ and $H_{S} := \hat{1}_{2}\otimes H_{S_1} + H_{S_2}\otimes \hat{1}_{S_1}$.
The mapping $V$, as defined in Lemma \ref{nvmxcnv}, can be defined on the spaces $\mathcal{H}_{S_1}\otimes\mathcal{H}_{E}^{s}$, $\mathcal{H}_{S_2}\otimes\mathcal{H}_{E}^{s}$, and $\mathcal{H}_{S}\otimes\mathcal{H}_{E}^{s}$, with respect to the Hamiltonians $H_{S_1}$, $H_{S_2}$, and $H_{S}$, respectively. For the sake of clarity we denote these with superscripts. It is the case that 
\begin{equation}
\begin{split}
 & V^{H_S}(\hat{1}_{S_2}\otimes Q_1) = \hat{1}_{S_2}\otimes V^{H_{S_1}}(Q_1),\quad \forall Q_1\in L(\mathcal{H}_{S_{1}}),\\
 & V^{H_S}(Q_2\otimes \hat{1}_{S_1}) =  V^{H_{S_2}}(Q_2)\otimes \hat{1}_{S_1},\quad \forall Q_{2}\in L(\mathcal{H}_{S_{2}}).
 \end{split}
\end{equation}
Furthermore
\begin{equation}
\begin{split}
&  [\hat{1}_{S_2}\otimes V^{H_{S_1}}(Q_1)] [V^{H_{S_2}}(Q_2)\otimes \hat{1}_{S_1}] \\
&  =  [V^{H_{S_2}}(Q_2)\otimes \hat{1}_{S_1}][\hat{1}_{S_2}\otimes V^{H_{S_1}}(Q_1)],
\end{split}
\end{equation}
for all $Q_1\in L(\mathcal{H}_{S_{1}})$ and $Q_{2}\in L(\mathcal{H}_{S_{2}})$.
\end{Lemma}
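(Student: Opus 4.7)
The plan is to derive all three assertions directly from the defining formula \eqref{bijection} for $V$ together with the $\ast$-algebra homomorphism property \eqref{preservation} established in Lemma \ref{nvmxcnv}. Fix orthonormal eigenbases $\{|\psi^{(1)}_n\rangle\}$ of $H_{S_1}$ and $\{|\psi^{(2)}_m\rangle\}$ of $H_{S_2}$ with eigenvalues $sz^{(1)}_n$ and $sz^{(2)}_m$ respectively. Then the product vectors $|\psi^{(2)}_m\rangle|\psi^{(1)}_n\rangle$ form an orthonormal eigenbasis of $H_S$ with eigenvalues $s(z^{(2)}_m + z^{(1)}_n)$, and these are exactly the data needed to instantiate \eqref{bijection} for $V^{H_S}$.

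For the first identity, I would substitute $Q = \hat{1}_{S_2}\otimes Q_1$ into \eqref{bijection} written in this product eigenbasis. The matrix element factorises as $\langle\psi^{(2)}_m|\hat{1}_{S_2}|\psi^{(2)}_{m'}\rangle\langle\psi^{(1)}_n|Q_1|\psi^{(1)}_{n'}\rangle$, so the $S_2$ factor forces $m=m'$ via Kronecker delta, which in turn collapses the $\Delta$-exponent from $(z^{(2)}_{m'}+z^{(1)}_{n'})-(z^{(2)}_m+z^{(1)}_n)$ down to $z^{(1)}_{n'}-z^{(1)}_n$. Summing over $m$ produces $\hat{1}_{S_2}$, and the remaining sum over $n,n'$ is precisely $V^{H_{S_1}}(Q_1)$, giving the claim. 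The second identity follows in exactly the same way with the roles of $S_1$ and $S_2$ exchanged.

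The commutation statement then reduces to a one-line application of the homomorphism property. Using the two identities just proved, each of the two products in question can be rewritten as $V^{H_S}$ applied to a product of operators on $\mathcal{H}_S$. Since \eqref{preservation} gives $V^{H_S}(AB) = V^{H_S}(A)V^{H_S}(B)$, we obtain
\begin{equation*}
[\hat{1}_{S_2}\otimes V^{H_{S_1}}(Q_1)]\,[V^{H_{S_2}}(Q_2)\otimes \hat{1}_{S_1}]
= V^{H_S}\bigl((\hat{1}_{S_2}\otimes Q_1)(Q_2\otimes\hat{1}_{S_1})\bigr)
= V^{H_S}(Q_2\otimes Q_1),
\end{equation*}
and the reversed product similarly equals $V^{H_S}\bigl((Q_2\otimes\hat{1}_{S_1})(\hat{1}_{S_2}\otimes Q_1)\bigr) = V^{H_S}(Q_2\otimes Q_1)$. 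The two are therefore equal.

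No serious obstacle is expected: the work is essentially bookkeeping in the product eigenbasis, and once the two factorisation identities are in hand the commutation is immediate from the fact, already recorded in Lemma \ref{nvmxcnv}, that $V$ is a multiplicative map. The only mild care needed is to keep the tensor-factor ordering between $\mathcal{H}_{S_1}$ and $\mathcal{H}_{S_2}$ consistent throughout, since operators supported on different tensor factors trivially commute on the $S$-side, which is what ultimately lets the order of the two $V^{H_S}(\cdot)$ images be swapped.
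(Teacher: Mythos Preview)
Your proposal is correct and follows essentially the same approach as the paper: a direct computation in the product eigenbasis to establish the two factorisation identities, followed by the multiplicativity of $V^{H_S}$ from Lemma~\ref{nvmxcnv} to derive the commutation as $V^{H_S}(Q_2\otimes Q_1)$ in both orders. The paper's proof is line-for-line the same argument, including the observation that the Kronecker delta on the $S_2$ index collapses the $\Delta$-exponent.
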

\begin{proof}
Let $|\psi_{n}^{(1)}\rangle$ and $sz^{(1)}_{n}$ be orthonormal eigenvectors and corresponding eigenvalues of $H_{S_1}$, and let $|\psi_{m}^{(2)}\rangle$ and $sz^{(2)}_{m}$ be orthonormal eigenvectors and corresponding eigenvalues of $H_{S_2}$. For a compact notation we define $|\psi^{(2)}_{m},\psi^{(1)}_{n}\rangle:=|\psi^{(2)}_{m}\rangle|\psi^{(1)}_{n}\rangle$. By using equation (\ref{bijection}) we can write 
\begin{equation*}
\begin{split}
& V^{H_S}(\hat{1}_{S_2}\otimes Q_1)  \\
 & =   \sum_{nm,n'm'}\langle \psi^{(2)}_m,\psi^{(1)}_n| ( \hat{1}_{S_2}\otimes Q_1) |\psi^{(2)}_{m'},\psi^{(1)}_{n'}\rangle \\
 & \quad\times  |\psi^{(2)}_m,\psi^{(1)}_n\rangle\langle \psi^{(2)}_{m'},\psi^{(1)}_{n'}|\otimes \Delta^{z^{(1)}_{n'}- z^{(1)}_{n} + z^{(2)}_{m'}- z^{(2)}_{m}}\\
  & =    \hat{1}_{S_2}\otimes V^{H_{S_1}}(Q_1).
 \end{split}
 \end{equation*}
 An analogous reasoning holds for $V^{H_S}(Q_2\otimes \hat{1}_{S_1})$.  By Lemma \ref{nvmxcnv} we know that $V$ preserves operator multiplication, and thus
 \begin{equation*}
 \begin{split}
&  [\hat{1}_{S_2}\otimes V^{H_{S_1}}(Q_1)] [V^{H_{S_2}}(Q_2)\otimes \hat{1}_{S_1}]  \\
& =   V^{H_S}(\hat{1}_{S_2}\otimes Q_1) V^{H_S}(Q_2\otimes \hat{1}_{S_1})\\
 & =  V^{H_S}(Q_2\otimes Q_1)\\
 & =   V^{H_S}(Q_2\otimes \hat{1}_{S_1})V^{H_S}(\hat{1}_{S_2}\otimes Q_1) \\
 & =  [V^{H_{S_2}}(Q_2)\otimes \hat{1}_{S_1}][\hat{1}_{S_2}\otimes V^{H_{S_1}}(Q_1)]. 
\end{split}
\end{equation*}
\end{proof} 

\begin{figure}[h!]
 \centering
 \includegraphics[width= 8cm]{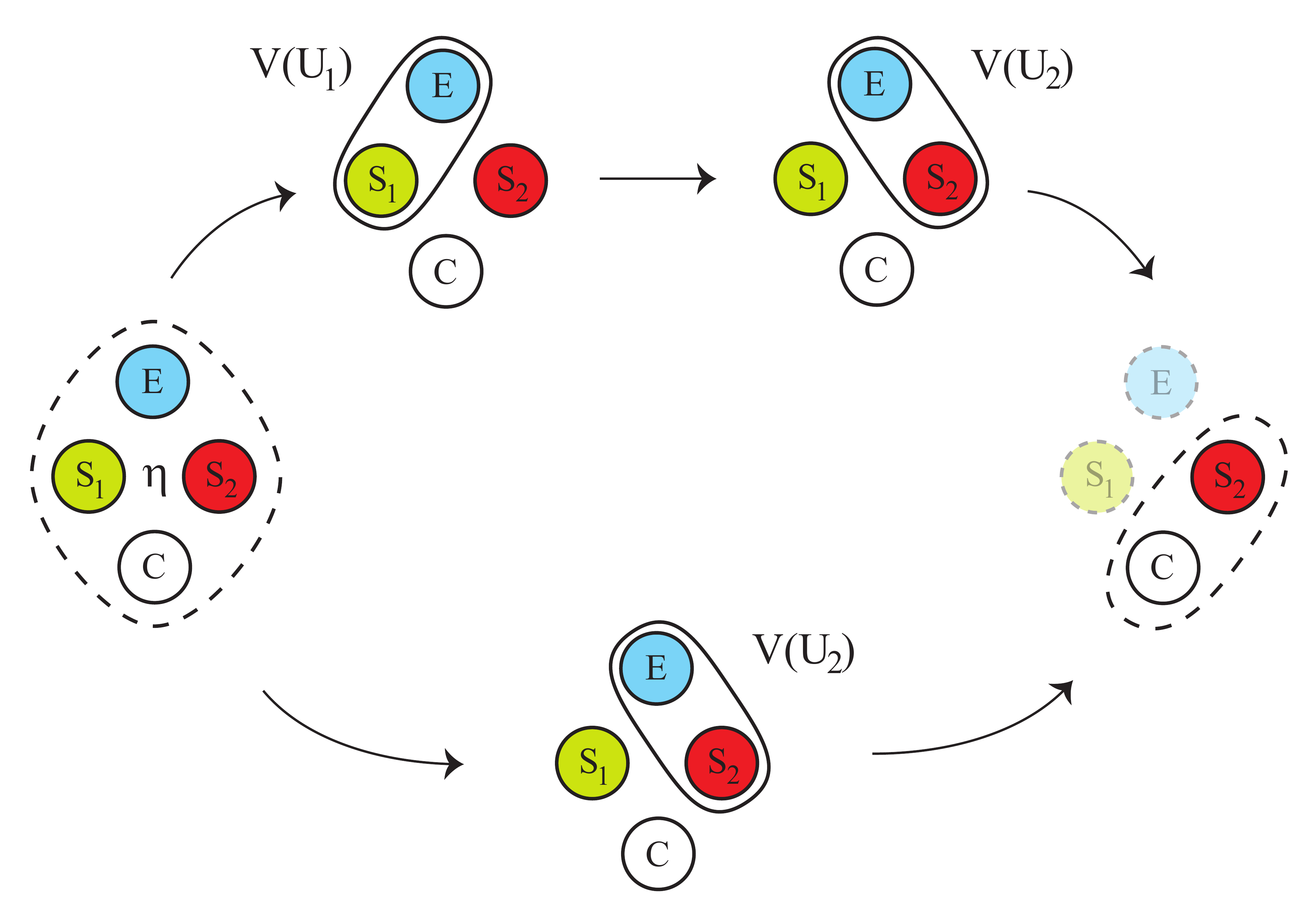} 
   \caption{\label{FigureStrong2} {\bf Strong catalytic property.} 
 The energy reservoir $E$, two systems $S_1$, $S_2$, and a reference system $C$ are initially in an arbitrary joint state $\eta$.  Here we compare two ways (the upper and lower path) to operate on these systems. If we first operate on $ES_1$ with $V(U_1)$ followed by $V(U_2)$ on $ES_2$ (upper path), this yields the same state $\rho_{S_2C}$ on $S_2C$ as if we only operate with $V(U_2)$ on $ES_2$ (lower path). In other words, the application of $V(U_1)$ does not decrease the set of states that can be reached on $S_2C$. This result underlines the observation that catalytic coherence primarily is a property of  a specific class of dynamics, rather than a special class of states.  The reference system $C$ allows us to compare this version of catalytic coherence with the channel version in the previous sections. 
   }
\end{figure}

Suppose that we, apart from the energy reservoir $E$, have two systems of interest $S_1$, $S_2$, as well as a reference system $C$. These systems are initially in some arbitrary joint state $\eta$. 
One can compare two different scenarios (see Fig.~\ref{FigureStrong2}). In the first scenario, we first implement a unitary operation $V^{H_{S_1}}$ on  $ES_1$ followed by a unitary operation $V^{H_{S_2}}$ on $ES_2$. 
In the second scenario, we only perform the second operation $V^{H_{S_2}}$ on $ES_2$. The following proposition tells us that the reduced density operator on $S_2C$ is the same in both scenarios. (Note that we do not operate on system $C$. The only purpose of this system is to keep track of correlations.)

\begin{Proposition}[Strong catalytic property]
\label{StrongCatalytic}
Let $s>0$, and let $\mathcal{H}_{S_1}$, $\mathcal{H}_{S_2}$, and $\mathcal{H}_C$ be finite-dimensional Hilbert spaces. Let $H_{S_1}\in H_s(\mathcal{H}_{S_1})$ and $H_{S_2}\in H_s(\mathcal{H}_{S_2})$, and define $\mathcal{H}_S = \mathcal{H}_{S_1}\otimes\mathcal{H}_{S_2}$ and $H_{S} = \hat{1}_{S_2}\otimes H_{S_1} + H_{S_2}\otimes \hat{1}_{S_1}$.   (The Hamiltonian $H_C$ on $\mathcal{H}_C$ can be chosen arbitrarily. The total Hamiltonian is $H_S\otimes\hat{1}_C + \hat{1}_S \otimes H_{C}$.) Let $U_1\in U(\mathcal{H}_{S_1})$ and $U_2\in U(\mathcal{H}_{S_2})$, then 
\begin{equation}
\label{bdfjbnyfj}
\begin{split}
 & \Tr_{ES_1}\Big( V^{H_{S_2}}(U_2) V^{H_{S_1}}(U_1)\eta V^{H_{S_1}}(U_1)^{\dagger}V^{H_{S_2}}(U_2)^{\dagger}\Big) \\
 & =  \Tr_{E}\big( V^{H_{S_2}}(U_2)\Tr_{S_1}(\eta) V^{H_{S_2}}(U_2)^{\dagger}\big),
\end{split}
\end{equation}
for all  $\eta \in \mathcal{S}(\mathcal{H}_{S_1}\otimes\mathcal{H}_{S_2}\otimes\mathcal{H}_C\otimes\mathcal{H}_E)$.
\end{Proposition}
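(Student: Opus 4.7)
The plan is to combine the commutation provided by Lemma \ref{ndfkjbn} with the cyclicity of the partial trace. First, I would embed the two local unitaries into the full Hilbert space $\mathcal{H}_{S_1}\otimes \mathcal{H}_{S_2}\otimes \mathcal{H}_C\otimes \mathcal{H}_E$ by tensoring with identities: write $\tilde{V}_1$ for $V^{H_{S_1}}(U_1)$ extended by $\hat{1}_{S_2C}$, and $\tilde{V}_2$ for $V^{H_{S_2}}(U_2)$ extended by $\hat{1}_{S_1C}$. Lemma \ref{ndfkjbn}, with $\mathcal{H}_C$ as a passive spectator, then gives $\tilde{V}_1\tilde{V}_2 = \tilde{V}_2\tilde{V}_1$, so the left-hand side of \eqref{bdfjbnyfj} can be rewritten as $\Tr_{ES_1}[\tilde{V}_1\tilde{V}_2\,\eta\,\tilde{V}_2^{\dagger}\tilde{V}_1^{\dagger}]$.

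Next, I would observe that $\tilde{V}_1$ acts nontrivially only on the subsystems $S_1$ and $E$ that are being traced out, and invoke the general fact that partial tracing annihilates such a unitary conjugation: $\Tr_{ES_1}[\tilde{V}_1 Z \tilde{V}_1^{\dagger}] = \Tr_{ES_1}[Z]$ for every operator $Z$ on $\mathcal{H}_{S_1}\otimes\mathcal{H}_{S_2}\otimes\mathcal{H}_C\otimes\mathcal{H}_E$. This is just cyclicity of the ordinary trace over $S_1E$ combined with $\tilde{V}_1^{\dagger}\tilde{V}_1 = \hat{1}$, which one checks by expanding $Z = \sum_{ij}|i\rangle\langle j|_{S_1E}\otimes Z_{ij}^{S_2C}$ in a product basis and using $\Tr_{S_1E}[W_1|i\rangle\langle j|W_1^{\dagger}] = \delta_{ij}$ for the $S_1E$-factors. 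Applied with $Z = \tilde{V}_2\,\eta\,\tilde{V}_2^{\dagger}$, this collapses the LHS to $\Tr_{ES_1}[\tilde{V}_2\,\eta\,\tilde{V}_2^{\dagger}]$.

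Finally, I would split the remaining partial trace as $\Tr_{ES_1}=\Tr_E\circ\Tr_{S_1}$, and use that $\tilde{V}_2$ acts as identity on $S_1$, so the $\Tr_{S_1}$ passes through the conjugation: $\Tr_{S_1}[\tilde{V}_2\,\eta\,\tilde{V}_2^{\dagger}] = V^{H_{S_2}}(U_2)\,\Tr_{S_1}(\eta)\,V^{H_{S_2}}(U_2)^{\dagger}$, after which the remaining $\Tr_E$ yields exactly the right-hand side of \eqref{bdfjbnyfj}. Conceptually the whole argument is a short chain of identities once Lemma \ref{ndfkjbn} is in hand; the main thing requiring care is keeping track of which subsystems each factor acts on, specifically that $\tilde{V}_1$ is trivial on $S_2C$ (which enables the cyclicity collapse over $ES_1$) and $\tilde{V}_2$ is trivial on $S_1$ (which enables pulling it out of $\Tr_{S_1}$). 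Note that we never need Lemma \ref{Invariance} here: the stronger catalytic property follows directly from the commutation, and the invariance of $\Tr(\Delta^{a}\sigma)$ can be recovered as a special case.
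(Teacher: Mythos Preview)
Your proof is correct and follows essentially the same approach as the paper: commute $\tilde{V}_1$ and $\tilde{V}_2$ via Lemma~\ref{ndfkjbn}, then use cyclicity of the partial trace over $ES_1$ together with unitarity of $\tilde{V}_1$ to eliminate it, and finally pull $\tilde{V}_2$ out of $\Tr_{S_1}$. You have spelled out the last step (splitting $\Tr_{ES_1}=\Tr_E\circ\Tr_{S_1}$) more explicitly than the paper does, but the logic is identical.
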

The reason for why the Hamiltonian $H_C$ on $\mathcal{H}_C$ can be arbitrary is that the operations never touch this subsystem. 

We should strictly speaking write `$[V^{H_{S_2}}(U_2)\otimes\hat{1}_{S_1}\otimes \hat{1}_C][\hat{1}_{S_2}\otimes V^{H_{S_1}}(U_1)\otimes \hat{1}_C]$' rather than  `$V^{H_{S_2}}(U_2)V^{H_{S_1}}(U_1)$' in the first line of equation (\ref{bdfjbnyfj}). Similarly, we should write `$V^{H_{S_2}}(U_2)\otimes \hat{1}_C$' rather than `$V^{H_{S_2}}(U_2)$' in the second line.

\begin{proof}
By Lemma \ref{ndfkjbn} and due to cyclic permutation under the (partial) trace $\Tr_{ES_1}$ (this is applied to an operator that acts trivially on $\mathcal{H}_{S_2}\otimes\mathcal{H}_C$) it follows that
\begin{equation*}
\begin{split}
 & \Tr_{ES_1}\Big(V^{H_{S_2}}(U_2) V^{H_{S_1}}(U_1) \eta V^{H_{S_1}}(U_1)^{\dagger}V^{H_{S_2}}(U_2)^{\dagger}\Big) \\
 & =  \Tr_{ES_1}\Big( V^{H_{S_1}}(U_1) V^{H_{S_2}}(U_2)\eta V^{H_{S_2}}(U_2)^{\dagger}V^{H_{S_1}}(U_1)^{\dagger}\Big) \\
 &  =  \Tr_{ES_1}\Big([\hat{1}_{S_2}\otimes V^{H_{S_1}}(U_1)^{\dagger}V^{H_{S_1}}(U_1)] \\
  & \quad \quad \quad \quad \quad \quad V^{H_{S_2}}(U_2)\eta V^{H_{S_2}}(U_2)^{\dagger}\Big).
\end{split}
 \end{equation*}
 According to Lemma \ref{nvmxcnv}, $V$ preserves the Hermitian conjugate, operator multiplication, and the identity. Thus  $V^{H_{S_1}}(U_1)^{\dagger}V^{H_{S_1}}(U_1) = \hat{1}_{S_1}$. From this we can conclude equation (\ref{bdfjbnyfj}).
\end{proof}

\emph{Re-derivation of Proposition \ref{WeakCatalytic} from Proposition \ref{StrongCatalytic}.--}
Here we show that Proposition \ref{WeakCatalytic} can be derived from  Proposition \ref{StrongCatalytic}. Here we make use of the ancillary system $C$ to form a Choi-state of the relevant channel.

Assume that $\mathcal{H}_C \eqsim  \mathcal{H}_{S_2}$, and 
\begin{equation*}
\eta = |\psi\rangle_{S_2C}\langle\psi|\otimes \rho_1\otimes \sigma,
\end{equation*}
where $|\psi\rangle_{S_2C}$ is a maximally entangled state.
With these assumptions it follows from equation (\ref{bdfjbnyfj}) that 
\begin{equation*}
    [\Phi^{H_{S_2}}_{\Lambda_{\rho_1,U_1}(\sigma),U_2}\otimes I_{C}](|\psi\rangle_{S_2C}\langle\psi|)
   =   [\Phi^{H_{S_2}}_{\sigma,U_2}\otimes I_C](|\psi\rangle_{S_2C}\langle\psi|).
\end{equation*}
Hence, the Choi-states of the two channels $\Phi^{H_{S_2}}_{\Lambda_{\rho_1,U_1}(\sigma),U_2}$ and $\Phi^{H_{S_2}}_{\sigma,U_2}$ are equal, which implies that the channels themselves are equal. We thus obtain the statement of Proposition \ref{WeakCatalytic}.

\emph{Operating several times on the same system.--}
Imagine that we operate on the same system $S$ with the same energy reservoir $E$, by a sequence of operations $V^{H_S}(U_1)$, $V^{H_S}(U_2)$, \ldots, $V^{H_S}(U_L)$. It follows directly by Lemma \ref{nvmxcnv} that the resulting operation $V^{H_S}(U_L)\cdots V^{H_S}(U_2)V^{H_S}(U_1)$ is equal to $V^{H_S}(U_L\cdots U_2U_1)$. In other words, we can view a repeated application on the same system, as if it was a single application. Hence, in this sense, the number of times that we use the reservoir on the same system does not matter. 
It is maybe worth to point out that since we let the two systems interact repeatedly, and thus generically build up correlations, we cannot in general describe this sequential evolution as a concatenation of channels on $S$ alone. In other words, the dynamics on system $S$ is in the general case not Markovian.

One can also imagine that we have a number of subsystems $S_1,\ldots, S_K$, but that we operate on these systems more than once, in an arbitrary sequence. By combining the above reasoning with the commutative property in Lemma \ref{ndfkjbn}, one can realize that any such sequence of operations is equivalent to a scenario where we operate on each system once.
As an example
\begin{equation*}
\begin{split}
& V^{H_{S_2}}(U_5)V^{H_{S_1}}(U_4)V^{H_{S_3}}(U_3)V^{H_{S_1}}(U_2)V^{H_{S_3}}(U_1)\\
= & V^{H_{S_3}}(U_3)V^{H_{S_3}}(U_1)V^{H_{S_2}}(U_5)V^{H_{S_1}}(U_4)V^{H_{S_1}}(U_2)\\
= & V^{H_{S_3}}(U_3U_1)V^{H_{S_2}}(U_5)V^{H_{S_1}}(U_4U_2).
\end{split}
\end{equation*}

\emph{Characterization of relevant properties.--}
In the previous section, where we considered reservoirs that are initially uncorrelated with the system, we could characterize the relevant aspects of the reservoir via the numbers $\Tr(\Delta^{a}\sigma)$ for $a\in\mathbb{Z}$. 
Here, this role is in some sense taken over by the operators $\Tr_E([\hat{1}_{SC}\otimes \Delta^{a}]\eta)$. These operators `determine' which states that can be reached on $SC$.  One can also see that this set of operators is not affected by   any previous application of the reservoir on other systems.

\subsection{\label{AnExample}An example: Sequential preparation}
Here we consider a simple illustration of the catalytic implementation of channels. Consider a collection of non-interacting two-level systems. We let $|\psi_0\rangle,|\psi_1\rangle$ denote the eigenvectors with corresponding eigenvalues $z_0 = 0, z_1 = 1$ of each system Hamiltonian. (We let $s=1$.)
Suppose that all two-level systems initially are in the ground states $|\psi_0\rangle$. Then the channels (\ref{channelOnS}) and (\ref{channelOnE}) take the form
\begin{equation*}
\begin{split}
\Phi_{\sigma,U}(|\psi_0\rangle\langle\psi_0|)  = & |U_{00}|^{2} |\psi_{0}\rangle\langle\psi_{0}| +|U_{10}|^2|\psi_{1}\rangle\langle\psi_{1}| \\
 &+ U_{10}^{*}U_{00}\Tr (\Delta\sigma)|\psi_{0}\rangle\langle\psi_{1}| \\
 &+ U_{10}U_{00}^{*}\Tr (\Delta^{\dagger}\sigma)|\psi_{1}\rangle\langle\psi_{0}|, \\ 
\Lambda_{|\psi_0\rangle\langle\psi_0|,U}(\sigma)   = & |U_{00}|^{2} \sigma  + |U_{10}|^{2}\Delta^{\dagger}\sigma \Delta,
\end{split}
\end{equation*} 
where $U_{nn'} = \langle\psi_n|U|\psi_{n'}\rangle$. 

Suppose that we ideally would like to put all the two-level systems into the state $|\phi\rangle = (|\psi_0\rangle-i|\psi_1\rangle)/\sqrt{2}$, where we sequentially use one and the same reservoir, initially in state $\sigma^{(0)}$. As a measure of the quality of this preparation we  take the (square of the) fidelity $F_U :=\langle \phi|\Phi_{\sigma^{(0)},U}(|\psi_0\rangle\langle\psi_0|)|\phi\rangle$, and optimize over all unitary $U$. 

An optimizing unitary $U^{\textrm{opt}}$ is obtained by $U^{\textrm{opt}}_{10} = 1/\sqrt{2}$ and $U^{\textrm{opt}}_{00} = i\exp[-i\arg(\Tr (\Delta\sigma^{(0)}))]/\sqrt{2}$, and leads to the optimal value $F_{\textrm{opt}}^{(0)} = \frac{1}{2} + \frac{1}{2}|\Tr(\Delta\sigma^{(0)})|$.
The prepared state is 
\begin{equation}
\label{nfdjkbin}
\begin{split}
\Phi_{\sigma,U^{\textrm{opt}}}(|\psi_0\rangle\langle\psi_0|)  = & \frac{1}{2}|\psi_0\rangle\langle \psi_0| + \frac{1}{2}|\psi_1\rangle\langle\psi_1| \\
 & + i\frac{1}{2}|\psi_{0}\rangle\langle\psi_{1}||\Tr (\Delta\sigma)|\\
 & -i \frac{1}{2}|\psi_{1}\rangle\langle\psi_{0}||\Tr (\Delta\sigma)|.
\end{split}
\end{equation}
With the optimizing unitary $U^{\textrm{opt}}$, the next state of the reservoir is $\sigma^{(1)} := \frac{1}{2}\sigma^{(0)} + \frac{1}{2}\Delta^{\dagger}\sigma^{(0)} \Delta$. As seen (and as we already know) $\Tr(\Delta \sigma^{(1)}) = \Tr(\Delta\sigma^{(0)})$. Hence, in the second step we can use the same optimizing unitary, resulting in the same optimum $F_{\textrm{opt}}^{(1)} = F_{\textrm{opt}}^{(0)}$. 
 This process can be repeated indefinitely, with a constant sequence of fidelities $F_{\textrm{opt}}^{(k)} =  \frac{1}{2} + \frac{1}{2}|\Tr(\Delta\sigma^{(0)})|$ for all $k$. 

Although the quality of each individual preparation thus remains constant over the repetitions, the state of the reservoir changes. At the $k$th step the state of the reservoir is 
\begin{equation}
\label{dmnvb}
\sigma^{(k)} = \frac{1}{2^{k}}\sum_{l=0}^{k}\binom{k}{l} {\Delta^{l}}^{\dagger}\sigma^{(0)}\Delta^{l}.
\end{equation} 
This provides an example of the spreading of the state of the reservoir mentioned in the main text. 
As a special case one can choose $\sigma^{(0)} := |\eta_{L,l_0}\rangle\langle\eta_{L,l_0}|$, for $|\eta_{L,l_0}\rangle =\sum_{l=0}^{L-1}|l+ l_0\rangle/\sqrt{L}$. In this case we obtain $F_{\textrm{opt}}^{(k)} =  1-1/(2L)$.

Equation (\ref{dmnvb}) describes an incoherent mixture of translations of the input state. In the generic case, as described by equation (\ref{channelOnE}), the action on the reservoir is more general. However, whenever the initial state $\rho$ of system $S$ is diagonal with respect to the energy eigenspaces (as it is in this example) one  obtains an incoherent mixture of translations. 


\subsection{\label{Sec:Abelian} Generalization to unitary representations of Abelian groups}
In this section we make a bit of a detour in an attempt to shed some light on the underlying mathematical structure of catalytic coherence.  We here abandon questions of energy conservation and Hamiltonians, and make the observation that all the results of Sections  \ref{ACatalyticProperty} and \ref{StongerCatalyticProperty} can be regained if the rigid translations $\Delta^{a}$ along the energy ladder are replaced by a unitary representation of an Abelian group. 

For an Abelian group $\langle G, +\rangle$, let $\{U(g)\}_{g\in G}$ be a unitary representation of $G$ on a Hilbert space $\mathcal{H}_E$. $U$ satisfies the standard properties, i.e., $U(g)$ is unitary and $U(g+g') = U(g)U(g')$, and $U(0)= \hat{1}$, where $0$ denotes the identity of $\langle G, +\rangle$. If $-g$ denotes the group inverse of $g$, then $U(-g)= U(g)^{\dagger}$.

For a subset $S\subseteq G$ let $\{|\psi_{g}\rangle\}_{g\in S}$ be an orthonormal basis spanning a Hilbert space $\mathcal{H}_S$. It is maybe worth emphasizing that $S$ does \emph{not} have to form a subgroup of $G$. Define the following mapping 
\begin{equation}
\tilde{V}(Q) = \sum_{g,g'\in S}|\psi_{g}\rangle\langle\psi_{g}|Q|\psi_{g'}\rangle\langle\psi_{g'}|\otimes U(g'-g).
\end{equation}
The results of the previous sections can be phrased as the special case of $\langle \mathbb{Z},+\rangle$, where the choice of the subset $S\subseteq G$ corresponds to the collection integers $z_n$ describing the eigenvalues of $H_S$.

Analogous to the mapping $V$ in Lemma \ref{nvmxcnv}, $\tilde{V}$ preserves the operator algebra on $\mathcal{H}_S$, i.e.,  $\tilde{V}$ satisfies the properties in (\ref{preservation}). We can also define the channels on $S$ and $E$,
\begin{equation}
\begin{split}
\tilde{\Phi}_{\sigma,U}(\rho)  := &  \Tr_E \tilde{V}(U)\rho\otimes\sigma \tilde{V}(U)^{\dagger}\\
 = & \sum_{g,g',\overline{g},\overline{g}'}\langle\psi_{g}|U|\psi_{g'}\rangle\langle\psi_{g'}|\rho |\psi_{\overline{g}'}\rangle\langle\psi_{\overline{g}'}|U^{\dagger}|\psi_{\overline{g}}\rangle \\
 & \times\Tr \big(U(g'-g-\overline{g}'+\overline{g})\sigma\big) |\psi_{g}\rangle\langle\psi_{\overline{g}}|, 
\end{split}
\end{equation}
\begin{equation}
\label{AbelianchannelOnE}
\begin{split}
\tilde{\Lambda}_{\rho,U}(\sigma)  := & \Tr_{S}\tilde{V}(U)\rho\otimes \sigma \tilde{V}(U)^{\dagger}\\
 = & \sum_{g,g',\overline{g}'\in S}\langle\psi_{g}|U|\psi_{g'}\rangle\langle\psi_{g'}| \rho|\psi_{\overline{g}'}\rangle\langle\psi_{\overline{g}'}|U^{\dagger}|\psi_{g}\rangle   \\
 &\times  U(g'-g)\sigma U(\overline{g}'-g)^{\dagger}.
\end{split}
\end{equation} 
As an analogue of Lemma \ref{Invariance}, one can verify that 
\begin{equation}
\label{AbelianInvariance}
\Tr\big(U(g'')\tilde{\Lambda}_{\rho,U}(\sigma)\big) =   \Tr\big(U(g'')\sigma\big),
\end{equation}
for all $\sigma \in \mathcal{S}(\mathcal{H}_E)$, $U\in U(\mathcal{H}_S)$, $\rho\in\mathcal{S}(\mathcal{H}_S)$, and $g''\in G$, where $\Lambda_{\rho,U}$ is as defined in equation (\ref{AbelianchannelOnE}).
Using this, one can prove the analogue of Proposition  \ref{WeakCatalytic}.

One can also generalize the strong catalytic property in Proposition \ref{StrongCatalytic}. 
If $S_1\subseteq G$ and $S_2\subseteq G$ are two arbitrary subsets, and $\{|\psi^1_{g}\rangle\}_{g\in S_1}$ and $\{|\psi^2_{f}\rangle\}_{f\in S_2}$ are orthonormal bases of $\mathcal{H}_{S_1}$ and $\mathcal{H}_{S_2}$, respectively, we can generalize the mapping $\tilde{V}$ as
\begin{equation}
\begin{split}
 & \tilde{V}^{21}(Q)  \\
 & =  \sum_{g,g'\in S_1}\sum_{f,f'\in S_2}   |\psi^2_{f},\psi^1_{g}\rangle\langle\psi^2_{f},\psi^1_{g}|Q|\psi^2_{f'},\psi^1_{g'}\rangle\langle\psi^2_{f'},\psi^1_{g'}|\\
 & \quad\quad\otimes U(g'+f'-g-f).
 \end{split}
\end{equation}
where $ |\psi^2_{f},\psi^1_{g}\rangle := |\psi^2_{f}\rangle|\psi^1_{g}\rangle$.
With this extended definition one can prove the analogue of Lemma \ref{ndfkjbn}, which directly leads to a  strong catalytic property as in Proposition \ref{StrongCatalytic},
\begin{equation}
\begin{split}
& \Tr_{ES_1}\Big( \tilde{V}(U_2)\tilde{V}(U_1)\eta \tilde{V}(U_1)^{\dagger}\tilde{V}(U_2)^{\dagger}\Big) \\
 & =  \Tr_{E}\big( \tilde{V}(U_2)\Tr_{S_1}(\eta) \tilde{V}(U_2)^{\dagger}\big).
\end{split}
\end{equation}
(In the above equation we should strictly speaking write `$[\tilde{V}(U_2)\otimes \hat{1}_{S_1}\otimes\hat{1}_C][\hat{1}_{S_2}\otimes\tilde{V}(U_1)\otimes\hat{1}_C]$' rather than `$\tilde{V}(U_2)\tilde{V}(U_1)$', where $C$ is the ancillary reference system introduced in Sec.~\ref{StongerCatalyticProperty}.)

Since the generalization to Abelian groups goes through, an obvious question is if non-Abelian groups also would work.
One can indeed construct a new map
 \begin{equation}
\overline{V}(Q) = \sum_{g,g'\in S}|\psi_{g}\rangle\langle\psi_{g}|Q|\psi_{g'}\rangle\langle\psi_{g'}|\otimes U(g^{-1}g').
\end{equation}
which again satisfy $\overline{V}(A)\overline{V}(B) = \overline{V}(AB)$, $\overline{V}(A)^{\dagger} = \overline{V}(A^{\dagger})$, and $\overline{V}(\hat{1}_{S}) = \hat{1}_{S}\otimes\hat{1}_{E}$. However, the property (\ref{AbelianInvariance}) does in general not hold for $\overline{V}$.


\section{\label{TranslationPhaseRef} Reformulation in terms of correlations with a reference system}

In the context of reference frames it is known that references can be treated both explicitly and implicitly \cite{Bartlett07} (see also the discussion on coherence in \cite{Bartlett05}). In an attempt to gain a bit more understanding of the nature of catalytic coherence we here make a specific construction along these lines, where we `translate' superposition of energy eigenstates and coherence into correlations with an explicit reference system. On this extended system, all operations can be performed with states and measurements that are block diagonal with respect to the total Hamiltonian, i.e., the effect of superpositions can in some sense  be `simulated' on the joint system. (The word `simulation' may potentially suggest that this scenario somehow would be artificial and less `real'. However, this choice of terminology is merely intended as a convenient way to make it unambiguous which of the two pictures that is under discussion.)

The general idea can be pictured by rephrasing the general setting of the previous sections as a game that we play against a referee. This referee prepares all our  resources (i.e. all systems $S$, as well as the energy reservoir $E$) and gives them to us. After we have made our energy preserving operations, we hand back the systems to the referee, who makes measurements to check the quality of our operations, e.g., how well we have been able to perform energy mixing unitary operations. From the viewpoint of the previous sections, we can achieve coherent operations of high quality when the referee has provided us with a high degree of coherence in terms of broad superpositions of consecutive energy eigenstates in the reservoir. 

However, there is an alternative approach that would yield identical results, when it comes to  measurement statistics of the referee's tests. In this second version of game, the referee does not give us any systems that are in superposition between energy eigenspaces. Instead, the referee prepares a specific type of states correlated to an additional reference system $R$, in such a way that the joint state is diagonal with respect to the total Hamiltonian. As before, the referee gives us the prepared systems $S$ and $E$, but keeps the reference $R$. After we have made our operations and handed back the systems, the referee makes joint measurements on $R$ and the systems, in such a way that the measurement statistics becomes equivalent to the first version of the game. In this sense one can view the second version as a `simulation' of the first version. In this simulation, the role of superpositions with respect to energy eigenspaces is in some sense taken over by correlations to the reference $R$. Thus, instead of regarding coherence, or `off-diagonality' as a resource, we should in this scenario view correlations with the reference as the relevant asset.  

In the following we shall construct such an explicit reference system, tailored to the doubly infinite ladder model in Section \ref{DoublyInfinite}, and translate catalytic coherence to this setting. We do this via  specific mappings that send density operators, observables, and POVMs to corresponding operators in the simulation. In Section \ref{SeqSeparable} we shall consider the special case of sequential state preparation from initial states with definite energy (analogously to what we did in Section \ref{AnExample}). We will see that in this case superpositions in the prepared states  correspond to correlated but separable states with respect to the reference. Since separable correlations, as opposed to entanglement, are not monogamous, this observation may provide some additional intuitive understanding of why catalytic coherence is possible

\subsection{Simulating coherence}

Let $H_Q\in\mathcal{H}_{s}(\mathcal{H}_Q)$ and let  $P_{j}$ be the eigenprojectors of $H_Q$ corresponding to distinct eigenvalues  $sz_j$ (i.e., $z_j\neq z_{j'}$ if $j\neq j'$). Let $\mathcal{H}_{R}$ be infinite-dimensional with the Hamiltonian $H_{R} = s\sum_{j\in\mathbb{Z}}j|j\rangle\langle j|$. As one may note, $H_R$ is structurally identical to the energy reservoir hamiltonian $H_E$ in equation (\ref{DefHEreservoir}), but it will serve a somewhat different purpose. The full simulation of  catalytic coherence will involve both $E$ and $R$. 

On $\mathcal{H}_{Q}\otimes\mathcal{H}_R$ we define
\begin{equation}
Y =  \sum_{j}\sum_{l\in \mathbb{Z}}P_{j}\otimes |l-z_j\rangle\langle l|  =  \sum_{j}P_{j}\otimes \Delta^{-z_j}.
\end{equation}
This operator is unitary, but not energy conserving with respect to $H_Q + H_{R}$.

For $A\in \mathcal{L}(\mathcal{H}_Q)$ and $B\in\mathcal{L}(\mathcal{H}_{R})$ define
\begin{equation}
\label{ThetaDef}
\Theta^{Q:R}_{B}(A) := Y[A\otimes B] Y^{\dagger}.
\end{equation}
For $A'\in \mathcal{L}(\mathcal{H}_Q)$ and $B'\in\mathcal{L}(\mathcal{H}_{R})$ one can see that 
\begin{equation}
\begin{split}
 & \Tr\big(\Theta^{Q:R}_{B}(A)\Theta^{Q:R}_{B'}(A') \big) =  \Tr(AA')\Tr(BB'), \\
 &\Theta^{Q:R}_{B}(A)^{\dagger}  =   \Theta^{Q:R}_{B^{\dagger}}(A^{\dagger}),\\
 & \Theta^{Q:R}_{\hat{1}}(\hat{1}) =  \hat{1}\otimes\hat{1},\\
 & A\geq 0,\quad B\geq 0\quad \Rightarrow \quad \Theta^{Q:R}_{B}(A)\geq 0. 
 \end{split}
\end{equation}
Note that the mapping $V$ defined in Lemma \ref{nvmxcnv} is nothing but $V(A) = \Theta^{S:E}_{\hat{1}}(A)$. Although identical operations, we will keep the distinct notation, as these mappings have different roles and act on different systems.

From these observations we can conclude the following:
\begin{itemize}
\item If $\rho\in \mathcal{S}(\mathcal{H}_Q)$ and $\xi\in\mathcal{S}(\mathcal{H}_R)$ then $\Theta^{Q:R}_{\xi}(\rho)\in\mathcal{S}(\mathcal{H}_{Q}\otimes\mathcal{H}_R)$.
\item If $\{A_{k}\}_{k}$ is a POVM on $\mathcal{H}_Q$, then $\{\Theta^{Q:R}_{\hat{1}}(A_k)\}_{k}$ is a POVM on $\mathcal{H}_Q\otimes\mathcal{H}_R$.
\item If $\rho\in \mathcal{S}(\mathcal{H}_Q)$ and $\xi\in\mathcal{S}(\mathcal{H}_R)$ and $\{A_{k}\}_{k}$ is a POVM on $\mathcal{H}_Q$, then 
\begin{equation}
\Tr\big(\Theta^{Q:R}_{\hat{1}}(A_k)\Theta^{Q:R}_{\xi}(\rho)\big) = \Tr(A_k\rho).
\end{equation}
\end{itemize}
In other words, the effect of preparing density operators $\rho$ and measuring POVMs $A_k$ on $Q$ can be simulated by preparing density operators $\Theta^{Q:R}_{\xi}(\rho)$ and measuring POVMs $\Theta^{Q:R}_{\hat{1}}(A_k)$ on $\mathcal{H}_Q\otimes\mathcal{H}_R$.

For $\xi\in\mathcal{S}(\mathcal{H}_R)$ such that $[\xi]_{H_{R}} = \xi$, it is the case that 
\begin{equation}
[\Theta^{Q:R}_{\xi}(\rho)]_{H_{Q}+ H_R} = \Theta^{Q:R}_{\xi}(\rho),\quad \forall \rho\in\mathcal{S}(\mathcal{H}_Q),
\end{equation}
where $[\cdot]_{H_{Q}+ H_R} = \sum_{l\in\mathbb{Z}}P^{(l)}\cdot P^{(l)}$, with $P^{(l)} = \sum_{n}P_n\otimes |l-z_n\rangle\langle l-z_n|$.
In other words, if $\xi$ is diagonal with respect to $H_R$, then $\Theta^{Q:R}_{\xi}(\rho)$ is diagonal with respect to $H_Q + H_R$, irrespective of whether $\rho$ is diagonal with respect to $H_Q$ or not.

Analogously, for any POVM $\{A_k\}_k$ on $\mathcal{H}_Q$, it is the case that 
\begin{equation}
[\Theta^{Q:R}_{\hat{1}}(A_k)]_{H_{Q}+ H_R} = \Theta^{Q:R}_{\hat{1}}(A_k),\quad \forall k.
\end{equation}
Hence, every POVM on $Q$ is mapped to a POVM that is diagonal with respect to $H_Q + H_R$.

The above observations mean that we can simulate the preparation of arbitrary states, irrespective of how `off-diagonal' they are, and analogously we can measure arbitrary POVMs on $Q$, by preparing and measuring other states and POVMs that are diagonal with respect to the Hamiltonian on the larger system $QR$. In other words, even if we have no access to superpositions,  we can, in this sense, simulate the effect of superpositions.  

In this setting one should not think of $\Theta$ as a physical operation. (It is disqualified due to the assumption of energy conservation.) One should rather imagine that we input a \emph{description} of a state $\rho$ to a preparation device. Rather than preparing $\rho$, this device prepares $\Theta_{\xi}^{Q:R}(\rho)$. Similarly, given a description of a POMV $\{A_k\}_k$ a measurement device implements the POVM $\{\Theta^{Q:R}_{\hat{1}}(A_k)\}_{k}$. 

\subsection{\label{RefFrExamples}Examples}
The nature of the states that the mapping $\Theta_{\xi}^{Q:R}$ generates depends on the choice of state $\xi$. Here we consider the special case that $\xi$ is in a pure energy eigenstate of $H_R$, e.g., $|0\rangle$. We furthermore let $H_Q$ be non-degenerate with eigenstates $\{|\psi_j\rangle\}_{j}$.
 In this case a diagonal state corresponds to a separable state, as seen by
\begin{equation*}
\Theta_{|0\rangle\langle 0|}^{Q:R}(\sum_{j}\lambda_j |\psi_j\rangle\langle\psi_j|) = \sum_{j}\lambda_{j}|\psi_j\rangle\langle\psi_{j}|\otimes |-z_{j}\rangle\langle -z_{j}|.
\end{equation*}
A pure superposition $|\phi\rangle = \sum_{j}c_{j}|\psi_j\rangle$ is mapped to an entangled state $\Theta_{|0\rangle\langle 0|}^{Q:R}(|\phi\rangle\langle\phi|) = |\tilde{\phi}\rangle\langle\tilde{\phi}|$, where $|\tilde{\phi}\rangle = \sum_{j}c_j|\psi_j\rangle|-z_{j}\rangle$.

\subsection{\label{Sec:equivalenceRelation}An equivalence relation}
Under the assumption that we only measure observables/POVM elements of the form $\Theta^{Q:R}_{\hat{1}}(A)$ there are classes of states on $\mathcal{H}_{Q}\otimes\mathcal{H}_{R}$ that we cannot distinguish. More precisely, if $\eta,\eta'\in \mathcal{S}(\mathcal{H}_Q\otimes\mathcal{H}_R)$, then we can define the equivalence relation
\begin{equation*}
\eta\sim\eta' \, \Leftrightarrow \, \Tr(\Theta^{Q:R}_{\hat{1}}(A)\eta) = \Tr(\Theta^{Q:R}_{\hat{1}}(A)\eta'),\, \forall A\in \mathcal{L}(\mathcal{H}_Q).
\end{equation*} 
Using the definition of $\Theta$ in equation (\ref{ThetaDef}), one can see that this condition reduces to 
\begin{equation*}
\sum_{j,j'}P_{j'}\Tr_{R}(\Delta^{z_{j'}-z_j} \eta) P_{j}  =   \sum_{j,j'}P_{j'}\Tr_{R}(\Delta^{z_{j'}-z_j} \eta') P_{j}.
\end{equation*}

\subsection{Energy conserving unitary operations}
 If $V\in U(\mathcal{H}_Q)$ is such that $[V]_{H_Q} = V$, then $\Theta^{Q:R}_{\hat{1}}(V) = V\otimes\hat{1}_R$, and thus
\begin{equation}
\label{EnergyConserve}
(V\otimes \hat{1}_R)\Theta^{Q:R}_{\xi}(\rho)(V\otimes \hat{1}_R)^{\dagger} = \Theta^{Q:R}_{\xi}(V\rho V^{\dagger}),
\end{equation}
for all $\rho\in\mathcal{S}(\mathcal{H}_Q)$ and $\xi\in\mathcal{S}(\mathcal{H}_R)$.

Hence, if we  restrict to energy conserving unitary operations on system $Q$, then we can perform this operation on $Q$ without having access to the reference $R$. Later we will let $V$ be the unitary operations $V(U)$ as defined in equation (\ref{bijection}), which by construction are energy conserving.

\subsection{\label{RefFrPartialTr}Partial trace}

Suppose that $Q$ consists of two subsystems $Q = Q_1Q_2$, with the joint Hamiltonian $H_Q = H_{Q_1} + H_{Q_2}$, where $H_{Q_1}\in H_s(\mathcal{H}_{Q_1})$ and $H_{Q_2}\in H_s(\mathcal{H}_{Q_2})$. 

Let $\{P^{(1)}_{j}\}_{j}$ be the eigenprojectors of $H_{Q_1}$ with respect to the distinct eigenvalues $sz^{(1)}_{j}$, and analogously  let $\{P^{(2)}_{k}\}_{k}$ be the eigenprojectors of $H_{Q_2}$ corresponding to the distinct eigenvalues $sz^{(2)}_{k}$. 
One can see that 
\begin{equation}
 \Theta_{\hat{1}}^{Q_1Q_2:R}(A^{(1)}\otimes\hat{1}_2) =  \Theta_{\hat{1}}^{Q_1:R}(A^{(1)})\otimes \hat{1}_2.
\end{equation}

It follows that partial trace corresponds to partial trace also in the simulation,
\begin{equation}
\label{PartialTr}
 \Tr(A_1\Tr_{Q_2}\rho_{12}) =  \Tr\Big( \Theta_{\hat{1}}^{Q_1:R}(A_1) \Tr_{Q_2}\Theta_{\xi}^{Q_1Q_2:R}(\rho_{12})\Big)
\end{equation}
and this leads to
\begin{equation}
\Tr_{Q_2}\Theta_{\xi}^{Q_1Q_2:R}(\rho_{12}) \sim \Theta_{\xi}^{Q_1:R}(\Tr_{Q_2}\rho_{12}).
\end{equation}

One may note that the explicit form of the partial trace of the representation $\Theta_{\xi}^{Q_1Q_2:R}(\rho_{12})$ is 
\begin{equation*}
\begin{split}
 \Tr_{Q_2}\Theta_{\xi}^{Q_1Q_2:R}(\rho_{12}) 
 = & \sum_{j,j',k}P^{(1)}_{j} \Tr_2 (P^{(2)}_{k}\rho_{12})P^{(1)}_{j'} \\
 & \otimes\Delta^{-z^{(1)}_{j}-z^{(2)}_{k}}\xi {\Delta^{-z^{(1)}_{j'}-z^{(2)}_{k}}}^{\dagger},
\end{split}
\end{equation*}
while 
\begin{equation*}
\Theta_{\xi}^{Q_1:R}(\Tr_{Q_2}\rho_{12}) = \sum_{j,j'}P^{(1)}_{j} \Tr_2 (\rho_{12})P^{(1)}_{j'} \Delta^{-z^{(1)}_{j}}\xi {\Delta^{-z^{(1)}_{j'}}}^{\dagger}.
\end{equation*} 
Hence, although equivalent in terms of $\sim$, it is generally the case that  $\Tr_{Q_2}\Theta_{\xi}^{Q_1Q_2:R}(\rho_{12})\neq \Theta_{\xi}^{Q_1:R}(\Tr_{Q_2}\rho_{12})$.

\subsection{Appending systems with definite energy}
Again we assume $Q = Q_1Q_2$, with the same notation
 as in the previous section.
Suppose that $\rho_2\in\mathcal{S}(\mathcal{H}_2)$ has a distinct energy, i.e., $P^{(2)}_{t}\rho_2 P^{(2)}_{t} = \rho_2$ for some $t$, with corresponding energy eigenvalue $sz_{t}$. Then
\begin{equation}
\label{Append}
\begin{split}
\rho_{2}\otimes \Theta^{Q_1:R}_{\xi}(\rho_1)  = & \Theta^{Q_2Q_1:R}_{\xi'}(\rho_2\otimes \rho_1),\\
 \xi' := & \Delta^{z^{(2)}_t} \xi {\Delta^{z^{(2)}_t}}^{\dagger}.
 \end{split}
\end{equation}

Note that the transformation from $\xi$ to $\xi'$ is merely a rigid translation in energy, and does intuitively not change the `degree' of coherence in the state. Moreover
\begin{equation}
\label{snvdfkjnv}
\rho_{2}\otimes \Theta^{Q_1:R}_{\xi}(\rho_1) \sim  \Theta^{Q_2Q_1:R}_{\xi}(\rho_2\otimes \rho_1).
\end{equation}
One may note that in the case when $\rho_2$ is a convex combination of states of definite energy, i.e., if $\rho_2 = \sum_{t}\lambda_{t}\rho_{2}^{(t)}$ with $P^{(2)}_{t}\rho_2^{(t)} P^{(2)}_{t} = \rho_2^{(t)}$, then (\ref{snvdfkjnv}) still holds, although we cannot relate the two states by any operation on the reference $R$ alone.

\subsection{Reformulation of strong catalytic coherence within the simulation}
We are now ready to rephrase catalytic coherence in terms of correlations with the reference system. We  go directly for the strong version in Section \ref{StongerCatalyticProperty} rather than the weaker version in Section \ref{ACatalyticProperty},  because the channel formulation does not translate very elegantly into this picture. 

Let $Q = S_2S_1CE$, with joint Hamiltonian $H_{Q} =H_{S_2} + H_{S_1} + H_{C}+ H_E$, with $H_{S_1}\in H_{s}(\mathcal{H}_{S_1})$, $H_{S_2}\in H_{s}(\mathcal{H}_{S_2})$, $H_C\in H_s(\mathcal{H}_C)$, and $H_{E}$ our standard energy reservoir Hamiltonian as in equation (\ref{DefHEreservoir}), and the same for $H_R$. In Sec.~\ref{StongerCatalyticProperty} we allowed arbitrary $H_C$, but for the sake of simplicity with respect to the formalism, we here restrict to $H_C\in H_s(\mathcal{H}_C)$.

Proposition \ref{StrongCatalytic} does in essence compare two different procedures. In the first procedure, the energy reservoir interacts with system $S_1$ before it interacts with $S_2$, while in the second procedure it only interacts with $S_2$. Proposition \ref{StrongCatalytic} shows that these two procedures lead to the same state on $S_2C$. In the following we shall translate and compare these two procedures, and we begin with the two-step version.  

\begin{enumerate}
\item The referee prepares the state $\Theta^{S_2S_1CE:R}_{\xi}(\eta)$.
\item The referee gives us systems $E$ and $S_1$, and we perform the operation $V^{H_{S_1}}(U_1)$ on $S_1E$.
\item The referee gives us system $S_2$, and we perform the operation $V^{H_{S_2}}(U_2)$ on $S_2E$.
\item We give back system $S_2$ to the referee, who checks the resulting state by measurements on the form $\Theta^{S_2C:R}_{\hat{1}}(A)$.   
\end{enumerate}
Due to the fact that $V_{1}:= V^{H_{S_1}}(U_1)$ and $V_{2}:= V^{H_{S_2}}(U_2)$ are energy conserving it follows that 
\begin{equation*}
V_{2}V_{1}\Theta_{\xi}^{S_2S_1CE:R}(\eta) V_{1}^{\dagger} V_{2}^{\dagger}
= \Theta_{\xi}^{S_2S_1CE:R}(V_{2}V_{1}\eta V_{1}^{\dagger} V_{2}^{\dagger}),
\end{equation*}
i.e., the above described physical procedure results in the same state as if the referee had performed the two operations $V_1$ and $V_2$ before $S_2S_1E$ are given to us. In the particular case we consider here, we only give back system $S_2$ to the referee, and by using  the results of Section \ref{RefFrPartialTr} one can conclude that the state $\eta_{S_2CR}^{\textrm{two step}}$ on $S_2CR$ satisfies the following
  \begin{equation}
  \label{TwoStep}
\begin{split}
\eta_{S_2CR}^{\textrm{two step}}  := & \Tr_{S_1E}(V_{2}V_{1}\Theta_{\xi}^{S_2S_1CE:R}(\eta) V_{1}^{\dagger} V_{2}^{\dagger}) \\
 \sim & \Theta_{\xi}^{S_2C:R}(\Tr_{ES_1}V_{2}V_{1}\eta V_{1}^{\dagger} V_{2}^{\dagger}).
 \end{split}
\end{equation}
Hence, measurements on the form $\Theta_{\hat{1}}^{S_2C:R}(A)$ on the state $\eta_{S_2CR}^{\textrm{two step}}$ simulates measurements $A$ on $\Tr_{ES_1}(V_{2}V_{1}\eta V_{1}^{\dagger} V_{2}^{\dagger})$.

We can now compare the above procedure with the following. 
 \begin{enumerate}
\item The referee prepares the state $\Theta^{S_2CE:R}_{\xi}(\Tr_{S_2}\eta)$.
\item The referee gives us systems $E$ and $S_2$, and we perform the operation $V^{H_{S_2}}(U_2)$ on $S_2E$.
\item We give back system $S_2$ to the referee, who checks the resulting state by measurements on the form $\Theta^{S_2C:R}_{\hat{1}}(A)$.   
\end{enumerate}
Similarly to the above case, one can see that the resulting state  $\tilde{\eta}_{S_2CR}^{\textrm{one step}}$ on $S_2CR$ satisfies 
 \begin{equation}
 \label{OneStep}
 \begin{split}
\tilde{\eta}_{S_2CR}^{\textrm{one step}} := & \Tr_{E}(V_{2}\Theta_{\xi}^{S_2CE:R}(\Tr_{S_1}\eta) V_{2}^{\dagger}) \\
 \sim & \Theta_{\xi}^{S_2C:R}(\Tr_{E}V_{2} \Tr_{S_1}(\eta) V_{2}^{\dagger}).
\end{split}
\end{equation}
By comparing equation (\ref{OneStep}) with (\ref{TwoStep}) and Proposition \ref{StrongCatalytic}, we can conclude that the two procedures are equivalent with respect to what the referee can see in terms of measurements $\Theta^{S_2C:R}_{\hat{1}}(A)$. In other words
 \begin{equation}
 \eta_{S_2CR}^{\textrm{two step}} \sim \eta_{S_2CR}^{\textrm{one step}}. 
 \end{equation}
 Hence, from the point of view of the simulation, we have reconstructed the result of Proposition \ref{StrongCatalytic}.

 \subsection{\label{SeqSeparable}Sequential preparations from states with definite energy}
 Here we focus on the special case of preparations of a sequence of states on systems that initially have definite energies (i.e., analogous to what we did in Section \ref{AnExample}). In this case the initial coherence is only carried by the energy reservoir $E$. Restated in the simulation picture, this means that initially only $E$ and $R$ are correlated.

In Section \ref{AnExample} we have seen that it is possible to create arbitrarily many systems $S_k$ that all have the same reduced density operator, via a sequential application of the energy reservoir. Furthermore, if the coherence in $E$ is sufficiently strong, we can make these states very close to pure superpositions of energy eigenstates. 
In view of the examples in Section \ref{RefFrExamples} this may appear paradoxical. There we saw that  a pure superposition $|\phi\rangle$ in the case of $\xi = |0\rangle\langle 0|$ corresponds to a state $\Theta_{\xi}^{S:R}(|\phi\rangle\langle\phi|)$ that is highly entangled between $S$ and $R$. This may thus seem to suggest that we would be able to entangle arbitrarily many systems $S_1,S_2,\ldots$, with the reference $R$. However, this can  not be the case, as this would increase the degree of entanglement between $S$ and $R$ by local operations. 
The resolution of this apparent paradox lies in the equivalence relation $\sim$. More precisely, there are many  states $\mu$ on $SR$  such that $\mu \sim \Theta^{S:R}_{\xi}(\rho)$ and thus also representing $\rho$, while not necessarily being entangled. In the following we shall show that this indeed is the case for sequential preparations.

Let systems $S_1,\ldots, S_M$ have the Hamiltonians $H_{S_k}\in H_s(\mathcal{H}_{S_k})$. Let $\{P^{(k)}_{n_k}\}_{n_k=1}^{N_k}$ denote the eigen-projectors of $H_{S_k}$ with corresponding distinct eigenvalues $sz^{(k)}_{n_k}$. 
The initial states on these systems are $\rho_k\in\mathcal{S}(\mathcal{H}_{S_k})$, such that $P^{(k)}_{t_k}\rho_{k} P^{(k)}_{t_k} = \rho_{k}$, i.e., system $S_k$ is initially in a state with definite energy $sz^{(k)}_{t_k}$. As usual we let the energy reservoir $E$ have the Hamiltonian $H_E = s\sum_{j\in\mathbb{Z}}j|j\rangle\langle j|_E$. On each system $S_k$ we pick a unitary operator $U_k$ and construct the unitary operation
\begin{equation}
V :=  V^{H_{S_M}}(U_M)\cdots V^{H_{S_1}}(U_1).
\end{equation}
The first step in the procedure is to prepare the coherence resource $\Theta^{E:R}_{\xi}(\sigma)$. Next, the systems $S_1,\ldots,S_M$ are appended and the sequence of operations $V$ is applied, which result in the state
\begin{equation}
\eta:=  V \rho_{M}\otimes \cdots \otimes\rho_{1}\otimes \Theta^{E:R}_{\xi}(\sigma) V^{\dagger}.
\end{equation}
The reduced state on $S_1\cdots S_M$ and the reference $R$ reads
\begin{equation*}
\begin{split}
 \Tr_E\eta   =  & \sum_{j,j'} \sum_{n_M,m_M} \cdots \sum_{n_1,m_1} \sigma_{j,j'}\\
 & \times \delta_{ j' -z^{(M)}_{m_M} -\cdots -z^{(1)}_{m_1}, j-z^{(M)}_{n_M} -\cdots -z^{(1)}_{n_1} } \\
 & \times P^{(M)}_{n_M}U_M\rho_M U_M^{\dagger}P^{(M)}_{m_M}   \otimes\cdots\otimes  P_{n_1}^{(1)}U_1\rho_1 U_1^{\dagger}P_{m_1}^{(1)} \\
 & \otimes \Delta^{-j}\xi {\Delta^{-j'}}^{\dagger}.
\end{split}
\end{equation*}
For $x\in \{1,\ldots, M\}$ the reduced state on $S_{x}R$ can be written
\begin{equation}
\begin{split}
\eta_{S_xR}  =  & \sum_{j,j'}\sigma_{j,j'} \sum_{n_x,m_x} \delta_{j' -z^{(x)}_{m_x}, j-z^{(x)}_{n_x}} \\ 
& \times P_{n_x}^{(x)}U_x\rho_x U_x^{\dagger}P_{m_x}^{(x)}  \otimes \Delta^{-j}\xi {\Delta^{-j'}}^{\dagger}. 
 \end{split}
 \end{equation}
From the previous sections we know that $\eta_{S_xR}$ represents $\rho_{S_x}$, i.e., 
 \begin{equation*}
 \begin{split}
\eta_{S_xR}  \sim & \Theta^{S_x:R}_{\xi}(\rho_{S_x}),\\
 \rho_{S_x}  := & \Tr_{S_1\cdots S_{x+1}S_{x-1}\cdots S_1E}(V \rho_{M}\otimes \cdots \otimes\rho_{1}\otimes\sigma V^{\dagger}).
\end{split}
\end{equation*}
We shall next prove that $\eta_{S_xR} $ always is a separable state.

Let us recall the characterization of bipartite separable states as those that are infinitely symmetrically extendible. (For a quick overview, see e.g. \cite{Brandao12}). A bipartite state $\eta$ on $SR$ is called symmetrically $M$-extendible, if there exists a state $\rho$ on $S^{\otimes M}R$ such that $\eta = \Tr_{S^{\otimes M}} \rho$ and that $\rho$ is invariant under all permutations of the $M$ subsystems $S$. All separable states have a trivial symmetric extension for every $M$. Thus, if a state fails to be symmetrically extendible for some $M$ it must be entangled \cite{Doherty04}. 
More generally it turns out that a state is separable if and only if it has symmetric $M$-extensions for all $M$, see Theorem 1 in \cite{Doherty04}. (This also follows from various versions of the quantum de-Finetti theorem \cite{Stromer69,Hudson76,Raggio89,Werner89,Caves02,Konig05,Christandl07}. See also \cite{Fannes88}.)

Sequential preparations provide a construction of a symmetric extension of $\eta_{S_xR}$ for every $M$. 
Consider a new preparation procedure where we prepare the same state on a sequence of copies $\tilde{S}_1,\ldots, \tilde{S}_{M}$ of $S_{x}$ with $\tilde{H}_{S_1} = \cdots =\tilde{H}_{S_M} = H_{S_{x}}$. In other words, let $\tilde{N} =  \tilde{N}_1 = \cdots = \tilde{N}_M:= N_x$, and $\tilde{z}_{n}  = \tilde{z}^{(1)}_{n} = \cdots = \tilde{z}^{(M)}_{n}:= z^{(x)}_{n}$, $\tilde{\rho} = \tilde{\rho}_{1} = \cdots  = \tilde{\rho}_{M} := \rho_{x}$, and $\tilde{U} = \tilde{U}_{1} = \cdots = \tilde{U}_{M} := U_{x}$, which yields the global state 
\begin{equation*}
\begin{split}
\Tr_E\tilde{\eta}  = & \sum_{j,j'} \sum_{n_M,m_M} \cdots \sum_{n_1,m_1}\sigma_{j,j'} \\
& \times\delta_{j' -\tilde{z}_{m_M} -\cdots -\tilde{z}_{m_1}, j-\tilde{z}_{n_M} -\cdots -\tilde{z}_{n_1}} \\
 &\times P_{n_M}\tilde{U}\tilde{\rho} \tilde{U}^{\dagger}P_{m_M}   \otimes\cdots\otimes  P_{n_1}\tilde{U}\tilde{\rho} \tilde{U}^{\dagger}P_{m_1} \\
 &\otimes \Delta^{-j}\xi {\Delta^{-j'}}^{\dagger}.
 \end{split}
\end{equation*}
As seen, this state is invariant under permutations of the subsystems $\tilde{S}_1,\ldots,\tilde{S}_M$. Furthermore, if we trace out all subsystems $\tilde{S}$ except one single, we obtain $\eta_{S_xR}$ by construction. Since this is true for all $M$, we can conclude that $\eta_{S_xR}$ is infinitely symmetrically extendible and thus separable.

As a side-remark one may note that this proof holds for any choice of $\xi$, and is not limited to $\xi$ that are diagonal with respect to $H_R$.


\section{\label{Harmonic} Half-infinite ladder (harmonic oscillator)}
As mentioned earlier, a problem with the doubly-infinite energy ladder is that it is somewhat unphysical, as it has no ground state energy. However, by `cutting away' the lower half of the spectrum we get the spectrum of the harmonic oscillator. Here we show that we can, with some modifications, reconstruct the catalytic properties of the doubly-infinite energy ladder in this harmonic oscillator model.

\subsection{The half-infinite ladder model}

Here we use an harmonic oscillator as model  (or any Hamiltonian that is iso-spectral to the harmonic oscillator) 
\begin{equation}
H_E^{+} := s\sum_{j=0}^{+\infty} j|j\rangle\langle j|.
\end{equation}
As before, we define an $N$-level system with a Hamiltonian $H_S\in H_s(\mathcal{H}_S)$, with eigenvalues $h_n = sz_n$. We also let
\begin{equation}
z_{\mathrm{min}}:= \min_{n=1,\ldots,N} z_n,\quad z_{\mathrm{max}}:= \max_{n=1,\ldots,N}z_n.
\end{equation}
The projectors onto the eigenspaces of $H_S + H_E^{+}$ can be written
\begin{equation}
\label{Pplusdef}
P^{(l)}_{+} := \sum_{n = 1,\ldots,N: l\geq z_{n}}|\psi_n\rangle\langle\psi_n|\otimes|l-z_n\rangle\langle l-z_n|,
\end{equation}
for all $l\geq z_{\mathrm{min}}$. (It is maybe slightly annoying that we let the index $l$ start at $z_{\mathrm{min}}$ rather than at zero, but we do this for the sake of compactness of formulas and coordination with the notation in the previous sections.) 
Note that 
\begin{equation}
P^{(l)}_{+} = \sum_{n = 1,\ldots,N}|\psi_n\rangle\langle\psi_n|\otimes|l-z_n\rangle\langle l-z_n|,\quad \forall l\geq z_{\textrm{max}}.
\end{equation}
In other words, $P^{(l)}_{+} = P^{(l)}$ for $l\geq z_{\mathrm{max}}$. 
(We will in this section often compare operators on the spaces $\mathcal{H}_E^{+}$ and $\mathcal{H}_E$, where the former can be regarded as a subspace of the latter. Without further comments we do in these cases assume that the operators on $\mathcal{H}_E^{+}$ are extended trivially to the orthogonal complement of $\mathcal{H}_E^{+}$ in $\mathcal{H}_E$, i.e., they act trivially on the negative half-ladder.) 

For every $U\in U(\mathcal{H}_S)$ we define the unitary operator
\begin{equation}
\label{tzuitzuit}
\begin{split}
V_{+}(U)  :=  & \sum_{l\geq z_{\mathrm{max}}}V_{l}(U) + \sum_{l=z_{\mathrm{min}}}^{z_{\mathrm{max}}-1}X_l,\\
&  X_{l}X_{l}^{\dagger} = X_{l}^{\dagger}X_{l} = P_{+}^{(l)},
\end{split}
\end{equation}
where 
\begin{equation}
\label{svrtbsb}
\begin{split}
V_{l}(U)  := &  \sum_{n,n'=1}^{N}|\psi_{n}\rangle\langle\psi_{n}|U|\psi_{n'}\rangle\langle\psi_{n'}|\otimes|l-z_n\rangle\langle l-z_{n'}|,\\
& l\geq z_{\mathrm{max}}
\end{split}
\end{equation}
is as in equation (\ref{nfkjnb}). The unitarity of $V_{+}(U)$ follows from the fact that $V_{l}(U)V_{l}(U)^{\dagger} = V_{l}(U)^{\dagger}V_{l}(U) = P_{+}^{(l)}$. We will not pay any particular attention to the choice of the operators $X_l$, as we will focus on states outside the supports of these operators. As seen, we have, by the very construction, made sure that $V_{+}(U)$ in (\ref{nfkjnb}) and $V(U)$ in (\ref{tzuitzuit}) act identically for states with support sufficiently far away from the ground state. The rest of this section is devoted to the formalization of  the simple idea that if we start the energy reservoir in a state that is sufficiently far from the ground state, we can maintain it so by pumping energy into the reservoir. By  this we retrieve all the relevant properties of the doubly-infinite ladder model.

Analogous to $\Delta$ we define the operator
\begin{equation}
\Delta_{+} := \sum_{j=0}^{+\infty}|j+1\rangle\langle j|.
\end{equation}
Note that $\Delta_{+}$ is not unitary, but rather is a partial isometry that takes $\mathcal{H}^{+}_{E}$ into the subspace $\Sp\{|j\rangle\}_{j\geq 1}$.

\subsection{\label{SufficientlyFar}Sufficiently far from the ground state the two models are equivalent}
Analogous to the channel $\Phi_{\sigma,U}$ in Section \ref{Sec:inducedch} we define
\begin{equation}
\label{PluschannelOnS}
\Phi^{ +}_{\sigma,U}(\rho): =   \Tr_E V_{+}(U)\rho\otimes\sigma V_{+}(U)^{\dagger}.
\end{equation}
Similarly, we denote the set of channels that can be reached on $S$, using $\sigma$ as a resource, by
\begin{equation}
\mathcal{C}^{S,H_S}_{+}(\sigma) := \Big\{\Phi^{+}_{\sigma,U}: U\in U(\mathcal{H}_S)\Big\}.
\end{equation}
The effect of $V_{+}(U)$ on the energy reservoir is expressed via the channel 
\begin{equation}
\label{PluschannelOnE}
\Lambda^{+}_{\rho,U}(\sigma) :=  \Tr_{S}V_{+}(U)\rho\otimes \sigma V_{+}(U)^{\dagger}.
\end{equation} 
Define the family of projectors 
\begin{equation}
P_{\geq m} := \sum_{j\geq m}|j\rangle\langle j|, \quad m\geq 0. 
\end{equation}

\begin{Lemma}
\label{equivalence}
Let $s>0$, $\mathcal{H}_S$ finite-dimensional, and $H_S\in H_s(\mathcal{H}_S)$ with maximal eigenvalue $sz_{\textrm{max}}$ and minimal eigenvalue $sz_{\textrm{min}}$.
Let $\sigma\in\mathcal{S}(\mathcal{H}_E)$ be such that $P_{\geq z_{\textrm{max}}-z_{\textrm{min}}}\sigma P_{\geq z_{\textrm{max}}-z_{\textrm{min}}} = \sigma$, then 
\begin{equation}
\Phi^{+}_{\sigma,U}  =  \Phi_{\sigma,U},\quad\quad \Lambda^{+}_{\rho,U}(\sigma)  =  \Lambda_{\rho,U}(\sigma),
\end{equation}
for all $U\in U(\mathcal{H}_S)$.
\end{Lemma}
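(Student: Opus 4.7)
The plan is to show the stronger operator identity
\[
V_+(U)(\rho\otimes\sigma)V_+(U)^\dagger \;=\; V(U)(\rho\otimes\sigma)V(U)^\dagger,
\]
from which both equalities in the lemma follow by taking $\Tr_E$ and $\Tr_S$ respectively. The strategy is to show that, under the support hypothesis on $\sigma$, the two unitaries act identically on the left of $\rho\otimes\sigma$ (and, by Hermitian conjugation, on the right). To that end I would split $V_+(U)$ and $V(U)$ into their $V_l(U)$-pieces and the ``border zone'' $l\in\{z_{\mathrm{min}},\dots,z_{\mathrm{max}}-1\}$ where they differ, and show each border piece kills $\rho\otimes\sigma$.

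First I would observe that for $l\le z_{\mathrm{max}}-1$, the projector $P_+^{(l)}$ of equation (\ref{Pplusdef}) has its reservoir support contained in $\Sp\{|k\rangle: 0\le k\le l-z_{\mathrm{min}}\}\subseteq \Sp\{|k\rangle: k\le z_{\mathrm{max}}-1-z_{\mathrm{min}}\}$. Since the support of $\sigma$ lies in $\Sp\{|j\rangle: j\ge z_{\mathrm{max}}-z_{\mathrm{min}}\}$, it is disjoint from this range, giving $P_+^{(l)}(\hat{1}_S\otimes\sigma)=0$ for every such $l$. Since $X_l=X_l P_+^{(l)}$ by equation (\ref{tzuitzuit}), all of the border-zone terms $X_l$ satisfy $X_l(\rho\otimes\sigma)=0$, so $V_+(U)(\rho\otimes\sigma)=\sum_{l\ge z_{\mathrm{max}}}V_l(U)(\rho\otimes\sigma)$.

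Next I would show that, under the same hypothesis, $V(U)(\rho\otimes\sigma)=\sum_{l\ge z_{\mathrm{max}}}V_l(U)(\rho\otimes\sigma)$ as well, which identifies the two sides. From equation (\ref{svrtbsb}),
\[
V_l(U)(\rho\otimes\sigma)=\sum_{n,n'}\langle\psi_n|U|\psi_{n'}\rangle\,|\psi_n\rangle\langle\psi_{n'}|\rho\otimes|l-z_n\rangle\langle l-z_{n'}|\sigma.
\]
The reservoir factor is nonzero only when $\langle l-z_{n'}|\sigma\neq 0$, which by the support assumption forces $l-z_{n'}\ge z_{\mathrm{max}}-z_{\mathrm{min}}$, i.e.\ $l\ge z_{\mathrm{max}}-z_{\mathrm{min}}+z_{n'}\ge z_{\mathrm{max}}$. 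Hence in $V(U)=\sum_{l\in\mathbb{Z}}V_l(U)$ only the terms with $l\ge z_{\mathrm{max}}$ survive when applied to $\rho\otimes\sigma$, matching $V_+(U)(\rho\otimes\sigma)$.

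By Hermitian conjugation the same reasoning gives $(\rho\otimes\sigma)V_+(U)^\dagger=(\rho\otimes\sigma)V(U)^\dagger$, so the full sandwich $V_+(U)(\rho\otimes\sigma)V_+(U)^\dagger = V(U)(\rho\otimes\sigma)V(U)^\dagger$ follows, and tracing out $E$ or $S$ gives the two claimed identities. The only subtle point is the indexing of $P_+^{(l)}$ and the off-by-one check that $l-z_{\mathrm{min}}\le z_{\mathrm{max}}-1-z_{\mathrm{min}}$ is strictly less than $z_{\mathrm{max}}-z_{\mathrm{min}}$; once this is in place, the argument is essentially bookkeeping, and no estimates or limits are needed.
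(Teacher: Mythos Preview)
Your proposal is correct and follows essentially the same route as the paper: both establish that $V_+(U)$ and $V(U)$ act identically when applied from the left to operators with reservoir support above $z_{\mathrm{max}}-z_{\mathrm{min}}$, by showing that the border pieces $X_l$ (for $V_+$) and the low-$l$ pieces $V_l$ (for $V$) annihilate such operators. The paper phrases this once as the projector identity $V_+(U)[\hat{1}_S\otimes P_{\geq z_{\mathrm{max}}-z_{\mathrm{min}}}]=V(U)[\hat{1}_S\otimes P_{\geq z_{\mathrm{max}}-z_{\mathrm{min}}}]$ and then inserts $\sigma=P_{\geq z_{\mathrm{max}}-z_{\mathrm{min}}}\sigma P_{\geq z_{\mathrm{max}}-z_{\mathrm{min}}}$, whereas you work directly with $\rho\otimes\sigma$; the content and the off-by-one bookkeeping are the same.
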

\begin{proof}
By comparing with the definition of $P_{+}^{(l)}$ in (\ref{Pplusdef}) one can see that 
\begin{equation}
\label{bhjvvdf}
P_{+}^{(l)}[\hat{1}_S\otimes P_{\geq z_{\textrm{max}}-z_{\textrm{min}}}] = 0,\quad z_{\textrm{min}}\leq l< z_{\textrm{max}}.
\end{equation}
Analogously 
\begin{equation}
P^{(l)}[\hat{1}_S\otimes P_{\geq z_{\textrm{max}}-z_{\textrm{min}}}] = 0,\quad  l< z_{\textrm{max}}.
\end{equation}
Note that $P_{+}^{(l)} = P^{(l)}$ and $V_{l}(U)P^{(l)} = V_{l}(U)$ for $l\geq z_{\textrm{max}}$,  and $X_lP_{+}^{(l)} = X_l$.  By comparing (\ref{bijection}), (\ref{nfkjnb}), and (\ref{tzuitzuit}), it follows that 
\begin{equation}
\label{bhjvdf}
V_{+}(U)[\hat{1}_S\otimes P_{\geq z_{\textrm{max}}-z_{\textrm{min}}}] = V(U)[\hat{1}_S\otimes P_{\geq z_{\textrm{max}}-z_{\textrm{min}}}].
\end{equation}
The statements of the lemma follows directly from this.
\end{proof}

\begin{Lemma}
\label{qweqwer}
Let $s>0$, $\mathcal{H}_S$ finite-dimensional, and $H_S\in H_s(\mathcal{H}_S)$ with maximal eigenvalue $sz_{\textrm{max}}$ and minimal eigenvalue $sz_{\textrm{min}}$. 
For $m\geq z_{\textrm{max}}-z_{\textrm{min}}$
\begin{equation}
 V_{+}(U)[\hat{1}_S\otimes P_{\geq m}]  = [\hat{1}_S\otimes P_{\geq m-z_{\textrm{max}}+z_{\textrm{min}}}]V_{+}(U)[\hat{1}_S\otimes P_{\geq m}].
\end{equation}
Hence, if $\sigma\in\mathcal{S}(\mathcal{H}_E)$ is such that $P_{\geq m}\sigma P_{\geq m} = \sigma$, then 
\begin{equation}
\begin{split}
& P_{\geq m-z_{\textrm{max}}+z_{\textrm{min}}}\Lambda^{+}_{\rho,U}(\sigma)P_{\geq m-z_{\textrm{max}}+z_{\textrm{min}}}  = \Lambda^{+}_{\rho,U}(\sigma), \\
 & \quad\quad\quad\quad\quad\quad  \forall \rho\in\mathcal{S}(\mathcal{H}_S),\quad \forall U\in U(\mathcal{H}_S).
\end{split}
\end{equation}
\end{Lemma}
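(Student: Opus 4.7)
The plan is to reduce everything to a direct bookkeeping argument about which basis states survive after applying $V_{+}(U)$ to a vector supported on levels $\geq m$ of the reservoir.

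First I would note that the hypothesis $m\geq z_{\mathrm{max}}-z_{\mathrm{min}}$ puts us in the regime where $V_{+}(U)$ and $V(U)$ agree. Precisely, the edge operators $X_l$ in (\ref{tzuitzuit}) satisfy $X_l = X_l P_{+}^{(l)}$ for $z_{\mathrm{min}}\leq l < z_{\mathrm{max}}$, and by (\ref{bhjvvdf}) each such $P_{+}^{(l)}$ annihilates $\hat{1}_S\otimes P_{\geq z_{\mathrm{max}}-z_{\mathrm{min}}}$, and hence also $\hat{1}_S\otimes P_{\geq m}$ since $P_{\geq m}\leq P_{\geq z_{\mathrm{max}}-z_{\mathrm{min}}}$. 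The same observation that underlies (\ref{bhjvdf}) then gives
\begin{equation}
V_{+}(U)[\hat{1}_S\otimes P_{\geq m}] = V(U)[\hat{1}_S\otimes P_{\geq m}],
\end{equation}
so it suffices to establish the range property for $V(U)$ in place of $V_{+}(U)$.

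Next I would expand $V(U) = \sum_{j\in\mathbb{Z}} V_j(U)$ using (\ref{nfkjnb}). For a fixed term $V_j(U)$, the right factor $|j-z_{n'}\rangle\langle j-z_{n'}|\cdots$ acting after $\hat{1}_S\otimes P_{\geq m}$ contributes only when $j-z_{n'}\geq m$, i.e. only when $j\geq m+z_{n'}\geq m+z_{\mathrm{min}}$. For such $j$, the surviving left factor is $|j-z_n\rangle$ with $j-z_n\geq m+z_{\mathrm{min}}-z_n\geq m - (z_{\mathrm{max}}-z_{\mathrm{min}})$. Consequently every surviving basis ket on the $E$-factor lies in the range of $P_{\geq m-z_{\mathrm{max}}+z_{\mathrm{min}}}$, which yields
\begin{equation}
V(U)[\hat{1}_S\otimes P_{\geq m}] = [\hat{1}_S\otimes P_{\geq m-z_{\mathrm{max}}+z_{\mathrm{min}}}]V(U)[\hat{1}_S\otimes P_{\geq m}],
\end{equation}
and combined with the previous step this proves the first identity.

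Finally, for the statement about $\Lambda^+_{\rho,U}(\sigma)$, I would use $\sigma = P_{\geq m}\sigma P_{\geq m}$ to sandwich $\rho\otimes\sigma$ between $\hat{1}_S\otimes P_{\geq m}$ on both sides inside the definition (\ref{PluschannelOnE}), apply the first identity (and its adjoint) to move the cutoff outward, and then use that $P_{\geq m-z_{\mathrm{max}}+z_{\mathrm{min}}}$ acts only on $E$ to commute it through $\Tr_S$. This gives $\Lambda^+_{\rho,U}(\sigma) = P_{\geq m-z_{\mathrm{max}}+z_{\mathrm{min}}}\Lambda^+_{\rho,U}(\sigma)P_{\geq m-z_{\mathrm{max}}+z_{\mathrm{min}}}$ as desired.

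There is no real obstacle beyond careful index-tracking; the only thing to watch is that the effective cutoff moves by $z_{\mathrm{max}}-z_{\mathrm{min}}$ (the maximum energy the system can absorb in one shot), not by $z_{\mathrm{max}}$ or $|z_{\mathrm{min}}|$ individually, which is why the hypothesis $m\geq z_{\mathrm{max}}-z_{\mathrm{min}}$ is exactly what is needed to stay in the regime where $V_{+}(U)$ behaves as $V(U)$.
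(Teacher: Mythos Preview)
Your proof is correct and follows essentially the same route as the paper: kill the boundary terms $X_l$ via (\ref{bhjvvdf}) so that only the $V_l(U)$ with $l\geq z_{\mathrm{max}}$ survive, then do the same index chase $j-z_{n'}\geq m\Rightarrow j-z_n\geq m-(z_{\mathrm{max}}-z_{\mathrm{min}})$ to bound the output level. The only cosmetic difference is that you phrase the first step as passing to $V(U)$ via (\ref{bhjvdf}), whereas the paper works directly with the surviving $V_l(U)$ terms; the bookkeeping is identical.
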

\begin{proof}
Since $m\geq z_{\textrm{max}}-z_{\textrm{min}}$ it follows by equation (\ref{bhjvvdf}) that 
\begin{equation}
\begin{split}
& V_{+}(U)[\hat{1}_S\otimes P_{\geq m}]  \\
& =  \sum_{l\geq z_{\textrm{max}}}\sum_{n,n'=1}^{N}|\psi_{n}\rangle\langle\psi_{n}|Q|\psi_{n'}\rangle\langle\psi_{n'}|\otimes|l-z_n\rangle\langle l-z_{n'}|P_{\geq m}.
\end{split}
\end{equation}
Let $|x\rangle$ be an energy eigenstate of the reservoir. For $\langle x|V_{+}(U)[\hat{1}_S\otimes P_{\geq m}]\neq 0$, a necessary condition is that $x = l-z_{n}$ and $l-z_{n'}\geq m$ for some $l,n,n'$. We can thus write
$x + z_{\textrm{max}} \geq x + z_n = l \geq m + z_{n'} \geq m + z_{\textrm{min}}$. Hence,  $\langle x|V_{+}(U)[\hat{1}_S\otimes P_{\geq m}] = 0$ for all  $x < m-z_{\textrm{max}}  + z_{\textrm{min}}$.
It follows that $ [\hat{1}_S\otimes (\hat{1}_E-P_{\geq m-z_{\textrm{max}}+z_{\textrm{min}}})]V_{+}(U)[\hat{1}_S\otimes P_{\geq m}] =0$. This proves the lemma.
\end{proof}

The following Proposition tells us that a reservoir with a state that is $L$ times removed from the ground state with respect to the maximal possible energy change $D$, can be used $L$ times before there is any noticeable difference in the set of channels that can be reached by using the reservoir. We refer to this as a quasi-catalytic property.
Here $\lambda_{\textrm{max}}(H)$ and $\lambda_{\textrm{min}}(H)$ denote the maximal and minimal eigenvalues of the Hermitian operator $H$.
\begin{Proposition}[Quasi-Catalytic states]
\label{QuasiCatalytic}
Let $s>0$, $D\in\mathbb{N}$, and let $\mathcal{H}_{S_1},\ldots,\mathcal{H}_{S_L}$ be finite-dimensional, with $H_{S_l}\in H_s(\mathcal{H}_{S_l})$, such that $\lambda_{\textrm{max}}(H_{S_l})-\lambda_{\textrm{min}}(H_{S_l})\leq sD$ for $l=1,\ldots, L$.
Suppose $\sigma^{(1)}\in\mathcal{S}(\mathcal{H}_E)$ is such that
\begin{equation}
P_{\geq LD}\sigma^{(1)}P_{\geq LD} = \sigma^{(1)}.
\end{equation}
Define the sequence $\sigma^{(l+1)} =\Lambda^{+}_{\rho_l,U_l}(\sigma^{(l)})$, for $l=1,\ldots, L-1$, for $\rho_l\in\mathcal{S}(\mathcal{H}_{S_l})$ and $U_l\in U(\mathcal{H}_{S_l})$.
Then 
\begin{equation}
\mathcal{C}^{S_l,H_{S_l}}_{+}(\sigma^{(l)}) = \mathcal{C}^{S_l,H_{S_l}}_{+}(\sigma^{(1)}),\quad l = 1,\ldots,L.
\end{equation}
\end{Proposition}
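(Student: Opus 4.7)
The plan is to combine two observations about the location of the reservoir state relative to the ground state: each application of $\Lambda^{+}_{\rho_l,U_l}$ can lower the support of $\sigma^{(l)}$ by at most $D$ rungs (Lemma \ref{qweqwer}), and whenever the support sits at least $z^{(l)}_{\textrm{max}}-z^{(l)}_{\textrm{min}}$ rungs above the ground state, the half-infinite dynamics agrees with the doubly-infinite dynamics (Lemma \ref{equivalence}). Since $\sigma^{(1)}$ is supported above rung $LD$ and each Hamiltonian satisfies $z^{(l)}_{\textrm{max}}-z^{(l)}_{\textrm{min}}\leq D$ by hypothesis, after $l-1$ applications the state is still supported above rung $(L-l+1)D\geq D$, so every step can be pulled back into the doubly-infinite ladder where catalytic coherence is already established in Proposition \ref{WeakCatalytic}.

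I would run an induction on $l$ that simultaneously carries two statements: (a) $P_{\geq (L-l+1)D}\,\sigma^{(l)}\,P_{\geq (L-l+1)D}=\sigma^{(l)}$, and (b) $\Tr(\Delta^{a}\sigma^{(l)})=\Tr(\Delta^{a}\sigma^{(1)})$ for every $a\in\mathbb{Z}$. The base case $l=1$ is immediate from the hypothesis on $\sigma^{(1)}$. For the inductive step, the support bound in (a) together with $(L-l+1)D\geq D\geq z^{(l)}_{\textrm{max}}-z^{(l)}_{\textrm{min}}$ places us in the regime of Lemma \ref{equivalence}, so $\Lambda^{+}_{\rho_l,U_l}(\sigma^{(l)})=\Lambda_{\rho_l,U_l}(\sigma^{(l)})$; applying Lemma \ref{Invariance} to the right-hand side and using the inductive hypothesis (b) yields (b) for $l+1$, while Lemma \ref{qweqwer} propagates the support bound from rung $(L-l+1)D$ down to at worst $(L-l+1)D-(z^{(l)}_{\textrm{max}}-z^{(l)}_{\textrm{min}})\geq (L-l)D$, which is exactly (a) for $l+1$.

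With (a) and (b) in hand for every $l$, the conclusion follows quickly. For each $l$, Lemma \ref{equivalence} gives both $\Phi^{+}_{\sigma^{(l)},U}=\Phi_{\sigma^{(l)},U}$ and $\Phi^{+}_{\sigma^{(1)},U}=\Phi_{\sigma^{(1)},U}$ (the latter because $LD\geq (L-l+1)D$ dominates $z^{(l)}_{\textrm{max}}-z^{(l)}_{\textrm{min}}$ as well). By (b), together with the observation from equation (\ref{channelOnS}) that $\Phi^{S_l,H_{S_l}}_{\sigma,U}$ depends on $\sigma$ only through the numbers $\Tr(\Delta^{a}\sigma)$, we obtain $\Phi_{\sigma^{(l)},U}=\Phi_{\sigma^{(1)},U}$ for every $U\in U(\mathcal{H}_{S_l})$, hence $\Phi^{+}_{\sigma^{(l)},U}=\Phi^{+}_{\sigma^{(1)},U}$, and therefore the equality of reachable channel sets $\mathcal{C}^{S_l,H_{S_l}}_{+}(\sigma^{(l)})=\mathcal{C}^{S_l,H_{S_l}}_{+}(\sigma^{(1)})$.

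The only real obstacle is bookkeeping: one must verify at every step that the slack $(L-l+1)D$ is still large enough to invoke Lemma \ref{equivalence} for the current Hamiltonian $H_{S_l}$, which is where the uniform bound $\lambda_{\textrm{max}}(H_{S_l})-\lambda_{\textrm{min}}(H_{S_l})\leq sD$ is used. Beyond this accounting there is no analytic difficulty; the proposition is essentially a packaging of Lemmas \ref{equivalence}, \ref{qweqwer}, and \ref{Invariance}, routed through the catalytic invariance of the doubly-infinite model.
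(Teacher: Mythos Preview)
Your proof is correct and follows essentially the same approach as the paper's: iterate Lemma~\ref{qweqwer} to control the support of $\sigma^{(l)}$, use Lemma~\ref{equivalence} to identify the half-infinite and doubly-infinite channels, and then invoke the catalytic invariance of the doubly-infinite model. The only cosmetic difference is that the paper cites Proposition~\ref{WeakCatalytic} directly for the step $\mathcal{C}^{S_l,H_{S_l}}\boldsymbol{(}\Lambda_{\rho_{l-1},U_{l-1}}(\sigma^{(l-1)})\boldsymbol{)}=\mathcal{C}^{S_l,H_{S_l}}(\sigma^{(l-1)})$ and telescopes, whereas you inline that proposition by carrying the invariance of $\Tr(\Delta^{a}\sigma^{(l)})$ (Lemma~\ref{Invariance}) through the induction and appealing to the explicit form of $\Phi_{\sigma,U}$ in equation~(\ref{channelOnS}); these are logically equivalent packagings of the same argument.
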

\begin{proof}
By a repeated use of Lemma \ref{qweqwer} it follows that $P_{\geq (L-l+1)D}\sigma^{(l)} P_{\geq (L-l+1)D} = \sigma^{(l)}$. Thus, by Lemma \ref{equivalence} we know that $\Lambda^{+}_{\rho_l,U_l}(\sigma^{(l)}) = \Lambda_{\rho_l,U_l}(\sigma^{(l)})$. By Lemma \ref{equivalence} we can also conclude that $\mathcal{C}_{+}^{S_l,H_{S_l}}(\sigma^{(l)}) = \mathcal{C}^{S_l,H_{S_l}}(\sigma^{(l)})$. Furthermore, by Proposition \ref{WeakCatalytic} it follows that $\mathcal{C}^{S_l,H_{S_l}}\boldsymbol{(}\Lambda_{\rho_{l-1},U_{l-1}}(\sigma^{(l-1)})\boldsymbol{)}
=   \mathcal{C}^{S_l,H_{S_l}}(\sigma^{(l-1)})$. By combining the above observations, we see that 
\begin{equation}
\begin{split}
\mathcal{C}_{+}^{S_l,H_{S_l}}(\sigma^{(l)})  = &  \mathcal{C}_{+}^{S_l,H_{S_l}}\boldsymbol{(}\Lambda^{+}_{\rho_{l-1},U_{l-1}}(\sigma^{(l-1)})\boldsymbol{)}\\
 = &  \mathcal{C}^{S_l,H_{S_l}}\boldsymbol{(}\Lambda^{+}_{\rho_{l-1},U_{l-1}}(\sigma^{(l-1)})\boldsymbol{)}\\
 = &  \mathcal{C}^{S_l,H_{S_l}}\boldsymbol{(}\Lambda_{\rho_{l-1},U_{l-1}}(\sigma^{(l-1)})\boldsymbol{)}\\
 = &  \mathcal{C}^{S_l,H_{S_l}}(\sigma^{(l-1)})\\
 = &  \mathcal{C}_{+}^{S_l,H_{S_l}}(\sigma^{(l-1)}).
 \end{split}
\end{equation}
The above reasoning can now be iterated, which yields the statement of the proposition.
\end{proof}

\subsection{\label{Regenerative}Regenerative cycles}
Here we construct a protocol to maintain the set of reachable channels $\mathcal{C}_{+}$. Basically, what we do is to keep the state of the energy reservoir sufficiently far away from the ground state by pumping energy into it.

The following lemma tells us that we can implement the rigid translation $\sigma \mapsto \Delta_{+}^{m}\sigma {\Delta_{+}^{m}}^{\dagger}$ within our model, if we have access to an ancillary system in a pure excited state.

By the observation that $P^{(l)}_{+}|a_{\textrm{max}}\rangle|j\rangle =  0$ for $z_{\textrm{max}}>l\geq z_{\textrm{min}}$ and all $j\geq 0$ we can directly use (\ref{tzuitzuit}) and (\ref{svrtbsb}) to prove the following Lemma.
\begin{Lemma}
\label{RigidTranslation}
Let $H_A\in H_s(\mathcal{H}_A)$ have the largest eigenvalue $sz_{\mathrm{max}}$ with corresponding eigenstate $|a_{\mathrm{max}}\rangle$, and smallest eigenvalue $sz_{\mathrm{min}}$ with corresponding eigenstate $|a_{\mathrm{min}}\rangle$. Assume $z_{\mathrm{max}}\neq z_{\mathrm{min}}$. Let $U_{A}\in U(\mathcal{H}_A)$ be such that 
\begin{equation}
U_{A}|a_{\mathrm{max}}\rangle = |a_{\mathrm{min}}\rangle.
\end{equation}
Then
\begin{equation}
\Lambda^{+}_{|a_{\mathrm{max}}\rangle\langle a_{\mathrm{max}}|,U_A}(\sigma) = \Delta_{+}^{z_{\mathrm{max}}-z_{\mathrm{min}}}\sigma {\Delta_{+}^{z_{\mathrm{max}}-z_{\mathrm{min}}}}^{\dagger},
\end{equation}
for all $\sigma\in\mathcal{S}(\mathcal{H}_E^{+})$
where $\Lambda^{+}$ is as defined in equation (\ref{PluschannelOnE}).
\end{Lemma}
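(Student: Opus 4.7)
The plan is to track what $V_{+}(U_A)$ does to basis vectors of the form $|a_{\mathrm{max}}\rangle|j\rangle$ with $j\geq 0$, and then observe that the output is a pure product $|a_{\mathrm{min}}\rangle\otimes \Delta_{+}^{z_{\mathrm{max}}-z_{\mathrm{min}}}|j\rangle$, at which point the partial trace over $A$ becomes trivial.

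First I would check which eigenspace projector $P_{+}^{(l)}$ contains $|a_{\mathrm{max}}\rangle|j\rangle$. Since $H_A|a_{\mathrm{max}}\rangle = sz_{\mathrm{max}}|a_{\mathrm{max}}\rangle$ and $H_E^{+}|j\rangle = sj|j\rangle$, the joint energy is $s(z_{\mathrm{max}}+j)$, so this vector lies in the range of $P_{+}^{(l)}$ for $l = z_{\mathrm{max}}+j$. Because $j\geq 0$, we have $l\geq z_{\mathrm{max}}$ for every such basis vector, and consequently the "border zone" terms $X_l$ in the definition (\ref{tzuitzuit}) annihilate $|a_{\mathrm{max}}\rangle|j\rangle$. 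Thus only the sum $\sum_{l\geq z_{\mathrm{max}}}V_l(U_A)$ contributes, and in fact only the single term with $l=z_{\mathrm{max}}+j$ survives.

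Next I would plug into the explicit formula (\ref{svrtbsb}) for $V_l(U_A)$. The inner product $\langle\psi_{n'}|a_{\mathrm{max}}\rangle$ is nonzero only when $|\psi_{n'}\rangle = |a_{\mathrm{max}}\rangle$, fixing $z_{n'}=z_{\mathrm{max}}$, which is consistent with $\langle l-z_{n'}|j\rangle = \delta_{l-z_{n'},j}$. Then $\langle\psi_n|U_A|a_{\mathrm{max}}\rangle = \langle\psi_n|a_{\mathrm{min}}\rangle$ forces $|\psi_n\rangle = |a_{\mathrm{min}}\rangle$, so $z_n = z_{\mathrm{min}}$ and the reservoir ket becomes $|l-z_{\mathrm{min}}\rangle = |j+z_{\mathrm{max}}-z_{\mathrm{min}}\rangle = \Delta_{+}^{z_{\mathrm{max}}-z_{\mathrm{min}}}|j\rangle$. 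Hence
\begin{equation*}
V_{+}(U_A)\bigl(|a_{\mathrm{max}}\rangle\otimes|j\rangle\bigr) = |a_{\mathrm{min}}\rangle\otimes \Delta_{+}^{z_{\mathrm{max}}-z_{\mathrm{min}}}|j\rangle
\end{equation*}
for all $j\geq 0$.

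By linearity this extends to $V_{+}(U_A)(|a_{\mathrm{max}}\rangle\otimes|\phi\rangle) = |a_{\mathrm{min}}\rangle\otimes \Delta_{+}^{z_{\mathrm{max}}-z_{\mathrm{min}}}|\phi\rangle$ for every $|\phi\rangle\in\mathcal{H}_E^{+}$. Consequently
\begin{equation*}
V_{+}(U_A)\bigl(|a_{\mathrm{max}}\rangle\langle a_{\mathrm{max}}|\otimes\sigma\bigr)V_{+}(U_A)^{\dagger} = |a_{\mathrm{min}}\rangle\langle a_{\mathrm{min}}|\otimes \Delta_{+}^{z_{\mathrm{max}}-z_{\mathrm{min}}}\sigma {\Delta_{+}^{z_{\mathrm{max}}-z_{\mathrm{min}}}}^{\dagger},
\end{equation*}
and taking $\Tr_S$ (which here is $\Tr_A$) gives the claimed identity immediately. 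There is no real obstacle here; the only point requiring care is verifying that $|a_{\mathrm{max}}\rangle|j\rangle$ never touches the ambiguous border operators $X_l$, which is precisely why the condition $j\geq 0$ (equivalently, the fact that $|a_{\mathrm{max}}\rangle$ is the top rather than bottom rung of $H_A$) is essential.
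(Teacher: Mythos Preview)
Your proof is correct and follows essentially the same approach as the paper. The paper's argument is a one-line sketch noting that $P_{+}^{(l)}|a_{\mathrm{max}}\rangle|j\rangle = 0$ for $z_{\mathrm{min}}\leq l < z_{\mathrm{max}}$ and all $j\geq 0$, and then appealing directly to (\ref{tzuitzuit}) and (\ref{svrtbsb}); you have simply written out that computation in full detail.
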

One may note that in this implementation there is no need to avoid any `border zone'.

In essence, the following proposition tells us that if the state of the energy reservoir has a support that is sufficiently far from the ground state, then we can maintain its power to induce channels on $S$ (as characterized by the set $\mathcal{C}^{S,H_S}_{+}$) merely by feeding energy into it between each use.
\begin{Proposition}[Protocol for a catalytic cycle]
\label{ProtocolCatalyticCycle}
Let $s>0$, $D\in\mathbb{N}$, $\mathcal{H}_{S_1}, \mathcal{H}_{S_2}$ be finite-dimensional, and $H_{S_1}\in H_s(\mathcal{H}_{S_1})$, $H_{S_2}\in H_s(\mathcal{H}_{S_2})$, where $\lambda_{\mathrm{max}}(H_{S_1})-\lambda_{\mathrm{min}}(H_{S_1}) \leq sD$ and $\lambda_{\mathrm{max}}(H_{S_2})-\lambda_{\mathrm{min}}(H_{S_2}) \leq sD$, with $D\in \mathbb{N}$.  
 Let $\sigma_{\mathrm{in}}\in\mathcal{S}(\mathcal{H}_E^{+})$ be such that 
\begin{equation}
\label{ynvjknnkj2}
P_{\geq 2D}\sigma_{\mathrm{in}} P_{\geq 2D} = \sigma_{\mathrm{in}}. 
\end{equation}
For an arbitrary $\rho_1\in\mathcal{S}(\mathcal{H}_{S_1})$ and an arbitrary $U_1\in U(\mathcal{H}_{S_1})$, let
\begin{equation}
\overline{\sigma} := \Lambda^{+}_{\rho_1,U_1}(\sigma_{\mathrm{in}}), 
\end{equation}
where the channel $\Lambda^{+}_{\rho_1,U_1}$ is as defined in equation (\ref{PluschannelOnE}).
Let $\mathcal{H}_A$ be two-dimensional, with $H_A\in H_s(\mathcal{H}_A)$ such that $H_A$ has the largest eigenvalue $sD$ with corresponding eigenvector $|a_1\rangle$, and lowest eigenvalue $0$ with corresponding eigenvector $|a_0\rangle$.
Let $U_A = |a_1\rangle\langle a_0| + |a_0\rangle\langle a_1|$, and define
\begin{equation}
\sigma_{\mathrm{out}} := \Lambda^{+}_{|a_1\rangle\langle a_1|,U_A}(\overline{\sigma}),
\end{equation}
where $\Lambda^{+}_{|a_{1}\rangle\langle a_{1}|,U_A}$ is as defined in equation (\ref{PluschannelOnE}).
Then
\begin{equation}
\begin{split}
& \mathcal{C}^{S_2,H_{S_2}}_{+}(\sigma_{\mathrm{out}}) = \mathcal{C}^{S_2,H_{S_2}}(\sigma_{\mathrm{out}}) \\
&=  \mathcal{C}^{S_2,H_{S_2}}(\sigma_{\mathrm{in}})  = \mathcal{C}^{S_2,H_{S_2}}_{+}(\sigma_{\mathrm{in}})
\end{split}
\end{equation}
and
\begin{equation}
\label{zuoijtozkn2}
P_{\geq 2D}\sigma_{\mathrm{out}} P_{\geq 2D} = \sigma_{\mathrm{out}}. 
\end{equation}
\end{Proposition}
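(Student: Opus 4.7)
The plan is to reduce everything to the already-established doubly-infinite case by verifying that all relevant states of the reservoir stay sufficiently far above the ground state, so that the half-infinite dynamics $V_+$ agrees with the unrestricted $V$. Two things must be shown: (a) the support condition $P_{\geq 2D}\sigma_{\mathrm{out}}P_{\geq 2D}=\sigma_{\mathrm{out}}$ is preserved across a full cycle, and (b) the four sets $\mathcal{C}^{S_2,H_{S_2}}_{\pm}(\sigma_{\mathrm{in}})$ and $\mathcal{C}^{S_2,H_{S_2}}_{\pm}(\sigma_{\mathrm{out}})$ all coincide.

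For (a), I would track the support through each of the two interactions. Since $\mathrm{spread}(H_{S_1})\leq sD$, Lemma \ref{qweqwer} applied with $m=2D$ gives $P_{\geq D}\,\overline{\sigma}\,P_{\geq D}=\overline{\sigma}$. Next, the pumping step satisfies exactly the hypotheses of Lemma \ref{RigidTranslation}: the ancilla Hamiltonian has $z_{\mathrm{max}}-z_{\mathrm{min}}=D$, its input $|a_1\rangle\langle a_1|$ is the maximal eigenstate, and $U_A$ maps $|a_1\rangle\mapsto|a_0\rangle$. Thus $\sigma_{\mathrm{out}}=\Delta_+^{D}\,\overline{\sigma}\,(\Delta_+^{D})^{\dagger}$, which shifts the support up by exactly $D$, so $\sigma_{\mathrm{out}}$ is supported on $P_{\geq 2D}$, establishing (\ref{zuoijtozkn2}).

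For (b), the outer equalities $\mathcal{C}^{S_2,H_{S_2}}_{+}(\sigma_{\mathrm{in}})=\mathcal{C}^{S_2,H_{S_2}}(\sigma_{\mathrm{in}})$ and $\mathcal{C}^{S_2,H_{S_2}}_{+}(\sigma_{\mathrm{out}})=\mathcal{C}^{S_2,H_{S_2}}(\sigma_{\mathrm{out}})$ follow immediately from Lemma \ref{equivalence}, since $\mathrm{spread}(H_{S_2})\leq sD$ and both states are supported on $P_{\geq 2D}\subseteq P_{\geq D}$. The remaining middle equality $\mathcal{C}^{S_2,H_{S_2}}(\sigma_{\mathrm{out}})=\mathcal{C}^{S_2,H_{S_2}}(\sigma_{\mathrm{in}})$ is handled by reinterpreting both steps in the doubly-infinite model. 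By Lemma \ref{equivalence}, $\overline{\sigma}=\Lambda^{+}_{\rho_1,U_1}(\sigma_{\mathrm{in}})=\Lambda_{\rho_1,U_1}(\sigma_{\mathrm{in}})$, and similarly, because $\overline{\sigma}$ is supported on $P_{\geq D}$ and the ancilla has spread $sD$, $\sigma_{\mathrm{out}}=\Lambda^{+}_{|a_1\rangle\langle a_1|,U_A}(\overline{\sigma})=\Lambda_{|a_1\rangle\langle a_1|,U_A}(\overline{\sigma})$. Applying Proposition \ref{WeakCatalytic} to each step yields $\mathcal{C}(\sigma_{\mathrm{out}})=\mathcal{C}(\overline{\sigma})=\mathcal{C}(\sigma_{\mathrm{in}})$, closing the chain.

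The work is essentially bookkeeping: no new invariance or construction is needed beyond what is already in Lemmas \ref{Invariance}, \ref{equivalence}, \ref{qweqwer}, \ref{RigidTranslation}, and Proposition \ref{WeakCatalytic}. The only potentially subtle point — which I expect to be the ``main obstacle'' only in the sense of organization — is ensuring the support condition survives the first interaction with a \emph{margin} of $D$ (not less), so that after the pumping step we recover the \emph{same} $2D$ gap we started with; this is precisely why the proposition requires a $2D$ (rather than just $D$) initial clearance above the ground state.
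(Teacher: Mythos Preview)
Your proposal is correct and follows essentially the same route as the paper's own proof: Lemma \ref{qweqwer} to push the support of $\overline{\sigma}$ down to $P_{\geq D}$, Lemma \ref{RigidTranslation} to lift it back to $P_{\geq 2D}$, Lemma \ref{equivalence} to identify the $+$ and non-$+$ channel sets and $\Lambda$-maps, and Proposition \ref{WeakCatalytic} applied once for each interaction to close the chain $\mathcal{C}(\sigma_{\mathrm{out}})=\mathcal{C}(\overline{\sigma})=\mathcal{C}(\sigma_{\mathrm{in}})$. Your identification of the $2D$ clearance as the organizational crux is exactly right.
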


\begin{proof}
Combining (\ref{ynvjknnkj2}) and Lemma \ref{qweqwer} yields  
\begin{equation}
\label{buitbnitubn}
P_{\geq D}\overline{\sigma} P_{\geq D} = \overline{\sigma}.
\end{equation}
By further combining this with Lemma \ref{RigidTranslation} gives (\ref{zuoijtozkn2}). Due to the latter we can, by Lemma \ref{equivalence}, conclude that  
\begin{equation}
\mathcal{C}_{+}^{S_2,H_{S_2}}(\sigma_{\textrm{out}}) = \mathcal{C}^{S_2,H_{S_2}}(\sigma_{\textrm{out}}).
\end{equation}
By (\ref{buitbnitubn}) and Lemma \ref{qweqwer} it follows that $\sigma_{\textrm{out}} = \Lambda^{+}_{|a_1\rangle\langle a_1|,U_A}(\overline{\sigma}) =  \Lambda_{|a_1\rangle\langle a_1|,U_A}(\overline{\sigma})$.
Hence Proposition \ref{WeakCatalytic} tells us that
\begin{equation}
 \mathcal{C}^{S_2,H_{S_2}}\boldsymbol{(}\Lambda_{|a_1\rangle\langle a_1|,U_A}(\overline{\sigma})\boldsymbol{)}
 =   \mathcal{C}^{S_2,H_{S_2}}(\overline{\sigma})
\end{equation}
Due to (\ref{ynvjknnkj2}) and Lemma \ref{qweqwer} it follows that $\Lambda^{+}_{\rho_1,U_1}(\sigma_{\mathrm{in}}) = \Lambda_{\rho_1,U_1}(\sigma_{\mathrm{in}})$. By Proposition \ref{WeakCatalytic} 
\begin{equation}
      \mathcal{C}^{S_2,H_{S_2}}\boldsymbol{(}\Lambda_{\rho_1,U_1}(\sigma_{\mathrm{in}})\boldsymbol{)}
   =    \mathcal{C}^{S_2,H_{S_2}}(\sigma_{\mathrm{in}}).
\end{equation}
Furthermore, due to (\ref{ynvjknnkj2}) it follows by Lemma \ref{equivalence} that 
\begin{equation}
 \mathcal{C}^{S_2,H_{S_2}}(\sigma_{\mathrm{in}})  = \mathcal{C}_{+}^{S_2,H_{S_2}}(\sigma_{\mathrm{in}}).
\end{equation}
By combining the above chain of equalities, we obtain the proposition.
\end{proof}


\section{\label{MoreGeneral}More general models}
In the previous sections we demonstrated that there in principle exist systems where coherence can be made catalytic. Here we briefly touch upon the question to what extent these phenomena survive in more general  systems.

\subsection{General energy preserving unitary operations}
Here we consider Hamiltonians on the half-infinite ladder model, but we relax the restriction on the dynamics, and only demand that it preserves the total energy.

One can realize that a unitary operator $V$ on the system and half-infinite ladder is block-diagonal with respect to the energy eigenspaces, i.e., $V = \sum_{l\geq z_{\textrm{min}}}P_{+}^{(l)}VP_{+}^{(l)}$, if and only if 
\begin{equation}
\label{jhvjvhj2}
\begin{split}
V  = & \sum_{l\geq z_{\mathrm{max}}} \sum_{n,n'=1}^{N} V_{n,n'}^{(l)}|\psi_{n}\rangle\langle\psi_{n'}|\otimes|l-z_n\rangle\langle l-z_{n'}| \\
 & + \sum_{l=z_{\mathrm{min}}}^{z_{\mathrm{max}}-1}X_l,\quad\quad  X_{l}X_{l}^{\dagger} = X_{l}^{\dagger}X_{l} = P_{+}^{(l)},
 \end{split}
\end{equation} 
where each $V^{(j)} = [V_{n,n'}^{(j)}]_{n,n'=1}^{N}$ is a unitary matrix.

Recall that the harmonic oscillator model in Section \ref{Harmonic} has the feature that it, sufficiently far from the ground state, essentially implements copies of a unitary interaction. We thus regain the model in Section \ref{Harmonic} if we  for all $l$ choose $V^{(l)} = V$ for a fixed unitary matrix $V$. This `uniformity' over the energy ladder appears to be closely connected to the catalytic properties of these models. It thus seems reasonable to suspect that in models that have some type of approximate uniformity, one should be able to find some approximate version of the type of behaviors that we have seen in the previous sections.

As an example, suppose that the set of unitary matrices $V^{(l_0)},\ldots, V^{(l_0+L)}$ would become more `similar' to each other as $l_0$ increases (for a fixed $L$). One way in which this could happen is if $V^{(l)}$ would converge to a specific limit matrix as $l\rightarrow + \infty$. However, this is not the only possibility. For example, let $H$ be an $N\times N$ Hermitian matrix, and define $V^{(l)} := \exp[-if(l)H]$, where $f:\mathbb{N}\rightarrow\mathbb{N}$ is such that $f(l+L)-f(l)\rightarrow 0$ as $l\rightarrow +\infty$. [An example of this  is $f(x) = \sqrt{x}$,  for which $\sqrt{x+L}-\sqrt{x}\sim L/(2\sqrt{x})$.] In this case $(V^{(l_0)},\ldots, V^{(l_0+L)}) \sim_{l_0\rightarrow \infty} (V^{(l_0)},\ldots,V^{(l_0)})$. Hence, for a limited range of levels, the transformations become asymptotically uniform (in spite the fact that they do not converge to a limit operator). In Section \ref{JCmodel} we show that the Jaynes-Cummings model satisfies this asymptotic uniformity.  

As a final remark we note that in Section \ref{Harmonic} we could maintain the capacity of the coherence by injecting energy into the reservoir. Furthermore, we did this solely `within' the model, i.e., we used the same class of unitaries $V_+(U)$ for the pumping, as for the operations on system $S$. It is a relevant question for which classes of these more general unitary operations that this is possible (at least approximately). However, we leave this as an open question.
 
\subsection{\label{JCmodel}The Jaynes-Cummings model}
Here we analyze numerically the Jaynes-Cummings model \cite{JC63,Shore93} of a two-level system in resonant interaction with an electromagnetic field mode.  As we shall see, numerical tests suggest that the higher the energy of a (suitably chosen) initial state of the reservoir is, the longer its coherence survives sequential interactions with a collection of two-level systems.

We consider the Jaynes-Cummings model \cite{JC63,Shore93} for the interaction between a two-level system $H_S = h_1|\psi_0\rangle\psi_0| + h_2|\psi_1\rangle\psi_1|$ and an electromagnetic field mode $H_E = \hbar \omega a^{\dagger}a$, with the interaction Hamiltonian
\begin{equation}
\label{TheJCHamiltonian}
H_{\mathrm{JC}} =  g\sigma_{+}\otimes a + g\sigma_{-}\otimes a^{\dagger}, 
\end{equation}
where $a,a^{\dagger}$ are the standard annihilation and creation operators $[a,a^{\dagger}] = \hat{1}_E$, $a = \sum_{l=1}^{\infty}\sqrt{l}|l-1\rangle\langle l|$ and $\sigma_{+} = |\psi_1\rangle\langle \psi_0|$, $\sigma_{-} = |\psi_0\rangle\langle \psi_1|$.

We furthermore assume resonant conditions (i.e., the `matching' of the energy levels), so that $h_1-h_0 = \hbar\omega$, and choose $h_1 = \hbar\omega z_1$, $h_0 = \hbar\omega z_0$, with $z_0 = 0, z_1= 1$.
By standard textbook calculations, where we introduce the unit-free evolution parameter $x := tg/\hbar$ for the evolution time $t$, the evolution induced by $H_{\mathrm{JC}}$ can be written

\begin{equation*}
\begin{split}
e^{-ixH_{\mathrm{JC}}}  = &  \sum_{l=1}^{\infty}\sum_{n,n'=0}^{1}V^{(l)}_{n,n'}|\psi_{n}\rangle\langle \psi_{n'}|\otimes |l-n\rangle\langle l-n'|\\
 &  + |\psi_0\rangle\langle \psi_0|\otimes |0\rangle\langle 0|, \\
V^{(l)}  := & \left[\begin{matrix} \cos(x\sqrt{l}) &  -i  \sin(x\sqrt{l}) \\
 -i  \sin(x\sqrt{l}) & \cos(x\sqrt{l})
\end{matrix}\right]  
\end{split}
\end{equation*}
where we can write
\begin{equation}
V^{(l)} = e^{-ix\sqrt{l}H},\quad H := \left[\begin{matrix} 0 &  1 \\
 1 & 0
\end{matrix}\right].
\end{equation}
Hence, the dynamics in the JC-model provides an example of the type of asymptotically uniform unitary operators discussed in the previous section.
The evolution on the reservoir is given by
\begin{equation*}
\begin{split}
\Lambda_{x,|\psi_0\rangle\langle\psi_0|}(\sigma)  := & \Tr_S (e^{-ixH_{\mathrm{JC}}}|\psi_0\rangle\langle\psi_0|\otimes\sigma e^{ixH_{\mathrm{JC}}})  \\
 = &  A_x \sigma A_{x}^{\dagger} + B_x \sigma B_x^{\dagger}, \\
 A_x  := & \sum_{l=0}^{\infty}\cos(x\sqrt{l})|l\rangle\langle l|,\\
 B_x   := & \sum_{l=1}^{\infty}\sin(x\sqrt{l})|l-1\rangle\langle l|.
 \end{split}
\end{equation*}
The sequential preparation procedure results in a evolution process on the reservoir, which corresponds to an iteration of the  channel $\Lambda_{x,|\psi_0\rangle\langle\psi_0|}$. The nature of this dynamics is depicted in Fig.~\ref{fig:dynamics}.

Here we attempt to mimic the sequential preparation procedure that we investigated in Section \ref{AnExample}.
Again we thus use the (square of) the fidelity to measure how well  the superposition $|\phi\rangle := (|\psi_0\rangle - i|\psi_1\rangle)/\sqrt{2}$ can be prepared from the ground state $|\psi_0\rangle$, by using the state $\sigma$ on the reservoir as a resource. 
The preparation is characterized by the channel $\Phi_{x,\sigma}(\rho) := \Tr_{E}(e^{-ixH_{\textrm{JC}}}\rho\otimes\sigma e^{ixH_{\textrm{JC}}})$. For the initial state $|\psi_0\rangle\langle\psi_0|$, a given $x$, and $\sigma$, our measure of the `quality' of the preparation is
\begin{equation}
\label{nsdfkjgbnksdf}
\begin{split}
 Q_{x}(\sigma)  := &  \langle \phi|\Phi_{x,\sigma}(|\psi_0\rangle\langle\psi_{0}|)|\phi\rangle\\
 = &  \frac{1}{2} + \sum_{l=0}^{\infty}\sin(x\sqrt{l+1})\cos(x\sqrt{l}) \Real\langle l|\sigma |l+1\rangle.
 \end{split}
\end{equation}

 \begin{figure}[h]
 \centering
 \includegraphics[width= 8cm]{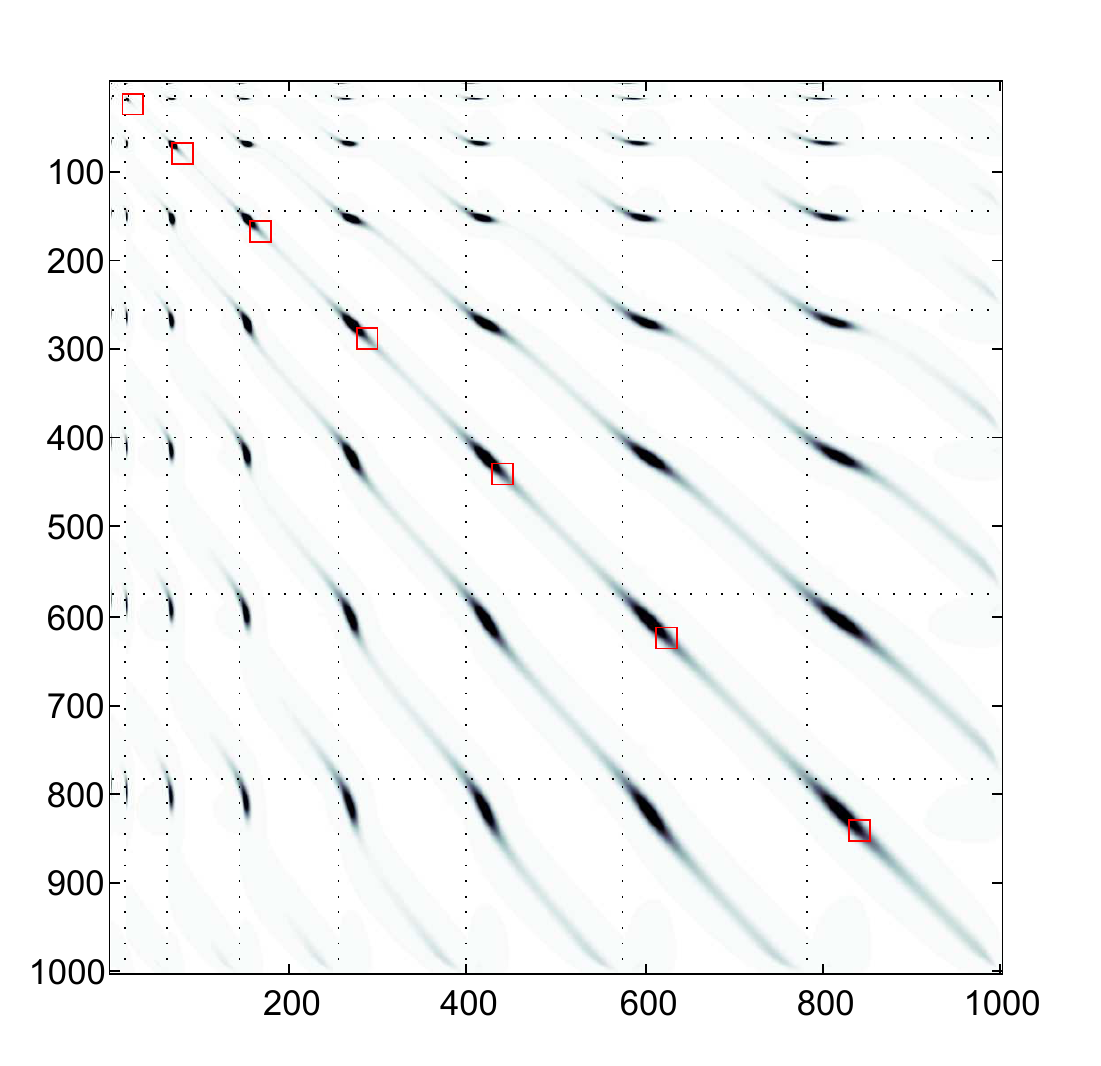} 
   \caption{\label{fig:dynamics} {\bf The state of the reservoir after iterated use.} 
    For an initial state $\sigma^{(0)}= |\eta\rangle\langle\eta|$, with a very wide superposition $|\eta\rangle = \sum_{l=0}^{999}|l\rangle/\sqrt{1000}$, the figure depicts $|\langle l|\sigma^{(100)}|l'\rangle|$ (black corresponds to high values, and white low) where $\sigma^{(100)}$ is the $100$th iteration of $\sigma^{(k+1)} = \Lambda_{\pi/4,|\psi_0\rangle\langle\psi_0|}(\sigma^{(k)})$, for $x = \pi/4$. The dotted lines corresponds to the values of $l$ (and $l'$) for which $B_{\pi/4}|l\rangle = 0$ ($B_{\pi/4}|l'\rangle =0$). Since $B_x$ induces loss of quanta from the reservoir, this suggests that weight will tend to accumulate at the crossings of the dotted lines. This figure shows an intermediate step in this process. The squares show the points $l_0^{\textrm{max}}(m)= (4m+1)^2$ around which $Q_{\pi/4}$ takes a high value (for relatively narrow distributions).
  On these points we center the family of initial states that lead to the curves in Fig.~3 in the main text. (As opposed to the initial state $\sigma^{(0)}$ that yields the present figure, those initial states only contains a relatively narrow distribution of number states.)
   }
\end{figure}

Analogously to what we did in Section \ref{AnExample} (where we optimized over all $U$) we should in principle  optimize $Q_{x}(\sigma)$ over all $x\geq 0$. However, the dynamics of the JC-model can unfortunately be rather irregular (see e.g. \cite{Eberly80,Shore93}), and due to this an optimization may not be feasible. Instead we here opt for a simpler (but possibly suboptimal) procedure, where we keep the value of $x$ fixed (corresponding to one fixed unitary). For the sake of simplicity we choose $x = \pi/4$. For an initial state $\sigma^{(0)}$ we construct the sequence $\sigma^{(k+1)} = \Lambda_{\pi/4,|\psi_0\rangle\langle\psi_0|}(\sigma^{(k)})$ and the corresponding fidelities $Q_{\pi/4}(\sigma^{(k)})$. For a suitably chosen $l_0$ the initial state is $\sigma^{(0)}:= |\eta_{L,l_0}\rangle\langle\eta_{L,l_0}|$. As we have seen, the evolution is asymptotically uniform. Hence, if $L$ is small compared to $l_0$, one can  make the approximation 
\begin{equation}
Q_{\pi/4}(\sigma^{(0)}) \approx \frac{1}{2} +  \frac{1}{2}(1-\frac{1}{L})\sin(\pi\sqrt{l_0}/2),
\end{equation}
 and thus obtain the (approximate) maxima $1-1/(2L)$ at  
\begin{equation}
l_0^{\textrm{max}}(m) := (4m + 1)^2,\quad m = 1,2,\ldots.
\end{equation}

\emph{Some further details concerning Fig.~3 in the main text.--}
Each graph in Fig.~3 in the main text is obtained from an initial state that is a uniform superposition over $50$ number states, centered around $l_0^{\textrm{max}}(m)$. Hence, $\sigma^{(0)}_{m} = |\eta_{L,l_0(m)}\rangle\langle\eta_{L,l_0(m)}|$, for $l_{0}(m) =  l_0^{\textrm{max}}(m) - 24 = (4m+1)^2-24$, i.e., $|\eta_{L,l_0(m)} \rangle = \sum_{j=-24}^{25}|l_0^{\textrm{max}}(m) + j\rangle$.
 Each graph corresponds to the value $F^{(m)}_k := Q_{\pi/4}(\sigma^{(k)}_{m})$ on the vertical axis versus the number of iterations $k$ on the horizontal axis. The different graphs correspond to the different values of $m$, where in the rightmost part of the figure, the curves correspond to $m = 5,7,9,11,14,18,24,31,40,52,67,87,113,147,191$ counted from below. 
These graphs suggest that the `lifetime' of the coherence can be made longer, merely by increasing the average energy of the state.

Since the graphs in Fig.~3 in the main text are the result of an approximate optimization, one should maybe not get too surprised by the fact that one see an initial increase of the fidelity. As seen, the largest increase happens for the initial states with a relatively low number of quanta, which is where we would expect the approximation to be the worst.

The dotted line in the figure corresponds to the fidelity we would have obtained for the doubly-infinite ladder-model, i.e., $1-1/(2L)$ (see Section \ref{AnExample}). Note that there is no particular reason to expect this to be a bound for what can be achieved in the JC model. However,  the gradual approach of the curves to this value suggests that the applied approximation is reasonable.

Concerning the numerical evaluation to obtain Fig.~3 in the main text, one can note that the quantity we wish to evaluate, equation (\ref{nsdfkjgbnksdf}), only depends on the first upper-diagonal $\langle l|\sigma|l+1\rangle$ of the density matrix. Furthermore, the evolution map $\Lambda_{\pi/4,|\psi_0\rangle\langle\psi_0|}$ transforms the elements $\langle l|\sigma|l+1\rangle$ among themselves, and thus  one never needs to store or calculate the entire density matrix.

\emph{Why not using coherent states?}
One may wonder why we here use $|\eta_{L,l_0}\rangle$ as initial states, rather than coherent states, $|\alpha\rangle := e^{-|\alpha|^2/2}\sum_{l=0}^{+\infty}\alpha^l |l\rangle/\sqrt{l!}$ \cite{Glauber63a,Glauber63b,MandelWolf}, which  would be the both convenient and conventional choice for this type of model. For the coherent states we could obtain a sequence of states with increasing average energy by increasing $|\alpha|$. However, as we increase $|\alpha|$ we also increase  the width of the superposition over the energy eigenstates. Hence, in some intuitive sense the degree of coherence increases as we increase $|\alpha|$, and for this reason it appears risky to use these states for the questions we investigate here. 
One should keep in mind though, that since we have not developed any operational measure of the degree of coherence (although Sec.~\ref{Sec:approxEnergymix} provides some steps in this direction), discussions of this kind are a bit shaky. (For the doubly-infinite and the half-infinite ladder models, we side-step the issue of measures of coherence by focusing on the set of induced channels $\mathcal{C}(\sigma)$.)

\subsection{\label{GJC}The half-infinite ladder as a generalized Jaynes-Cummings model}
Here we briefly return to the half-infinite ladder model in Section \ref{Harmonic}, to point out that in the special case of system $S$ being a qubit, one can obtain unitary operators of the $V_{+}(U)$-form via Hamiltonians in a generalized Jaynes-Cummings model.  

This generalization corresponds to letting the coupling parameter ($g$ in equation (\ref{TheJCHamiltonian})) be dependent on the field strength, leading to interaction Hamiltonians of the form $H = g\sigma_{+}\otimes af(a^{\dagger}a) + g\sigma_{-}\otimes f(a^{\dagger}a)a^{\dagger}$, where $f$ is a function.  Comparing with Section \ref{JCmodel} one can see that we obtain the uniform sequence of unitary matrices $V^{(1)} = V^{(2)} = \cdots$ in the case $f(l) = 1/\sqrt{l}$, i.e., for the Hamiltonian
\begin{equation}
H_{GJC}  = g\sigma_{+}\otimes a(a^{\dagger}a)^{-1/2} + g\sigma_{-}\otimes (a^{\dagger}a)^{-1/2}a^{\dagger}.
\end{equation}
(This should not be confused with the case $f(l) = \sqrt{l}$ that has been considered in the literature \cite{Buck81,Singh06,Alexio00, Buzek, Rybin99}.)
 It certainly is an interesting question to what extent such a generalized Jaynes-Cummings interaction can be obtained, or at least approximated, within realistic systems. However, this is left as an open question.


\section{\label{WithoutReservoir}Prelude: Expected work-extraction without an energy reservoir}

Here we turn to the question of work extraction. (For introductions to work extraction and the closely related Landauer's principle for information erasure, see e.g., \cite{Szilard,Landauer61,Procaccia,Benett03,LeffRexI,LeffRexII,Takara,Esposito06,Maroney09,Reeb13b}.) Before we begin with the actual purpose of analyzing the role of coherence in expected work extraction by the use of explicit energy reservoirs (which we do in Section \ref{ExplicitEnergyReservoirs}) we first consider some simpler types of models that do not contain an explicit energy reservoir. For these models one can show that the optimal expected work extraction (suitably defined) is given by the relative von Neumann entropy $kTD\boldsymbol{(}\rho\Vert G(H_S)\boldsymbol{)}$, where $D(\rho\Vert \sigma) = \Tr(\rho\ln\rho)-\Tr(\rho\ln\sigma)$. The purpose of this exposition is partially to highlight the difference between using, and not using, an explicit energy reservoir, but also to gain some understanding for the ideas behind the (somewhat technically messy) derivations in Section \ref{ExplicitEnergyReservoirs}. On a technical level, the only things that we need from this section are Lemma \ref{ztioutruor} and Corollary \ref{jgfkgfj}.

In Section  \ref{TimeDependent} we begin with a short `prelude to the prelude', where we briefly recall the common approach to implement expected work extraction using time-dependent Hamiltonians. More precisely, the system traverses through a path of Hamiltonians. These paths of Hamiltonians will implicitly play a role in the subsequent derivations, as we, in some sense, will simulate such paths.  In Section \ref{passivity} we recall the concept of passivity, and the related setting for work extraction. In Section \ref{SimulationPass} we implement one version of the above mentioned `simulation' (not to be confused with the simulation in Section \ref{TranslationPhaseRef}), and show how this leads to optimal expected work extraction. 

\subsection{\label{TimeDependent} Expected work extraction by varying the Hamiltonian}
A common approach (that can take many different guises on the level of assumptions and technical details) is to define expected work extraction in terms of changes of the Hamiltonian of the system. (For a handful of examples, see e.g. \cite{Alicki04,Esposito06,Henrich07,Maroney09,Sagawa09,Anders12,Gelbwaser13}, were \cite{Anders12} provides an introduction.) In this setting we are given a quantum state $\rho$ and a Hamiltonian $H$, and are allowed to perform changes of the Hamiltonian $H\mapsto H'$. The expected work gain of such a step is defined as
\begin{equation}
\label{PathGain}
W_{\textrm{yield}}(H_S\mapsto H'_S,\rho) := \Tr(H_S\rho)-\Tr(H'_S\rho).
\end{equation}
The contact with the heat bath is modeled as a replacement map $\rho\mapsto \Phi_{H}(\rho) := G(H)$. (As we will come back to below, there are variations on this, which can  implement `softer' types of thermalizations.) 
A process is defined as a sequence of  shifts of Hamiltonians, sandwiched by thermalizations, where the total work gain is defined as the sum of all the individual gains as in (\ref{PathGain}). This model is, for expected work extraction, a quantum generalization of the discrete classical model used in \cite{TrulyWorkLike}. (Unfortunately, this generalization does not make much sense as a generalization for $\epsilon$-deterministic work extraction.)  Very much analogous to \cite{TrulyWorkLike}, one can show that for a cyclic change of Hamiltonians, the optimal expected work extraction in this setting becomes $kTD\boldsymbol{(}\rho\Vert G(H_S)\boldsymbol{)}$. To see this, one can use the relation $H = F(H)-kT\ln G(H)$, where $F(H)= -kT\ln Z(H)$, and $Z(H) = \Tr e^{-\beta H}$, to show that the total work gain of an $L$-step  cyclic  process ($H^{(0)}= H^{(L)} = H_S$), with $\rho^{(0)}:=\rho$  and $\rho^{(l+1)}= \Phi_{H^{(l+1)}}(\rho^{(l)})$ becomes 
\begin{equation}
\label{PathTotalGain}
\begin{split}
W_{\textrm{yield}}  = & \sum_{l=0}^{L-1}[\Tr(H^{(l)}\rho_{l})-\Tr(H^{(l+1)}\rho_{l})]\\
 = & \frac{1}{\beta}D\boldsymbol{(}\rho\Vert G(H_S)\boldsymbol{)} -\frac{1}{\beta}D\boldsymbol{(}\rho_{L-1}\Vert G(H_S)\boldsymbol{)}\\
 & -\frac{1}{\beta}\sum_{l=1}^{L-1}\Big[ D\boldsymbol{(}\rho_{l-1}\Vert G(H^{(l)})\boldsymbol{)}-D\boldsymbol{(}\rho_l\Vert G(H^{(l)})\boldsymbol{)} \big].
\end{split}
\end{equation}
Using $\Phi_{H}\boldsymbol{(}G(H)\boldsymbol{)} = G(H)$, together with the fact that the von Neumann relative entropy is contractive under the action of channels, i.e., $D\boldsymbol{(}\Gamma(\rho)\Vert \Gamma(\sigma)\boldsymbol{)} \leq D(\rho\Vert\sigma)$, it follows that  $D\boldsymbol{(}\rho_l\Vert G(H^{(l)})\boldsymbol{)} = D\boldsymbol{(}\Phi_{H^{(l)}}(\rho_{l-1})\Vert \Phi_{H^{(l)}}(G(H^{(l)})\boldsymbol{)} \leq  D\boldsymbol{(}\rho_{l-1}\Vert G(H^{(l)})\boldsymbol{)}$, and thus 
$W_{\textrm{yield}} \leq  kTD\boldsymbol{(}\rho\Vert G(H_S)\boldsymbol{)} -kTD\boldsymbol{(}\rho_{L-1}\Vert G(H_S)\boldsymbol{)}$. Due to the fact that $D(\cdot\Vert\cdot)\geq 0$, it follows that within this model, for any cyclic process, we get the general bound  $W_{\textrm{yield}} \leq  kTD\boldsymbol{(}\rho\Vert G(H_S)\boldsymbol{)}$. 

As a side-remark one may note that, for the above derivation, we do not necessarily need to use the replacement map $\rho\mapsto G(H)$. As seen, we could, instead of $\Phi_{H}$, use \emph{any} channel that has $G(H)$ as a fix-point, i.e., $\Phi\boldsymbol{(}G(H)\boldsymbol{)} = G(H)$. (This implies a considerable freedom in implementing the softer types of thermalizations mentioned above. However, we shall not need this freedom in our derivations, but will stay put with the, rather brutal, complete thermalization implemented by the replacement map.)

The next question is if it is possible to saturate the bound $W_{\textrm{yield}} \leq  kTD\boldsymbol{(}\rho\Vert G(H_S)\boldsymbol{)}$ within this model. A (limit) process that does achieve the bound can be described as follows: Change the initial Hamiltonian $H_S$ into $H'$, where the latter is such that $\rho = G(H')$. (Strictly speaking, this only works if $\rho$ has full support. This issue can be handled via yet another limiting process, with a sequence of Gibbs states that converges to the state, akin to what was done in \cite{TrulyWorkLike}.) Next, we thermalize the system and perform an alternating sequence of  thermalizations and small changes of the Hamiltonians, keeping the eigenbasis fixed, until we reach a Hamiltonian $H''$ that is iso-spectral to $H_S$. Finally, we perform an alternating sequence of thermalizations and small unitary transformations, until $H''$ has been rotated back to $H_S$. The first of these steps has the expected work yield $W^{(1)}_{\textrm{yield}} = F(H_S)-F(H')+kTD\boldsymbol{(}\rho\Vert G(H_S)\boldsymbol{)}$. The second step does in the limit of an infinitely fine discretization  yield $W^{(2)}_{\textrm{yield}} = F(H')-F(H'')$. (Since we for this step only have states that are diagonal in a fixed energy eigenbasis, the results for the ITR in \cite{TrulyWorkLike} are applicable.)
For the final step of the protocol let $U$ be a unitary that transforms an eigenbasis of $H''$ into an eigenbasis of $H_S$. We diagonalize $U= \sum_{n}e^{-i\phi_n}|\xi_n\rangle\langle\xi_n|$, for $0\leq \phi_n <2\pi$, and define the Hermitian operator $A := \sum_{n}\phi_n|\xi_n\rangle\langle\xi_n|$. We can thus define a sequence of Hamiltonians $H^{(l)}:= e^{-iAl/L}H''e^{iAl/l}$, for $l=0,\ldots,L$, such that $H^{(0)}= H''$ and $H^{(L)} = H_S$. Consider a  process the sequentially  changes $H^{(l)}$ to $H^{(l+1)}$, sandwiched by thermalizations. The work yield of this process is.  
\begin{equation}
\begin{split}
W_{\textrm{yield}}^{(3)}  = & \sum_{l=0}^{L-1}\Tr[(H^{(l)}-H^{(l+1)})G(H^{(l)})] \\
 = & L\Tr[( H''-e^{-iA/L}H'' e^{iA/L})G(H'')].
 \end{split}
\end{equation}
 By inserting the expansion $e^{-iA/L}H'' e^{iA/L} = H'' -i[A,H'']/L  + O(1/L^2)$ into the above equation, one sees that $W_{\textrm{yield}}^{(3)} \rightarrow 0$ as $L\rightarrow \infty$. Hence, the total work yield $W_{\textrm{yield}} = W^{(1)}_{\textrm{yield}} + W^{(2)}_{\textrm{yield}} + W^{(3)}_{\textrm{yield}}$ can be made arbitrarily close to $kTD\boldsymbol{(}\rho\Vert G(H_S)\boldsymbol{)}$.

\subsection{\label{passivity}Work extraction and passive states}

The concept of passive states were introduced in the context of axiomatic characterizations of Gibbs states \cite{PuszWoronowicz,Lenard}. See also recent results in \cite{AlickiFannes12}  and \cite{Hovhannisyan13}, which study the role of entanglement in this type of setting.

We are given a quantum system $S$, with a Hamiltonian $H_S$, and a quantum state $\rho$. We are allowed to perform arbitrary unitary operations (no matter whether they mix energy levels or not). The work gain of a unitary operation $U$ is defined as 
\begin{equation}
\label{nsdjfkbnv}
W_{\mathrm{yield}}(U,\rho,H_S) := \Tr(H_S\rho)- \Tr(H_{S}U\rho U^{\dagger}). 
\end{equation}

In light of the results in the previous sections one may suspect that this construction implicitly assumes the  access to ideal coherence. To analyze this issue in a quantitative manner is precisely the purpose of Section \ref{ExplicitEnergyReservoirs}.

A state $\rho$ is \emph{passive} with respect to a given Hamiltonian $H_S$, if we cannot extract any energy from it in the sense of (\ref{nsdjfkbnv}), i.e.,  $W_{\mathrm{yield}}(U,\rho,H_S)\leq 0$ for all unitary $U$ \cite{PuszWoronowicz,Lenard}. It turns out \cite{PuszWoronowicz,Lenard}  that a state is passive if and only if the state $\rho$ and the Hamiltonian $H_S$ has a common eigenbasis, and that in this basis the eigenvalues of the density operator are ordered non-increasingly with increasing energy eigenvalues  (or equivalently if the eigenvalues of $\rho$ and $G(H_S)$ co-decrease). 

In this model it is straightforward to calculate the optimal extraction (e.g., by use of Lemma \ref{kvjdbkv}) 
\begin{equation}
\label{PassivityOptimum}
\begin{split}
W^{\mathrm{optimal}}_{\mathrm{yield}}  := & \sup_{U\in U(\mathcal{H}_S)}W_{\mathrm{yield}}(U)\\
 = & \Tr(H_S\rho_S )- \inf_{U\in U(\mathcal{H}_S)}\Tr(H_{S}U\rho_S U^{\dagger})\\
 = & \frac{1}{\beta}D\boldsymbol{(}\rho_S \Vert G(H_S)\boldsymbol{)} -\frac{1}{\beta}D\boldsymbol{(}\lambda^{\downarrow}(\rho_S) \Vert \lambda^{\downarrow}(G(H_S))\boldsymbol{)}.
 \end{split}
\end{equation}
Here $\lambda_{1}^{\downarrow}(\rho)\geq \cdots \geq \lambda_{N}^{\downarrow}(\rho)$ means the eigenvalues of the operator $\rho$ ordered non-increasingly. As seen from (\ref{PassivityOptimum}) we obtain the optimal work extraction precisely when we put the system in a passive state.
One might wonder from where the $\beta$ in (\ref{PassivityOptimum}) comes from, as we have not yet introduced any heat bath. At this stage it is merely a mathematical identity. [All the $\beta$ on the right hand side of  (\ref{PassivityOptimum}) cancel, for any $\beta>0$.]

Due to the fact that relative entropy is non-negative $D(\cdot\Vert \cdot)\geq 0$, the optimal value can never be larger than $kTD\boldsymbol{(}\rho\Vert G(H_S)\boldsymbol{)}$.  
If we somehow could `arrange' for the term $kTD\boldsymbol{(}\lambda^{\downarrow}(\rho) \Vert \lambda^{\downarrow}(G(H_S))\boldsymbol{)}$ to become zero, we would obtain $kTD\boldsymbol{(}\rho\Vert G(H_S)\boldsymbol{)}$ as the optimum. 
 
An important observation is that a combination of passive states is not necessarily passive \cite{PuszWoronowicz,Lenard}. As an example, for a passive state $\rho$, the tensor product $\rho\otimes \rho$ may not necessarily be passive. For this we assume that the two subsystems have identical and non-interaction Hamiltonians, resulting in the total Hamiltonian $H_{S_1}\otimes \hat{1}_{S_2} + \hat{1}_{S_1}\otimes H_{S_2}$. We furthermore allow arbitrary unitary operations on the \emph{joint} system $\mathcal{H}_{S_1}\otimes\mathcal{H}_{S_2}$. The Gibbs state $G(H)$ have the special property that it is completely passive  \cite{PuszWoronowicz,Lenard}, meaning that it remains passive for arbitrary tensor products $G(H)^{\otimes n}$. 

Combining copies of a system is not the only way in which we can change a passive state into a non-passive state. Another method is to add on an ancillary system $A$ in a Gibbs state $G(H_A)$ for some Hamiltonian $H_A$. We furthermore assume that the ancillary system is  non-interacting with $S$, such that the total Hamiltonian is $H_{SA} := H_S\otimes\hat{1}_A + \hat{1}_S\otimes H_A$. Substituting `$S$' in (\ref{PassivityOptimum}) with `$SA$' with the new Hamiltonian $H_{SA}$, and the new initial state $\rho\otimes G(H_A)$ [where we let the $\beta$ in (\ref{PassivityOptimum}) be the same as the $\beta$ in the Gibbs state $G(H_A)$ of the ancillary system] yields
 \begin{equation}
 \label{passoptimHeat}
 \begin{split}
W^{\mathrm{opt}}_{\mathrm{yield}} = &\Tr\big((H_S+ H_A)\rho\otimes G(H_A)\big) \\
 & -\inf_{U\in U(\mathcal{H}_S\otimes \mathcal{H}_A)} \Tr\big((H_S+H_A)U\rho\otimes G(H_A) U^{\dagger}\big)\\
 =  & \frac{1}{\beta}D\boldsymbol{(}\rho \Vert G(H_S)\boldsymbol{)} \\
  &-\frac{1}{\beta}D\Big(\lambda^{\downarrow}\big(\rho\otimes G(H_A)\big) \Vert \lambda^{\downarrow}\big(G(H_S)\otimes G(H_A)\big)\Big). 
\end{split}
\end{equation}
Comparing (\ref{passoptimHeat}) with (\ref{PassivityOptimum}) we see that the first term is unchanged [due to fact that $D\boldsymbol{(}\rho\otimes G(H_A)\Vert G(H_S)\otimes G(H_A)\boldsymbol{)} = D\boldsymbol{(}\rho\Vert G(H_S)\boldsymbol{)}$]  while  the second term is altered. In general $D\boldsymbol{(}\lambda^{\downarrow}(\rho\otimes G(H_A)) \Vert \lambda^{\downarrow}(G(H_S)\otimes G(H_A))\boldsymbol{)}$ can be smaller than $D\boldsymbol{(}\lambda^{\downarrow}(\rho) \Vert \lambda^{\downarrow}(G(H_S))\boldsymbol{)}$, reflecting the fact that we can make a passive state non-passive by adding an ancillary Gibbs state. 
The purpose of the following section is to show that the former term can be made arbitrarily small by choosing the ancillary Hamiltonian $H_A$ appropriately. 

The justification for the use of these ancillary Gibbs states is that we in principle always can obtain them  by equilibrating the desired ancillary system with respect to the given heat bath of temperature $T$, i.e., an ancillary Gibbs state is a `free resource' \cite{Janzing00,Brandao11,Horodecki11}. More precisely, one can imagine that the ancilla system is put in contact with the heat bath. Next, we let it continuously equilibrate with the heat bath while we slowly de-connect it. (Since we here deal with questions about ideal thermodynamics we do not worry about very large equilibration times.)

\subsection{\label{SimulationPass}Simulation of time-dependent Hamiltonians using a sequence of subsystems}

The purpose of this section is to show that the term $D\boldsymbol{(}\lambda^{\downarrow}(\rho\otimes G(H_A)) \Vert \lambda^{\downarrow}(G(H_S)\otimes G(H_A))\boldsymbol{)}$  in equation (\ref{passoptimHeat}) can be made arbitrarily small by choosing $H_A$ appropriately, which implies that we can achieve the expected work yield $kTD\boldsymbol{(}\rho\Vert G(H_S)\boldsymbol{)}$. In the construction of this Hamiltonian $H_A$, the aforementioned  `simulation' of time dependent Hamiltonians plays an important role (see Fig.~\ref{FigureSimulation}).

Let us reformulate this in a slightly more precise way. Define the probability distributions $q:= \lambda^{\downarrow}(\rho)$ and $p:= \lambda^{\downarrow}\boldsymbol{(}G(H_S)\boldsymbol{)}$. 
For $h= (h_1,\ldots, h_N)\in \mathbb{R}^{N}$, we furthermore define $G_{n}(h) := e^{-\beta h_n}/Z(H)$, $Z(h) := \sum_{n}e^{-\beta h_n}$, $F(h):= -\frac{1}{\beta}\ln Z(h)$ (much analogous to how we defined $G(H)$, $Z(H)$, and $F(H)$, for operators).

If $(h_1,\ldots,h_M) \in \mathbb{R}^{M}$ are the eigenvalues of $H_A$, it thus follows that we can write 
\begin{equation}
\begin{split}
& D\Big(\lambda^{\downarrow}\big(\rho\otimes G(H_A)\big) \Vert \lambda^{\downarrow}\big(G(H_S)\otimes G(H_A)\big)\Big) \\
& = D\Big( \big(qG(h)\big)^{\downarrow} \Vert \big(pG(h)\big)^{\downarrow}\Big).
\end{split}
\end{equation}
 One can thus realize that the problem of making this relative entropy small  is not particularly `quantum' to its nature, as it can be reformulated entirely as a question concerning probability distributions. (This is essentially due to the optimal unitary transformation $U$ in equation (\ref{passoptimHeat}), which makes the two states diagonal in the same eigenbasis.) 
In other words, the question whether we can find the appropriate Hamiltonians $H_A$ boils down to the question whether we can find $h$ for given $q$ and $p$ such that $D\big( (qG(h))^{\downarrow} \Vert (pG(h))^{\downarrow}\big)$ becomes small. It is thus enough to perform the `simulation'  entirely within a classically discrete setting (in the sense of \cite{TrulyWorkLike}), where we consider sequences of energy level configurations $h$ and probability distributions over these.

\begin{figure}[h!]
\center
 \includegraphics[width= 8cm]{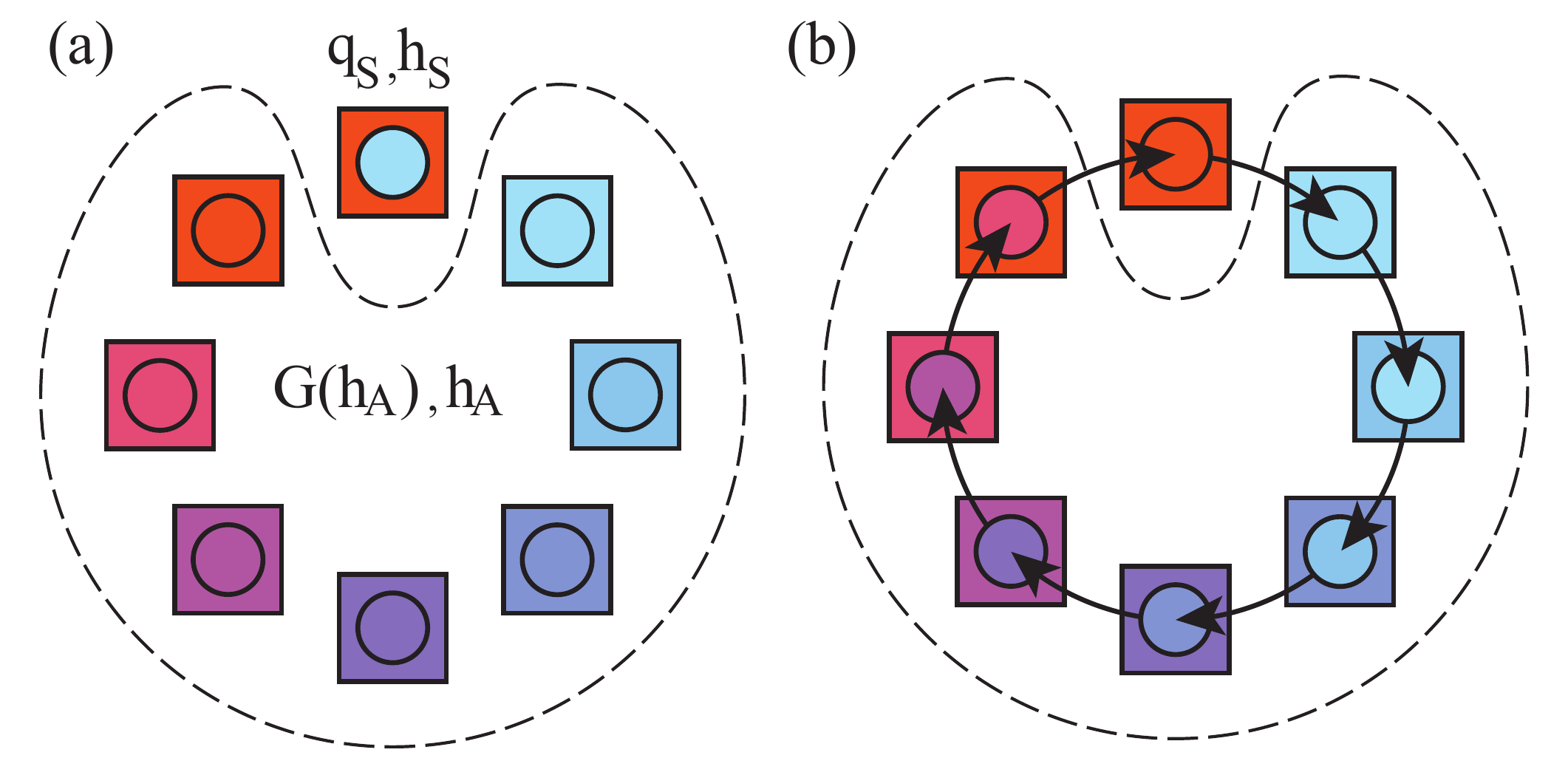} 
   \caption{\label{FigureSimulation} {\bf Simulation.} 
    Each circle represents the state of a subsystem, and the corresponding square its Hamiltonian. The circle and square have the same color when the system is Gibbs distributed with respect to its Hamiltonian.
 (a)  We are given a probability distribution $q$ over a set of states with energy levels $h_S$. To this we add a collection of ancillary systems, each with Hamiltonian $h^k$ and state distribution $G(h^k)$, forming a joint ancillary system $A$ which is Gibbs distributed $G(h_A) = G(h^0)\cdots G(h^K)$ with respect to the Hamiltonian $h^A_{n_0,\ldots,n_K} = h^0_{n_0} + \cdots + h^K_{n_K}$. Arranged in a circle, these systems start with $h^{0}$ such that $G(h^{0}) = q$, and ends with $h^K := h_S$. We can view  $h^0,\ldots,h^K$ as a discretization of a smooth path in the space of energy level configurations.
(b) By a cyclic permutation $\Pi$ of the states along this cycle, $S$ is put in equilibrium, while most of the ancillary systems are shifted slightly out of equilibrium. 
 The relative entropy $D\boldsymbol{(}\Pi(qG(h_A))\Vert G(h_S)G(h_A) \boldsymbol{)}$ goes to zero in the limit of an infinitely fine discretization along a smooth path of energy level configurations.
 }
\end{figure}

The relative entropy does not satisfy the triangle inequality. 
For the construction of our proofs, we make use of this `failure' of the relative entropy. Lemma \ref{ztioutruor}  shows that if we would regard the relative entropy as a (strange) distance measure, then the total sum of the `lengths' of the pieces of a path can be made arbitrarily small merely by making the division finer. (This is closely related to the `isothermal reversible paths' as described in \cite{TrulyWorkLike}.)

\begin{Lemma}
\label{erererre}
Let $h:[0,1]\rightarrow \mathbb{R}^{N}$ have component-wise continuous second derivatives, then
\begin{equation}
\bigg|\frac{d^2}{dx^2}F(h(x))\bigg| \leq  \max_{n}|h''_{n}(x)|  +2\beta \max_{n}|h'_{n}(x)|^2. 
\end{equation}
\end{Lemma}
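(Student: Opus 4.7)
The plan is a direct differentiation followed by the triangle inequality, exploiting that $G(h(x))$ is a probability distribution at every $x$.

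First I would differentiate $F(h(x)) = -\frac{1}{\beta}\ln Z(h(x))$ once using the chain rule. Since $\frac{d}{dx}Z(h(x)) = -\beta\sum_{n}h'_{n}(x) e^{-\beta h_n(x)}$, this yields
\begin{equation}
\frac{d}{dx}F(h(x)) = \sum_{n} h'_{n}(x)\, G_{n}(h(x)),
\end{equation}
i.e.\ the Gibbs average of $h'$. Next I would differentiate $G_{n}(h(x)) = e^{-\beta h_{n}(x)}/Z(h(x))$, obtaining
\begin{equation}
\frac{d}{dx}G_{n}(h(x)) = \beta\, G_{n}(h(x))\Big(\sum_{m} h'_{m}(x) G_{m}(h(x)) - h'_{n}(x)\Big).
\end{equation}

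Combining these, a second differentiation of $F(h(x))$ gives
\begin{equation}
\begin{split}
\frac{d^{2}}{dx^{2}}F(h(x)) = &\ \sum_{n} h''_{n}(x) G_{n}(h(x)) \\
 & -\beta \sum_{n}(h'_{n}(x))^{2} G_{n}(h(x)) \\
 & +\beta \Big(\sum_{n} h'_{n}(x) G_{n}(h(x))\Big)^{2}.
\end{split}
\end{equation}

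Finally I would apply the triangle inequality to this expression term by term. For each term, the key point is that $\{G_{n}(h(x))\}_{n}$ is a probability distribution, so $|\sum_{n} a_{n} G_{n}(h(x))| \leq \max_{n}|a_{n}|$ for any real numbers $a_{n}$. Applied to the three terms separately, this yields $|\sum_{n} h''_{n} G_{n}| \leq \max_{n}|h''_{n}(x)|$, $\beta|\sum_{n}(h'_{n})^{2} G_{n}| \leq \beta\max_{n}|h'_{n}(x)|^{2}$, and $\beta(\sum_{n}h'_{n} G_{n})^{2} \leq \beta \max_{n}|h'_{n}(x)|^{2}$, the latter two combining into the $2\beta$ factor. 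This produces the stated inequality. There is essentially no obstacle here; the only thing to be careful about is not canceling the last two terms into a variance (which would give a tighter bound with $\beta$ instead of $2\beta$) — the looser bound stated in the lemma arises precisely from estimating these two terms independently via the triangle inequality.
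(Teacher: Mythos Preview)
Your proof is correct. The paper itself does not supply a proof for this lemma (it is stated and then immediately used), but your direct differentiation of $F(h(x))$ followed by the triangle inequality, using that $\{G_n(h(x))\}_n$ is a probability distribution, is exactly the natural argument and yields the stated bound; your remark that the last two terms form $-\beta$ times a variance (so a sharper constant is available) is also correct but not needed here.
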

By combining the above lemma with the Taylor expansions, $f(x) =  f(x_0) + (x-x_0)f'(x_{0})+ r(x)$, with the error term in the Lagrange form $r(x) := \frac{(x-x_0)^2}{2}f''(\xi)$ for $\xi\in[\min(x,x_0),\max(x,x_0)]$ one can obtain the following lemma. 
\begin{Lemma}
\label{ztioutruor}
Let $N\in\mathbb{N}$ and $\beta>0$ be given, and let $h:[0,1]\rightarrow \mathbb{R}^{N}$ have component-wise continuous second derivatives. For each integer $K\geq 2$
\begin{equation}
\begin{split}
 0  \leq & \sum_{k=0}^{K-1}D\Big( G\big(h(\frac{k}{K})\big)\Big\Vert G\big(h(\frac{k+1}{K})\big)\Big) \\
 \leq & \frac{1}{K} \Big[\beta\max_{x\in[0,1]} \max_{n}|h''_{n}(x)| +\beta^2\max_{x\in[0,1]}\max_{n}|h'_{n}(x)|^2\Big].
 \end{split}
\end{equation}
\end{Lemma}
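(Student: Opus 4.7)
The plan is to bound each summand by something of order $1/K^2$ via a second-order Taylor expansion of the relative entropy in the step size, and then use Lemma \ref{erererre} to uniformly bound the remainder term.

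First I would derive a convenient closed form for the relative entropy between two Gibbs distributions. Using $G_n(h) = e^{-\beta h_n}/Z(h)$ and $F(h) = -\beta^{-1}\ln Z(h)$, a short calculation gives
\begin{equation}
D\boldsymbol{(}G(h)\Vert G(h')\boldsymbol{)} = \beta[F(h)-F(h')] + \beta\sum_{n} G_{n}(h)[h'_n-h_n].
\end{equation}
This reformulation is important because the standard identity $\partial_{h_n} F(h) = G_n(h)$ implies that, along a smooth path $h(x)$, the function $\phi(x) := F(h(x))$ has derivative $\phi'(x) = \sum_n G_n(h(x))\, h'_n(x)$. Hence when one Taylor-expands both $F(h(y))$ and each $h_n(y)$ about $y = x_0$, the first-order contributions in the two sums above cancel exactly.

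Next I would apply this cancellation to $x_0 = k/K$ and $y = (k+1)/K$. Using the Lagrange form of the remainder, there exist points $\xi_1,\xi_{2,1},\ldots,\xi_{2,N}$ in $[k/K,(k+1)/K]$ such that
\begin{equation}
F(h(y)) = F(h(x_0)) + \frac{1}{K}\phi'(x_0) + \frac{1}{2K^{2}}\phi''(\xi_1),
\end{equation}
\begin{equation}
h_n(y) = h_n(x_0) + \frac{1}{K}h'_n(x_0) + \frac{1}{2K^{2}}h''_n(\xi_{2,n}).
\end{equation}
Substituting and canceling the linear terms yields
\begin{equation}
D\boldsymbol{(}G(h(x_0))\Vert G(h(y))\boldsymbol{)} = \frac{\beta}{2K^{2}}\Big[\sum_n G_n(h(x_0))h''_n(\xi_{2,n}) - \phi''(\xi_1)\Big],
\end{equation}
from which the triangle inequality and $\sum_n G_n = 1$ give the bound $\beta(2K^{2})^{-1}\bigl[\max_n|h''_n(\xi_{2,n})| + |\phi''(\xi_1)|\bigr]$.

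Finally I would invoke Lemma \ref{erererre} to control $|\phi''(\xi_1)| = |(d^2/dx^2) F(h(x))|_{x=\xi_1}|$ by $\max_n|h''_n(\xi_1)| + 2\beta \max_n|h'_n(\xi_1)|^{2}$, pass to suprema over $x\in[0,1]$, and sum over $k=0,\ldots,K-1$. Since there are $K$ terms each of order $1/K^{2}$, the total is of order $1/K$, and collecting constants reproduces exactly the stated bound $K^{-1}[\beta\max\max|h''_n| + \beta^{2}\max\max|h'_n|^{2}]$. The lower bound $0$ is immediate from the non-negativity of relative entropy. There is no real obstacle here beyond the bookkeeping; the only place where care is needed is to notice that the choice of the formula for $D(G(h)\Vert G(h'))$ in terms of $F$ is what produces the exact cancellation of the $O(1/K)$ linear term, without which the bound would not be small.
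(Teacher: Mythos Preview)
Your proof is correct and follows essentially the same approach as the paper, which only sketches the argument as combining Lemma~\ref{erererre} with a second-order Taylor expansion with Lagrange remainder. Your explicit identification of the formula $D(G(h)\Vert G(h')) = \beta[F(h)-F(h')] + \beta\sum_n G_n(h)(h'_n-h_n)$ and the resulting exact cancellation of the $O(1/K)$ term is precisely the computation the paper leaves implicit, and your constant-tracking reproduces the stated bound exactly.
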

In the following we let $\mathbb{P}(N)$ denote the set of probability distributions over $N$ symbols, and $\mathbb{P}_{+}(N)$ the subset of probability distributions with full support.
\begin{Lemma}[Corollary II.4.4 on p.~49 in \cite{Bhatia}]
\label{kvjdbkv}
Let $a,b\in\mathbb{R}^{M}$, then 
\begin{equation}
\sum_{m=1}^{M}a^{\downarrow}_mb^{\uparrow}_{m} \leq \sum_{m=1}^{M}a_{m}b_{m}\leq \sum_{m=1}^{M}a^{\uparrow}_{m}b^{\uparrow}_{m}.
\end{equation}
\end{Lemma}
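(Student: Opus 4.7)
The plan is to recognize this as the classical rearrangement inequality and prove it by a swapping (exchange) argument. First, I would perform a cosmetic reduction: by relabeling indices I may assume $a$ is already sorted so that $a_m = a_m^{\uparrow}$. Then the middle sum can be written as $\sum_m a_m^{\uparrow} b_{\sigma(m)}^{\uparrow}$ for some permutation $\sigma$ on $\{1,\ldots,M\}$ (where $\sigma$ records how the positions of the $b_m$'s relate to their sorted order). The upper bound reduces to showing this sum is maximal when $\sigma$ is the identity, and the lower bound reduces to showing it is minimal when $\sigma$ is the reversal $m\mapsto M+1-m$. The latter minimum equals $\sum_m a_m^{\downarrow}b_m^{\uparrow}$ by a trivial re-indexing, since both sums pair the $k$th smallest $a$-value with the $k$th largest $b$-value.

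The engine of the argument is the elementary two-term exchange inequality: for real $\alpha\leq\beta$ and $x\leq y$,
\begin{equation*}
\alpha y + \beta x \;\leq\; \alpha x + \beta y,
\end{equation*}
which follows from $(\beta-\alpha)(y-x)\geq 0$. Applied inside $\sum_m a_m^{\uparrow} b_{\sigma(m)}^{\uparrow}$, this says that whenever $i<j$ with $\sigma(i)>\sigma(j)$ (an ``inversion''), interchanging the values $\sigma(i)$ and $\sigma(j)$ weakly increases the total sum; conversely, swapping a non-inverted pair weakly decreases it.

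To conclude, I would iterate these swaps in bubble-sort fashion. Starting from an arbitrary $\sigma$, repeatedly eliminating inversions only raises the sum and terminates after finitely many steps at the identity permutation, giving the upper bound $\sum_m a_m b_m \leq \sum_m a_m^{\uparrow}b_m^{\uparrow}$. Running the same iteration in the opposite direction (repeatedly introducing inversions) only lowers the sum and terminates at the reversal permutation, yielding the lower bound $\sum_m a_m^{\downarrow}b_m^{\uparrow}\leq \sum_m a_m b_m$. I do not expect a genuine obstacle; the only care needed is in the reduction step, to confirm that the permutation $\sigma$ can be chosen so that $\sum_m a_m^{\uparrow} b_{\sigma(m)}^{\uparrow}$ equals $\sum_m a_m b_m$ identically (not merely up to reordering of summands), which holds because any finite sum is invariant under bijective reindexing.
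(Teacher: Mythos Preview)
Your proof is correct; this is the standard exchange (bubble-sort) argument for the classical rearrangement inequality. The paper does not supply its own proof of this lemma but simply cites it as Corollary~II.4.4 in Bhatia's \emph{Matrix Analysis}, so there is no in-paper argument to compare against.
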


Every permutation $\pi$ on the set $\{1,\ldots,M\}$ induces an operation $\Pi$ on $\mathbb{P}(M)$ by $\Pi_{\pi}(p)_m:= p_{\pi(m)}$, i.e., we permute the elements in the index of $p$.
A direct consequence of Lemma \ref{kvjdbkv} is the following corollary.
\begin{Corollary}
\label{jgfkgfj}
Let $q\in \mathbb{P}(M)$ and $p\in\mathbb{P}_{+}(M)$, and $\pi$ be any permutation on $\{1,\ldots,M\}$, then
\begin{equation}
D(q^{\downarrow}\Vert p^{\downarrow}) \leq D\boldsymbol{(}\Pi_{\pi}(q)\Vert p\boldsymbol{)}.
\end{equation}
\end{Corollary}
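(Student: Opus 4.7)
My plan is to reduce the claim to a direct application of Lemma \ref{kvjdbkv} after splitting the relative entropy into its entropy part and its cross-entropy part.

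First, write
\begin{equation*}
D\bigl(\Pi_\pi(q)\bigm\Vert p\bigr) \;=\; \sum_{m}(\Pi_\pi q)_m \ln (\Pi_\pi q)_m \;-\; \sum_m (\Pi_\pi q)_m \ln p_m,
\end{equation*}
and similarly for $D(q^{\downarrow}\Vert p^{\downarrow})$. Because $\Pi_\pi(q)$ and $q^{\downarrow}$ are both just permutations of $q$, the entropy terms $\sum_m (\Pi_\pi q)_m \ln (\Pi_\pi q)_m$ and $\sum_m q^{\downarrow}_m \ln q^{\downarrow}_m$ both equal $\sum_m q_m \ln q_m$. So the inequality reduces to showing
\begin{equation*}
\sum_m (\Pi_\pi q)_m \ln p_m \;\leq\; \sum_m q^{\downarrow}_m \ln p^{\downarrow}_m.
\end{equation*}
(The assumption $p \in \mathbb{P}_+(M)$ ensures $\ln p_m$ is finite for every $m$, so no issue with $-\infty$ terms.)

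Now I would apply Lemma \ref{kvjdbkv} with $a := \Pi_\pi(q)$ and $b := (\ln p_1, \ldots, \ln p_M)$. The lemma gives $\sum_m a_m b_m \leq \sum_m a^{\uparrow}_m b^{\uparrow}_m$. Two elementary observations finish the argument: (i) $a^{\downarrow}=q^{\downarrow}$ since $\Pi_\pi(q)$ is just a reordering of $q$, and likewise $a^{\uparrow}=q^{\uparrow}$; (ii) because $\ln$ is strictly increasing, sorting $b$ in increasing (resp.\ decreasing) order is the same as taking $\ln$ componentwise of $p^{\uparrow}$ (resp.\ $p^{\downarrow}$). Moreover $\sum_m x^{\uparrow}_m y^{\uparrow}_m = \sum_m x^{\downarrow}_m y^{\downarrow}_m$ for any vectors $x,y$ (reversing both orderings is itself a permutation that leaves the sum invariant). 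Combining these,
\begin{equation*}
\sum_m (\Pi_\pi q)_m \ln p_m \;\leq\; \sum_m q^{\uparrow}_m \ln p^{\uparrow}_m \;=\; \sum_m q^{\downarrow}_m \ln p^{\downarrow}_m,
\end{equation*}
which is exactly what was needed.

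There is no real obstacle here; the proof is a two-line consequence of the rearrangement inequality in Lemma \ref{kvjdbkv}. The only thing one must be careful about is the sign bookkeeping: the cross-entropy $-\sum_m q_m \ln p_m$ appears with a minus sign in $D(q\Vert p)$, so that the upper bound on $\sum_m a_m b_m$ from Lemma \ref{kvjdbkv} becomes a lower bound on the cross-entropy and hence an upper bound on the relative entropy with sorted arguments. The non-negativity statement $0 \leq D(q^{\downarrow}\Vert p^{\downarrow})$ is the standard Gibbs inequality and needs no argument.
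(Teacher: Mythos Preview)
Your proof is correct and is precisely the intended ``direct consequence'' of Lemma~\ref{kvjdbkv}: the paper offers no further detail beyond that phrase, and your reduction---canceling the permutation-invariant entropy term and applying the rearrangement inequality to the cross-entropy with $a=\Pi_\pi(q)$ and $b=\ln p$---is the natural (and essentially unique) way to fill it in. Your closing remark on Gibbs' inequality is harmless but unnecessary, since the Corollary does not assert non-negativity.
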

Suppose that $M := N^{L}$, and suppose moreover that we choose to represent the elements of $\mathbb{P}(N^{L})$ as $L$-dimensional tensors, i.e., as $(p_{n_1,\ldots,n_{L}})_{(n_1,\ldots,n_{L})}$, where $(n_1,\ldots,n_{L})\in \{1,\ldots,N\}^{\times L}$. We can obtain a special permutation on the index set $\{1,\ldots,N\}^{\times L}$ by permuting the `index of the indices', i.e., $(n_1,\ldots,n_{L})$ is mapped to $(n_{\pi(1)},\ldots,n_{\pi(L)})$, where $\pi$ is a permutation of $\{1,\ldots,L\}$. (Another way to put it is to say that we permute $L$ subsystems.) This is merely a special case of a general permutation of $\{1,\ldots,N \}^{\times L}$. The permutation of the index of indices induces a corresponding operation $\Pi$ on $\mathbb{P}(N^{L})$, by $\Pi\big((p_{n_1,\ldots,n_{L}})_{n_1,\ldots,n_{L}}\big) = (p_{n_{\pi(1)},\ldots,n_{\pi(L)}})_{n_1,\ldots,n_{L}}$. As this is a special case of the more general permutation, it follows that Corollary \ref{jgfkgfj} is applicable, which is important for the proof of the following proposition.

\begin{Lemma}
\label{ptozuirpuztoi}
Let $N$ and $q,p\in \mathbb{P}_{+}(N)$ be given. For each integer $K\geq 2$  there exists a $h^{K}\in \mathbb{R}^{N^{K+1}}$ such that 
\begin{equation}
\label{rtuzrtu}
D\Big( \big(q G(h^K)\big)^{\downarrow} \Vert \big(p G(h^K)\big)^{\downarrow}\Big) \leq \frac{1}{K}\max_{n}|\ln\frac{q_n}{p_n}|^2. 
\end{equation}
\end{Lemma}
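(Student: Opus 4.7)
\textbf{Proof plan for Lemma \ref{ptozuirpuztoi}.}
The plan is to build the ancillary energy configuration $h^{K}$ out of $K+1$ subsystems (each of dimension $N$) whose Gibbs distributions discretise a smooth path from $q$ to $p$, and then invoke Corollary \ref{jgfkgfj} via a cyclic permutation that telescopes the relative entropy along the path, so Lemma \ref{ztioutruor} finishes the job.

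First, I would construct the path. Define $h:[0,1]\to\mathbb{R}^{N}$ by
\begin{equation*}
h_{n}(x) := -\frac{1}{\beta}\bigl[(1-x)\ln q_{n} + x\ln p_{n}\bigr].
\end{equation*}
Then $e^{-\beta h_{n}(x)}=q_{n}^{1-x}p_{n}^{x}$ and, because both $q$ and $p$ sum to one, $Z\bigl(h(0)\bigr)=Z\bigl(h(1)\bigr)=1$, giving $G\bigl(h(0)\bigr)=q$ and $G\bigl(h(1)\bigr)=p$. Moreover $h_{n}'(x)=\beta^{-1}\ln(q_{n}/p_{n})$ and $h_{n}''(x)=0$, so Lemma \ref{ztioutruor} yields
\begin{equation*}
\sum_{k=0}^{K-1} D\Bigl(G\bigl(h(\tfrac{k}{K})\bigr)\Big\Vert G\bigl(h(\tfrac{k+1}{K})\bigr)\Bigr)\leq \frac{1}{K}\max_{n}\Bigl|\ln\frac{q_{n}}{p_{n}}\Bigr|^{2}.
\end{equation*}

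Next, set $h^{(k)}:=h(k/K)$ for $k=0,\dots,K$ and define the ancillary energy levels on $N^{K+1}$ states by $h^{K}_{(m_{1},\dots,m_{K+1})}:=\sum_{k=0}^{K}h^{(k)}_{m_{k+1}}$. Then $G(h^{K})=G(h^{(0)})\otimes\cdots\otimes G(h^{(K)})$, and the full distributions on the $K+2$ slots $(m_{0},m_{1},\dots,m_{K+1})$ factor as
\begin{equation*}
qG(h^{K}) = q\otimes q\otimes G(h^{(1)})\otimes\cdots\otimes G(h^{(K-1)})\otimes p,
\end{equation*}
\begin{equation*}
pG(h^{K}) = p\otimes q\otimes G(h^{(1)})\otimes\cdots\otimes G(h^{(K-1)})\otimes p,
\end{equation*}
using $G(h^{(0)})=q$ and $G(h^{(K)})=p$. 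Let $\Pi$ be the cyclic subsystem permutation $(m_{0},m_{1},\dots,m_{K+1})\mapsto(m_{1},m_{2},\dots,m_{K+1},m_{0})$, which is a special case of the permutations covered by Corollary \ref{jgfkgfj}. A direct computation gives
\begin{equation*}
\Pi\bigl(qG(h^{K})\bigr) = p\otimes q\otimes G(h^{(0)})\otimes G(h^{(1)})\otimes\cdots\otimes G(h^{(K-1)}).
\end{equation*}

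Comparing slot by slot, the first two factors of $\Pi\bigl(qG(h^{K})\bigr)$ agree with those of $pG(h^{K})$, while for $j=2,\dots,K+1$ the two distributions differ only in the replacement $G(h^{(j-1)})\leftrightarrow G(h^{(j-2)})$. By additivity of the relative entropy over tensor products,
\begin{equation*}
D\Bigl(\Pi\bigl(qG(h^{K})\bigr)\Big\Vert pG(h^{K})\Bigr) = \sum_{k=0}^{K-1} D\bigl(G(h^{(k)})\Vert G(h^{(k+1)})\bigr),
\end{equation*}
which is exactly the sum already bounded above. Finally, Corollary \ref{jgfkgfj} (applied with $M=N^{K+2}$ to the permutation $\Pi$) gives
\begin{equation*}
D\Bigl(\bigl(qG(h^{K})\bigr)^{\downarrow}\Big\Vert\bigl(pG(h^{K})\bigr)^{\downarrow}\Bigr)\leq D\Bigl(\Pi\bigl(qG(h^{K})\bigr)\Big\Vert pG(h^{K})\Bigr),
\end{equation*}
which chains with the previous two displays to produce (\ref{rtuzrtu}). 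The only delicate point is the bookkeeping in the third paragraph: one has to verify that the cyclic shift is covered by the class of permutations allowed in Corollary \ref{jgfkgfj} (it is, being a permutation on the joint $N^{K+2}$-element index set) and that the two reference distributions really line up in the first two slots, so that the telescoping sum contains exactly the $K$ consecutive pairs supplied by the discretised path.
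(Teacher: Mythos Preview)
Your proof is correct and follows essentially the same route as the paper's: the same straight-line path $h(x)$ in log-space, the same application of Lemma~\ref{ztioutruor} (exploiting $h_n''\equiv 0$), the same product construction $G(h^{K})=\bigotimes_{k=0}^{K}G(h^{(k)})$, and the same cyclic subsystem permutation combined with Corollary~\ref{jgfkgfj} to reduce the ordered relative entropy to the telescoping sum $\sum_{k=0}^{K-1}D\bigl(G(h^{(k)})\Vert G(h^{(k+1)})\bigr)$. The only differences are cosmetic (your slot labels $m_0,\dots,m_{K+1}$ versus the paper's $n_s,n_0,\dots,n_K$, and your explicit substitution of $G(h^{(0)})=q$, $G(h^{(K)})=p$ into the tensor factors before permuting).
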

\begin{proof}
Since, $q,p \in \mathbb{P}_{+}(N)$, it follows that $\ln q_n$ and $\ln p_n$ are well defined and finite. Thus we can define $h^{i}_n:= -\frac{1}{\beta}\ln q_n$, $h^{f}_n: = -\frac{1}{\beta}\ln p_n$. We connect these two points with the path $h:[0,1]\rightarrow \mathbb{R}^N$ by $h(x):= (1-x)h^{i}+ xh^{f}$.  Lemma \ref{ztioutruor} is applicable, and thus delivers $h^{k:K} := h(\frac{k}{K})$ with $k=0,1,\ldots, K$ such that 
\begin{equation}
\label{dksvjbds}
  \sum_{k=0}^{K-1}D\big( G(h^{k:K})\big\Vert G(h^{k:K})\big) 
 \leq  \frac{1}{K}\max_{n}|\ln \frac{q_n}{p_n}|^2.
\end{equation}
Define 
\begin{equation}
\begin{split}
h^{K}_{n_0,\ldots,n_K}  := & h^{0:K}_{n_0} + \cdots + h^{K:K}_{n_K},\\
&  (n_0,\ldots,n_K)\in \{1,\ldots,N\}^{K+1}.
\end{split}
\end{equation}
Hence $h^K\in \mathbb{R}^{N^{K+1}}$, and
\begin{equation}
G_{n_0,n_1,\ldots,n_K}(h^K) = G_{n_0}(h^{0:K})\cdots G_{n_K}(h^{K:K}).
\end{equation}
On $\mathbb{R}^{N^{K+2}}$ we define the operation $\Pi$ by
\begin{equation}
(\Pi a)_{n_s, n_0,n_1,\ldots,n_{K-1},n_K} := a_{n_0,n_1,\ldots,n_{K-1},n_K,n_s}.
\end{equation}
In other words, $\Pi$ is defined via a cyclic permutation of the subsystems, i.e., via a permutation of the index of the indices of $a$. 
By Corollary \ref{jgfkgfj} it follows that 
\begin{equation}
\begin{split}
 & D\Big(\big(qG(h^{K})\big)^{\downarrow}\Big\Vert \big(pG(h^{K})\big)^{\downarrow} \Big) \\
  & \leq  D\Big(\Pi\big(qG(h^{K})\big)\Big\Vert pG(h^{K})\Big) \\
 & =  D\Big(G(h^{K:K})qG(h^{0:K})\cdots G(h^{K-1:K})\Big\Vert \\
 & \quad \quad \quad\quad pG(h^{0:K})\cdots G(h^{K-1:K})G(h^{K:K})\Big) \\
 & =  D\boldsymbol{(}G(h^{K:K})\Vert p \boldsymbol{)} + D\boldsymbol{(}q\Vert G(h^{0:K})\boldsymbol{)} \\
 & +\sum_{k=0}^{K-1}D\boldsymbol{(}G(h^{k:K})\Vert G(h^{k+1:K})\boldsymbol{)}.
 \end{split}
\end{equation}
By construction $q = G(h^{0:K})$ and $p = G(h^{K:K})$, and by (\ref{dksvjbds}) the statement of the lemma follows.
\end{proof}

\begin{Proposition}
Let $\rho\in\mathcal{S}_{+}(\mathcal{H}_S)$ and let $H_S$ be a Hermitian operator on $\mathcal{H}_S$. Then there exists a sequence of ancillary Hilbert spaces $\mathcal{H}^K_A$ and Hermitian operators $H_A^K$ on $\mathcal{H}^K_A$ such that 
\begin{equation}
\lim_{K\rightarrow +\infty}W^{\mathrm{opt, K}}_{\mathrm{yield}} = \frac{1}{\beta}D\boldsymbol{(}\rho \Vert G(H_S)\boldsymbol{)},
\end{equation}
where
\begin{equation*}
\begin{split}
 W^{\mathrm{opt, K}}_{\mathrm{yield}} := & \Tr\big((H_S+ H_A^K)\rho\otimes G(H_A^K)\big)\\
 & -\inf_{U\in U(\mathcal{H}_S\otimes \mathcal{H}_A^K)} \Tr\big((H_S+H_A^K)U\rho\otimes G(H_A^K) U^{\dagger}\big).
\end{split}
\end{equation*}
\end{Proposition}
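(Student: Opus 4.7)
The plan is to use the formula established in equation (\ref{passoptimHeat}), namely
\begin{equation*}
W^{\mathrm{opt, K}}_{\mathrm{yield}} = \frac{1}{\beta}D\boldsymbol{(}\rho\Vert G(H_S)\boldsymbol{)} - \frac{1}{\beta}D\Big(\lambda^{\downarrow}\big(\rho\otimes G(H_A^K)\big)\Big\Vert\lambda^{\downarrow}\big(G(H_S)\otimes G(H_A^K)\big)\Big),
\end{equation*}
and then to show that the last term can be driven to zero by a judicious choice of $H_A^K$. Since the statement only concerns the first term in the limit, the entire task reduces to controlling this ``spectral defect.'' Concretely, I would pick $\mathcal{H}_A^K$ of dimension $N^{K+1}$ (where $N := \dim \mathcal{H}_S$), and take $H_A^K$ diagonal in some fixed basis with spectrum specified below.

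First, I would reduce to a classical problem in the manner already highlighted in the excerpt. Setting $q := \lambda^{\downarrow}(\rho)$ and $p := \lambda^{\downarrow}\boldsymbol{(}G(H_S)\boldsymbol{)}$, both lie in $\mathbb{P}_{+}(N)$ since $\rho\in\mathcal{S}_+(\mathcal{H}_S)$ by assumption and $G(H_S)$ has full support. The spectra of $\rho\otimes G(H_A^K)$ and $G(H_S)\otimes G(H_A^K)$ are the coordinate-wise products of $q$ and $p$, respectively, with the Gibbs vector $G(h^K)$ of $H_A^K$'s eigenvalues $h^K\in\mathbb{R}^{N^{K+1}}$. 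Hence
\begin{equation*}
D\Big(\lambda^{\downarrow}\big(\rho\otimes G(H_A^K)\big)\Big\Vert\lambda^{\downarrow}\big(G(H_S)\otimes G(H_A^K)\big)\Big) = D\Big(\big(qG(h^K)\big)^{\downarrow}\Big\Vert\big(pG(h^K)\big)^{\downarrow}\Big).
\end{equation*}

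Second, I would invoke Lemma \ref{ptozuirpuztoi} directly. That lemma, whose proof is essentially a cyclic-permutation trick combined with the ``isothermal'' path bound of Lemma \ref{ztioutruor}, furnishes, for each $K\geq 2$, a vector $h^K\in\mathbb{R}^{N^{K+1}}$ such that
\begin{equation*}
D\Big(\big(qG(h^K)\big)^{\downarrow}\Big\Vert\big(pG(h^K)\big)^{\downarrow}\Big) \leq \frac{1}{K}\max_n \Big|\ln\frac{q_n}{p_n}\Big|^2.
\end{equation*}
Because $q,p\in\mathbb{P}_+(N)$, the right-hand side is a finite constant divided by $K$, hence tends to zero. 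Taking $H_A^K$ to be any Hermitian operator with eigenvalues $h^K$, combining with the formula for $W^{\mathrm{opt,K}}_{\mathrm{yield}}$ immediately yields $\lim_{K\to\infty} W^{\mathrm{opt,K}}_{\mathrm{yield}} = \tfrac{1}{\beta}D\boldsymbol{(}\rho\Vert G(H_S)\boldsymbol{)}$.

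The substantive work has already been done upstream: the passivity computation gives the clean decomposition, and Lemma \ref{ptozuirpuztoi} provides exactly the rearrangement-style bound we need. The only ``subtle'' step in the proposition itself is the reduction of the spectral relative entropy to the classical expression, which depends on observing that the ordered spectrum of a tensor product of positive operators is the decreasing rearrangement of the products of eigenvalues — a routine fact but one I would state explicitly. The full-support hypothesis on $\rho$ is what keeps $\max_n|\ln(q_n/p_n)|^2$ finite; without it the bound degenerates, and one would presumably need an additional truncation/continuity argument (analogous to the Gibbs-state approximation mentioned in Section \ref{TimeDependent}) to extend the statement to general $\rho$, but that generality is not asserted here.
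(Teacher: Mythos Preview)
Your proposal is correct and matches the paper's proof essentially step for step: reduce via equation (\ref{passoptimHeat}) to the classical quantity $D\big((qG(h^K))^{\downarrow}\Vert (pG(h^K))^{\downarrow}\big)$, invoke Lemma \ref{ptozuirpuztoi} to obtain the $1/K$ bound, and take $H_A^K$ to have spectrum $h^K$. The paper's version is slightly terser but the logic is identical.
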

\begin{proof}
Let $q$ be the eigenvalues of $\rho$ and $p$ the eigenvalues of $G(H_S)$. Since $\rho\in\mathcal{S}_{+}(\mathcal{H}_S)$ it follows that $q\in \mathbb{P}_+(N)$. As $p$ are the eigenvalues of a Gibbs state, it follows directly that $p\in \mathbb{P}_{+}(N)$. By use of Lemma \ref{ptozuirpuztoi} we can construct a sequence $(h^{K})_{K\in\mathbb{N}}$, $h^{K}\in\mathbb{R}^{N^{K+1}}$, where each $h^K$ satisfies equation (\ref{rtuzrtu}). Let $\{|a^{K}_k\rangle\}_{k=1}^{N^{K+1}}$ be an orthonormal basis in an $N^{K+1}$-dimensional Hilbert space $\mathcal{H}_A^{K}$ and define the Hermitian operator $H_A^{K}:= \sum_{k=1}^{N^{K+1}}h^{K}_{k}|a^{K}_k\rangle\langle a^{K}_k|$. One can see that $D\boldsymbol{(}\lambda^{\downarrow}(\rho\otimes G(H_A^K))\Vert \lambda^{\downarrow}(G(H_S)\otimes G(H_A^K))\boldsymbol{)} = D\boldsymbol{(}(q G(h^K))^{\downarrow} \Vert (p G(h^K))^{\downarrow}\boldsymbol{)}$. Combining this observation with equations (\ref{passoptimHeat}) and  (\ref{rtuzrtu}), we obtain $0 \leq kTD\boldsymbol{(}\rho \Vert G(H_S)\boldsymbol{)} - W^{\mathrm{opt},K}_{\mathrm{yield}} \leq  \frac{1}{K}\max_{n}|\ln\frac{q_n}{p_n}|^2$, which proves the proposition.
\end{proof}


\section{\label{ExplicitEnergyReservoirs} Expected work extraction with an explicit energy reservoir}
In the context of passivity one defines (as we saw in Section \ref{passivity}) the work gain as the decrease of internal energy of the system of interest under unitary operations. Here we introduce an explicit energy reservoir, and thus model the system into which the extracted energy is to be put. Consequently we define the expected work gain as the increase of expected energy in the reservoir. Throughout this section we will exclusively make use of the model introduced in Section \ref{DoublyInfinite}.

\subsection{Expected work extraction in fixed systems}
\subsubsection{Operations on the system and energy reservoir}
\begin{Definition}
\label{Rdefinition}
Let $s>0$, let $\mathcal{H}_S$ be a finite-dimensional Hilbert space, with $H_S\in H_{s}(\mathcal{H}_S)$, and let $\sigma\in\mathcal{S}(\mathcal{H}_E)$.
Let $s z_k$ and $|\psi_k\rangle$ be eigenvalues and corresponding orthonormal eigenvectors of $H_S$. Define
\begin{equation}
\label{Rdef}
\begin{split}
R^{H_S}_{\sigma}(Q)  :=   & \sum_{n,n'=1}^{N}|\psi_{n}\rangle\langle\psi_{n}|Q|\psi_{n'}\rangle\langle\psi_{n'}|  \Tr(\Delta^{z_n-z_{n'}}\sigma),\\
&  \forall Q\in L(\mathcal{H}_S).
\end{split}
\end{equation}
\end{Definition}
The channel $R^{H_S}_{\sigma}$ should not be confused with $\Phi^{S}_{\sigma,U}$ introduced in Section \ref{Sec:inducedch}.

\begin{Lemma}
\label{Runital}
For $s>0$, $H_{S}\in H_{s}(\mathcal{H}_S)$,and $\sigma\in\mathcal{S}(\mathcal{H}_E)$
\begin{equation}
\label{avkdjsbasdk}
R_{\sigma}^{H_S}(\hat{1}) = \hat{1}.
\end{equation}
In other words, the channel is unital and can thus not decrease the von Neumann entropy.
\end{Lemma}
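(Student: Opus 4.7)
The plan is to verify the identity (\ref{avkdjsbasdk}) by direct substitution into Definition \ref{Rdefinition}, and then invoke a standard result to get the entropy statement.

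First I would substitute $Q = \hat{1}_S$ into equation (\ref{Rdef}). The matrix element $\langle\psi_n|\hat{1}_S|\psi_{n'}\rangle$ equals $\delta_{n,n'}$ by the orthonormality of the eigenvectors $\{|\psi_n\rangle\}$ of $H_S$, so the double sum collapses to a single sum over $n$. In each surviving term the power $z_n - z_{n'}$ becomes $0$, and by the convention $\Delta^{0} = \hat{1}_E$ stated just before Definition 1 in Section \ref{Sec:DoublInfModel} we have $\Tr(\Delta^0 \sigma) = \Tr \sigma = 1$, since $\sigma \in \mathcal{S}(\mathcal{H}_E)$ is a density operator. What remains is $\sum_{n=1}^{N}|\psi_n\rangle\langle\psi_n|$, which equals $\hat{1}_S$ because $\{|\psi_n\rangle\}_{n=1}^{N}$ is a complete orthonormal basis of $\mathcal{H}_S$. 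This establishes $R_{\sigma}^{H_S}(\hat{1}) = \hat{1}$.

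For the second assertion, unitality of a completely positive trace-preserving map implies that the map cannot decrease the von Neumann entropy. This is a standard consequence of the monotonicity of the relative entropy under channels: for any state $\rho$ on $\mathcal{H}_S$ (with $\dim\mathcal{H}_S = N$), one has
\begin{equation*}
\ln N - S\boldsymbol{(}R_{\sigma}^{H_S}(\rho)\boldsymbol{)} = D\boldsymbol{(}R_{\sigma}^{H_S}(\rho)\Vert \hat{1}/N\boldsymbol{)} \leq D\boldsymbol{(}\rho \Vert \hat{1}/N\boldsymbol{)} = \ln N - S(\rho),
\end{equation*}
where the inequality uses $R_{\sigma}^{H_S}(\hat{1}/N) = \hat{1}/N$ from the first part.

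There is no real obstacle here; the only subtlety is making sure $R_{\sigma}^{H_S}$ is indeed a channel (completely positive and trace preserving) so that the relative-entropy monotonicity is applicable. That can either be noted as following from the fact that $R_{\sigma}^{H_S}$ can be obtained by composing the isometric embedding $\rho \mapsto \rho \otimes \sigma$ with the unitary $V(U)$ (for a suitable choice, e.g.\ $U = \hat{1}$, though the channel $R$ is not quite the same as the induced channel $\Phi^{S,H}_{\sigma,U}$ and should be justified separately), or proved directly from (\ref{Rdef}) by exhibiting a Kraus decomposition.
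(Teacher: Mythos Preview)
Your direct substitution argument for $R_{\sigma}^{H_S}(\hat{1}) = \hat{1}$ is correct and is exactly the verification the paper has in mind; the paper states the lemma without proof, treating it as immediate from Definition \ref{Rdefinition}.

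Regarding the loose end you flag at the end: your instinct that $R_{\sigma}^{H_S}$ is not literally one of the $\Phi^{S,H}_{\sigma,U}$ is right, so that route does not close cleanly. A clean way to see that $R_{\sigma}^{H_S}$ is a genuine channel is to note that with the unitary $Y := \sum_{n}|\psi_n\rangle\langle\psi_n|\otimes\Delta^{-z_n}$ (the same operator appearing in Section \ref{TranslationPhaseRef}) one has
\[
\Tr_E\big(Y^{\dagger}(\rho\otimes\sigma)Y\big) = \sum_{n,n'}|\psi_n\rangle\langle\psi_n|\rho|\psi_{n'}\rangle\langle\psi_{n'}|\,\Tr(\Delta^{z_n-z_{n'}}\sigma) = R_{\sigma}^{H_S}(\rho),
\]
which exhibits $R_{\sigma}^{H_S}$ as a Stinespring dilation and hence CPTP. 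Equivalently, the Choi matrix of $R_{\sigma}^{H_S}$ is the Hadamard (Schur) multiplier by $[\Tr(\Delta^{z_n-z_{n'}}\sigma)]_{n,n'}$, and this matrix is positive semidefinite because $\sum_{n,n'}\overline{v_n}v_{n'}\Tr(\Delta^{z_n-z_{n'}}\sigma) = \Tr(A^{\dagger}A\,\sigma)\geq 0$ with $A=\sum_n v_n\Delta^{-z_n}$. Either observation legitimizes your relative-entropy monotonicity step; the paper itself, where it invokes this elsewhere, phrases the entropy nondecrease via majorization for unital channels, citing \cite{Bapat,Chefles}.
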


So far in this investigations we have only dealt with expectation values of bounded operators, and have thus not needed to worry about the existence of these expectation values. However, in this section we need to consider expressions like $\Tr(H^{(s)}_E\sigma)$. Since $H^{(s)}_E$ is an unbounded operator, we cannot guarantee that the expectation value is well defined for all elements $\sigma\in\mathcal{S}(\mathcal{H}_E)$. One way of dealing with this would be to restrict $\mathcal{S}(\mathcal{H}_E)$ to elements for which the product $H^{(s}_E\sigma$ is a trace class operator ($Q$ is trace class if $\Tr\sqrt{Q^{\dagger}Q}< +\infty$). In essence, we need a restriction such that $\langle n|\sigma|n'\rangle$ converges sufficiently fast to zero as $n,n'\rightarrow \pm \infty$. 
However, for the purposes of this investigation there appears to be no strong reason to consider this technical issue in detail, and for this reason we do in the following merely let $\mathcal{S}^{*}(\mathcal{H}_E)$ denote some restriction of $\mathcal{S}(\mathcal{H}_E)$, where we (in one way or another) have made sure that all the relevant expectation values make sense. Note that all the operations we perform on the energy reservoir merely involve the interactions with a finite-dimensional space, and hence there is no reason to expect that these operations brings the state out of (a reasonably defined) $\mathcal{S}^{*}(\mathcal{H}_E)$.

\begin{Lemma}
\label{hjbjfghbfbghj}
Let $s>0$. Let $\mathcal{H}_S$ be a finite-dimensional Hilbert space, $H_S\in H_{s}(\mathcal{H}_S)$, $\rho\in\mathcal{S}(\mathcal{H}_S)$, and $\sigma\in\mathcal{S}^{*}(\mathcal{H}_E)$. Then
\begin{equation}
\begin{split}
& \Tr\big([\hat{1}_S\otimes H_{E}^{(s)}]V(U)\rho \otimes \sigma V(U)^{\dagger}\big) \\
 & =   \Tr(H_{E}^{(s)}\sigma ) + \Tr(H_S\rho) \\
  & \quad\quad- \Tr\big( U^{\dagger}H_SU R_{\sigma}^{H_S}(\rho) \big),
\end{split}
\end{equation}
for all $U\in \mathbb{U}(\mathcal{H}_S)$ and all $\rho\in\mathcal{S}(\mathcal{H}_S)$, where $V$ is defined as in Lemma \ref{nvmxcnv}, and $R_{\sigma}^{H_S}$ is as in Definition \ref{Rdefinition}.
\end{Lemma}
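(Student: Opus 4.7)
The plan is to exploit the three structural features already developed for this model: energy conservation of $V(U)$, the homomorphism property of the map $V$ in Lemma \ref{nvmxcnv}, and the explicit matrix form of $R^{H_S}_\sigma$ in Definition \ref{Rdefinition}. The technical point about $H_E^{(s)}$ being unbounded is absorbed into the definition of $\mathcal{S}^{*}(\mathcal{H}_E)$, so I will proceed as if all expectation values in sight are finite.

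First, since $V(U)\in\mathbb{UT}(H_S+H_E^{(s)})$ commutes with the total Hamiltonian,
\begin{equation*}
\Tr\big([H_S\otimes\hat{1}_E+\hat{1}_S\otimes H_E^{(s)}]V(U)\rho\otimes\sigma V(U)^\dagger\big)=\Tr(H_S\rho)+\Tr(H_E^{(s)}\sigma).
\end{equation*}
Hence it suffices to show that
\begin{equation*}
\Tr\big([H_S\otimes\hat{1}_E]V(U)\rho\otimes\sigma V(U)^\dagger\big)=\Tr\big(U^\dagger H_S U\, R^{H_S}_\sigma(\rho)\big).
\end{equation*}

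The key observation for the second step is that $V(H_S)=H_S\otimes\hat{1}_E$. This follows from the defining formula (\ref{bijection}) applied to $H_S$: the matrix elements $\langle\psi_n|H_S|\psi_{n'}\rangle=sz_n\delta_{nn'}$ collapse the double sum, leaving only the $n=n'$ diagonal, where $\Delta^{0}=\hat{1}_E$. Combining this with the homomorphism properties (\ref{preservation}) from Lemma \ref{nvmxcnv} yields
\begin{equation*}
V(U)^\dagger [H_S\otimes\hat{1}_E]\,V(U)=V(U^\dagger)V(H_S)V(U)=V(U^\dagger H_S U),
\end{equation*}
so cyclicity of the trace reduces the task to verifying $\Tr\bigl(V(A)\,\rho\otimes\sigma\bigr)=\Tr\bigl(A\,R^{H_S}_\sigma(\rho)\bigr)$ for $A:=U^\dagger H_S U$.

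For the last step I will simply expand $V(A)$ via (\ref{bijection}) and take the partial trace against $\rho\otimes\sigma$, which produces
\begin{equation*}
\Tr\bigl(V(A)\rho\otimes\sigma\bigr)=\sum_{n,n'}\langle\psi_n|A|\psi_{n'}\rangle\langle\psi_{n'}|\rho|\psi_n\rangle\,\Tr(\Delta^{z_{n'}-z_n}\sigma),
\end{equation*}
and then expand $R^{H_S}_\sigma(\rho)$ directly from (\ref{Rdef}) and take the trace against $A$; the two expressions coincide after relabelling $(a,b)\leftrightarrow(n',n)$. Assembling the three steps gives the claimed identity. There is no real obstacle here, since $V$ and $R^{H_S}_\sigma$ are tailored to make this computation transparent; the only mild care needed is to note that even though $H_E^{(s)}$ is unbounded, the subtraction $\Tr([H_S+H_E^{(s)}]\cdots)-\Tr([H_S\otimes\hat{1}_E]\cdots)$ is well defined on $\mathcal{S}^{*}(\mathcal{H}_E)$, so the rearrangement is legitimate.
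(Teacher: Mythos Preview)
Your proof is correct and follows essentially the same route as the paper: use energy conservation of $V(U)$ to trade $\hat{1}_S\otimes H_E^{(s)}$ for $H_S\otimes\hat{1}_E$, then identify the resulting expression with $\Tr(U^\dagger H_S U\,R^{H_S}_\sigma(\rho))$. The only difference is a pleasant repackaging of the second step: the paper expands $\Tr([H_S\otimes\hat{1}_E]V(U)\rho\otimes\sigma V(U)^\dagger)$ directly in matrix elements via (\ref{bijection}), whereas you first note $V(H_S)=H_S\otimes\hat{1}_E$ and invoke the homomorphism property (\ref{preservation}) to get $V(U)^\dagger[H_S\otimes\hat{1}_E]V(U)=V(U^\dagger H_S U)$ in one line before expanding. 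Your version makes the algebraic structure of $V$ do more of the work and is arguably cleaner, but the two arguments are the same in substance.
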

\begin{proof}
The first part of the proof is to rewrite $\Tr([\hat{1}_S\otimes H_{E}^{(s)}]V(U)\rho \otimes \sigma V(U)^{\dagger}) 
 =  \Tr([H_S\otimes \hat{1}_E + \hat{1}_S\otimes H_{E}^{(s)}]\rho \otimes \sigma ) 
 - \Tr([H_S\otimes \hat{1}_E]V(U)\rho \otimes \sigma V(U)^{\dagger})$, where we have used the fact that $V(U)$ is energy conserving. By using equation (\ref{bijection}) in Lemma \ref{nvmxcnv}, and use the fact that $H_S$ is diagonal in the energy eigenbasis $\{|\psi_n\rangle\}_{n=1}^{N}$, one can see that 
 $\Tr([H_S\otimes \hat{1}_E]V(U)\rho \otimes \sigma V(U)^{\dagger}) = \sum_{k,l,l'}s z_{l'}\langle\psi_{l}|U^{\dagger}|\psi_{l'}\rangle \langle\psi_{l'}|U|\psi_{k}\rangle\langle\psi_{k}|\rho |\psi_{l}\rangle\Tr(\Delta^{z_{k}-z_{l}}\sigma)$. The latter can be rewritten as $\Tr( U^{\dagger}H_SU R_{\sigma}^{H_S}(\rho))$.
\end{proof}

Given an Hermitian operator $Q$, we let $\lambda^{\downarrow}(Q)$ denote the collection of eigenvalues of $Q$ (counted with multiplicities) ordered non-increasingly, while $\lambda^{\uparrow}(Q)$ denotes the non-decreasing ordering. The following lemma is a direct consequence of Theorem 4.3.53 on p.~255 in \cite{MatrixAnalysisHJ}.
\begin{Lemma}
\label{optimalUnitary}
Let $A,B$ be Hermitian operators on some finite-dimensional Hilbert space $\mathcal{H}$. Then
\begin{equation}
\inf_{U\in U(\mathcal{H})}\Tr(U^{\dagger}AUB) = \sum_{k}\lambda^{\downarrow}_{k}(A)\lambda^{\uparrow}_{k}(B).
\end{equation}
\end{Lemma}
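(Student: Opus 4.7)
The plan is to reduce the optimization to a classical rearrangement problem by exploiting the freedom to diagonalize $A$ and $B$ in convenient bases. Let me spectrally decompose $A = \sum_{k} \lambda^{\downarrow}_{k}(A) |a_k\rangle\langle a_k|$ in an orthonormal eigenbasis $\{|a_k\rangle\}$ corresponding to its non-increasingly ordered eigenvalues, and $B = \sum_{l} \lambda^{\uparrow}_{l}(B) |b_l\rangle\langle b_l|$ in an eigenbasis $\{|b_l\rangle\}$ corresponding to its non-decreasingly ordered eigenvalues. For any $U \in U(\mathcal{H})$, a direct expansion yields
\begin{equation}
\Tr(U^{\dagger}AUB) = \sum_{k,l} \lambda^{\downarrow}_{k}(A)\, \lambda^{\uparrow}_{l}(B)\, P_{kl}(U),
\end{equation}
where $P_{kl}(U) := |\langle a_k|U|b_l\rangle|^2$.

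The key structural fact is that the matrix $P(U) = [P_{kl}(U)]_{k,l}$ is doubly stochastic, since its rows and columns each sum to one by unitarity of $U$. By the Birkhoff--von Neumann theorem, the set of doubly stochastic matrices is the convex hull of the permutation matrices. Because $\Tr(U^{\dagger}AUB)$ depends on $U$ only through the matrix $P(U)$ and is a linear function of its entries, the infimum over $U$ is bounded below by the minimum of the linear functional over permutation matrices:
\begin{equation}
\inf_{U} \Tr(U^{\dagger}AUB) \geq \min_{\pi} \sum_{k} \lambda^{\downarrow}_{k}(A)\, \lambda^{\uparrow}_{\pi(k)}(B).
\end{equation}

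The classical rearrangement inequality then tells us that the minimum on the right is attained by the identity permutation, giving $\sum_{k} \lambda^{\downarrow}_{k}(A)\, \lambda^{\uparrow}_{k}(B)$ as a lower bound. For the matching upper bound, it suffices to exhibit a specific $U$ that realizes this value: take $U$ to be the unitary defined by $U|b_k\rangle = |a_k\rangle$ for all $k$. This choice makes $U^{\dagger}AU$ diagonal in the basis $\{|b_k\rangle\}$ with eigenvalues $\lambda^{\downarrow}_{k}(A)$ paired against eigenvalues $\lambda^{\uparrow}_{k}(B)$ of $B$, yielding exactly $\sum_{k} \lambda^{\downarrow}_{k}(A)\,\lambda^{\uparrow}_{k}(B)$.

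The only real subtlety is the rearrangement step, but this is standard: given $a_1 \geq \cdots \geq a_N$ and $b_1 \leq \cdots \leq b_N$, one verifies by a two-element swap argument that any non-identity pairing can be weakly improved (decreased) by swapping a misordered pair, so the identity pairing is optimal. Since the author explicitly cites Theorem 4.3.53 in Horn--Johnson, this entire argument is already packaged in that reference, and the proof could simply invoke it directly rather than redo the Birkhoff decomposition.
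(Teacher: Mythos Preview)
Your proof is correct. The paper itself does not give a proof of this lemma at all---it simply states that the result is a direct consequence of Theorem~4.3.53 in Horn--Johnson, \emph{Matrix Analysis}. Your Birkhoff--von~Neumann plus rearrangement argument is precisely the standard way that result is established, and you have already noted this connection in your final paragraph; so your proposal and the paper's (implicit) approach coincide, with yours simply unpacking the cited theorem.
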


\begin{Lemma}
\label{ndjkfbnv}
Let $s,\beta>0$ and $M\in\mathbb{N}$ be given. 
Let $\mathcal{H}_S$ be a finite-dimensional Hilbert space, and $H_S\in H_{s}(\mathcal{H}_{S})$. Let $\rho\in\mathcal{S}(\mathcal{H}_S)$ and $\sigma\in\mathcal{S}^{*}(\mathcal{H}_E)$. Then
\begin{equation}
\label{vdfjkbnvkadjf}
\begin{split}
& \sup_{U\in U(H_S)}\Tr([\hat{1}_S\otimes H_E^{(s)}]V(U)\rho\otimes \sigma V(U)^{\dagger})   \\
& \quad =  \Tr(H_E^{(s)}\sigma) \\
 &\quad\quad  + \frac{1}{\beta}D\boldsymbol{(}\rho\Vert G(H_S)\boldsymbol{)} \\
 &\quad\quad  - \frac{1}{\beta}\Big[S\boldsymbol{(}R^{H_S}_{\sigma}(\rho)\boldsymbol{)} -S(\rho)\Big]\\
 &\quad\quad -\frac{1}{\beta}D\Big(\lambda^{\downarrow}\big(R^{H_S}_{\sigma}(\rho)\big) \Big\Vert  \lambda^{\downarrow}\big(G(H_S)\big) \Big).
 \end{split}
\end{equation}
Furthermore, there exists an element in ${U}(\mathcal{H}_S)$ that achieves the above maximum, i.e., the `$\sup$' in the above equation can be replaced by  `$\max$'.
\end{Lemma}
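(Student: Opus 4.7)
The plan is to reduce the optimization to the standard problem of minimizing $\Tr(U^\dagger A U B)$ over unitaries, and then to rewrite the result using the identity $\beta H_S = -\ln G(H_S) - \ln Z(H_S)$ to produce the relative-entropy expression on the right-hand side of (\ref{vdfjkbnvkadjf}).

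First I would apply Lemma \ref{hjbjfghbfbghj}, which expresses the expected reservoir energy as
\begin{equation*}
\Tr\big([\hat{1}_S\otimes H_E^{(s)}]V(U)\rho\otimes\sigma V(U)^\dagger\big) = \Tr(H_E^{(s)}\sigma) + \Tr(H_S\rho) - \Tr\big(U^\dagger H_S U\, R_\sigma^{H_S}(\rho)\big).
\end{equation*}
Maximizing over $U$ therefore amounts to minimizing $\Tr\big(U^\dagger H_S U\, R\big)$, where $R:=R^{H_S}_\sigma(\rho)$. Before invoking Lemma \ref{optimalUnitary} I would confirm that $R$ is a genuine density operator. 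Hermiticity and trace preservation of $R^{H_S}_\sigma$ follow immediately from $\Tr(\Delta^{a}\sigma)^* = \Tr(\Delta^{-a}\sigma)$ together with $\Tr(\hat{1}_E\,\sigma)=1$. Positivity follows from the Schur product theorem: in the $H_S$-eigenbasis, $R^{H_S}_\sigma$ acts as a Hadamard multiplier by the matrix $M_{n,n'} := \Tr(\Delta^{z_n-z_{n'}}\sigma)$, and $M$ is positive semi-definite since $\sum_{n,n'}\alpha_n^*\alpha_{n'}M_{n,n'} = \Tr\bigl[\bigl(\sum_n\alpha_n\Delta^{z_n}\bigr)^\dagger\sigma\bigl(\sum_n\alpha_n\Delta^{z_n}\bigr)\bigr]\geq 0$, and the Hadamard product of two positive semi-definite matrices is positive semi-definite.

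Lemma \ref{optimalUnitary} then yields $\inf_U \Tr(U^\dagger H_S U R) = \sum_k \lambda_k^\downarrow(H_S)\lambda_k^\uparrow(R)$, attained by any unitary that simultaneously diagonalizes $H_S$ and $R$ in oppositely ordered eigenbases; existence of such a $U$ is guaranteed by compactness of $U(\mathcal{H}_S)$ together with continuity of $U\mapsto \Tr(U^\dagger H_S U R)$, which justifies replacing $\sup$ with $\max$. For the algebraic step, the identity $H_S = F(H_S)\hat{1}_S - \frac{1}{\beta}\ln G(H_S)$ yields both $\Tr(H_S\rho) = F(H_S) - \frac{1}{\beta}\Tr(\rho\ln G(H_S))$ and $\lambda_k^\downarrow(H_S) = F(H_S) - \frac{1}{\beta}\ln\lambda_k^\uparrow(G(H_S))$, so that
\begin{equation*}
\sum_k \lambda_k^\downarrow(H_S)\lambda_k^\uparrow(R) = F(H_S) - \frac{1}{\beta}\sum_k \lambda_k^\downarrow(R)\ln\lambda_k^\downarrow(G(H_S)),
\end{equation*}
where in the last step I simultaneously reverse the order of summation on both factors, turning $\uparrow$ into $\downarrow$ in each.

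Combining these rewritings, the $F(H_S)$ terms cancel. It then remains to recognize $-\Tr(\rho\ln G(H_S)) = D(\rho\Vert G(H_S)) + S(\rho)$ and $-\sum_k\lambda_k^\downarrow(R)\ln\lambda_k^\downarrow(G(H_S)) = D\bigl(\lambda^\downarrow(R)\Vert\lambda^\downarrow(G(H_S))\bigr) + S(R)$, and to collect terms, which delivers (\ref{vdfjkbnvkadjf}) after a straightforward rearrangement. I do not anticipate any genuine obstacle here; the principal care needed is in correctly matching up the $\uparrow/\downarrow$ orderings in the eigenvalue rearrangement and in verifying that $R^{H_S}_\sigma(\rho)$ is a density operator so that its eigenvalues may meaningfully enter a classical relative entropy.
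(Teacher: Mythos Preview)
Your proof is correct and follows essentially the same route as the paper: apply Lemma~\ref{hjbjfghbfbghj}, minimize $\Tr(U^\dagger H_S U R^{H_S}_\sigma(\rho))$ via Lemma~\ref{optimalUnitary}, and rewrite using $H_S = F(H_S)\hat{1}_S - \frac{1}{\beta}\ln G(H_S)$. Your argument is slightly more complete in two respects: you verify that $R^{H_S}_\sigma(\rho)$ is a density operator via the Schur product theorem (the paper simply calls $R^{H_S}_\sigma$ a channel without justification), and you argue attainment via compactness, whereas the paper writes down the maximizing unitary $\tilde U=\sum_n|\psi_n^\downarrow\rangle\langle\phi_n^\downarrow|$ explicitly.
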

Compared to (\ref{PassivityOptimum}) we recognize two new features in equation (\ref{vdfjkbnvkadjf}). Firstly, the term $S\boldsymbol{(}R^{H_S}_{\sigma}(\rho)\boldsymbol{)} -S(\rho)$. Secondly, that we have $D\boldsymbol{(}\lambda^{\downarrow}(R^{H_S}_{\sigma}(\rho)) \Vert  \lambda^{\downarrow}(G(H_S)) \boldsymbol{)}$, rather than $D\boldsymbol{(}\lambda^{\downarrow}(\rho) \Vert  G(H_S) \boldsymbol{)}$. Both of these changes reflect the fact that the present model explicitly takes into account coherence. 
Due to Eq.~(\ref{avkdjsbasdk}) in Lemma \ref{Runital} it follows that $S\boldsymbol{(}R^{H_S}_{\sigma}(\rho)\boldsymbol{)} \geq S(\rho)$. Furthermore by the properties of relative entropy we have $D(\cdot\Vert \cdot)\geq 0$. Hence, the work yield is bounded from above by $kTD\boldsymbol{(}\rho\Vert G(H_S)\boldsymbol{)}$. As we shall see in Section  \ref{Sec:cohinworkextr}, the term $S\boldsymbol{(}R^{H_S}_{\sigma}(\rho)\boldsymbol{)} -S(\rho)$ is determined by the relation between the degree of coherence in the energy reservoir, and the degree to which the state $\rho$ contains superposition between   energy eigenspaces. Analogously as to what we did in Section \ref{SimulationPass}, the term $D\boldsymbol{(}\lambda^{\downarrow}(R^{H_S}_{\sigma}(\rho)) \Vert  \lambda^{\downarrow}(G(H_S)) \boldsymbol{)}$ can be made small by introducing a suitable ancillary Gibbs state. 

\begin{proof}[Proof of Lemma \ref{ndjkfbnv}.]
By combining Lemmas \ref{hjbjfghbfbghj} and \ref{optimalUnitary} we obtain
$\sup_{U}\Tr([\hat{1}_S\otimes H_E^{(s)}]V(U)\rho\otimes \sigma V(U)^{\dagger})  = \Tr(H_E^{(s)}\sigma) + \Tr(H_S\rho) -\sum_{k}\lambda^{\uparrow}_{k}(H_S)\lambda_{k}^{\downarrow} \boldsymbol{(}R^{H_S}_{\sigma}(\rho)\boldsymbol{)}$. Next we use $H_S = F(H_S)-kT\ln G(H_S)$, and $\lambda^{\uparrow}_{k}(H_S) = F(H_S)-kT\ln \lambda_k^{\uparrow}\boldsymbol{(}G(H_S)\boldsymbol{)}$. The rest of the proof is merely a rearrangement of the terms.

Finally we should show that the `$\sup$' can be replaced by `$\max$'.
Let $G_n^{\downarrow}(H_S)$ be the eigenvalues of $G(H_S)$ ordered non-increasingly, and let $|\psi^{\downarrow}_n\rangle$ be corresponding eigenvectors. Let $|\phi^{\downarrow}_n\rangle$ be eigenvectors of $R^{H_S}_{\sigma}(\rho)$ corresponding to the eigenvalues $\lambda^{\downarrow}_n\boldsymbol{(}R^{H_S}_{\sigma}(\rho)\boldsymbol{)}$. The unitary operator $\tilde{U}:=\sum_{n=1}^{N}|\psi^{\downarrow}_n\rangle\langle\phi^{\downarrow}_n|$ achieves the right hand side of equation (\ref{vdfjkbnvkadjf}).
\end{proof}

\subsubsection{Adding a heat bath}
Analogously to what we did in Section \ref{passivity}, we here model the heat bath by appending an ancillary system in a  Gibbs state. As in Section \ref{passivity} we assume that the system $S$ and the ancilla $A$ are non-interacting, i.e., the total Hamiltonian is $H_{SA} = H_{S}\otimes\hat{1}_A + \hat{1}_{S}\otimes H_{A}$. We are allowed to freely choose the Hamiltonian $H_A$ of the ancillary system, up to the condition that the eigenvalues have to be multiples of the energy-level spacing $s$ in the energy reservoir, i.e., $H_A\in H_s(\mathcal{H}_A)$.
We furthermore assume that the ancillary system always starts in the Gibbs state $G(H_A)$, and that  the total initial state of the joint system is $\rho\otimes G(H_A)$. 

\begin{Lemma}
\label{RonrhoG}
\item Let $s>0$ and let $\mathcal{H}_S$ and $\mathcal{H}_A$ be finite-dimensional Hilbert spaces and $H_S\in H_s(\mathcal{H}_S)$,  $H_A\in H_{s}(\mathcal{H}_{A})$. Let $\sigma\in\mathcal{S}^{*}(\mathcal{H}_{E})$,  then
\begin{equation}
\label{vabdfkjbvk}
R_{\sigma}^{H_S + H_A}\boldsymbol{(}\rho\otimes G(H_A)\boldsymbol{)} = R_{\sigma}^{H_S}(\rho)\otimes G(H_A),
\end{equation}
for every $\rho\in\mathcal{S}(\mathcal{H}_S)$.
\end{Lemma}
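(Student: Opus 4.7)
The plan is to unfold the definition of $R$ (equation (\ref{Rdef})) for the combined Hamiltonian $H_S + H_A$ and exploit the fact that $G(H_A)$ is diagonal in the energy eigenbasis of $H_A$. Since both $H_S\in H_s(\mathcal{H}_S)$ and $H_A\in H_s(\mathcal{H}_A)$, the joint Hamiltonian $H_S+H_A$ lies in $H_s(\mathcal{H}_S\otimes\mathcal{H}_A)$, with eigenvectors $|\psi_n,a_m\rangle := |\psi_n\rangle\otimes|a_m\rangle$ and eigenvalues $s(z_n+z_m^A)$, where $|\psi_n\rangle$, $sz_n$ are the eigenvectors and eigenvalues of $H_S$, and $|a_m\rangle$, $sz_m^A$ the corresponding data for $H_A$. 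So $R^{H_S+H_A}_\sigma$ is simply defined using this joint eigenbasis.

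The main calculation is then a direct substitution. Writing out
\begin{equation*}
\begin{split}
&R^{H_S+H_A}_\sigma\boldsymbol{(}\rho\otimes G(H_A)\boldsymbol{)} \\
&= \sum_{n,n',m,m'} |\psi_n,a_m\rangle\langle\psi_n,a_m|\rho\otimes G(H_A)|\psi_{n'},a_{m'}\rangle\langle\psi_{n'},a_{m'}| \\
&\quad\times \Tr\boldsymbol{(}\Delta^{z_n+z_m^A-z_{n'}-z_{m'}^A}\sigma\boldsymbol{)},
\end{split}
\end{equation*}
I would use the fact that $\langle a_m|G(H_A)|a_{m'}\rangle = G_m(H_A)\delta_{m,m'}$ (since $G(H_A) = e^{-\beta H_A}/Z(H_A)$ is diagonal in the eigenbasis of $H_A$). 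This collapses the double sum over $m,m'$ to a single sum over $m=m'$, which also forces $z_m^A-z_{m'}^A=0$ in the exponent of $\Delta$.

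What remains is
\begin{equation*}
\begin{split}
&\sum_{n,n'} |\psi_n\rangle\langle\psi_n|\rho|\psi_{n'}\rangle\langle\psi_{n'}|\Tr(\Delta^{z_n-z_{n'}}\sigma) \\
&\quad\otimes \sum_{m} G_m(H_A)|a_m\rangle\langle a_m|,
\end{split}
\end{equation*}
which is precisely $R^{H_S}_\sigma(\rho)\otimes G(H_A)$ by Definition \ref{Rdefinition} and the eigendecomposition of $G(H_A)$.

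I do not anticipate any real obstacle here: the statement is essentially a factorization property that follows structurally from $G(H_A)$ being diagonal in the local energy eigenbasis and from $R^{H_S+H_A}_\sigma$ depending on the joint Hamiltonian only through its eigenbasis and eigenvalue differences. The mild point worth flagging is just confirming that $H_S+H_A\in H_s(\mathcal{H}_S\otimes\mathcal{H}_A)$ so that $R^{H_S+H_A}_\sigma$ is even defined; this is immediate since integer sums of integer multiples of $s$ remain integer multiples of $s$.
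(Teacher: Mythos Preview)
Your proof is correct and follows essentially the same route as the paper's own proof: write out $R^{H_S+H_A}_\sigma$ in the joint eigenbasis, use the diagonality of $G(H_A)$ to collapse the $m,m'$ sum to $m=m'$, and read off the factorization.
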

\begin{proof}
We let $h^S_n = sz^S_n$ and $|\psi^S_n\rangle$ be the eigenvalues and eigenvectors of $H_S$, and similarly let $h^A_m = sz^A_m$ and $|\psi^A_m\rangle$ be eigenvalues and eigenvectors of $H_A$. Hence, $h_n^S + h_m^A$ and $|\psi^S_n,\psi^A_m\rangle := |\psi^S_n\rangle|\psi^A_m\rangle$ are the eigenvalues and eigenvectors of $H_S + H_A$. The channel $R^{H_S+H_A}_{\sigma}$ can be written as
\begin{equation}
\begin{split}
 & R^{H_S+H_A}_{\sigma}(Q)   \\
 & = \sum_{n,n'=1}^{N}\sum_{m,m'=1}^{N}|\psi^S_{n},\psi^A_m\rangle\langle\psi^S_{n},\psi^A_m|Q|\psi^S_{n'},\psi^A_{m'}\rangle\langle\psi^S_{n'},\psi^A_{m'}| \\
 &   \quad\quad \times\Tr(\Delta^{z^S_{n}-z^S_{n'}+z^A_{m}-z^A_{m'}}\sigma).
\end{split}
\end{equation}
for all $Q\in L(\mathcal{H}_S\otimes\mathcal{H}_A)$. For $Q:=\rho\otimes G(H_A)$ we obtain the right hand side of (\ref{vabdfkjbvk}) by using the fact that $G(H_A)$ is diagonal in $\{|\psi^A_m\rangle\}_m$.
\end{proof}

By a direct combination of Lemma \ref{ndjkfbnv} and \ref{RonrhoG} we obtain the following.
\begin{Proposition}
\label{Optimum}
Let $s>0$, $\mathcal{H}_S$ and $\mathcal{H}_{A}$ be finite-dimensional, and $H_S\in H_{s}(\mathcal{H}_{S})$,  $H_{A}\in H_{s}(\mathcal{H}_{A})$. Let $\rho\in\mathcal{S}(\mathcal{H}_S)$ and $\sigma\in\mathcal{S}^{*}(\mathcal{H}_E)$. Then
\begin{equation}
\begin{split}
  &\sup_{U}\Tr\big([\hat{1}_S\otimes\hat{1}_A\otimes H_E^{(s)}]V(U)\rho\otimes G(H_A)\otimes \sigma V(U)^{\dagger}\big)  \\
 &\quad  =   \Tr(H_E^{(s)}\sigma) \\
 & \quad \quad+ \frac{1}{\beta}D\boldsymbol{(}\rho\Vert G(H_S)\boldsymbol{)} \\
 & \quad\quad - \frac{1}{\beta}\Big[S\boldsymbol{(}R^{H_S}_{\sigma}(\rho)\boldsymbol{)}-S(\rho)\Big] \\
 & \quad\quad -\frac{1}{\beta}D\Big(\lambda^{\downarrow}\big(R^{H_S}_{\sigma}(\rho)\otimes G(H_A)\big) \Big\Vert  \lambda^{\downarrow}\big(G(H_S)\otimes G(H_A)\big) \Big),
 \end{split}
\end{equation}
where the supremum is taken over all elements $U\in U(\mathcal{H}_S)$.

Furthermore, there exists an element in $U(\mathcal{H}_S)$ that achieves the above equality, i.e., `$\sup$'  can be replaced by  `$\max$'.
\end{Proposition}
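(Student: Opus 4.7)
The plan is essentially to reduce Proposition \ref{Optimum} to Lemma \ref{ndjkfbnv} by treating the composite $SA$ as the ``system'' and invoking Lemma \ref{RonrhoG} to simplify the resulting expression. First I would apply Lemma \ref{ndjkfbnv} verbatim, but with $\mathcal{H}_S$ replaced by $\mathcal{H}_S\otimes\mathcal{H}_A$, $H_S$ replaced by $H_S+H_A$ (which lies in $H_s(\mathcal{H}_S\otimes\mathcal{H}_A)$ since both summands do), and $\rho$ replaced by $\rho\otimes G(H_A)$. This is legitimate because the supremum in Proposition \ref{Optimum} ranges over all unitaries on $\mathcal{H}_S\otimes\mathcal{H}_A$, matching the space over which Lemma \ref{ndjkfbnv} optimises, and the existence of a maximiser transfers directly.

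The output of this substitution is an expression involving $R^{H_S+H_A}_\sigma\boldsymbol{(}\rho\otimes G(H_A)\boldsymbol{)}$, which by Lemma \ref{RonrhoG} equals $R^{H_S}_\sigma(\rho)\otimes G(H_A)$. Next I would simplify each of the resulting three ``correction'' terms in turn. For the relative entropy $D\boldsymbol{(}\rho\otimes G(H_A)\Vert G(H_S+H_A)\boldsymbol{)}$, I would use $G(H_S+H_A)=G(H_S)\otimes G(H_A)$ together with the additivity identity $D(\rho_1\otimes\rho_2\Vert\eta_1\otimes\eta_2)=D(\rho_1\Vert\eta_1)+D(\rho_2\Vert\eta_2)$, so that the ancilla contribution $D\boldsymbol{(}G(H_A)\Vert G(H_A)\boldsymbol{)}=0$ drops out and only $D\boldsymbol{(}\rho\Vert G(H_S)\boldsymbol{)}$ survives. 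For the entropy-difference term, additivity of the von Neumann entropy on product states gives
\begin{equation*}
S\boldsymbol{(}R^{H_S}_\sigma(\rho)\otimes G(H_A)\boldsymbol{)}-S\boldsymbol{(}\rho\otimes G(H_A)\boldsymbol{)}=S\boldsymbol{(}R^{H_S}_\sigma(\rho)\boldsymbol{)}-S(\rho),
\end{equation*}
since the $S\boldsymbol{(}G(H_A)\boldsymbol{)}$ contributions cancel. The spectrum-ordered relative entropy term is simply left in its product form $D\boldsymbol{(}\lambda^\downarrow(R^{H_S}_\sigma(\rho)\otimes G(H_A))\Vert\lambda^\downarrow(G(H_S)\otimes G(H_A))\boldsymbol{)}$, exactly as stated.

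Collecting these simplifications assembles the right-hand side claimed in the proposition. The existence of a maximising $U$ carries over automatically from Lemma \ref{ndjkfbnv}: the argument there produces an explicit unitary that simultaneously diagonalises $H_S+H_A$ and $R^{H_S+H_A}_\sigma(\rho\otimes G(H_A))$ in compatible orderings, and nothing in that construction requires the system to be indecomposable. I expect no genuine obstacle here; the only thing to be careful about is the technical point that expectation values of the unbounded $H^{(s)}_E$ remain well defined throughout, but this is handled by the standing assumption $\sigma\in\mathcal{S}^*(\mathcal{H}_E)$ and the fact that $V(U)$ only couples $\sigma$ to a finite-dimensional system, so the post-interaction state stays within $\mathcal{S}^*(\mathcal{H}_E)$ after tracing out $SA$.
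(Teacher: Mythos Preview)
Your proposal is correct and matches the paper's approach exactly: the paper simply says that Proposition~\ref{Optimum} follows ``by a direct combination of Lemma~\ref{ndjkfbnv} and~\ref{RonrhoG},'' and you have spelled out precisely that combination, including the additivity arguments for $D$ and $S$ that make the ancilla contributions cancel. The only cosmetic point is that the statement's ``$U\in U(\mathcal{H}_S)$'' is evidently a typo for $U\in U(\mathcal{H}_S\otimes\mathcal{H}_A)$, which you have correctly read.
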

In comparison with (\ref{passoptimHeat}) we see that we again can affect the term $D\boldsymbol{(}\lambda^{\downarrow}(\cdot) \Vert  \lambda^{\downarrow}(\cdot)\boldsymbol{)}$ by adding an ancillary Gibbs state, while the other terms [including $S\boldsymbol{(}R^{H_S}_{\sigma}(\rho)\boldsymbol{)}-S(\rho)$] remain unaffected.

\subsection{\label{Sec:cohinworkextr}Coherence in expected work extraction}
In light of Section  \ref{Sec:approxEnergymix} it perhaps comes as no surprise that the level of coherence in the energy reservoir affects our ability to extract energy from the system. Here we first consider some special cases, and then proceed to bound $S\boldsymbol{(}R^{H_S}_{\sigma}(\rho)\boldsymbol{)}-S(\rho)$ in terms of the coherence in the energy reservoir.

\subsubsection{\label{diagdiag}Diagonal system states and diagonal energy-reservoir states}

Let $H_S = \sum_{l}\tilde{h}_{l}P_{l}$ be the eigenvalue decomposition of $H_S$ where $P_{l}$ is the projector onto the eigenspace corresponding to eigenvalue $\tilde{h}_l$, where we  let $\tilde{h}_l$ denote the \emph{distinct} eigenvalues of $H_S$, i.e., $\tilde{h}_l\neq \tilde{h}_{l'}$ if $l\neq l'$ (as opposed to $h_n$ which is the complete list of eigenvalues including repetitions). We define the following operation
\begin{equation}
[Q]_{H} := \sum_{l}P_{l}QP_{l},\quad \forall Q\in L(\mathcal{H}).
\end{equation}
In words $[Q]_{H}$ removes the off-diagonal blocks of $Q$ with respect to the eigenspaces of $H$ (i.e., $[\cdot]_{H}$ is what sometimes is referred to as a `pinching').
One can rewrite the channel $R^{H_S}_{\sigma}$ in terms of the eigenprojectors $P_l$ as 
\begin{equation}
R^{H_S}_{\sigma}(Q)
=   \sum_{l,l'}P_{l}Q P_{l'}  \Tr(\Delta^{(\tilde{h}_l-\tilde{h}_{l'})/s}\sigma).
\end{equation}
By using this reformulation of $R^{H_S}_{\sigma}$ one can prove the following lemma.
\begin{Lemma}
\label{abvkjsdbv}
If $Q\in\mathcal{L}(\mathcal{H}_S)$ is such that $Q = [Q]_{H_S}$ then
\begin{equation}
R_{\sigma}^{H_S}(Q) = Q,\quad \forall \sigma\in\mathcal{S}(\mathcal{H}_E).
\end{equation}
If $\sigma$ is such that $\sigma = [\sigma]_{H_E^{(s)}}$, then 
\begin{equation}
R_{\sigma}^{H_S}(Q) = [Q]_{H_S},\quad \forall Q\in\mathcal{L}(\mathcal{H}_S).
\end{equation}
\end{Lemma}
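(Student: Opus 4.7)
The plan is to work directly from the reformulation of $R^{H_S}_{\sigma}$ in terms of the distinct-eigenvalue projectors $P_l$ given immediately above the lemma, namely
\begin{equation*}
R^{H_S}_{\sigma}(Q) = \sum_{l,l'} P_{l} Q P_{l'}\,\Tr\!\big(\Delta^{(\tilde{h}_l-\tilde{h}_{l'})/s}\sigma\big).
\end{equation*}
Both statements then reduce to identifying which terms in this double sum survive, given the hypothesis on $Q$ or on $\sigma$. The key observation is that, since $H_S\in H_s(\mathcal{H}_S)$, all differences $(\tilde{h}_l-\tilde{h}_{l'})/s$ are integers, and by construction $\tilde{h}_l\neq\tilde{h}_{l'}$ whenever $l\neq l'$; so the exponent of $\Delta$ vanishes iff $l=l'$.

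For the first statement I would assume $Q=[Q]_{H_S}=\sum_l P_l Q P_l$, which is equivalent to $P_l Q P_{l'}=0$ for $l\neq l'$ (orthogonality of the $P_l$). Substituting this into the double sum kills every off-diagonal term, so only the diagonal terms $l=l'$ survive; on those, $\Tr(\Delta^{0}\sigma)=\Tr(\hat{1}_E\sigma)=1$, giving $R^{H_S}_{\sigma}(Q)=\sum_l P_l Q P_l = Q$. Note that $\sigma$ plays no role here, as promised.

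For the second statement I would use the hypothesis $\sigma=[\sigma]_{H_E^{(s)}}$, which means $\sigma$ is diagonal in the basis $\{|j\rangle\}_{j\in\mathbb{Z}}$. Since $\Delta^{a}=\sum_j |j+a\rangle\langle j|$ shifts this basis by $a$, the operator $\Delta^{a}\sigma$ has vanishing diagonal entries for any $a\neq 0$, so $\Tr(\Delta^{a}\sigma)=\delta_{a,0}$. Combined with the integrality and distinctness observation above, only the $l=l'$ terms survive in the double sum, yielding $R^{H_S}_{\sigma}(Q)=\sum_l P_l Q P_l=[Q]_{H_S}$ for every $Q$.

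There is no real obstacle: both parts are short verifications once the reformulation is in hand. The only point requiring a moment of care is the justification that $(\tilde{h}_l-\tilde{h}_{l'})/s=0$ forces $l=l'$, which uses both the assumption $H_S\in H_s(\mathcal{H}_S)$ (so the exponents are legitimate integers and $\Delta^{a}$ makes sense) and the convention that the $\tilde{h}_l$ list only the \emph{distinct} eigenvalues.
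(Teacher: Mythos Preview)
Your proof is correct and follows exactly the approach the paper indicates: the paper does not spell out a proof but simply remarks that the lemma follows from the projector reformulation $R^{H_S}_{\sigma}(Q)=\sum_{l,l'}P_l Q P_{l'}\Tr(\Delta^{(\tilde h_l-\tilde h_{l'})/s}\sigma)$, and you have filled in precisely those two short verifications. The care you take to note that the exponents are integers (via $H_S\in H_s(\mathcal{H}_S)$) and nonzero for $l\neq l'$ (via distinctness of the $\tilde h_l$) is exactly what is needed.
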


By comparing Lemma \ref{abvkjsdbv} with Proposition \ref{Optimum} we can see that if $\rho$ already is diagonal with respect to an energy eigenbasis, then $R_{\sigma}^{H_S}(\rho) = \rho$ and thus the term $S\boldsymbol{(}R^{H_S}_{\sigma}(\rho)\boldsymbol{)} -S(\rho)$ drops out. Furthermore, we obtain the following corollary, which tells us that if the state of the energy reservoir is diagonal, then the expected energy yield can not depend on the off-diagonal elements of the initial state $\rho$. 
This implies that if the reservoir is diagonal, the expected work extraction can only depend on $[\rho]_{H_S}$, which confirms the finding in \cite{Skrzypczyk13}.
\begin{Corollary}
Let $s>0$, $\mathcal{H}_S$ and $\mathcal{H}_{A}$ be finite-dimensional and $H_S\in H_{s}(\mathcal{H}_{S})$, $H_{A}\in H_{s}(\mathcal{H}_{A})$. Let $\rho\in\mathcal{S}(\mathcal{H}_S)$. If $\sigma\in\mathcal{S}^{*}(\mathcal{H}_E)$ is such that $\sigma = [\sigma]_{H_E^{(\delta)}}$, then
\begin{equation}
\begin{split}
 & \sup_{U}\Tr\big([\hat{1}_S\otimes\hat{1}_A\otimes H_E^{(s)}]V(U)\rho\otimes G(H_A)\otimes \sigma V(U)^{\dagger}\big)  \\
 & \quad =  \Tr(H_E^{(s)}\sigma) \\
 & \quad\quad + \frac{1}{\beta}D\boldsymbol{(}[\rho]_{H_S}\Vert G(H_S)\boldsymbol{)} \\
 & \quad\quad -\frac{1}{\beta}D\Big(\lambda^{\downarrow}\big([\rho]_{H_S}\otimes G(H_A)\big) \Big\Vert  \lambda^{\downarrow}\big(G(H_S)\otimes G(H_A)\big) \Big),
\end{split}
\end{equation}
where the supremum is taken over all elements $U\in U(\mathcal{H}_S\otimes \mathcal{H}_A)$.
\end{Corollary}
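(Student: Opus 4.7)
The plan is to combine Proposition \ref{Optimum} directly with Lemma \ref{abvkjsdbv} and then apply a standard entropy identity that exploits the fact that $G(H_S)$ is already diagonal in an energy eigenbasis. There are no essentially new ideas needed; the work lies in the careful bookkeeping of terms.

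First I would invoke Proposition \ref{Optimum} verbatim for the given $\rho$, $\sigma$, and $H_A$, obtaining the four-term expression for the supremum (including the existence of a maximizing unitary). The only term that is specific to $\sigma$ (beyond the $\Tr(H_E^{(s)}\sigma)$ offset) is $R^{H_S}_{\sigma}(\rho)$, which appears both inside the entropy difference $S\boldsymbol{(}R^{H_S}_{\sigma}(\rho)\boldsymbol{)}-S(\rho)$ and inside the tilted relative entropy of sorted eigenvalues.

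Next I would apply Lemma \ref{abvkjsdbv}: since by assumption $\sigma=[\sigma]_{H_E^{(s)}}$, this lemma gives $R^{H_S}_{\sigma}(\rho)=[\rho]_{H_S}$. Substituting this into the expression from Proposition \ref{Optimum} immediately turns the last relative entropy into the desired $D\boldsymbol{(}\lambda^{\downarrow}([\rho]_{H_S}\otimes G(H_A))\Vert \lambda^{\downarrow}(G(H_S)\otimes G(H_A))\boldsymbol{)}$.

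It then remains to collapse the combination
\begin{equation*}
\tfrac{1}{\beta}D\boldsymbol{(}\rho\Vert G(H_S)\boldsymbol{)} - \tfrac{1}{\beta}\boldsymbol{[}S\boldsymbol{(}[\rho]_{H_S}\boldsymbol{)} - S(\rho)\boldsymbol{]}
\end{equation*}
into $\tfrac{1}{\beta}D\boldsymbol{(}[\rho]_{H_S}\Vert G(H_S)\boldsymbol{)}$. For this I would expand $D(\rho\Vert G(H_S)) = -S(\rho) - \Tr\boldsymbol{(}\rho \ln G(H_S)\boldsymbol{)}$ and similarly for $D([\rho]_{H_S}\Vert G(H_S))$, then observe that because $\ln G(H_S)$ is block-diagonal with respect to the eigenspaces of $H_S$, one has $\Tr\boldsymbol{(}\rho \ln G(H_S)\boldsymbol{)} = \Tr\boldsymbol{(}[\rho]_{H_S}\ln G(H_S)\boldsymbol{)}$. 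The two $S(\rho)$ contributions cancel and the identity follows. Assembling the pieces yields the stated formula. The main "obstacle" — really only a mild bookkeeping point — is keeping straight that the pinching-invariance of $\Tr(\rho\ln G(H_S))$ is exactly what lets the $S(\rho)$ and $S([\rho]_{H_S})$ entropic corrections (which individually depend on coherence in $\rho$) drop out, leaving a formula that depends on $\rho$ only through $[\rho]_{H_S}$, in agreement with the result of \cite{Skrzypczyk13}.
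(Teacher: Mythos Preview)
Your proposal is correct and follows essentially the same route as the paper: the paper simply states that the corollary is obtained by combining Lemma~\ref{abvkjsdbv} with Proposition~\ref{Optimum}, and your write-up makes explicit the one algebraic step left implicit there, namely that $D(\rho\Vert G(H_S)) - [S([\rho]_{H_S})-S(\rho)] = D([\rho]_{H_S}\Vert G(H_S))$ via the pinching-invariance of $\Tr(\rho\ln G(H_S))$.
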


\subsubsection{Bounds on $S\boldsymbol{(}R^{H_S}_{\sigma}(\rho)\boldsymbol{)} -S(\rho)$}

As seen from the previous sections, the amount of expected energy that can be extracted is partially determined by how much the effective channel $R^{H_S}_{\sigma}$ increases the entropy of the initial state $\rho$. Here we determine bounds on this quantity in terms of the coherence in the energy reservoir. 

\begin{Proposition}
\label{fdkdfbkkkdf}
Let $s>0$ and $M\in\mathbb{N}$. Let $\dim\mathcal{H}_S=N$, and let $H_S\in H_{s}(\mathcal{H}_S)$ with eigenvectors $|\psi_k\rangle$ and corresponding eigenvalues $s z_k$. Let $\sigma\in\mathcal{S}(\mathcal{H}_E)$. Let 
\begin{equation}
\boldsymbol{F} := [1 -\Tr(\Delta^{z_{k}-z_{k'}}\sigma)]_{k,k'=1}^{N},
\end{equation}
then
\begin{equation}
\sup_{\rho\in\mathcal{S}(\mathcal{H}_S)}\Vert \rho- R_{\sigma}^{H_S}(\rho) \Vert_1 \leq \Vert \boldsymbol{F}\Vert.  
\end{equation}
\end{Proposition}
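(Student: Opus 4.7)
The plan is to recognize that in the eigenbasis $\{|\psi_k\rangle\}$ of $H_S$, the difference $\rho - R^{H_S}_\sigma(\rho)$ is nothing but the Hadamard product of the matrix representation of $\rho$ with the matrix $\boldsymbol{F}$, and then to invoke the Hadamard-product trace-norm bound already established in equation~(\ref{zuiozoui}).

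First, I would unpack the definition of $R^{H_S}_\sigma$ from equation~(\ref{Rdef}) and observe that
\begin{equation*}
\rho - R^{H_S}_\sigma(\rho) = \sum_{k,k'=1}^{N} |\psi_k\rangle\langle\psi_k|\rho|\psi_{k'}\rangle\langle\psi_{k'}| \bigl[1 - \Tr(\Delta^{z_k - z_{k'}}\sigma)\bigr].
\end{equation*}
If I let $\tilde{\rho}$ denote the matrix with entries $\tilde{\rho}_{k,k'} = \langle\psi_k|\rho|\psi_{k'}\rangle$, this is exactly the matrix with entries $\tilde{\rho}_{k,k'} F_{k,k'}$, i.e. the Hadamard product $\tilde{\rho} * \boldsymbol{F}$, re-expressed back as an operator in the same basis. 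Since conjugation by the unitary that diagonalizes this basis is an isometry for the trace norm, $\Vert \rho - R^{H_S}_\sigma(\rho)\Vert_1 = \Vert \tilde{\rho}*\boldsymbol{F}\Vert_1$.

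Next, I would apply inequality~(\ref{zuiozoui}), which states $\Vert A*B\Vert_1 \leq \Vert A\Vert_1 \Vert B\Vert$, with $A = \tilde{\rho}$ and $B = \boldsymbol{F}$. Since $\Vert \tilde{\rho}\Vert_1 = \Vert \rho\Vert_1 = 1$ for any density operator, this immediately gives
\begin{equation*}
\Vert \rho - R^{H_S}_\sigma(\rho)\Vert_1 \leq \Vert \boldsymbol{F}\Vert,
\end{equation*}
uniformly in $\rho \in \mathcal{S}(\mathcal{H}_S)$, which is the claim of the proposition.

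I do not anticipate any real obstacle: the only substantive step is the identification of the difference as a Hadamard product in the $H_S$-eigenbasis, after which the cited singular-value inequality does all the work. The main thing to be careful about is the cosmetic but necessary passage between an operator viewed in a fixed basis and its matrix representation, so that $\Vert \cdot \Vert_1$ on the operator side genuinely equals the Schatten $1$-norm of the matrix to which~(\ref{zuiozoui}) applies.
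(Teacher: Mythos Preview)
Your proposal is correct and essentially identical to the paper's own proof: the paper also expands $\rho - R^{H_S}_\sigma(\rho)$ in the $H_S$-eigenbasis, identifies it as the Hadamard product $\boldsymbol{\rho}*\boldsymbol{F}$, and then applies inequality~(\ref{zuiozoui}) together with $\Vert\boldsymbol{\rho}\Vert_1 = 1$. There is no substantive difference in approach or detail.
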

\begin{proof}
By using (\ref{Rdef}) in Definition \ref{Rdefinition}
\begin{equation}
\rho - R^{H_S}_{\sigma}(\rho) =   \sum_{k,k'=1}^{N}|\psi_{k}\rangle\langle\psi_{k}|\rho|\psi_{k'}\rangle\langle\psi_{k'}| \boldsymbol{F}_{k,k'}.
\end{equation}
If we define the matrix $\boldsymbol{\rho} = [\langle\psi_{k}|\rho|\psi_{k'}\rangle]_{k,k'=1}^{N}$ it follows by equation (\ref{zuiozoui}) that  $\Vert \rho - R^{H_S}_{\sigma}(\rho)\Vert_{1} = \Vert \boldsymbol{\rho}*\boldsymbol{F}\Vert_1 \leq \Vert \boldsymbol{\rho}\Vert_1 \Vert \boldsymbol{F}\Vert 
=  \Vert \boldsymbol{F}\Vert$.
\end{proof}

\begin{Lemma}
\label{distancebound}
Let $\mathcal{H}_S$ be finite-dimensional with $N= \dim(\mathcal{H}_S)\geq 2$, and let $H_S\in H_{s}(\mathcal{H}_S)$, with eigenvalues $s z_m$. Let $|\eta_{L,l_0}\rangle := \frac{1}{\sqrt{L}}\sum_{l=0}^{L-1}|l+l_0\rangle\in\mathcal{H}_{E}$, then  
\begin{equation}
\label{vdfjkbnv}
\sup_{\rho\in\mathcal{S}(\mathcal{H}_S)}\Vert \rho- R_{|\eta_{L,l_0}\rangle\langle\eta_{L,l_0}|}^{H_S}(\rho) \Vert_1 \leq  \frac{N(z_{\mathrm{max}}-z_{\mathrm{min}})}{L},
\end{equation}
where $z_{\mathrm{max}}: = \max_{k=1}^{N}z_k$ and $z_{\mathrm{min}} := \min_{k=1}^{N}z_{k}$.
\end{Lemma}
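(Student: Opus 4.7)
The plan is to combine Proposition \ref{fdkdfbkkkdf} with the explicit formula for $\langle\eta_{L,l_0}|\Delta^a|\eta_{L,l_0}\rangle$ from equation (\ref{etaShiftOverlap}). First I would apply Proposition \ref{fdkdfbkkkdf} with $\sigma = |\eta_{L,l_0}\rangle\langle\eta_{L,l_0}|$, so that the problem reduces to bounding the spectral norm of the $N\times N$ matrix $\boldsymbol{F}$ whose entries are
\begin{equation}
\boldsymbol{F}_{k,k'} = 1 - \langle\eta_{L,l_0}|\Delta^{z_k - z_{k'}}|\eta_{L,l_0}\rangle.
\end{equation}

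Next I would plug in equation (\ref{etaShiftOverlap}), which gives $\langle\eta_{L,l_0}|\Delta^a|\eta_{L,l_0}\rangle = \max(0, 1 - |a|/L)$. Consequently $\boldsymbol{F}_{k,k'} = \min(1, |z_k-z_{k'}|/L)$ when $|z_k-z_{k'}|\leq L$, and $\boldsymbol{F}_{k,k'}=1$ otherwise. In either case one has the uniform bound $\boldsymbol{F}_{k,k'} \leq |z_k - z_{k'}|/L \leq (z_{\mathrm{max}}-z_{\mathrm{min}})/L$, since when the ratio exceeds $1$ the trivial bound $\boldsymbol{F}_{k,k'}\leq 1$ is dominated by $(z_{\mathrm{max}}-z_{\mathrm{min}})/L$ anyway.

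To pass from entry-wise bounds to the spectral norm, I would invoke Lemma \ref{abvkbja}. Since $\boldsymbol{F}$ is Hermitian (indeed real symmetric, because $\Tr(\Delta^{-a}\sigma) = \overline{\Tr(\Delta^{a}\sigma)}$ and here the overlaps are real), the maximum row and column sums coincide, and each is bounded by $N(z_{\mathrm{max}}-z_{\mathrm{min}})/L$. Therefore $\Vert\boldsymbol{F}\Vert \leq N(z_{\mathrm{max}}-z_{\mathrm{min}})/L$, and Proposition \ref{fdkdfbkkkdf} yields the claimed bound (\ref{vdfjkbnv}).

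No step is a genuine obstacle; the only care needed is the case distinction $|z_k-z_{k'}| \gtrless L$ when extracting $\boldsymbol{F}_{k,k'}$ from (\ref{etaShiftOverlap}), which is handled by the observation that the bound $(z_{\mathrm{max}}-z_{\mathrm{min}})/L$ trivially exceeds $1$ whenever the piecewise formula forces $\boldsymbol{F}_{k,k'}=1$.
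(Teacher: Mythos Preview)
Your argument is correct and follows the same route as the paper: Proposition \ref{fdkdfbkkkdf} to reduce to $\Vert\boldsymbol{F}\Vert$, equation (\ref{etaShiftOverlap}) to evaluate the entries, and Lemma \ref{abvkbja} to bound the spectral norm by row/column sums. The only cosmetic difference is that the paper first disposes of the regime $L<z_{\mathrm{max}}-z_{\mathrm{min}}$ by the trivial estimate $\Vert\rho-R(\rho)\Vert_1\leq 2\leq N\leq N(z_{\mathrm{max}}-z_{\mathrm{min}})/L$, and only then writes $\boldsymbol{F}_{k,k'}=|z_k-z_{k'}|/L$; your uniform entry-wise bound $\boldsymbol{F}_{k,k'}\leq |z_k-z_{k'}|/L$ (which holds in both regimes since $1\leq |z_k-z_{k'}|/L$ whenever the overlap vanishes) absorbs that case distinction and is slightly cleaner.
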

\begin{proof}
In the case $L < z_{\mathrm{max}}- z_{\mathrm{min}}$, Eq.~(\ref{vdfjkbnv}) is trivially true. This follows since both $\rho$ and $R_{|\eta_{L,l_0}\rangle\langle\eta_{L,l_0}|}^{H_S}(\rho)$ are density operators and thus $\Vert \rho- R_{|\eta_{L,l_0}\rangle\langle\eta_{L,l_0}|}^{H_S}(\rho) \Vert_1  \leq 2 \leq N$.  
We can thus without loss of generality assume that $L \geq z_{\mathrm{max}}- z_{\mathrm{min}}$.

Due to the latter inequality we can conclude [e.g., via (\ref{etaShiftOverlap})] that $\boldsymbol{F}_{k,k'} = |z_k-z_{k'}|/L$.  By combining this observation with Proposition \ref{fdkdfbkkkdf} and Lemma \ref{abvkbja} we can conclude that  $\Vert \rho -R^{H_S}_{\sigma}(\rho)\Vert_1 \leq \Vert\boldsymbol{F}\Vert \leq N(z_{\mathrm{max}}-z_{\mathrm{min}})/L$.
\end{proof}

The following bound for for the von Neumann entropy is proved in \cite{Audenaert}.
\begin{Lemma}[From \cite{Audenaert}]
\label{continuitybound}
Let $\dim\mathcal{H} = N$ and $\rho,\sigma\in\mathcal{S}(\mathcal{H})$, then
\begin{equation}
\begin{split}
 & |S(\rho)-S(\sigma)| \leq \frac{1}{2}\Vert \rho-\sigma\Vert_1\ln(N-1) + \Xi\Big(\frac{1}{2}\Vert \rho-\sigma\Vert_1\Big),\\
 & \Xi(x):= -x\ln x -(1-x)\ln(1-x).
 \end{split}
\end{equation}
\end{Lemma}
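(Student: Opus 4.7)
The plan is to prove the one-sided inequality $S(\rho)-S(\sigma)\leq T\ln(N-1)+\Xi(T)$ with $T:=\tfrac{1}{2}\Vert\rho-\sigma\Vert_1$; the stated absolute-value version then follows by swapping $\rho$ and $\sigma$. First I would verify that the bound is tight, which both motivates the target and constrains the shape of any eventual argument. The extremal pair is $\sigma=|\psi\rangle\langle\psi|$ pure and $\rho=(1-T)|\psi\rangle\langle\psi|+\tfrac{T}{N-1}\Pi_{\psi^{\perp}}$, where $\Pi_{\psi^{\perp}}$ is the projector onto the $(N-1)$-dimensional orthogonal complement. Direct computation gives $\tfrac{1}{2}\Vert\rho-\sigma\Vert_1=T$, $S(\sigma)=0$, and
\begin{equation*}
S(\rho)=-(1-T)\ln(1-T)-T\ln\tfrac{T}{N-1}=\Xi(T)+T\ln(N-1),
\end{equation*}
so equality is achieved. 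This also singles out why $N-1$ (rather than $N$) is the right constant: the optimal configuration uses one distinguished direction for $\sigma$ and spreads the remaining weight of $\rho$ uniformly over the other $N-1$ levels.

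Second, I would attempt a reduction to the commuting (classical) case, so that the inequality becomes a statement about probability vectors $p=\lambda^{\downarrow}(\rho)$ and some reordering $q$ of $\lambda^{\downarrow}(\sigma)$. The strategy is to fix the spectra and argue that, among all relative choices of eigenbases, $S(\rho)-S(\sigma)$ is invariant while $\tfrac{1}{2}\Vert\rho-\sigma\Vert_1$ is minimized exactly when $\rho$ and $\sigma$ commute with suitably aligned eigenvectors (an application in spirit of Lemma \ref{kvjdbkv} / Lemma \ref{optimalUnitary}). Hence it is enough to bound $S(p)-S(q)$ subject to $\tfrac{1}{2}\sum_k|p_k-q_k|\leq T$ for commuting pairs.

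Third, in the classical problem I would perform the optimization by Lagrange multipliers on the compact feasible set $\{(p,q):\sum_k p_k=\sum_k q_k=1,\;\tfrac{1}{2}\sum_k|p_k-q_k|=T,\;p,q\geq 0\}$, splitting coordinates according to the sign of $p_k-q_k$. The stationarity conditions force $p$ to be constant on the set where $p>q$ and $q$ to be constant on the set where $q>p$; combined with the boundary conditions this concentrates the optimizer on a two-block configuration, and a direct comparison shows the best choice is the extremal pair exhibited above, giving $S(p)-S(q)\leq T\ln(N-1)+\Xi(T)$.

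The main obstacle is the quantum-to-classical reduction. It is not automatic that entropy differences at fixed trace distance are maximized by simultaneously diagonalizable pairs, since $\Vert\cdot\Vert_1$ is not simply a function of the spectra of $\rho$ and $\sigma$ separately. A robust way around this is either (i) to analyze first-order optimality at a putative quantum maximizer and show the resulting Euler--Lagrange equations force $[\rho,\sigma]=0$ on the joint support, or (ii) to use the Mirsky-type inequality relating $\lambda^{\downarrow}(\rho-\sigma)$ to the differences of ordered eigenvalues of $\rho$ and $\sigma$, converting the $\ell_1$-norm constraint into a constraint on spectra and thereby bypassing the eigenbasis issue altogether. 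Either route is the genuinely nontrivial input; once it is in place, the remaining steps collapse into the pure-state/uniform-complement extremal calculation already done.
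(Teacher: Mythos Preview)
The paper does not prove this lemma at all; it is quoted verbatim from Audenaert's paper \cite{Audenaert} and used as a black box. So there is no paper proof to compare against, and your sketch is an attempt at something the paper does not undertake.

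That said, your route (ii) has a genuine gap. Mirsky's inequality gives
\[
T' := \tfrac{1}{2}\sum_k\big|\lambda_k^{\downarrow}(\rho)-\lambda_k^{\downarrow}(\sigma)\big| \;\leq\; \tfrac{1}{2}\Vert\rho-\sigma\Vert_1 = T,
\]
and the entropy difference depends only on the spectra, so $S(\rho)-S(\sigma)=S(p)-S(q)$ with $\tfrac{1}{2}\Vert p-q\Vert_1=T'$. Your classical optimization would then give $S(p)-S(q)\leq T'\ln(N-1)+\Xi(T')=:f(T')$. But to conclude the stated bound you need $f(T')\leq f(T)$, and $f$ is \emph{not} monotone on $[0,1]$: already for $N=2$ it is just $\Xi$, which decreases past $T=1/2$. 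So the Mirsky reduction only yields the weaker bound $\max_{t\leq T}f(t)$, not $f(T)$. Concretely, a non-commuting pair with large $T$ but small $T'$ would escape your argument. (The lemma is nonetheless true as stated; the extremal pairs at every $T$ are commuting, as your examples show, so the failure is in the reduction, not the statement.)

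Audenaert's actual proof avoids this by working directly with the Jordan decomposition $\rho-\sigma=R_{+}-R_{-}$, exploiting that $R_{+}$ and $R_{-}$ have orthogonal supports (hence $\mathrm{rank}(R_{+})\leq N-1$), together with a tailored entropy inequality for the specific interpolating state, rather than passing through a generic classical problem with a relaxed constraint. If you want to salvage your outline, route (i) is closer in spirit: show that any quantum maximizer of $S(\rho)-S(\sigma)$ at \emph{fixed} trace distance $T$ must commute, so that $T'=T$ and the monotonicity issue never arises. That is a legitimate strategy, but it is the whole difficulty, and the Lagrange sketch you give does not yet supply it.
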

(In \cite{Audenaert} this is formulated in terms of the base $2$ logarithm. The inequality remains the same also for other logarithms, as long as the choice of logarithm is used consistently in all quantities.)

Taking into account the fact that $\Xi$ is monotonically increasing on the interval $[0,1/2]$ and thus $\Xi(\Vert \rho-\sigma\Vert_1/2) \leq \Xi(N(z_{\mathrm{max}}-z_{\mathrm{min}})/(2L))$ if $\Vert \rho-\sigma\Vert_1 \leq N(z_{\mathrm{max}}-z_{\mathrm{min}})/L \leq 1$ we obtain the following bound by combining Lemmas \ref{distancebound} and \ref{continuitybound}.
\begin{Proposition}
\label{BoundOnLoss}
Let $s>0$ and let $\mathcal{H}_S$ be a Hilbert space with $\dim\mathcal{H}_S = N$, and let $H_S\in H_{s}(\mathcal{H}_S)$. Let $z_{\mathrm{max}}$ be the largest eigenvalue of $H_S$ divided by $s$, and let $z_{\mathrm{min}}$ be the smallest eigenvalue of $H_S$ divided by $s$.
Assume $L \geq N(z_{\mathrm{max}}-z_{\mathrm{min}})$, and define
\begin{equation}
\sigma_{L}:=|\eta_{L,l_0}\rangle\langle\eta_{L,l_0}|,\quad |\eta_{L,l_0}\rangle := \frac{1}{\sqrt{L}}\sum_{l=0}^{L-1}|l+l_0\rangle_E.
\end{equation}
Then 
\begin{equation}
\begin{split}
0 \leq S\boldsymbol{(}R^{H_S}_{\sigma_{L,l_0}}(\rho)\boldsymbol{)} -S(\rho)  \leq & \frac{z_{\mathrm{max}}-z_{\mathrm{min}}}{2L}N\ln N \\
& + \Xi\Big(\frac{N(z_{\mathrm{max}}-z_{\mathrm{min}})}{2L}\Big),
\end{split}
\end{equation}
with  $\Xi$ as in Lemma \ref{continuitybound}.
\end{Proposition}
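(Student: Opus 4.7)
The plan is to assemble Proposition \ref{BoundOnLoss} directly from ingredients already established in the preceding lemmas, so the task is essentially to chain together (i) a diamond-like norm estimate on how much $R_\sigma^{H_S}$ perturbs a state, and (ii) a continuity bound for the von Neumann entropy.

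For the lower bound $0 \leq S\boldsymbol{(}R^{H_S}_{\sigma_L}(\rho)\boldsymbol{)} - S(\rho)$, I would simply invoke unitality: Lemma \ref{Runital} states $R^{H_S}_\sigma(\hat 1) = \hat 1$, so $R^{H_S}_\sigma$ is a unital (doubly stochastic) quantum channel, and such channels are mixing-enhancing; hence the output is majorized by the input and the von Neumann entropy can only increase (or stay equal). That takes care of the left inequality in one sentence.

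For the upper bound, I would first apply Lemma \ref{distancebound} to the specific coherent state $\sigma_L = |\eta_{L,l_0}\rangle\langle\eta_{L,l_0}|$, which yields
\begin{equation*}
\Vert \rho - R^{H_S}_{\sigma_L}(\rho)\Vert_1 \leq \frac{N(z_{\mathrm{max}}-z_{\mathrm{min}})}{L}
\end{equation*}
for every $\rho\in\mathcal{S}(\mathcal{H}_S)$. The assumption $L \geq N(z_{\mathrm{max}}-z_{\mathrm{min}})$ ensures that the right-hand side is at most $1$, and in particular $\tfrac12\Vert\rho - R^{H_S}_{\sigma_L}(\rho)\Vert_1 \leq \tfrac12$, which is exactly what is needed to stay within the domain where $\Xi$ is monotonically increasing. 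I would then apply the Audenaert continuity bound (Lemma \ref{continuitybound}) to the pair $\rho$, $R^{H_S}_{\sigma_L}(\rho)$.

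Combining the two, the $\tfrac12\Vert\rho-\sigma\Vert_1 \ln(N-1)$ term becomes at most $\tfrac{N(z_{\mathrm{max}}-z_{\mathrm{min}})}{2L}\ln(N-1)$, which I would further weaken to $\tfrac{N(z_{\mathrm{max}}-z_{\mathrm{min}})}{2L}\ln N$ to match the stated form; the $\Xi$ term becomes at most $\Xi\boldsymbol{(}N(z_{\mathrm{max}}-z_{\mathrm{min}})/(2L)\boldsymbol{)}$ by monotonicity of $\Xi$ on $[0,1/2]$. Putting the two pieces together yields the claimed bound. There is no real obstacle here: the technical work was already done in Lemma \ref{distancebound} (via Proposition \ref{fdkdfbkkkdf} and the Hadamard product singular-value inequality (\ref{zuiozoui})) and in Lemma \ref{continuitybound}; the only thing to be careful about is the monotonicity-of-$\Xi$ step, which is precisely why the hypothesis $L\geq N(z_{\mathrm{max}}-z_{\mathrm{min}})$ is imposed.
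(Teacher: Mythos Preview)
Your proposal is correct and follows essentially the same route as the paper: combine the trace-norm estimate of Lemma~\ref{distancebound} with the Audenaert continuity bound (Lemma~\ref{continuitybound}), using the hypothesis $L\geq N(z_{\mathrm{max}}-z_{\mathrm{min}})$ to ensure the argument of $\Xi$ lies in $[0,1/2]$ where $\Xi$ is monotone. You even make explicit two points the paper leaves implicit, namely the lower bound via unitality (Lemma~\ref{Runital}) and the harmless weakening of $\ln(N-1)$ to $\ln N$.
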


\subsection{\label{Sec:standard}Achieving the `standard' optimal expected work extraction}
 
 Here we show that one can regain the `standard' result on expected work extraction, i.e., that the maximum yield is $kTD\boldsymbol{(}\rho\Vert G(H)\boldsymbol{)}$.  We will obtain this  as a limit case, namely of a large density of states in the energy reservoir, high degree of coherence in the energy reservoir, and very large (in Hilbert space dimension)  Gibbs distributed ancillary systems.

In order to do this we introduce a number $J$ that controls the density of states in the reservoir compared to the energy levels in $S$. More precisely, if the system Hamiltonian $H_S$ has eigenvalues that are multiples of $\delta$, then we assume that the energy spacings in the energy ladder of the reservoir is $s :=\delta/J$. To provide coherence, we assume our standard uniform superpositions   
$|\eta_{L,l_0}\rangle = \sum_{l=0}^{L-1}|l+l_0\rangle/\sqrt{L}$, and thus get a number $L$ that in some sense controls the degree of coherence in the reservoir.

To construct the above mentioned limit, we consider a fixed Hamiltonian $H_S\in H_{\delta}(\mathcal{H}_S)$, but a sequence of energy reservoirs, one for each $K\in\mathbb{N}$. We assume that the $K$th reservoir has the energy  spacing $s(K) := \delta/J(K)$ and the initial state $|\eta_{L(K),l_0}\rangle$ (where $l_0$ does not matter), with $J(K)$ and $L(K)$ being functions of $K$.
We furthermore construct a sequence of ancillary systems $A_K$ with Hilbert spaces $\mathcal{H}_A^K$ with dimension $N^{K+1}$, where $N = \dim\mathcal{H}_S$. These ancillary systems are Gibbs distributed with respect to suitably chosen Hamiltonians. 

It maybe should be noted that in these derivations we limit ourselves to states $\rho$ that have full support, $\rho\in\mathcal{S}_+(\mathcal{H}_S)$. This is essentially due to somewhat too weak bounds, which involves the logarithm of the smallest eigenvalue of $\rho$. Reasonably it should be possible to overcome this, but we will not do this here.
\begin{Corollary}
\label{LimitExtraction}
Let $N\in\mathbb{N}$, $\delta,\beta>0$ be given. Let $\mathcal{H}_{S}$ be a Hilbert space with $\dim\mathcal{H}_{S} = N$, let $\rho\in\mathcal{S}_{+}(\mathcal{H}_{S})$, and $H_S\in H_{\delta}(\mathcal{H}_{S})$ not completely degenerate. Let $L,J:\mathbb{N}\rightarrow\mathbb{N}$ be such that $\lim_{K\rightarrow+\infty} [K/J(K)] = 0$ and $\lim_{K\rightarrow+\infty} [J(K)/L(K)] = 0$.
 Then there exist a sequence of ancillary Hilbert spaces $\mathcal{H}_{A}^{K}$ with $\dim\mathcal{H}_A^{K} = N^{K+1}$, and $H_A^{K}\in H_{\delta/J(K)}(\mathcal{H}_A^{K})$ such that 
\begin{equation}
\lim_{K\rightarrow\infty}W_{K,J(K),L(K)} = \frac{1}{\beta}D\boldsymbol{(}\rho\Vert G(H_S)\boldsymbol{)},
\end{equation}
where
\begin{equation}
\begin{split}
W_{K,J(K),L(K)}  := & \sup_{U\in U(\mathcal{H}_S\otimes\mathcal{H}_{A}^{K})}\Tr( H^{\delta/J(K)}_E \tilde{\sigma}_{L(K)}) \\
&  -\Tr(H^{\delta/J(K)}_E\sigma_{L(K)}),
\end{split}
\end{equation} 
\begin{equation}
\tilde{\sigma}_{L(K)} :=   \Tr_{S,A}[V(U)\rho\otimes G(H_A^{K})\otimes \sigma_{L(K)} V(U)^{\dagger}],
\end{equation} 
and where $V(U)$ is as defined in (\ref{bijection}) for the reservoir Hamiltonian $H_E^{(\delta/J(K))}$, and
\begin{equation}
\begin{split}
 \sigma_{L(K)}   := & |\eta_{L(K),l_0}\rangle\langle\eta_{L(K),l_0}|,\\
|\eta_{L(K),l_0}\rangle  := & \frac{1}{\sqrt{L(K)}}\sum_{l=0}^{L(K)-1}|l+l_0\rangle_E,
\end{split}
\end{equation} 
for some fixed $l_0\in\mathbb{Z}$.
\end{Corollary}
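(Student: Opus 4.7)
\noindent\textbf{Proof plan for Corollary \ref{LimitExtraction}.}
The strategy is to feed the optimization of Proposition \ref{Optimum} with a well chosen sequence of ancillary Hamiltonians, so that the two correction terms in that proposition can be driven to zero independently by the conditions $J(K)/L(K) \to 0$ and $K/J(K) \to 0$. Applying Proposition \ref{Optimum} with system Hamiltonian $H_S$, ancillary Hamiltonian $H_A^K$ (to be constructed below), and reservoir spacing $s(K) = \delta/J(K)$, we may write
\begin{equation}
W_{K,J(K),L(K)} = \tfrac{1}{\beta}D\bigl(\rho\Vert G(H_S)\bigr) - \mathcal{E}_K - \mathcal{D}_K,
\end{equation}
with $\mathcal{E}_K := \tfrac{1}{\beta}[S(R^{H_S}_{\sigma_{L(K)}}(\rho)) - S(\rho)] \geq 0$ and $\mathcal{D}_K := \tfrac{1}{\beta}D\bigl(\lambda^{\downarrow}(R^{H_S}_{\sigma_{L(K)}}(\rho)\otimes G(H_A^K)) \Vert \lambda^{\downarrow}(G(H_S)\otimes G(H_A^K))\bigr) \geq 0$. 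It thus suffices to show $\mathcal{E}_K, \mathcal{D}_K \to 0$.

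\noindent First I would dispose of $\mathcal{E}_K$. When $H_S$ has eigenvalues $\delta z^{(\delta)}_n$ and the reservoir spacing is $s = \delta/J(K)$, the integers $z^{(s)}_n = J(K) z^{(\delta)}_n$ satisfy $z^{(s)}_{\max}-z^{(s)}_{\min} = J(K)(z^{(\delta)}_{\max}-z^{(\delta)}_{\min})$. Hence, as soon as $L(K)/J(K)$ exceeds $N(z^{(\delta)}_{\max}-z^{(\delta)}_{\min})$, Proposition \ref{BoundOnLoss} yields a bound of the schematic form $\mathcal{E}_K \leq O(J(K)/L(K))\,N\ln N + \tfrac{1}{\beta}\Xi(O(J(K)/L(K)))$, which tends to $0$ by hypothesis.

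\noindent For $\mathcal{D}_K$ I would adapt the simulation technique of Lemma \ref{ptozuirpuztoi}. Since $\rho\in\mathcal{S}_+(\mathcal{H}_S)$ and Gibbs states have full support, both $q:=\lambda^{\downarrow}(\rho)$ and $p:=\lambda^{\downarrow}(G(H_S))$ lie in $\mathbb{P}_+(N)$. The proof of Lemma \ref{ptozuirpuztoi} constructs a tensor-sum vector $h^K\in\mathbb{R}^{N^{K+1}}$, built from discretization points $h^{k:K}=h(k/K)$ on the linear path between $-\beta^{-1}\ln q$ and $-\beta^{-1}\ln p$, satisfying $D((qG(h^K))^{\downarrow}\Vert(pG(h^K))^{\downarrow}) \leq C_1/K$ with $C_1$ depending only on $q,p$. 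To meet the quantization constraint $H_A^K\in H_{\delta/J(K)}(\mathcal{H}_A^K)$ I would round each $h^{k:K}_n$ to the nearest multiple of $\delta/J(K)$, obtaining $\tilde h^{k:K}$ with $\|\tilde h^{k:K}-h^{k:K}\|_\infty \leq \delta/(2J(K))$, and then take $H_A^K$ to have eigenvalues $\tilde h^K_{n_0,\ldots,n_K} = \tilde h^{0:K}_{n_0}+\cdots+\tilde h^{K:K}_{n_K}$. The accumulated rounding error on each eigenvalue is at most $(K+1)\delta/(2J(K))$, so $G(\tilde h^K)$ differs from $G(h^K)$ pointwise by a multiplicative factor $e^{O(\beta K/J(K))}$; combining this with the trace-norm estimate $\Vert R^{H_S}_{\sigma_{L(K)}}(\rho)-\rho\Vert_1 \leq O(J(K)/L(K))$ from Lemma \ref{distancebound}, and using continuity of relative entropy on full-support distributions (together with Corollary \ref{jgfkgfj} applied with the cyclic permutation of subsystems as in Lemma \ref{ptozuirpuztoi}), one obtains a bound of the form $\mathcal{D}_K \leq C_1/(\beta K) + O(K/J(K)) + O(J(K)/L(K))$.

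\noindent The main obstacle will be the last, messy estimate: one must combine two distinct approximations (the rounding of $h^K$ to a finer and finer lattice, and the replacement of $\rho$ by $R^{H_S}_{\sigma_{L(K)}}(\rho)$) while simultaneously sorting eigenvalues in the definition of the ordered relative entropy. The cleanest route is to bound $\mathcal{D}_K$ above by $D(\Pi(q'G(\tilde h^K))\Vert pG(\tilde h^K))$ for a suitable cyclic permutation $\Pi$ (using Corollary \ref{jgfkgfj}) with $q':=\lambda^{\downarrow}(R^{H_S}_{\sigma_{L(K)}}(\rho))$, then expand $D$ using $\ln G(\tilde h^K) = \ln G(h^K) + O(\beta K/J(K))$ uniformly in the index, and finally invoke continuity of $D$ in $q'$ around $q$; every error term is then transparently controlled by the two limit hypotheses.
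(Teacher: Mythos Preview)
Your plan is correct and tracks the paper's argument closely: both start from Proposition~\ref{Optimum}, kill $\mathcal{E}_K$ with Proposition~\ref{BoundOnLoss}, and kill $\mathcal{D}_K$ with the cyclic-permutation simulation of Lemma~\ref{ptozuirpuztoi} combined with rounding to the $\delta/J(K)$ lattice.

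The one organizational difference is worth pointing out, because it dissolves exactly the ``messy estimate'' you flag. You build $H_A^K$ from the path connecting $q=\lambda^{\downarrow}(\rho)$ to $p=\lambda^{\downarrow}(G(H_S))$, and then have to argue separately that replacing $q$ by $q'=\lambda^{\downarrow}\bigl(R^{H_S}_{\sigma_{L(K)}}(\rho)\bigr)$ costs only $O(J/L)$ via continuity of $D(\cdot\Vert\cdot)$ on full-support distributions. The paper instead takes $q:=q'$ from the outset (Proposition~\ref{DenseExtraction}): since $R^{H_S}_\sigma$ is unital, Lemma~\ref{UnitalIncr} gives $\lambda_{\min}(q')\ge\lambda_{\min}(\rho)>0$, so $\max_n|\ln(q'_n/p_n)|$ is bounded by $-\ln[\lambda_{\min}(\rho)\lambda_{\min}(G(H_S))]$ \emph{uniformly in $K$}. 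One can then feed $q'$ directly into the rounded simulation (Proposition~\ref{ozuiz}, which packages the lattice-rounding in Lemma~\ref{NearGridPoints}), and the $\mathcal{D}_K$ bound comes out as $O(K/J)+O(1/K)$ with constants depending only on $\rho,H_S,\beta,\delta$---no $J/L$ contamination and no separate continuity step. Your route works too, but the paper's choice of $q$ is the cleaner way to execute the same idea.
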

This is a corollary of Proposition \ref{DenseExtraction}, and the following subsections are devoted to the proof of Proposition \ref{DenseExtraction}.

Although the specific construction we use to prove this proposition does require both increasing sizes of the ancillary systems, as well as increasing density of states in the energy reservoir, we do not prove that these actually are necessary conditions. Note that this increase in density of states is in line with what was found in \cite{Skrzypczyk13}.
Recent results \cite{Reeb13} may help to determine whether these also are necessary conditions.

\subsubsection{Shifting the path to the grid}
In Section \ref{SimulationPass} we demonstrated that one can obtain a discretized path of probability distributions, such that the sum of all consecutive relative entropies along the path can be made arbitrarily small. These probability distributions we constructed as Gibbs distributions $G(h^k)$ of underlying energy level configurations $h^k$. The problem is that these energy level configurations may not necessarily satisfy our energy matching criterion, i.e., that each $h^k_n$ should be a multiple of the energy reservoir spacing $s$. The purpose of this section is to show that we can shift the energy level configurations $h^k$ into new configurations $\tilde{h}^k$ that are multiples of $s$, but at the same time obtain a sum of consecutive relative entropies  $\sum_{k=0}^{K-1}D\big(G(\tilde{h}^{k})\big\Vert G(\tilde{h}^{k+1})\big)$ that is sufficiently close to the original $\sum_{k=0}^{K-1}D\big(G(h^{k})\big\Vert G(h^{k+1})\big)$. 

In $\mathbb{R}^{N}$, regarded as the space of energy level configurations $h$, the configurations that respect the energy matching form a square lattice (see Fig.~\ref{FigureShiftToLattice}). We construct the new configuration $\tilde{h}$ by $\tilde{h}_n = s\lfloor h_n/s \rfloor$, where $\lfloor \cdot \rfloor$ is the floor-function, which rounds of real numbers to the nearest smaller integer. (There is no particular reason to choose precisely the floor-function. We could equally well use $\lceil \cdot \rceil$, or some other method to shift the original point $h$ to somewhere close in the lattice.)

 \begin{figure}[h!]
 \center
 \includegraphics[width= 8cm]{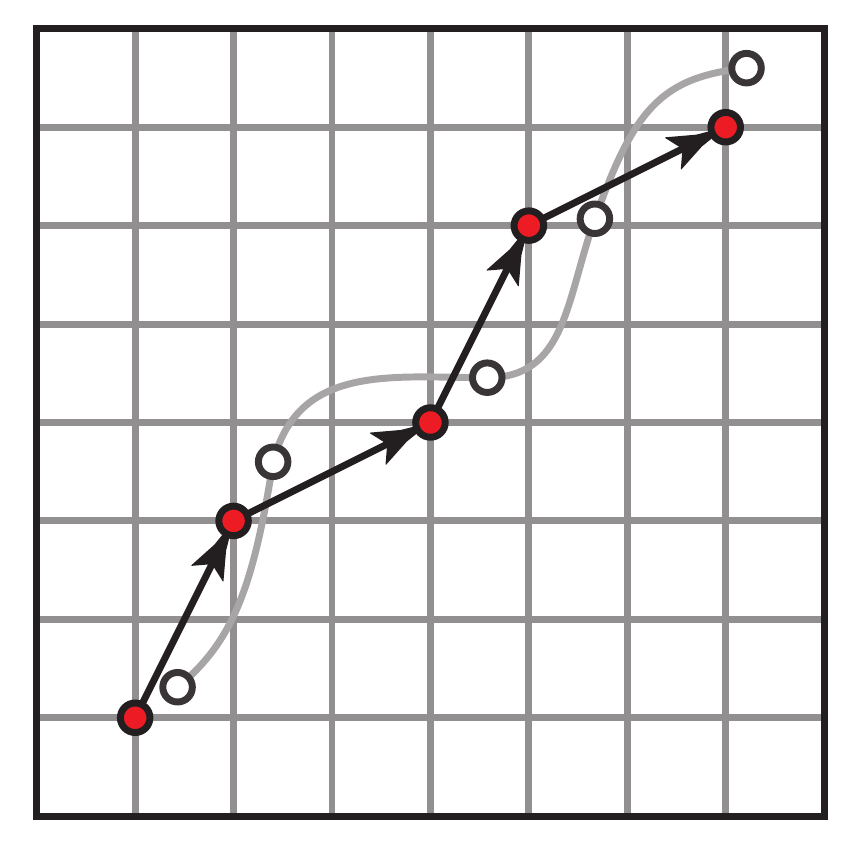} 
   \caption{\label{FigureShiftToLattice} {\bf Shifting to the lattice.}
  To have non-trivial unitary operations on the system $S$, the ancilla $A$, and the energy reservoir $E$, the energy levels of $SA$  must be `matched' to those of $E$. More precisely, we require all energy levels of $SA$ to be multiples of the level spacing in $E$, where the latter has the doubly infinite energy ladder as Hamiltonian. The thus allowed energy levels form a square grid in the space of energy level configurations (which is $\mathbb{R}^M$ for an $M$-level system). An original discretization $h^k\in\mathbb{R}^{M}$ (empty circles) of a smooth path (grey curve) can be shifted to a sequence $\tilde{h}^{k}$ on the lattice (filled circles).  Furthermore, this can be done in such a way that $\sum_{k=0}^{K-1}D\big(G(\tilde{h}^{k})\big\Vert G(\tilde{h}^{k+1})\big)$ is sufficiently close to $\sum_{k=0}^{K-1}D\big(G(h^{k})\big\Vert G(h^{k+1})\big)$. This makes it possible to adapt the results in Section \ref{SimulationPass} to the setting we consider here.
   }
\end{figure}

\begin{Lemma}
\label{NearGridPoints}
Let $N,K\in\mathbb{N}$ and $\beta,s>0$ be given. Let $(h^{k})_{k=0}^{K}$ with $h^{k}\in\mathbb{R}^{N}$.
Then there exists $(\tilde{h}^{k})_{k=0}^{K}$ with $\tilde{h}^{k}\in s \mathbb{Z}^{N}$, such that  
\begin{equation}
\begin{split}
 & \max_{n=1,\ldots,N}|\tilde{h}^{k}_n-h^{k}_n| \leq s,\quad 
D\boldsymbol{(}G(h^{k})\Vert G(\tilde{h}^k)\boldsymbol{)} \leq s \beta,\\
 & D\boldsymbol{(}G(\tilde{h}^{k})\Vert G(h^k)\boldsymbol{)} \leq s \beta,
\end{split}
\end{equation}
for $k=0,\ldots,K,$
and such that
\begin{equation}
\begin{split}
 &  \bigg|\sum_{k=0}^{K-1}D\big(G(\tilde{h}^{k})\big\Vert G(\tilde{h}^{k+1})\big) - \sum_{k=0}^{K-1}D\big(G(h^{k})\big\Vert G(h^{k+1})\big)\bigg|\\
 & \leq  2\beta s (K+1) \\
 &  \quad  + \beta(e^{s\beta}-1)K\max_{k=0,\ldots,K-1}\max_{n=1,\ldots,N}|h^{k+1}_n -h^{k}_n|.
 \end{split}
 \end{equation}
\end{Lemma}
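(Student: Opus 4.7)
The plan is to take $\tilde{h}^{k}_n := s\lfloor h^k_n/s\rfloor$, which places each $\tilde{h}^k$ on the lattice $s\mathbb{Z}^N$ and satisfies $0 \leq h^k_n - \tilde{h}^k_n < s$, immediately giving $\max_n|\tilde{h}^k_n - h^k_n| \leq s$. The workhorse for all the relative-entropy bounds is the identity
\begin{equation*}
D(G(h)\Vert G(h')) = \beta\langle h' - h\rangle_{G(h)} + \beta(F(h) - F(h')),
\end{equation*}
which follows directly from $\ln G_n(h) = -\beta h_n - \ln Z(h)$ and $F(h) = -\beta^{-1}\ln Z(h)$. Applied with the pair $(h^k,\tilde{h}^k)$, the componentwise inequality $\tilde{h}^k \leq h^k$ makes the expectation term nonpositive, while from $Z(\tilde{h}^k) = \sum_n e^{\beta(h^k_n - \tilde{h}^k_n)}e^{-\beta h^k_n} \in [Z(h^k),\, e^{\beta s}Z(h^k)]$ one obtains $0 \leq F(h^k) - F(\tilde{h}^k) \leq s$. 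These two observations combine to give $D(G(h^k)\Vert G(\tilde{h}^k)) \leq \beta s$; the reverse direction $D(G(\tilde{h}^k)\Vert G(h^k)) \leq \beta s$ follows symmetrically, since the signs of the two terms flip but each stays bounded in magnitude by $\beta s$.

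For the main estimate I would insert the identity into both sums and subtract. The $F$-pieces telescope, leaving only boundary terms $\beta[F(\tilde{h}^0) - F(h^0)] - \beta[F(\tilde{h}^K) - F(h^K)]$, bounded in absolute value by $2\beta s$ by the same $Z$-ratio argument. What remains, and which carries the main bookkeeping burden, is the expectation sum
\begin{equation*}
\beta\sum_{k=0}^{K-1}\Big[\langle\tilde{h}^{k+1} - \tilde{h}^k\rangle_{G(\tilde{h}^k)} - \langle h^{k+1} - h^k\rangle_{G(h^k)}\Big].
\end{equation*}

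Each summand I would decompose as
\begin{equation*}
\langle(\tilde{h}^{k+1}-h^{k+1}) - (\tilde{h}^k - h^k)\rangle_{G(\tilde{h}^k)} + \langle h^{k+1} - h^k\rangle_{G(\tilde{h}^k) - G(h^k)}.
\end{equation*}
The first piece is at most $2s$ in absolute value, from $|\tilde{h}^{k+1}_n - h^{k+1}_n| + |\tilde{h}^k_n - h^k_n| \leq 2s$ and the fact that $G(\tilde{h}^k)$ is a probability distribution. For the second piece I would use the general bound $|\langle v\rangle_p -\langle v\rangle_q| \leq \max_n|v_n|\cdot\Vert p-q\Vert_1$, reducing the task to controlling $\Vert G(\tilde{h}^k) - G(h^k)\Vert_1$. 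Here the pointwise ratio
\begin{equation*}
\frac{G_n(\tilde{h}^k)}{G_n(h^k)} = e^{\beta(h^k_n - \tilde{h}^k_n)}\frac{Z(h^k)}{Z(\tilde{h}^k)} \in [e^{-\beta s}, e^{\beta s}]
\end{equation*}
yields $|G_n(\tilde{h}^k) - G_n(h^k)|\leq G_n(h^k)(e^{\beta s}-1)$ and hence $\Vert G(\tilde{h}^k) - G(h^k)\Vert_1 \leq e^{\beta s} - 1$, so the second piece is bounded by $M(e^{\beta s}-1)$ with $M := \max_{k,n}|h^{k+1}_n - h^k_n|$. Summing $K$ terms and adding the $2\beta s$ boundary contribution gives precisely $2\beta s(K+1) + \beta K M(e^{\beta s} - 1)$. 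The main obstacle is really just this bookkeeping; no single estimate is deep, but the decomposition has to be chosen so that each term is controlled by a sharp $L^1$-versus-$\sup$ bound rather than, say, a looser Pinsker-type inequality (which would produce an unwanted $\sqrt{s\beta}$ and break the desired $e^{\beta s}-1$ scaling).
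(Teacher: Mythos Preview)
Your proposal is correct and follows essentially the same route as the paper: the same floor construction $\tilde{h}^k_n = s\lfloor h^k_n/s\rfloor$, the same identity $D(G(h)\Vert G(h')) = \beta\langle h'-h\rangle_{G(h)} + \beta(F(h)-F(h'))$, the same telescoping of the $F$-terms to boundary contributions bounded by $2\beta s$, and the same two-piece decomposition of the remaining expectation difference, with the pointwise ratio bound $G_n(\tilde{h}^k)/G_n(h^k)\in[e^{-\beta s},e^{\beta s}]$ controlling the second piece. Your expectation-bracket notation is a bit more compact than the paper's explicit sums, but the argument is the same in substance and in every estimate.
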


\begin{proof}
For each $k$ let $\tilde{h}^{k}_{n} := s\lfloor \frac{1}{s}h^{k}_n \rfloor$, for $n=1,\ldots,N$. 
By construction, this means that $\tilde{h}^{k}\in s\mathbb{Z}^{N}$, and 
\begin{equation}
\label{sfbjgbnsjklf}
\begin{split}
 &  |\tilde{h}^{k}_{n}-h^{k}_n|\leq s, \quad \beta s > \ln\frac{Z(\tilde{h}^k)}{Z(h^k)} \geq 0,\\
 &  e^{ s\beta}  >   \frac{G_n(\tilde{h}^k)}{G_n(h^k)} \geq e^{- s\beta}.
\end{split}
\end{equation} 
One can moreover realize that 
\begin{equation}
\label{bhvbjvfbv}
|G_n(\tilde{h}^{k}) -G_n(h^{k})|\leq   G_n(h^{k})(e^{ s\beta}-1).
\end{equation}
By Eq.~(\ref{sfbjgbnsjklf}) it follows that $s\beta \geq D\boldsymbol{(}G(h^k)\Vert G(\tilde{h}^{k})\boldsymbol{)}$. By the analogous reasoning one finds $s\beta \geq D\boldsymbol{(}G(\tilde{h}^{k})\Vert G(h^{k})\boldsymbol{)}$.  Observe that  
\begin{equation}
\begin{split}
 &  \sum_{k=0}^{K-1}D\big(G(h^{k})\big\Vert G(h^{k+1})\big) \\
  & = \beta\sum_{k=0}^{K-1}\sum_{n=1}^{N}G_n(h^{k}) (h^{k+1}_n -h^{k}_n)  + \ln Z(h^K)-\ln Z(h^0)
\end{split}
 \end{equation}
and the analogous statement for $\tilde{h}$. By making use of the inequalities in (\ref{sfbjgbnsjklf}) we find
\begin{equation}
\begin{split}
 & \bigg|\sum_{k=0}^{K-1}D\big(G(\tilde{h}^{k})\big\Vert G(\tilde{h}^{k+1})\big) - \sum_{k=0}^{K-1}D\big(G(h^{k})\big\Vert G(h^{k+1})\big)\bigg|\\
 & \leq  \bigg|\beta\sum_{k=0}^{K-1}\sum_{n=1}^{N}G_n(\tilde{h}^{k}) (\tilde{h}^{k+1}_n -\tilde{h}^{k}_n) \\
 & \quad\quad  -\beta\sum_{k=0}^{K-1}\sum_{n=1}^{N}G_n(h^{k}) (h^{k+1}_n -h^{k}_n) \bigg| +2\beta s 
\end{split}
 \end{equation}
\begin{equation*}
\begin{split}
 = & \bigg|\beta\sum_{k=0}^{K-1}\sum_{n=1}^{N}G_n(\tilde{h}^{k}) (\tilde{h}^{k+1}_n- h^{k+1}_n   -\tilde{h}^{k}_n + h^{k}_n) \\
 &  + \beta\sum_{k=0}^{K-1}\sum_{n=1}^{N}[G_n(\tilde{h}^{k}) -G_n(h^{k})](h^{k+1}_n -h^{k}_n) \bigg|\\
 & +2\beta s \\
 \leq & 2\beta s (K+1) + \beta\sum_{k=0}^{K-1}\sum_{n=1}^{N}\Big|G_n(\tilde{h}^{k})-G_n(h^{k})\Big| |h^{k+1}_n -h^{k}_n|\\
 & [\textrm{By Eq.~(\ref{bhvbjvfbv})}] \\
\leq & 2\beta s (K+1)  + \beta(e^{ s\beta}-1)\sum_{k=0}^{K-1}\sum_{n=1}^{N} G_n(h^k) |h^{k+1}_n -h^{k}_n|\\
\leq & 2\beta s (K+1)  + \beta(e^{ s\beta}-1)K\max_{k=0}^{K-1}\max_{n=1,\ldots,N}|h^{k+1}_n -h^{k}_n|.
 \end{split}
 \end{equation*}
\end{proof}

\subsubsection{Simulation of the path}

Here we remind that $\mathbb{P}(N)$ denotes the set of probability distributions over $N$ symbols, and $\mathbb{P}_{+}(N)$ denotes the subset of probability distributions with full support.
\begin{Proposition}
\label{ozuiz}
Let $N\in\mathbb{N}$, $\delta,\beta>0$, and $q,p\in\mathbb{P}_{+}(N)$ be given. For each $K,J\in \mathbb{N}$ there exists a  $h^{K}\in \frac{\delta}{J}\mathbb{Z}^{N^{K+1}}$, such that
\begin{equation}
\begin{split}
 D\Big( \big(qG(h^{K})\big)^{\downarrow}  \Big\Vert \big(pG(h^K)\big)^{\downarrow}\Big)   \leq &  
4\beta\delta\frac{1}{J}   + 2\beta\delta\frac{K}{J}\\
 & + (e^{\frac{\delta\beta}{J}}-1)\max_{n=1,\ldots,N}\big|\ln\frac{q_n}{p_n}\big| \\
& +  \frac{1}{K}\max_{n}\big|\ln\frac{q_n}{p_n}\big|^2.
\end{split}
\end{equation}
\end{Proposition}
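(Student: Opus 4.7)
The proposition is essentially a lattice-constrained version of Lemma \ref{ptozuirpuztoi}, so the natural plan is to construct the same path as in that lemma and then shift each waypoint onto the grid $\tfrac{\delta}{J}\mathbb{Z}^N$ using Lemma \ref{NearGridPoints}, tracking the error that the discretization introduces.

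Concretely, I would first mimic the construction in the proof of Lemma \ref{ptozuirpuztoi}: set $h^i_n := -\tfrac{1}{\beta}\ln q_n$ and $h^f_n := -\tfrac{1}{\beta}\ln p_n$, define the linear path $h(x) := (1-x)h^i + x h^f$, and put $h^{k:K} := h(k/K)$ for $k=0,\ldots,K$. Then I would apply Lemma \ref{NearGridPoints} to this finite sequence with spacing $s := \delta/J$, obtaining lattice points $\tilde{h}^{k:K}\in\tfrac{\delta}{J}\mathbb{Z}^N$ satisfying $D\boldsymbol{(}G(h^{k:K})\Vert G(\tilde{h}^{k:K})\boldsymbol{)}\leq \delta\beta/J$ (and the reverse), together with
\begin{equation*}
\begin{split}
\Big|\sum_{k=0}^{K-1}&D\boldsymbol{(}G(\tilde{h}^{k:K})\Vert G(\tilde{h}^{k+1:K})\boldsymbol{)} \\
& -\sum_{k=0}^{K-1}D\boldsymbol{(}G(h^{k:K})\Vert G(h^{k+1:K})\boldsymbol{)}\Big| \\
 \leq & \tfrac{2\beta\delta}{J}(K+1) +\beta(e^{\delta\beta/J}-1)K M_1,
\end{split}
\end{equation*}
where $M_1:=\max_{k,n}|h^{k+1:K}_n-h^{k:K}_n|=\tfrac{1}{K\beta}\max_n|\ln(q_n/p_n)|$. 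Lemma \ref{ztioutruor} (as used inside Lemma \ref{ptozuirpuztoi}) bounds the original sum by $\tfrac{1}{K}\max_n|\ln(q_n/p_n)|^2$.

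Next I would assemble $h^K\in\tfrac{\delta}{J}\mathbb{Z}^{N^{K+1}}$ from tensorized waypoints, $h^K_{n_0,\ldots,n_K}:=\tilde{h}^{0:K}_{n_0}+\cdots+\tilde{h}^{K:K}_{n_K}$, so that $G(h^K)$ factorizes as $G(\tilde{h}^{0:K})\otimes\cdots\otimes G(\tilde{h}^{K:K})$. Applying Corollary \ref{jgfkgfj} with the cyclic permutation $\Pi$ used in the proof of Lemma \ref{ptozuirpuztoi} gives
\begin{equation*}
\begin{split}
& D\Big((qG(h^K))^{\downarrow}\Big\Vert (pG(h^K))^{\downarrow}\Big) \\
& \leq D\boldsymbol{(}\Pi(qG(h^K))\Vert pG(h^K)\boldsymbol{)} \\
& = D\boldsymbol{(}q\Vert G(\tilde{h}^{0:K})\boldsymbol{)} + D\boldsymbol{(}G(\tilde{h}^{K:K})\Vert p\boldsymbol{)} \\
& \quad + \sum_{k=0}^{K-1}D\boldsymbol{(}G(\tilde{h}^{k:K})\Vert G(\tilde{h}^{k+1:K})\boldsymbol{)}.
\end{split}
\end{equation*}
Since $q=G(h^{0:K})$ and $p=G(h^{K:K})$, the two endpoint terms are each bounded by $\delta\beta/J$ via Lemma \ref{NearGridPoints}.

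Finally I would add all the pieces: the two endpoint corrections contribute $2\delta\beta/J$, the shift-error for the path sum contributes $2\beta\delta(K+1)/J + \beta(e^{\delta\beta/J}-1)K\cdot\tfrac{1}{K\beta}\max_n|\ln(q_n/p_n)|$, and the original path cost from Lemma \ref{ztioutruor} contributes $\tfrac{1}{K}\max_n|\ln(q_n/p_n)|^2$. Collecting, $2\delta\beta/J + 2\delta\beta(K+1)/J = 4\delta\beta/J + 2\delta\beta K/J$, and the remainder is exactly $(e^{\delta\beta/J}-1)\max_n|\ln(q_n/p_n)| + \tfrac{1}{K}\max_n|\ln(q_n/p_n)|^2$, matching the stated bound. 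No step is conceptually hard; the main care is bookkeeping, particularly ensuring that the multiplicative factor $K$ from the number of edges in the path cancels against the $1/K$ in $M_1$ so that the linear term in $\max_n|\ln(q_n/p_n)|$ survives without a $K$ prefactor.
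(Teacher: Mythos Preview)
Your proposal is correct and follows essentially the same approach as the paper's proof: the same linear path from $h^i=-\tfrac{1}{\beta}\ln q$ to $h^f=-\tfrac{1}{\beta}\ln p$, the same application of Lemma~\ref{NearGridPoints} with $s=\delta/J$ combined with Lemma~\ref{ztioutruor}, the same tensorized construction of $h^K$, and the same cyclic-permutation bound via Corollary~\ref{jgfkgfj}. The bookkeeping you outline, including the cancellation of $K$ against $M_1$ and the combination $2\delta\beta/J + 2\delta\beta(K+1)/J = 4\delta\beta/J + 2\delta\beta K/J$, matches the paper exactly.
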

\begin{proof}
Since $q,p\in \mathbb{P}_{+}(N)$, it follows that $\ln q_n$ and $\ln p_n$ are well defined and finite. Thus we can define
$h^i := -\frac{1}{\beta}\ln q$, $h^f := -\frac{1}{\beta}\ln p$. 
We furthermore define $h:[0,1]\rightarrow \mathbb{R}^{N}$ by
$h(x):= (1-x)h^{i} + xh^{f}$ for $x\in[0,1]$.
The function $h(x) = (1-x)h^i + xh^f$ clearly has a continuous second derivative on each component.

For this function, let $h^{k}:= h(\frac{k}{K})$, $s:= \frac{\delta}{J}$ in Lemma \ref{NearGridPoints}. Lemma \ref{NearGridPoints} thus delivers $\tilde{h}^{k:K}:= \tilde{h}^{k}$, with 
 $\tilde{h}^{k:K}\in \frac{\delta}{J}\mathbb{Z}^{N}$, such that  
\begin{equation}
\label{pzuiozpio}
\begin{split}
 &0\leq D\big(G\big(h(0)\big)\big\Vert G\big(h^{0:K}\big)\big) \leq \frac{\delta\beta}{J},\\
 & 0\leq D\big(G\big(\tilde{h}^{K:K}\big)\big\Vert G\big(h(1)\big)\big) \leq \frac{\delta\beta}{J},
\end{split}
\end{equation}
and
\begin{equation}
\label{qtwerqrrtq}
\begin{split}
 & \bigg|\sum_{k=0}^{K-1}D\big(G(\tilde{h}^{k:K})\big\Vert G(\tilde{h}^{k+1:K})\big) \\
 & - \sum_{k=0}^{K-1}D\Big(G\big(h(\frac{k}{K})\big)\Big\Vert G\big(h(\frac{k+1}{K})\big)\Big)\bigg|\\
 \leq &  2\beta\delta\frac{K+1}{J}  + (e^{\frac{\delta\beta}{J}}-1)\max_{n=1,\ldots,N}|\ln\frac{q_n}{p_n}|.
\end{split}
 \end{equation}
 A combination of (\ref{qtwerqrrtq}) and Lemma \ref{ztioutruor} yields
 \begin{equation*}
 \begin{split}
0  \leq & \sum_{k=0}^{K-1}D\big( G(\tilde{h}^{k:K} )\big\Vert  G(\tilde{h}^{k+1:K})\big)  \\
 \leq & \bigg|\sum_{k=0}^{K-1}D\big( G(\tilde{h}^{k:K})\Big\Vert  G(\tilde{h}^{k+1:K})\big)  \\
 &   -\sum_{k=0}^{K-1}D\Big( G\big(h(\frac{k}{K})\big)\Big\Vert G\big(h(\frac{k+1}{K})\big)\Big)\bigg|   \\
 & + \sum_{k=0}^{K-1}D\Big( G\big(h(\frac{k}{K})\big)\Big\Vert G\big(h(\frac{k+1}{K})\big)\Big)\\
 \leq &  2\beta\delta\frac{K+1}{J}  \\
 &  + (e^{\frac{\delta\beta}{J}}-1)\max_{n=1,\ldots,N}|\ln\frac{q_n}{p_n}|  +  \frac{1}{K} \max_{n=1,\ldots,N}|\ln\frac{q_n}{p_n}|^2.
\end{split}
\end{equation*}
By combining this with the two inequalities in Eq.~(\ref{pzuiozpio}) we thus obtain
\begin{equation}
\label{vervcezgc}
\begin{split}
 0 \leq & D\Big(G\big(h(0)\big)\Big\Vert G\big(\tilde{h}^{0:K} \big)\Big) + \sum_{k=0}^{K-1}D\Big( G\big(\tilde{h}^{k:K} \big)\Big\Vert  G\big(\tilde{h}^{k+1:K}\big)\Big) \\
 & + D\Big(G\big(\tilde{h}^{K:K} \big) \Big\Vert G\big(h(1)\big)\Big)\\
  \leq &  4\beta\delta\frac{1}{J}   + 2\beta\delta\frac{K}{J}  \\
  & + (e^{\frac{\delta\beta}{J}}-1)\max_{n=1,\ldots,N}|\ln\frac{q_n}{p_n}|+  \frac{1}{K} \max_{n=1,\ldots,N}|\ln\frac{q_n}{p_n}|^2.
\end{split}
\end{equation}
Define
\begin{equation}
\begin{split}
 h^{K}_{n_0,\ldots,n_{K}} := &\tilde{h}^{0:K}_{n_0} + \cdots + \tilde{h}^{K:K}_{n_K} ,\\
 &  (n_0,\ldots, n_{K}) \in \{1,\ldots,N\}^{K+1}.
\end{split}
\end{equation}
Since $h^{k:K}\in \frac{\delta}{J}\mathbb{Z}^{N}$ it follows that 
\begin{equation}
h^{K} \in \frac{\delta}{J}\mathbb{Z}^{N^{K+1}}.
\end{equation}
Furthermore
\begin{equation}
\begin{split}
G_{n_0,\ldots,n_{K}}(h^{K})  =  & G_{n_0}(\tilde{h}^{0:K})\cdots \\
& \cdots G_{n_{K-1}}(\tilde{h}^{K-1:K})G_{n_K}(\tilde{h}^{K:K}).
\end{split}
\end{equation}
On $\mathbb{R}^{N^{K+2}}$ we define the  operation $\Pi$ by
\begin{equation}
(\Pi a)_{n_sn_0,n_1,\ldots,n_{K-1},n_K} := a_{n_0,n_1,\ldots,n_{K-1},n_{K},n_s}.
\end{equation}
In other words, $\Pi$ is defined via a cyclic permutation of the index of the indices of $a$. 
By Corollary \ref{jgfkgfj} it follows that 
\begin{equation}
\begin{split}
& D\Big(\big(qG(h^{K})\big)^{\downarrow}\Big\Vert \big(pG(h^{K})\big)^{\downarrow} \Big) \\
 \leq & D\Big(\Pi\big(qG(h^{K})\big)\Big\Vert pG(h^{K})\Big) \\
 = & D\Big(  \Pi\big( qG(\tilde{h}^{0:K})\cdots G(\tilde{h}^{K-1:K})G(\tilde{h}^{K:K}) \big)   \Big\Vert  \\
 & \quad\quad\quad\quad pG(\tilde{h}^{0:K})\cdots G(\tilde{h}^{K-1:K})G(\tilde{h}^{K:K})\Big) \\
 = & D\Big(G(\tilde{h}^{K:K})qG(\tilde{h}^{0:K})\cdots G(\tilde{h}^{K-1:K})\Big\Vert \\
 & \quad\quad\quad\quad pG(\tilde{h}^{0:K})\cdots G(\tilde{h}^{K-1:K})G(\tilde{h}^{K:K})\Big) \\
 = &  D\boldsymbol{(}G(\tilde{h}^{K:K})\Vert p \boldsymbol{)} + D\boldsymbol{(}q\Vert G(\tilde{h}^{0:K})\boldsymbol{)} \\
 & +\sum_{k=0}^{K-1}D\boldsymbol{(}G(\tilde{h}^{k:K})\Vert G(\tilde{h}^{k+1:K})\boldsymbol{)}.
 \end{split}
\end{equation}
By inserting (\ref{vervcezgc}) we obtain the statement of the proposition. 
\end{proof}

\subsubsection{Assembling it all}

A channel $\Gamma:L(\mathcal{H})\rightarrow L(\mathcal{H})$ is unital (or mixing enhancing) if $\Gamma(\hat{1})= \hat{1}$. For a positive operator $Q$, let $\lambda_{\mathrm{min}}(Q)$ denote the minimal eigenvalue.
\begin{Lemma}
\label{UnitalIncr}
If a channel $\Gamma$ is unital then $\lambda_{\mathrm{min}}\boldsymbol{(}\Gamma(\rho)\boldsymbol{)} \geq \lambda_{\mathrm{min}}(\rho)$ for all  $\rho\in\mathcal{S}(\mathcal{H})$.
\end{Lemma}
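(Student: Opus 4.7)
The plan is to give a short, direct proof using only positivity of the channel together with unitality, without needing to invoke the stronger majorization relation $\lambda^{\downarrow}(\Gamma(\rho)) \prec \lambda^{\downarrow}(\rho)$ that is known to hold for unital channels.

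First I would note that for any Hermitian operator $X$, we have $X \geq \lambda_{\mathrm{min}}(X)\hat{1}$; applied to $\rho$ this gives the positive-semidefinite inequality $\rho - \lambda_{\mathrm{min}}(\rho)\hat{1} \geq 0$. Since $\Gamma$ is a (completely) positive map it preserves positive semidefiniteness, and by linearity
\begin{equation}
\Gamma(\rho) - \lambda_{\mathrm{min}}(\rho)\Gamma(\hat{1}) \;=\; \Gamma\bigl(\rho - \lambda_{\mathrm{min}}(\rho)\hat{1}\bigr) \;\geq\; 0.
\end{equation}

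Next I would invoke the unitality assumption $\Gamma(\hat{1}) = \hat{1}$ to conclude that $\Gamma(\rho) \geq \lambda_{\mathrm{min}}(\rho)\hat{1}$. By the variational characterization $\lambda_{\mathrm{min}}(Y) = \min_{\|\psi\|=1}\langle \psi|Y|\psi\rangle$ for Hermitian $Y$, this operator inequality is equivalent to $\lambda_{\mathrm{min}}\boldsymbol{(}\Gamma(\rho)\boldsymbol{)} \geq \lambda_{\mathrm{min}}(\rho)$, which is the desired conclusion.

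There is no real obstacle here; the result is a one-line consequence of positivity plus unitality, and the only thing to be careful about is that positivity (not merely trace preservation) is what allows us to move the scalar $\lambda_{\mathrm{min}}(\rho)$ through the map while preserving the inequality.
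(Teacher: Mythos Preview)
Your proof is correct and is in fact more elementary than the paper's. The paper invokes the full majorization relation $\lambda^{\downarrow}\boldsymbol{(}\Gamma(\rho)\boldsymbol{)} \prec \lambda^{\downarrow}(\rho)$ for unital channels (citing \cite{Bapat,Chefles}), and then extracts the minimum-eigenvalue inequality from the partial-sum conditions together with trace preservation: $\lambda_{\mathrm{min}}\boldsymbol{(}\Gamma(\rho)\boldsymbol{)} = 1 - \sum_{k=1}^{N-1}\lambda^{\downarrow}_k\boldsymbol{(}\Gamma(\rho)\boldsymbol{)} \geq 1 - \sum_{k=1}^{N-1}\lambda^{\downarrow}_k(\rho) = \lambda_{\mathrm{min}}(\rho)$. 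Your argument bypasses majorization entirely, using only positivity and unitality applied to the operator inequality $\rho \geq \lambda_{\mathrm{min}}(\rho)\hat{1}$. The advantage of your route is that it requires no external results and works verbatim in infinite dimensions; the paper's route, while heavier, situates the statement within the broader fact that unital channels are mixing-enhancing, which is thematically relevant elsewhere in the paper (e.g.\ Sec.~\ref{Sec:inducedch}).
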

\begin{proof}
By \cite{Bapat}, or Theorem 2 in \cite{Chefles}, we know that for unital channels the output is more mixed than the input, i.e., $\lambda\boldsymbol{(}\Gamma(\rho)\boldsymbol{)}\prec \lambda(\rho)$. If the underlying Hilbert space has dimension $N$, this means $\sum_{k=1}^{n}\lambda^{\downarrow}_{k}\boldsymbol{(}\Gamma(\rho)\boldsymbol{)} \leq \sum_{k=1}^{n}\lambda^{\downarrow}_{k}(\rho)$ for $n=1,\ldots,N$. Since $\Gamma$ is a channel and thus trace preserving, we can conclude that $\lambda_{\mathrm{min}}\boldsymbol{(}\Gamma(\rho)\boldsymbol{)} = 1-\sum_{k=1}^{n-1}\lambda^{\downarrow}_{k}\boldsymbol{(}\Gamma(\rho)\boldsymbol{)} \geq 1-\sum_{k=1}^{n-1}\lambda^{\downarrow}_{k}(\rho)  = \lambda_{\mathrm{min}}(\rho)$.
\end{proof}

To avoid confusion in relation to Corollary \ref{LimitExtraction}, one should note that in both Proposition \ref{DenseExtraction} and \ref{nadfskbvnak}, $J$ and $L$ are not regarded as functions of $K$. Note furthermore the change of notation from $R^{H_S}_{\sigma}$ to $R^{\delta/J, H_S}_{\sigma}$. This serves to underline the fact that $R^{\delta/J, H_S}_{\sigma}$ depends on $J$, which should be kept in mind when using Proposition \ref{DenseExtraction}. The reason for this is that for a fixed density operator $\sigma = \sum_{jj'}\sigma_{jj'}|j\rangle\langle j'|$, an increase of $J$ effectively means a `decrease' of the coherence in $\sigma$ compared to $H_S$. To see this, consider a transition from $|\psi_n\rangle$ to $|\psi_{n'}\rangle$. The  corresponding change in energy, $s(z_{n'}-z_{n})$, has to be compensated for by a change of energy $s(z_n-z_{n'}) = sJ(x_n-x_{n'})$ in the reservoir. The larger the $J$, the larger the compensating jumps in the reservoir have to be (counted in numbers of eigenstates). In other words, an increase in $J$ effectively decreases the `width' of $\sigma$ in relation to $H_S$, and in this sense means a decrease in coherence.
\begin{Proposition}
\label{DenseExtraction}
Let $N\in\mathbb{N}$, $\delta,\beta>0$ be given. Let $\mathcal{H}_{S}$ be a Hilbert space with $\dim\mathcal{H}_{S} = N$, let $\rho\in\mathcal{S}_{+}(\mathcal{H}_{S})$, and $H_S\in H_{\delta}(\mathcal{H}_{S})$, and let $\sigma\in\mathcal{S}^{*}(\mathcal{H}_E)$. Then, for every $J,K\in\mathbb{N}$ there exists an ancillary Hilbert space $\mathcal{H}_{A}^{K}$ with $\dim\mathcal{H}_A^{K} = N^{K+1}$, and $H_A^{K}\in H_{\delta/J}(\mathcal{H}_A^{K})$ such that 
\begin{equation}
\begin{split}
 0   \leq & \frac{1}{\beta}D\boldsymbol{(}\rho\Vert G(H_S)\boldsymbol{)}  - \frac{1}{\beta}\Big[S\boldsymbol{(}R^{\delta/J, H_S}_{\sigma}(\rho)\boldsymbol{)}-S(\rho)\Big] - W_{J,K}(\sigma) \\
 \leq &  
4\delta\frac{1}{J}   + 2\delta\frac{K}{J} + \frac{1}{\beta}(e^{\frac{\delta\beta}{J}}-1)|\ln[\lambda_{\mathrm{min}}(\rho)\lambda_{\mathrm{min}}\boldsymbol{(}G(H_S)\boldsymbol{)}]| \\
 & +  \frac{1}{K\beta}\big|\ln[\lambda_{\mathrm{min}}(\rho)\lambda_{\mathrm{min}}\boldsymbol{(}G(H_S)\boldsymbol{)}]\big|^2, 
\end{split}
\end{equation}
where
\begin{equation}
\begin{split}
W_{J,K}(\sigma)  := & \sup_{U\in U(\mathcal{H}_S\otimes\mathcal{H}_{A}^{K})}\Tr( H^{(\delta/J)}_E \tilde{\sigma})  -\Tr(H^{(\delta/J)}_E\sigma), \\
 \tilde{\sigma}  := &  \Tr_{S,A}[V(U)\rho\otimes G(H_A^{K})\otimes \sigma V(U)^{\dagger}],
\end{split}
\end{equation} 
and where $V(U)$ is as defined in (\ref{bijection}) and $R^{\delta/J,H_S}_{\sigma}$ is as in Definition \ref{Rdefinition}, for the reservoir Hamiltonian $H_E^{(\delta/J)}$.
\end{Proposition}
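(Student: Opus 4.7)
The plan is to reduce the statement to Proposition~\ref{Optimum} and Proposition~\ref{ozuiz} by constructing an ancillary Hamiltonian whose Gibbs-state spectrum realises the probability vector produced by the latter. First I would apply Proposition~\ref{Optimum} at reservoir spacing $s=\delta/J$ with an as-yet-unspecified $H_A^{K}\in H_{\delta/J}(\mathcal{H}_A^{K})$. This rewrites
\begin{equation*}
\begin{split}
W_{J,K}(\sigma) = & \tfrac{1}{\beta}D\boldsymbol{(}\rho\Vert G(H_S)\boldsymbol{)} - \tfrac{1}{\beta}\bigl[S\boldsymbol{(}R^{\delta/J,H_S}_{\sigma}(\rho)\boldsymbol{)}-S(\rho)\bigr] \\
 & - \tfrac{1}{\beta}D\Big(\lambda^{\downarrow}\bigl(R^{\delta/J,H_S}_{\sigma}(\rho)\otimes G(H_A^{K})\bigr) \Big\Vert \lambda^{\downarrow}\bigl(G(H_S)\otimes G(H_A^{K})\bigr)\Big).
\end{split}
\end{equation*}
The lower bound $0 \leq \cdots$ in the proposition is then immediate from non-negativity of the final relative-entropy term. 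The whole problem therefore reduces to choosing $H_A^{K}$ so as to make this last term small.

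Next I would set $q := \lambda^{\downarrow}\boldsymbol{(}R^{\delta/J,H_S}_{\sigma}(\rho)\boldsymbol{)}$ and $p := \lambda^{\downarrow}\boldsymbol{(}G(H_S)\boldsymbol{)}$. Both lie in $\mathbb{P}_{+}(N)$: for $p$ this is automatic, while for $q$ it follows from the assumption $\rho\in\mathcal{S}_{+}(\mathcal{H}_S)$ together with Lemmas~\ref{Runital} and~\ref{UnitalIncr}, which actually furnish the sharper estimate $\lambda_{\min}(R^{\delta/J,H_S}_{\sigma}(\rho))\geq \lambda_{\min}(\rho)$ that will be needed below. Applying Proposition~\ref{ozuiz} to this pair $(q,p)$ and spacing $\delta/J$ yields a vector $h^{K}\in \tfrac{\delta}{J}\mathbb{Z}^{N^{K+1}}$ satisfying the quantitative bound stated there. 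I then define $H_A^{K}:=\sum_{k=1}^{N^{K+1}} h^{K}_{k}|a_k\rangle\langle a_k|$ on any orthonormal basis $\{|a_k\rangle\}$ of an $N^{K+1}$-dimensional Hilbert space $\mathcal{H}_A^{K}$; by construction $H_A^{K}\in H_{\delta/J}(\mathcal{H}_A^{K})$ and the eigenvalues of $G(H_A^{K})$ are exactly the components of $G(h^{K})$.

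Since the multiset of eigenvalues of a tensor product is the elementwise product of the component multisets, one has $\lambda\boldsymbol{(}R^{\delta/J,H_S}_{\sigma}(\rho)\otimes G(H_A^{K})\boldsymbol{)} = qG(h^{K})$ and $\lambda\boldsymbol{(}G(H_S)\otimes G(H_A^{K})\boldsymbol{)} = pG(h^{K})$, so the remaining relative-entropy term equals $\tfrac{1}{\beta}D\big((qG(h^{K}))^{\downarrow}\Vert (pG(h^{K}))^{\downarrow}\big)$. To cast the Proposition~\ref{ozuiz} bound into the form stated, I would then estimate $\max_{n}|\ln(q_n/p_n)|$ using $q_n\in[\lambda_{\min}(\rho),1]$ and $p_n\in[\lambda_{\min}(G(H_S)),1]$, giving $\max_n|\ln(q_n/p_n)| \leq -\ln\lambda_{\min}(\rho)-\ln\lambda_{\min}(G(H_S)) = |\ln[\lambda_{\min}(\rho)\lambda_{\min}(G(H_S))]|$; substituting this into the two places where $\max_n|\ln(q_n/p_n)|$ appears in Proposition~\ref{ozuiz} produces exactly the upper bound claimed.

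Strategically, almost all the heavy lifting has already been done: Proposition~\ref{Optimum} expresses the optimal yield in the required form, and Proposition~\ref{ozuiz} supplies a grid-valued energy configuration whose associated Gibbs distribution reduces the residual relative entropy along a discretised iso-thermal path. The only real subtlety, and the likeliest source of a slip, is making sure that $q\in\mathbb{P}_{+}(N)$ and that the resulting bound on $\max_n|\ln(q_n/p_n)|$ is controlled purely in terms of $\lambda_{\min}(\rho)$ and $\lambda_{\min}(G(H_S))$; this is why the full-support hypothesis on $\rho$ cannot be dropped at this stage and is handled by invoking the unital property of $R^{\delta/J,H_S}_{\sigma}$ via Lemma~\ref{UnitalIncr}. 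Everything else is bookkeeping in the identification of spectra of tensor products and the verification that $H_A^{K}$ lands on the correct $\delta/J$-grid.
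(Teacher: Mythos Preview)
Your proposal is correct and follows essentially the same route as the paper: apply Proposition~\ref{Optimum} at spacing $s=\delta/J$ to isolate the residual relative-entropy term, use Lemmas~\ref{Runital} and~\ref{UnitalIncr} to secure $q\in\mathbb{P}_{+}(N)$ with $\lambda_{\min}(q)\geq\lambda_{\min}(\rho)$, invoke Proposition~\ref{ozuiz} to obtain the grid-valued $h^{K}$, build $H_A^{K}$ from it, and identify the tensor-product spectra. The only cosmetic difference is the order of presentation (the paper derives the $\lambda_{\min}$ bounds first and applies Proposition~\ref{Optimum} last), and your handling of the $\max_n|\ln(q_n/p_n)|$ estimate matches the paper's inequality~(\ref{yscbvy}).
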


\begin{proof}
Since $R^{\delta/J,H_S}_{\sigma}$ is unital (Lemma \ref{Runital}) we can, by Lemma \ref{UnitalIncr}, conclude that  
\begin{equation}
\label{nbflxg}
\lambda_{\mathrm{min}}\boldsymbol{(}R^{\delta/J,H_S}_{\sigma}(\rho)\boldsymbol{)} \geq \lambda_{\mathrm{min}}(\rho).
\end{equation}
Hence, since $\rho$ by assumption is full rank, it follows that $R^{\delta/J,H_S}_{\sigma}(\rho)$ is full rank. 
Let $q$ denote the eigenvalues of $R^{\delta/J,H_S}_{\sigma}(\rho)$, i.e., $q := \lambda\boldsymbol{(}R^{\delta/J,H_S}_{\sigma}(\rho)\boldsymbol{)}$. Let $p$ denote the eigenvalues of $G(H_S)$, i.e., $p:= \lambda\boldsymbol{(}G(H_S)\boldsymbol{)}$.  Since $\rho$ is full rank, it follows that $R^{\delta/J,H_S}_{\sigma}(\rho)$ is full rank, and thus $q\in \mathbb{P}_{+}(N)$. Furthermore $p\in \mathbb{P}_{+}(N)$ since $G(H_S)$ is a Gibbs distribution. Furthermore,
\begin{equation}
\label{yscbvy}
\begin{split}
\big|\ln\frac{q_n}{p_n}\big|  \leq & -\ln \lambda_{\mathrm{min}}\boldsymbol{(}R^{\delta/J,H_S}_{\sigma}(\rho)\boldsymbol{)} - \ln \lambda_{\mathrm{min}}\boldsymbol{(}G(H_S)\boldsymbol{)}\\
 \leq & -\ln \lambda_{\mathrm{min}}(\rho) - \ln \lambda_{\mathrm{min}}\boldsymbol{(}G(H_S)\boldsymbol{)},
\end{split}
\end{equation}
where the last inequality follows by (\ref{nbflxg}).

Since $q,p\in \mathbb{P}_{+}(N)$, we can conclude that the conditions of Proposition \ref{ozuiz} are satisfied, and we thus know that for each $K,J$ there exists a $h^{K}\in\frac{\delta}{J}\mathbb{Z}^{N^{K+1}}$ such that 
\begin{equation}
\label{bjknnbjk}
\begin{split}
& D\Big( \big(qG(h^{K})\big)^{\downarrow}  \Big\Vert \big(pG(h^K)\big)^{\downarrow}\Big)   \\
& \leq  4\beta\delta\frac{1}{J}   + 2\beta\delta\frac{K}{J}\\
 & \quad + (e^{\frac{\delta\beta}{J}}-1)|\ln[\lambda_{\mathrm{min}}(\rho)\lambda_{\mathrm{min}}\boldsymbol{(}G(H_S)\boldsymbol{)}]| \\
 & \quad +  \frac{1}{K}\big|\ln[\lambda_{\mathrm{min}}(\rho)\lambda_{\mathrm{min}}\boldsymbol{(}G(H_S)\boldsymbol{)}]\big|^2, 
\end{split}
\end{equation}
where we have used (\ref{yscbvy}).
Let $\mathcal{H}_{A}^K$ be a Hilbert space with $\dim\mathcal{H}_A^{K} = N^{K+1}$. We can construct a Hermitian operator $H_A^K$ on $\mathcal{H}_{A}^K$ that has $h^{K}$ as its eigenvalues. By construction it follows that $H_A^K\in H_{\delta/J}(\mathcal{H}_A^K)$.

One can furthermore check that 
\begin{equation}
\label{nbfkjfgbns}
\begin{split}
&  D\Big(\lambda^{\downarrow}\big(R^{\delta/J,H_S}_{\sigma}(\rho)\otimes G(H_A^K)\big)\Vert \lambda^{\downarrow}\big(G(H_S)\otimes G(H_A^K)\big)\Big)    \\
 &  = D\Big(\big(qG(h^{K})\big)^{\downarrow}\Vert\big(pG(h^K)\big)^{\downarrow}\Big).
\end{split}
\end{equation}

By assumption $H_S\in H_{\delta}(\mathcal{H}_S)$, while $H_A^K\in H_{\delta/J}(\mathcal{H}^K_A)$ and the energy reservoir has the energy spacing $s = \delta/J$. Note that 
\begin{equation}
H_{\delta}(\mathcal{H}_{S})\subset H_{\delta/J}(\mathcal{H}_{S}).
\end{equation}
 The assumption $H_S\in H_{\delta}(\mathcal{H}_S)$ implies  $H_S\in H_{\delta/J}(\mathcal{H}_{S})$ and thus allows us to apply Proposition \ref{Optimum} with $s:=\delta/J$ on the state $\rho\otimes G(H^K_A)\otimes \sigma$, which thus yields
\begin{equation}
\label{kjfbvyjkb}
\begin{split}
 & W_{J,K}(\sigma)  \\
 & =  \frac{1}{\beta}D\boldsymbol{(}\rho\Vert G(H_S)\boldsymbol{)}  - \frac{1}{\beta}\Big[S\boldsymbol{(}R^{\delta/J,H_S}_{\sigma}(\rho)\boldsymbol{)}-S(\rho)\Big] \\
 & -\frac{1}{\beta}D\Big(\lambda^{\downarrow}\big(R^{\delta/J,H_S}_{\sigma}(\rho)\otimes G(H_A^K)\big) \Big\Vert  \lambda^{\downarrow}\big(G(H_S)\otimes G(H_A^K)\big) \Big).
\end{split}
\end{equation}
By combining (\ref{bjknnbjk}), (\ref{nbfkjfgbns}), and (\ref{kjfbvyjkb}) we obtain the proposition.
\end{proof}

As was mentioned above, an increase of $J$ effectively means a decrease of the coherence in the reservoir state $\sigma$.
To guarantee that $S\boldsymbol{(}R^{\delta/J,H_S}_{\sigma}(\rho)\boldsymbol{)}-S(\rho)$ is small, we must thus choose the state $\sigma$ as to  compensate for a large $J$. In the case $\sigma = |\eta_{L,l_0}\rangle\langle\eta_{L,l_0}|$, we must thus make sure that $L \gg J$.
\begin{Proposition}
\label{nadfskbvnak}
Let $N\in\mathbb{N}$, $\delta,\beta>0$ be given. Let $\mathcal{H}_{S}$ be a Hilbert space with $\dim\mathcal{H}_{S} = N$, let $\rho\in\mathcal{S}_{+}(\mathcal{H}_{S})$, and $H_S\in H_{\delta}(\mathcal{H}_{S})$.  With $h_n^{S}$ the eigenvalues of $H_S$, we let $x_n := h^{S}_n/\delta$, and $x_{\mathrm{min}} := \min_{n}x_n$ and $x_{\mathrm{max}} := \max_{n}x_n$. Let $J,K,L\in\mathbb{N}$ be such that $L\geq NJ(x_{\mathrm{max}}-x_{\mathrm{min}})$, then there exists an ancillary Hilbert space $\mathcal{H}_{A}^{K}$ with $\dim\mathcal{H}_A^{K} = N^{K+1}$, and $H_A^{K}\in H_{\delta/J}(\mathcal{H}_A^{K})$ such that 
\begin{equation}
\label{vnakdfjbk}
\begin{split}
 0 \leq &\frac{1}{\beta}D\boldsymbol{(}\rho\Vert G(H_S)\boldsymbol{)} -W_{K,J,L}\\
 \leq  &   \frac{J}{\beta L}\frac{x_{\mathrm{max}}-x_{\mathrm{min}}}{2}N\ln N+ \frac{1}{\beta}\Xi\Big(\frac{J}{L}\frac{N(x_{\mathrm{max}}-x_{\mathrm{min}})}{2}\Big)\\
 & + 4\delta\frac{1}{J}   + 2\delta\frac{K}{J}\\
 & + \frac{1}{\beta}(e^{\frac{\delta\beta}{J}}-1)|\ln[\lambda_{\mathrm{min}}(\rho)\lambda_{\mathrm{min}}\boldsymbol{(}G(H_S)\boldsymbol{)}]| \\
 & +  \frac{1}{K\beta}\big|\ln[\lambda_{\mathrm{min}}(\rho)\lambda_{\mathrm{min}}\boldsymbol{(}G(H_S)\boldsymbol{)}]\big|^2, 
\end{split}
\end{equation}
where 
\begin{equation}
W_{K,J,L}  :=  \sup_{U\in U(\mathcal{H}_S\otimes\mathcal{H}_{A}^{K})}\Tr( H^{\delta/J}_E \tilde{\sigma}_{L})  -\Tr(H^{\delta/J}_E\sigma_L),
\end{equation} 
\begin{equation}
\tilde{\sigma}_{L} :=  \Tr_{S,A}[V(U)\rho\otimes G(H_A^{K})\otimes \sigma_L V(U)^{\dagger}]
\end{equation} 
and where $V(U)$ is as defined in (\ref{bijection}) for the reservoir Hamiltonian $H_E^{(\delta/J)}$, and
\begin{equation}
\begin{split}
 \sigma_L  := & |\eta_{L,l_0}\rangle\langle\eta_{L,l_0}|,\\
 |\eta_{L,l_0}\rangle  := & \frac{1}{\sqrt{L}}\sum_{l=0}^{L-1}|l+l_0\rangle_E,
\end{split}
\end{equation} 
for some fixed $l_0\in\mathbb{Z}$.
\end{Proposition}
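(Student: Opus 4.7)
The plan is to obtain the stated bound as a direct combination of Proposition \ref{DenseExtraction} (which handles the work-extraction side via an ancillary Gibbs state) with Proposition \ref{BoundOnLoss} (which bounds the entropic penalty $S\boldsymbol{(}R^{\delta/J,H_S}_{\sigma_L}(\rho)\boldsymbol{)}-S(\rho)$ arising from limited coherence). No new estimates are required; the work lies almost entirely in translating notation between the two settings.

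First I would set $s:=\delta/J$ and rewrite $H_S$ in units of the reservoir spacing. Since $h^{S}_n=\delta x_n = s\cdot Jx_n$, the integer labels $z_n$ in Proposition \ref{BoundOnLoss} become $z_n = Jx_n$, and consequently $z_{\mathrm{max}}-z_{\mathrm{min}} = J(x_{\mathrm{max}}-x_{\mathrm{min}})$. Note that $H_S\in H_{\delta}(\mathcal{H}_S)\subset H_{\delta/J}(\mathcal{H}_S)$, so $H_S$ is a valid Hamiltonian for the fine-grained reservoir. The hypothesis $L\geq NJ(x_{\mathrm{max}}-x_{\mathrm{min}})$ is precisely the condition $L\geq N(z_{\mathrm{max}}-z_{\mathrm{min}})$ needed to invoke Proposition \ref{BoundOnLoss}. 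That proposition then yields
\begin{equation*}
0\leq S\boldsymbol{(}R^{\delta/J,H_S}_{\sigma_L}(\rho)\boldsymbol{)}-S(\rho)\leq \frac{J(x_{\mathrm{max}}-x_{\mathrm{min}})}{2L}N\ln N + \Xi\!\left(\frac{NJ(x_{\mathrm{max}}-x_{\mathrm{min}})}{2L}\right).
\end{equation*}

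Next I would apply Proposition \ref{DenseExtraction} with the state $\sigma:=\sigma_L=|\eta_{L,l_0}\rangle\langle\eta_{L,l_0}|$, which is certainly in $\mathcal{S}^{*}(\mathcal{H}_E)$ since its support is finite. This delivers an ancillary Hilbert space $\mathcal{H}_A^K$ of dimension $N^{K+1}$ and a Hamiltonian $H_A^K\in H_{\delta/J}(\mathcal{H}_A^K)$ such that $W_{J,K}(\sigma_L) = W_{K,J,L}$ satisfies
\begin{equation*}
0\leq \tfrac{1}{\beta}D\boldsymbol{(}\rho\Vert G(H_S)\boldsymbol{)} -\tfrac{1}{\beta}\bigl[S\boldsymbol{(}R^{\delta/J,H_S}_{\sigma_L}(\rho)\boldsymbol{)}-S(\rho)\bigr]- W_{K,J,L} \leq \mathcal{E}(J,K,\rho,H_S),
\end{equation*}
where $\mathcal{E}(J,K,\rho,H_S)$ collects the last four terms on the right-hand side of (\ref{vnakdfjbk}).

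Finally, adding the upper bound on $\tfrac{1}{\beta}\bigl[S\boldsymbol{(}R^{\delta/J,H_S}_{\sigma_L}(\rho)\boldsymbol{)}-S(\rho)\bigr]$ from the first step to both sides of this inequality gives the claimed two-sided bound on $\tfrac{1}{\beta}D\boldsymbol{(}\rho\Vert G(H_S)\boldsymbol{)} - W_{K,J,L}$. Non-negativity of the left-hand side follows because the entropic penalty is non-negative (by Lemma \ref{Runital} and unitality) and Proposition \ref{DenseExtraction} already establishes $W_{J,K}(\sigma_L)\leq \tfrac{1}{\beta}D(\rho\Vert G(H_S))-\tfrac{1}{\beta}[S(R^{\delta/J,H_S}_{\sigma_L}(\rho))-S(\rho)]$. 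Since the only genuine obstacle is keeping the scaling conventions for $s$, $\delta$, $J$, and $z_n$ straight across the two propositions, there is no substantive mathematical difficulty beyond careful bookkeeping.
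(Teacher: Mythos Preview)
Your proposal is correct and follows essentially the same approach as the paper: set $s=\delta/J$, translate $z_n=Jx_n$ so that $L\geq NJ(x_{\mathrm{max}}-x_{\mathrm{min}})$ becomes the hypothesis of Proposition~\ref{BoundOnLoss}, and then combine the resulting entropic bound with Proposition~\ref{DenseExtraction} applied to $\sigma=\sigma_L$. The paper's proof is nothing more than this bookkeeping exercise, and you have carried it out accurately.
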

By letting $J$ and $L$ be functions of the parameter $K$, we can use Proposition \ref{nadfskbvnak} to prove  Corollary \ref{LimitExtraction}. For this one can use the fact that due to $\lim_{K\rightarrow+\infty} [J(K)/L(K)] = 0$,  the inequality $L(K)\geq NJ(K)(x_{\mathrm{max}}-x_{\mathrm{min}})$ will hold for all sufficiently large $K$. This follows from the assumption that $H_S$ is not completely degenerate, and thus $x_{\mathrm{max}} > x_{\mathrm{min}}$.

\begin{proof}[Proof of Proposition \ref{nadfskbvnak}.]
The eigenvalues $h^S_n$ of $H_S$  can be expressed in terms of multiples of the energy spacing $s= \delta/J$ of the energy reservoir, as $h^S_n = z_n\delta/J$. It proves convenient to re-express $h^S_n$ in terms of multiples in $\delta$ (which we can do, due to the assumption $H_S\in H_{\delta}(\mathcal{H}_S)$). Hence, $h^S_n = x_n\delta$ and $z_{n} = x_n J$, and thus $z_{\mathrm{max}} = x_{\mathrm{max}}J$ and $z_{\mathrm{min}} = x_{\mathrm{min}}J$. (The relevant measure of energy differences, as they enter e.g., in Proposition \ref{BoundOnLoss}, is the number of rungs in the energy ladder they correspond to.)

By assumption $L\geq NJ(x_{\mathrm{max}}-x_{\mathrm{min}})$, and thus $L\geq N(z_{\mathrm{max}}-z_{\mathrm{min}})$. Hence,  Proposition \ref{BoundOnLoss} is applicable and yields
\begin{equation}
\label{vdhjfvbjdhv}
\begin{split}
0  \leq & S\boldsymbol{(}R^{\delta/J,H_S}_{\sigma_{L}}(\rho)\boldsymbol{)} -S(\rho) \\
 \leq & \frac{z_{\mathrm{max}}-z_{\mathrm{min}}}{2L}N\ln N+ \Xi\Big(\frac{N(z_{\mathrm{max}}-z_{\mathrm{min}})}{2L}\Big)\\
 = &  \frac{J}{L}\frac{x_{\mathrm{max}}-x_{\mathrm{min}}}{2}N\ln N+ \Xi\Big(\frac{JN}{L}\frac{x_{\mathrm{max}}-x_{\mathrm{min}}}{2}\Big).
 \end{split}
\end{equation}
If we let $\sigma:=\sigma_L$ in Proposition \ref{DenseExtraction} and define $W_{J,K,L} := W_{J,K}(\sigma_L)$, and combine this with (\ref{vdhjfvbjdhv}), we obtain (\ref{vnakdfjbk}).
\end{proof}

\subsection{\label{Sec:ExtractionCoherence}Expected work extraction and catalytic coherence}

Since all the derivations in this section have been based on the model introduced in Section \ref{DoublyInfinite}, i.e., the doubly infinite ladder model and the class of unitary operators given by the mapping $V(U)$, it immediately follows that the coherence resources of the energy reservoir remains unaffected when we use it. It may nevertheless be useful to see explicitly how this manifest itself in the work extraction setting. This is maybe most clearly visible in Lemmas \ref{hjbjfghbfbghj} and \ref{ndjkfbnv}, or in Proposition \ref{Optimum}. As seen, the initial state $\sigma$ of the energy reservoir only affects the final expectation value via the initial expectation value $\Tr(H_{E}^{(s)}\sigma)$ (which only depends on the diagonal elements with respect to the energy eigenbasis) and via the channel $R_{\sigma}^{H_S}$. The latter depends on off-diagonal elements of $\sigma$, but only in terms of our familiar expectation values of powers of $\Delta$, i.e., $\Tr(\Delta^a\sigma)$, as one can see from Definition \ref{Rdefinition}. From Section \ref{DoublyInfinite} we know these expectation values remain invariant under the application of the operations $V(U)$.  Hence, not only does the set of channels that can be implemented on a sequence of systems remain constant for a given initial coherence resource, but also the capacity for work extraction. Hence, to achieve the standard optimal expected work extraction for systems with off-diagonal elements, we need access to an energy reservoir with a high degree of coherence. However, once possessing it, we can utilize it indefinitely.

One can also note that the family of states $|\eta_{L,l_0}\rangle$ that we used in Section \ref{Sec:standard} to regain the standard optimal expected work extraction merely served as a technically  convenient example. We could equally well use some other family of states with a sufficiently high degree of coherence; an obvious example being the output states obtained by using  $|\eta_{L,l_0}\rangle$ for an initial work extraction on some other system.

\section{\label{Discussion} Some additional remarks}
Her we consider further relations to other concepts in the literature. 

\subsection{Why do we not encounter coherence in standard thermodynamics?}  
We have in this investigation argued that coherence is an important thermodynamic quantity. One may thus ask why we not normally encounter coherence in discussions on thermodynamics. In the limit of many independent subsystems in identical states, and identical non-interacting Hamiltonians, the global state has a low degree of `off-diagonality' with respect to the joint energy eigenspaces \cite{Brandao11,Horodecki11,Skrzypczyk13}. Hence, for such states the difference between having, or not having, access to coherence becomes negligible. Since many realistic systems can be argued to be `close' to this situation, we would thus normally not expect to see the effects of coherence. However, as demonstrated in this investigation, this is no longer true when we step outside of this `multi-copy' setting. To obtain a complete theory of quantum thermodynamics, this is arguably a necessary step.

\subsection{\label{SingleShotWorkExtr} Coherence and single-shot work extraction}
In the application of catalytic coherence to work extraction, we have focused on the expected energy gain. These results nevertheless suggest a couple of remarks in relation to single-shot work extraction. As observed in the single-shot setting \cite{Dahlsten,delRio,TrulyWorkLike,Horodecki11,Egloff,EgloffThesis,Faist,Brandao13}, the expected work content may not always correspond to ordered `work-like' energy (see discussions in \cite{TrulyWorkLike}). Although the coherence is preserved in the dynamics of the doubly-infinite ladder-model, the state nevertheless appears to gradually get more broadly distributed over the ladder (see Sec.~\ref{AnExample} for a simple example), and the entropy increases (see Sec.~\ref{Sec:inducedch}). Due to this one may suspect that the `quality' of the energy decreases. This observation maybe gets even more pronounced for the half-infinite ladder model, where we inject energy in the form of pure excited states to maintain the coherence. Since such pure excited states correspond to very ordered energy, the spreading of the distribution again suggests a degradation of the quality of the energy. (One may note that in terms of expected energy, the injection is not lost, but merely adds to the extracted energy.)
These observations raise the question whether there is a cost in terms of ordered energy (as opposed to expected energy) associated with catalytic coherence. If one would compare models that do and do not preserve coherence, would there in some sense be different costs of noise-free energy for performing operations? Such investigations would require the development of a (preferably operational) measure of ordered energy that takes into account coherence in the energy carrier \emph{per se}.

\subsection{\label{SecEmbezzling} Any relation to embezzling states?}
As discussed in Sec.~\ref{TranslationPhaseRef}, one can translate coherence into correlations using a reference system. Although this specific construction merely yields  separable states for sequential preparations (see Sec.~\ref{SeqSeparable}), one can speculate whether there exists some other type of correspondence between coherence and entanglement. There are many results on entanglement transformations that appear to echo properties of coherence transformations (see e.g. \cite{Lo97,Lo99,Harrow09}), and this is perhaps most clearly illustrated by a qualitative analogy between coherent states and embezzling states (which appears to be a bit of a quantum-information folklore).   Embezzling states \cite{vanDam03}  essentially form a special parametric family of bipartite entangled states for which the degree of entanglement goes to infinity. Analogous to how coherent states in the limit of high amplitude allow us to perform coherent operations (in the standard Jaynes-Cummings model) without changing the coherent state much, embezzling states do in the limit of a high degree of entanglement allow us to transform bipartite entangled states into each other, without changing the embezzling state much. In both cases we rely on the limit of high degree of the resource (coherence/entanglement), and in both cases the resources do appear to get degraded with use, although less and less so in the limiting case.  In view of this analogy, one may wonder whether there exists some modification of the embezzling construction that would result in genuinely catalytic transformations, much as we here have turned coherence catalytic. In this context it might be worth to keep in mind that catalytic coherence  primarily is a question of a particular type of \emph{dynamics}, rather than a special class of states.

\subsection{\label{SecEntangCat} Any relation to entanglement catalysis?}
The concept of entanglement catalysis \cite{Jonathan99} refers to the fact that bipartite state transformations by local operations and classical communication (LOCC)  \cite{Nielsen99} can be assisted by ancillary entangled states \cite{Jonathan99}. More precisely, even if it would be impossible to transform a given bipartite pure state $|\psi\rangle$ into $|\chi\rangle$ by LOCC operations, there may exist a bipartite state $|\xi\rangle$, such that $|\psi\rangle|\xi\rangle$ can be transformed into $|\chi\rangle|\xi\rangle$.  In other words, the state $|\xi\rangle$ enables the  transformation, but is not consumed in the process. Hence, the entanglement in $|\xi\rangle$ acts like a catalyst, similar to how coherence in the doubly-infinite ladder-model enables coherent operations. 

Although analogous, one can notice some crucial differences in terms of the underlying resources and allowed operations. For entanglement catalysis the relevant resource is entanglement, and the allowed set of operations are LOCC. This should be compared with catalytic coherence, where the resources are `off-diagonality' and coherence, and the allowed operations forms a special class of energy conserving unitary transformations (the $V(U)$ in Eq.~(\ref{bijection})). Moreover, in the case of entanglement catalysis the ancillary state remains intact, while catalytic coherence only preserves certain aspects of the state of the energy reservoir; the total state can change. 

 \begin{figure}
 \centering
 \includegraphics[width= 8cm]{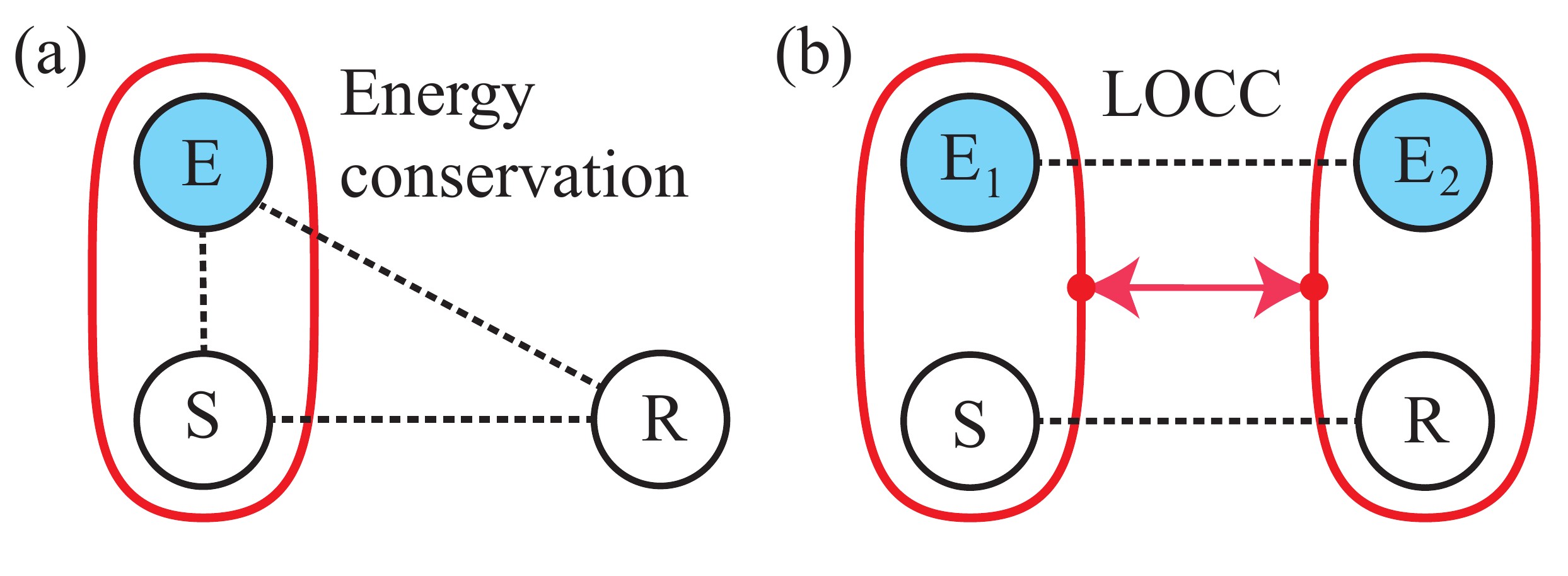} 
   \caption{\label{ComparisonCohEntang} {\bf Translated catalytic coherence versus entanglement catalysis.} (a) In the reformulation of catalytic coherence with respect to correlations with a reference $R$, the allowed operations are energy conserving, and act only on system $S$ and the energy reservoir $E$, but are not allowed to touch $R$. (b) The analogous setup for entanglement catalysis would be an entangled pair of systems $E_1$ and $E_2$, which initially is uncorrelated to $S$ and $R$, and would be restored to the same state at the end of the process. In this case we allow all LOCC operations with respect to the two subsystems $SE_1$ and $RE_2$. } 
\end{figure}

Due to the differences in settings it may not be entirely clear how to proceed with a more precise comparison. We will not consider any detailed analysis here, but merely point out that one possible method would be to use the reformulation of coherence in terms of correlations with a reference, as developed in section \ref{TranslationPhaseRef} (although one should keep in mind that this translation may not necessarily be unique). The resulting bipartition allows us to compare the resulting set of allowed operations (see Fig.~\ref{ComparisonCohEntang}). A potential complication with this method is that we may need to take into account the equivalence relation introduced in Sec.~\ref{Sec:equivalenceRelation}, i.e., not all states are distinguishable from the point of view of the allowed measurements, which may influence how we should compare operations in the two settings. (An additional technical complication is that the main results on entanglement catalysis is obtained via the majorization condition \cite{Nielsen99,Jonathan99}, which restricts the analysis to pure states, while we generally allow mixed states in catalytic coherence. Characterizations of mixed state  state entanglement transformations and catalysis appear challenging \cite{Eisert00,Gour05,Li11a,Li11b}.)

Even though the settings and assumptions are different for these two types of catalysis, one could nevertheless speculate that they can be viewed as specific instances of a more general family of catalytic phenomena, where a given dynamical restriction is accompanied with a corresponding catalytic resource. General resource theories (see, e.g., \cite{Gour08}) may be useful to investigate under which conditions this is possible.

\subsection{Any relation to the Wigner-Arkai-Yanase theorem?}
If the dynamics in a physical system is constrained by a conservation law, it follows by the  Wigner-Arkai-Yanase theorem \cite{Wigner52, Arkai60, Yanase61} that observables which do not commute with the conserved quantity can only be measured approximately. 
The quality of the approximation depends on the measurement device, e.g. to what extent it contains suitable superpositions of eigenstates of the conserved quantity. These results have been extended \cite{Ghirardi81a,Ghirardi81b,Ozawa02a,Navascues14}, generalized to the continuous variable case \cite{Busch11,Loveridge11}, as well as been analyzed in the context of the resource theory of asymmetry \cite{Marvian12,Ahmadi13}. Another version \cite{Ozawa02b,Karasawa07,Karasawa09} considers how accurately a given channel can be implemented under conservation laws. See also \cite{Barnes99,Enk01,GeaBanacloche02,GeaBanacloche05} for implementations of operations via interactions with fields. In view of the WAY theorems it is no surprise that access to a high degree of coherence yields good approximations to energy mixing unitary operations. 

The purpose of the present investigation is not primarily to quantify the accuracy of implemented measurements or channels, but to establish how coherence evolves under repeated use. Results on frame resources \cite{Bartlett06,Bartlett07b,Poulin07,Bartlett07} would intuitively suggest that an  iteration leads to a gradual degradation. However, as we have seen, one can design models where the set of channels that can be implemented remains invariant. Consequently, no matter of how we choose to quantify the quality of the implemented operations, this quality remains intact. Another way of phrasing this would be to say that the degree of accuracy in some sense is  `decoupled' from the degree of degradation; irrespective of whether the coherence resource in the reservoir allows us to implement a `high quality' coherent operation or not, that quality remains intact. 
Although the coherence thus is conserved in these models, the implementation of the channels nevertheless induces a back action on the reservoir. As speculated in Sec.~\ref{SingleShotWorkExtr} this may correspond to a loss of ordered energy. One could imagine to quantify this back action, where WAY theorems again may play a role. However, we will not consider this issue here, but leave it as an open question.

\end{appendix}

\end{document}